\newcolumntype{L}[1]{>{\raggedright\let\newline\\\arraybackslash\hspace{0pt}}m{#1}}
\newcolumntype{C}[1]{>{\centering\let\newline\\\arraybackslash\hspace{0pt}}m{#1}}
\newcolumntype{R}[1]{>{\raggedleft\let\newline\\\arraybackslash\hspace{0pt}}m{#1}}
\newcommand{\dslash}{/\!\!/}
\newif\if@sup
\newtoks\@sups
\def\append@sup#1{\edef\act{\noexpand\@sups={\the\@sups #1}}\act}%
\def\reset@sup{\@supfalse\@sups={}}%
\def\mk@scripts#1#2{\if #2/ \if@sup ^{\the\@sups}\fi \else%
  \ifx #1_ \if@sup ^{\the\@sups}\reset@sup \fi {}_{#2}%
  \else \append@sup#2 \@suptrue \fi%
  \expandafter\mk@scripts\fi}
\def\tensor#1#2{\reset@sup#1\mk@scripts#2_/}
\def\multiscripts#1#2#3{\reset@sup{}\mk@scripts#1_/#2%
  \reset@sup\mk@scripts#3_/}
\newbox\slashbox \setbox\slashbox=\hbox{$/$}
\def\itex@pslash#1{\setbox\@tempboxa=\hbox{$#1$}
  \@tempdima=0.5\wd\slashbox \advance\@tempdima 0.5\wd\@tempboxa
  \copy\slashbox \kern-\@tempdima \box\@tempboxa}
\def\slash{\protect\itex@pslash}
\def\clap#1{\hbox to 0pt{\hss#1\hss}}
\def\mathrlap{\mathpalette\mathrlapinternal}
\def\mathclap{\mathpalette\mathclapinternal}
\def\mathrlapinternal#1#2{\rlap{$\mathsurround=0pt#1{#2}$}}
\def\mathclapinternal#1#2{\clap{$\mathsurround=0pt#1{#2}$}}
\let\oldroot\root
\def\root#1#2{\oldroot #1 \of{#2}}
\renewcommand{\sqrt}[2][]{\oldroot #1 \of{#2}}
\DeclareSymbolFont{symbolsC}{U}{txsyc}{m}{n}
\DeclareSymbolFont{stmry}{U}{stmry}{m}{n}
\DeclareFontFamily{OMX}{MnSymbolE}{}
\DeclareSymbolFont{mnomx}{OMX}{MnSymbolE}{m}{n}
\DeclareFontShape{OMX}{MnSymbolE}{m}{n}{
    <-6>  MnSymbolE5
   <6-7>  MnSymbolE6
   <7-8>  MnSymbolE7
   <8-9>  MnSymbolE8
   <9-10> MnSymbolE9
  <10-12> MnSymbolE10
  <12->   MnSymbolE12}{}
\def\Decl@Mn@Delim#1#2#3#4{%
  \if\relax\noexpand#1%
    \let#1\undefined
  \fi
  \DeclareMathDelimiter{#1}{#2}{#3}{#4}{#3}{#4}}
\def\Decl@Mn@Open#1#2#3{\Decl@Mn@Delim{#1}{\mathopen}{#2}{#3}}
\def\Decl@Mn@Close#1#2#3{\Decl@Mn@Delim{#1}{\mathclose}{#2}{#3}}
\Decl@Mn@Open{\llangle}{mnomx}{'164}
\Decl@Mn@Close{\rrangle}{mnomx}{'171}
\Decl@Mn@Open{\lmoustache}{mnomx}{'245}
\Decl@Mn@Close{\rmoustache}{mnomx}{'244}
\DeclareRobustCommand\widecheck[1]{{\mathpalette\@widecheck{#1}}}
\def\@widecheck#1#2{%
    \setbox\z@\hbox{\m@th$#1#2$}%
    \setbox\tw@\hbox{\m@th$#1%
       \widehat{%
          \vrule\@width\z@\@height\ht\z@
          \vrule\@height\z@\@width\wd\z@}$}%
    \dp\tw@-\ht\z@
    \@tempdima\ht\z@ \advance\@tempdima2\ht\tw@ \divide\@tempdima\thr@@
    \setbox\tw@\hbox{%
       \raise\@tempdima\hbox{\scalebox{1}[-1]{\lower\@tempdima\box
\tw@}}}%
    {\ooalign{\box\tw@ \cr \box\z@}}}
\def\udots{\mathinner{\mkern2mu\raise\p@\hbox{.}
\mkern2mu\raise4\p@\hbox{.}\mkern1mu
\raise7\p@\vbox{\kern7\p@\hbox{.}}\mkern1mu}}
\newcommand{\R}{\ensuremath{\mathbb R}}
\newcommand{\Q}{\ensuremath{\mathbb Q}}
\renewcommand{\(}{\begin{equation*}}
\renewcommand{\)}{\end{equation*}}
\newcommand{\bea}{\begin{eqnarray*}}
\newcommand{\eea}{\end{eqnarray*}}
\theoremstyle{italics}
\newtheorem{theorem}{Theorem}[section]
\newtheorem{lemma}[theorem]{Lemma}
\newtheorem{prop}[theorem]{Proposition}
\theoremstyle{definition}
\newtheorem{defn}[theorem]{Definition}
\newtheorem{example}[theorem]{Example}
\newtheorem{remark}[theorem]{Remark}
\newtheorem{note[theorem]}{Note}
\begin{document}

\title{Gauge enhancement of super M-branes\\ via parametrized stable homotopy theory}

\author{
Vincent Braunack-Mayer\thanks{ Institute of Mathematics,
University of Zurich, Winterthurerstrasse 190, CH-8057 Z\"urich, Switzerland. }, \;
  Hisham Sati\thanks{Division of Science and Mathematics, New York University, Abu Dhabi, UAE.}, \;
  Urs Schreiber\thanks{Division of Science and Mathematics, New York University, Abu Dhabi, UAE, on leave from Czech Academy of Science.}}

\maketitle
\abstract{
  A key open problem in M-theory is to explain the mechanism of \lq\lq gauge enhancement'' through which  M-branes
  exhibit the nonabelian gauge degrees of freedom seen perturbatively in the limit of 10d string theory.
  In fact, since only the twisted K-theory classes represented by nonabelian Chan--Paton gauge fields on D-branes have an invariant meaning,
  the problem is really the understanding the M-theory lift of the classification of D-brane charges by twisted K-theory.

  Here we show that this problem has a solution by universal constructions in rational super homotopy theory. We recall how
  double dimensional reduction of super M-brane charges is described by the cyclification adjunction applied to the 4-sphere,
  and how M-theory degrees of freedom hidden at ADE singularities are induced by the suspended Hopf action on the 4-sphere.
  Combining these, we demonstrate that, in the approximation of rational homotopy theory, gauge enhancement in M-theory is
  exhibited by lifting against the fiberwise stabilization of the unit of this cyclification adjunction on the
  A-type orbispace of the 4-sphere.
  This explains how the fundamental D6 and D8 brane cocycles can be lifted from twisted K-theory to a cohomology theory for M-brane charge, at least rationally.}

\tableofcontents

\newpage

\section{Introduction}
\label{Before}

The conjecture of \emph{M-theory} \cite{Townsend95, Witten95I}
(see \cite{Duff99B, BeckerBeckerSchwarz06}) says, roughly, that there exists a
non-perturbative physical theory, which makes the following schematic diagram commute:

\begin{equation}
  \label{TheMillionDollarQuestion}
  \raisebox{55pt}{
  \xymatrix@C=10pt{
    & \fbox{\footnotesize M-Theory}
    \ar[dl]_{ \mbox{ \tiny \begin{tabular}{c} double dimensional reduction \\ along $S^1$-fibration \end{tabular} } }
    \ar[dr]^{ \tiny \begin{tabular}{c} low energy \\  approximation \end{tabular} }
    \\
    \fbox{\footnotesize
      \begin{tabular}{c}
        Perturbative
        \\
        type IIA string theory
      \end{tabular}
    }
    \ar[dr]_{ \tiny\begin{tabular}{c} low energy \\  approximation \end{tabular} }
    &&
    \fbox{\footnotesize
      \begin{tabular}{c}
        11d supergravity
      \end{tabular}
    }
    \ar[dl]^-{~~ \mbox{ \tiny \begin{tabular}{c} dimensional KK-reduction \\ along $S^1$-fibration \end{tabular}   }   }
    \\
    &
    \fbox{\footnotesize
      \begin{tabular}{c}
        10d type IIA
        \\
        supergravity
      \end{tabular}
    }
  }
  }
\end{equation}
Here both \emph{perturbative string theory} as well as \emph{higher-dimensional quantum supergravity} may, with some effort,
be well-defined as \emph{perturbative} S-matrix theories (e.g. \cite[Sec. 12.5]{Polchinski01}\cite{Witten12} and \cite{Donoghue95}\cite{BFR13});
and the conjecture is that there is a joint \emph{non-perturbative} completion of 11-dimensional supergravity and of IIA string theory,
hence of any effective quantum field theory approximating the latter.
Even though the actual nature of M-theory has remained an open problem \cite[Sec. 12]{Moore14}\footnote{At least in mathematics it is not uncommon that a theory is conjectured to exist before its actual nature is
known---famous examples of this include the theory of \emph{motives}, which has meanwhile been discovered, and the \emph{field with one element.}},
there is a huge and steadily growing network of hints supporting the M-theory conjecture.
This should be regarded in light of the situation of perturbative quantum field theory as used in the Standard Model of
particle physics, where the identification of any aspect of its non-perturbative completion is a crucial but wide-open problem,
referred to as one of the ``millennium problems'' \cite{ClayInstitute}.

\medskip
In a series of articles \cite{FSS13, FSS16a, FSS16b, HuertaSchreiber17}, we have shown that a systematic analysis of the
\emph{Green--Schwarz sigma-models} (which define \emph{fundamental} super $p$-branes, such as the fundamental
membrane that gives M-theory its name) from the point of view of \emph{super homotopy theory} provides a concrete
handle on some previously elusive aspects of M-theory; see \cite{Schreiber17c, FSS19} for exposition of this perspective.
Specifically, in the companion article \cite{ADE} it is shown that the existence and classification of \emph{black M-branes at real
ADE singularities} can be systematically derived and analyzed in the supergeometric enhancement of \emph{equivariant
homotopy theory} (see \cite{Blu17}).
However, the M-theory folklore suggests (\cite[Sec. 2]{Sen97}, see e.g. \cite[Sec. 6.3.3]{IbanezUranga12}, also \cite{AcharyaGukov04})
that understanding black branes at ADE-singularities also holds  the key to the all-important, widely expected
yet still mysterious phenomenon of \emph{gauge enhancement} in M-theory. This suggests that the super homotopy theoretic analysis
may also shed light on the true nature of gauge enhancement in M-theory. Here we show that this is indeed the case.

\medskip
In the remainder of this introduction we review the issue of gauge enhancement
in various guises, survey what is known, what is conjectured, and which problems remain essentially unsolved.
After establishing some mathematical results in Sec. \ref{HomotopyTheory} and Sec. \ref{TheATypeOrbispaceOfThe4Sphere},
we explain our solution to the gauge enhancement problem in Sec. \ref{TheMechanism}.


\medskip
\noindent {\bf Double dimensional reduction.}
At the heart of the matter is \emph{double dimensional reduction}, originally due to \cite{DuffHoweInamiStelle87, Townsend95b} and
whose rigorous formulation from \cite[Sec. 3]{FSS16a} \cite[Sec. 3]{FSS16b} we recall and expand upon below in Sec. \ref{TheMechanism}.
 Going back to ideas of Kaluza and Klein a century ago, in ordinary \emph{dimensional reduction}, we take spacetime to be a
 pseudo-Riemannian $S^1$-fibration over a $(D-1)$-dimensional base space and consider the limit in which the circle fiber becomes infinitesimal.
In this limit, we obtain a field theory on the $(D-1)$-dimensional base space of the circle fibration, hence in a lower-dimensional spacetime, which has a larger space of field species including the Fourier modes of the original fields on the circle fiber.

\begin{equation}
  \label{DD}
  \raisebox{44pt}{
  \scalebox{.9}{
  \xymatrix@C=-32pt{
    Y
    \ar[dd]^{ \pi_{{}_{S^1}} }
    &\,\,\,\,\,\,\,\,\,\,\,\,\,\,\,\,\,\,\,\,\,\,\,\,\,\,\,\,\,\,\,\,\,\,\,\,\,\,\,&
    \fbox{
      \begin{tabular}{c}
        11-dimensional
        \\
        spacetime
      \end{tabular}
    }
    \ar[dd]|{ \mbox{ \footnotesize \begin{tabular}{c} \tiny circle bundle \\ \tiny projection  \end{tabular}  } }
    &\,\,\,\,\,\,\,\,\,\,\,\,\,\,\,\,\,\,\,\,\,\,\,\,\,\,\,\,\,\,\,\,\,\,\,\,\,\,\,&
    \fbox{ \begin{tabular}{c} $D = 11$  supergravity/ \\ M-theory\end{tabular} }
    \ar@{|->}[dd]|{ \vphantom{\big(}\mbox{ \tiny dimensional reduction } }
    &\,\,\,\,\,\,\,\,\,\,\,\,\,\,\,\,\,\,\,\,\,\,\,\,\,\,\,\,\,\,\,\,\,\,\,\,\,\,\,&
    &
    \fbox{ \begin{tabular}{c} $p$-brane  \end{tabular} }
    \ar@{|->}[ddl]|{ \phantom{A \atop A} }
    \ar@{|->}[ddr]|{ \phantom{ A \atop A }  }
    \ar@{}[dd]|{ \mbox{ \begin{tabular}{c} \tiny double dimensional reduction \end{tabular} } }
    \\
    \\
    X = Y \dslash S^1
    &\,\,\,&
    \fbox{
      \begin{tabular}{c}
        10-dimensional
        \\
        spacetime
      \end{tabular}
    }
    &\,\,\,&
    \fbox{ \begin{tabular}{c} $D=10$ supergravity / \\ type IIA string theory \end{tabular} }
    &\,\,\,&
    \fbox{ $(p-1)$-brane }
    &&
    \fbox{ $p$-brane }
  }}
  }
\end{equation}
Secondly, one finds higher-dimensional analogs of black holes in higher-dimensional supergravity,
with $(p+1)$-dimensional singularities, called \emph{black} (or ``solitonic'') \emph{$p$-branes} \cite{DIPSS88}.
Under dimensional reduction of the ambient supergravity theory, the singularity of a black $p$-brane may
or may not extend along the circle fiber that is  being shrunk away. If it does not, then the result is again a black $p$-brane
solution, now in the lower-dimensional supergravity theory. But if it does, then along with reduction in spacetime dimension
from $D$ to $(D -1)$, the black $p$-brane singularity effectively appears as a black $(p-1)$-brane solution in the
lower-dimensional supergravity theory; whence ``double dimensional reduction''.

\medskip
\noindent {\bf Chan--Paton gauge enhancement.} In its most immediate (albeit naive) form, the formulation of the problem
of \emph{gauge enhancement} in M-theory proceeds from the expected double dimensional reduction \eqref{DD}
of the black branes of M-theory (see \cite{ADE} for a precise account), to those of type IIA string theory,
which for black branes looks as follows \cite{Townsend95, Townsend95b, FSa, FS}:

$$
  \mathclap{
  \xymatrix@C=-1pt@R=1.2em{
    \fbox{\hspace{-3mm} \footnotesize
      \begin{tabular}{c}
        Black brane species
        \\
        in M-theory
      \end{tabular}
   \hspace{-3mm} }
    \ar[dd]|{\mbox{\tiny
      double dimensional reduction
    }}
    &
    &
    \mathrm{MW}
    \ar@{|->}[ddl]
    & &
    \mathrm{M2}
    \ar@{|->}[ddl]
    \ar@{|->}[ddr]
    && &&
    \mathrm{M5}
    \ar@{|->}[ddl]
    \ar@{|->}[ddr]
    &&
    \mathrm{MK6}
    \ar@{|->}[ddr]
    &&&
    &
    \mathrm{MO9}
    \ar@{|->}[ddl]
    &
    \fbox{
      \begin{tabular}{c}
        \multirow{2}{*}{ \color{gray} unknown }
        \\
        $\phantom{A}$
      \end{tabular}
     }
    \ar[dd]
    \\
    \\
    \fbox{\hspace{-3mm} \footnotesize
      \begin{tabular}{c}
        Black brane species in
        \\
        type IIA string theory
      \end{tabular}
    \hspace{-3mm} }
    &
    \mathrm{D0}
    &&
    \mathrm{NS1}
    &&
    \mathrm{D2}
    &&
    \mathrm{D4}
    &&
    \mathrm{NS5}
    &&
    {\mathrm{D6}}
    &&
    {\mathrm{O8}}
    &&
    \fbox{\hspace{-2mm}\footnotesize
      \begin{tabular}{c}
        Chan--Paton
        \\
        gauge enhancement
      \end{tabular}
    \hspace{-3mm}}
  }}
$$

From perturbative string theory one finds that open fundamental strings ending on the D-branes behave as quanta
for an abelian $U(1)$-gauge theory (i.e. electromagnetism) on the worldvolume of the D-brane. A widely
accepted but informal\footnote{ In \cite[first line on p. 8]{Witten96} the argument was introduced as an ``obvious guess''.
 Most subsequent references cite this as a fact, e.g. the review \cite[Sec. 3]{Myers03}, despite the lack of a formal argument.
 } argument \cite[Sec. 3]{Witten96} indicates that if $N$ such D-branes are \emph{coincident} then
the gauge group \emph{enhances} from the abelian group $(U(1))^N
$ to the non-abelian
group $U(N)$.
The idea is that massless open fundamental strings stretch in $N \times N$ possible ways between the
$N$ coincident D-branes,
thus constituting gauge bosons that organize in $N \times N$ unitary matrices, called \lq\lq Chan--Paton factors''.
However, it is non-trivial to check that scattering of these open strings reproduces the scattering
amplitudes of gauge bosons for non-abelian gauge theory (Yang-Mills theory).
First approximate numerical checks of this idea are due to \cite{ColettiSigalovTaylor03} and
similar numerical checks as well as an exact derivation under simplifying assumptions are given in \cite{BerkovitsSchnabl03};
a full derivation was claimed in \cite{Lee17}.

\medskip
This phenomenon of \emph{gauge enhancement on D-branes} is of paramount importance for string theory, in particular
as a candidate for a theory of realistic physics. The fundamental gauge fields observed in nature, per the Standard Model of
particle physics, do of course involve non-abelian gauge groups corresponding to the weak and strong nuclear forces.
While Kaluza--Klein dimensional reduction from 11-dimensional supergravity may
exhibit such non-abelian gauge groups, this happens in a manner incompatible with realistic coupling of fermionic matter fields \cite{Witten81}.
 Instead, realistic gauge fields must arise from gauge enhancement on coincident D-branes (see \cite[Sec. 10]{IbanezUranga12}).
 Moreover, all discussion of modern string theoretic topics such as \emph{AdS/CFT duality} (see \cite{AGMOO99}) or \emph{geometric
 engineering of gauge theories} (going back to \cite{HananyWitten97}, see e.g. \cite{Fazzi17}), such as for classification of 6d SCFTs
 (as in \cite{ZHTV15}), depend crucially on gauge enhancement on coincident D-branes. But, under the M-theory conjecture,
double dimensional reduction should exhibit an equivalence (``duality'') between (strongly coupled) type IIA string theory and
M-theory, in particular between the full non-perturbative theory of D-branes and their M-brane pre-images.
Hence if M-theory exists, then gauge enhancement on coincident D-branes must correspond to, and is potentially explained by,
a corresponding phenomenon on M-branes. The most immediate incarnation of the problem of gauge enhancement in M-theory
can, therefore, be succinctly phrased as:

\vspace{.2cm}

\noindent {\bf Open Problem, version 1:}

\vspace{-2mm}
\begin{quote}
{\it What is the lift to M-theory of the non-abelian Chan--Paton gauge field degrees of freedom on coincident D-branes?}
\end{quote}


\noindent Since the string theory literature tends to blur the distinction between what is known and what is conjectured, we briefly highlight
what the folklore on this problem (e.g. \cite[Sec. 6.3.3]{IbanezUranga12}, \cite{AcharyaGukov04}) does and does not achieve:

\begin{itemize}
\vspace{-2mm}
\item A celebrated recent result (see \cite{BLMP13} for a review) shows the existence of a class of non-abelian gauge field theories that are plausible candidates for the worldvolume theories expected to live on black M2-branes sitting at ADE singularities (the \lq\lq BLG-model''
\cite{BL, Gu} and, more generally, the \lq\lq ABJM-model'' \cite{ABJM08}).
However, a derivation of these field theories from M-theoretic degrees of freedom is missing; the argument works just by consistency checks.

\vspace{0mm}
\item On the other hand, a conjectural sketch of an explicit derivation does exist for the M-brane species $\mathrm{MK6}$,
whose image in the low-energy approximation provided by 11-dimensional supergravity
is supposed to be the \emph{Kaluza--Klein monopole spacetime} and which becomes the black D6-brane
under dimensional reduction \cite{Townsend95}, \cite[Sec. 1]{Sen97}. For an \emph{abelian} gauge field on the D6-brane,
a straightforward analysis shows that it is sourced by doubly dimensionally reduced M2-branes
ending on the KK-monopole \cite[Sec. 2]{GomezManjarin02}, \cite[Sec. 5.4]{Manjarin04}.
But more generally, KK monopoles may be argued to be the fixed point loci of spacetime orbifolds locally of the
form $\mathbb{R}^{6,1} \times \mathbb{C}^2 \dslash G_{\mathrm{ADE}}$
for a finite subgroup $G_{\mathrm{ADE}} \subset SU(2)$. A classical theorem \cite{DuVal} (see \cite{Reid} for a review)
 implies that such singularities are canonically resolved by spheres touching along the shape of a simply-laced Dynkin diagram:

\begin{center}
\includegraphics[width=.47\textwidth]{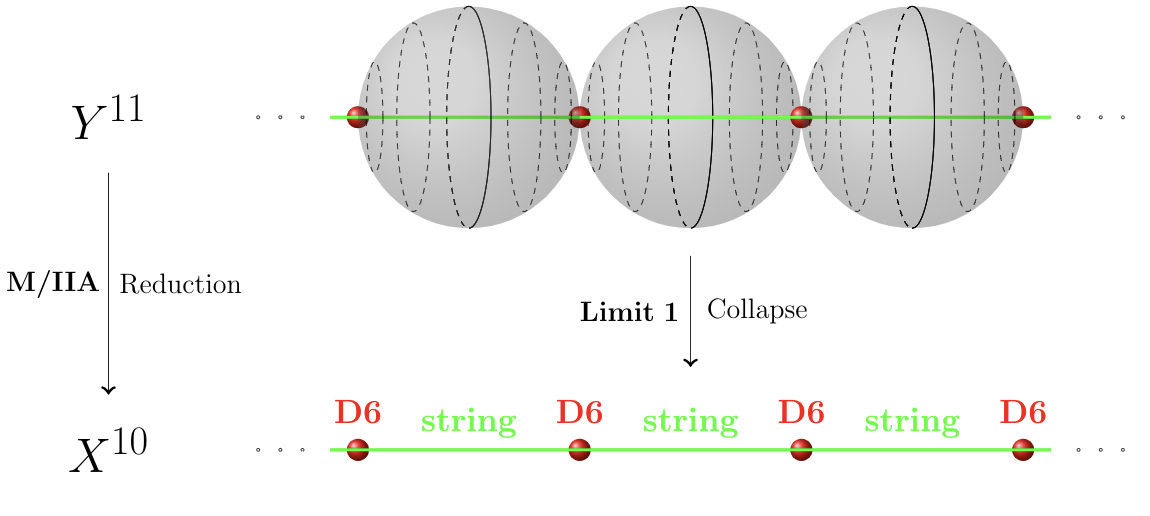}
\hspace{-1mm}
\includegraphics[width=.37\textwidth]{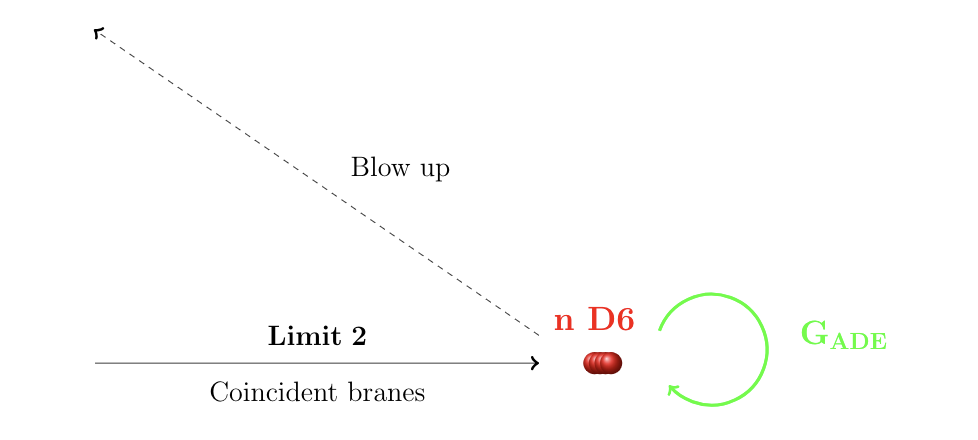}
\end{center}

If one here imagines that M2-branes wrap these \emph{vanishing 2-cycles} then, under double dimensional reduction,
this situation looks like an M-theoretic lift of the strings stretching between several coincident D6-branes, as indicated in the figure above,
and hence looks like an M-theoretic lift of the Chan--Paton gauge enhancement mechanism discussed above.

This argument goes back to \cite[Sec. 2]{Sen97}; for review see \cite[Sec. 6.3.3]{IbanezUranga12}.
While this story is appealing, it is unsatisfactory that it has to treat the membrane in 11-dimensional spacetime as
a direct analogue of the fundamental string in 10-dimensional spacetime; after all, the very term \lq\lq M-theory'' instead of \lq\lq Membrane theory''
was chosen as a reminder that this direct analogy is too naive.\footnote{\cite{HoravaWitten96a}:
``{\it As it has been proposed that [this] theory is a supermembrane theory but there are some reasons
to doubt that interpretation, we will  non-committedly call it the $M$-theory, leaving to the future the relation
 of $M$ to membranes.}'' }
\end{itemize}

In short, the traditional attempts to understand gauge enhancement on M-branes suffer from the lack of any handle on the
actual nature of M-theory.
But what is worse, these stories argue a point that has meanwhile come to be thought as being inaccurate:

\medskip

\noindent {\bf K-Theoretic gauge enhancement.}
A well-known series of arguments \cite{MinasianMoore97, Witten98, FreedWitten99, FreedHopkins00, MooreWitten00} shows that
the gauge fields carried by D-branes do not actually have well-defined existence in themselves. Instead,
one must view all the $\mathrm{D}p$-branes taken together and then their gauge fields
serve as representatives for classes in \emph{twisted K-theory}
\cite{BouwknegtMathai00, Witten00, Freed00, BCMMS02, BEV03, MathaiSati04, EvslinSati06, Evslin06}.
It is only these twisted K-theory classes that are supposed to have an intrinsic meaning. In other words, non-abelian
gauge fields on separate D-brane species are much like coordinate charts on spacetime:
a convenient but non-invariant means of presenting a specific structure. In actual reality, the D-brane species
$Dp$, $p \in \{0,2,4,6,8\}$ only have a unified joint existence and gauge enhancement to on separate D-brane species is only one presentation, out of many, of a unified
\emph{higher gauge field}: a cocycle in twisted K-theory.
In view of this, the problem of gauge enhancement in M-theory is really the following:\footnote{
While \emph{a derivation of K-theory from M-theory} is suggested by the title of \cite{ADerivationofK},
that article only checks that the behavior of the partition function of the 11d supergravity $C$-field is compatible
with the \emph{a priori} K-theory classification of D-branes. Seeking a generalized cohomology describing
the M-field and M-branes was originally advocated for in \cite{S1, S2, S3, S4}.
}

\vspace{.2cm}

\noindent {\bf Open Problem, version 2:}

\vspace{-2mm}
\begin{quote}
{\it What is the cohomology theory classifying M-brane charge, and how does double dimensional reduction
reduce it to the classification of D-brane charge in twisted K-theory?}
\end{quote}

\noindent Hence the refined perspective of twisted K-theory shifts the focus away from Chan--Paton-like gauge
enhancement on one particular D-brane species (which seems to have no invariant meaning in the full theory) and
instead highlights the problem of how the full list of D-brane species arises and carries a unified charge in twisted
K-theory.  However, this is in conflict with the M-theoretic origin of the D-branes in the traditional story recalled
above: only the $\mathrm{M2}$-brane and $\mathrm{M5}$-brane exist as \emph{fundamental branes}, meaning
that they have corresponding Green--Schwarz type sigma-models (we recall this in  detail below in Sec.
\ref{FundamentalpBranes}). The double dimensional reduction of these yields the fundamental brane species in
type IIA string theory, \emph{except} the $\mathrm{D6}$ and the $\mathrm{D8}$ (while the $\mathrm{D0}$
now encodes the circle fibration itself):

$$
  \xymatrix@C=6pt@R=1.2em{
    \fbox{
      \begin{tabular}{c}
     \footnotesize   Fundamental brane species
        \\
   \footnotesize     in M-theory
      \end{tabular}
    }
    \ar[dd]|{\mbox{\tiny
      double dimensional reduction
    }}
    &
    && &
    \mathrm{M2}
    \ar@{|->}[ddl]
    \ar@{|->}[ddr]
    && &&
    \mathrm{M5}
    \ar@{|->}[ddl]
    \ar@{|->}[ddr]
    &&
    \\
    \\
    \fbox{
      \begin{tabular}{c} \footnotesize
        Fundamental brane species
        \\
      \footnotesize  in type IIA string theory
      \end{tabular}
    }
    &
   ~~~~~~ \mathrm{D0}
    &&
    \mathrm{F1}
    &&
    \mathrm{D2}
    &&
    \mathrm{D4}
    &&
    \mathrm{NS5}
    &&
    {\color{gray} \mathrm{D6}}
    &&
    {\color{gray} \mathrm{D8}}
  }
$$

On the other hand, up to now the M-theoretic origin of the $\mathrm{D6}$-brane has only been argued in its \emph{black brane}
incarnation, which is supposed to be given by the $\mathrm{MK6}$ as recalled above. The nature of the
$\mathrm{D8}$-brane is yet more subtle \cite{BRGPT96}---an M-theory lift has been proposed in \cite{Hull98}, but the
proposal is not among the usual list of expected black M-branes. On the other hand, the reduction of the $\mathrm{MO9}$-brane,
whose existence in M-theory is solid (see \cite{ADE}), is not the black $\mathrm{D}8$-brane, but rather the $\mathrm{O}8$-plane
 (e.g. \cite[Sec. 3]{GKST01}).
This highlights that the core of the open problem, from the refined K-theoretic perspective, is really in the appearance of
 the D6- and D8-brane:

\vspace{.2cm}

\noindent {\bf Open Problem, version 3:}

\vspace{-2mm}
\begin{quote}
\noindent {\it What is the lift to M-theory of the fundamental D6-branes and D8-branes in type IIA string theory, such that the unified $\mathrm{D}p$-branes
are jointly classified by twisted K-theory?}
\end{quote}
\vspace{.2cm}

\noindent {\bf K-Theoretic gauge enhancement in rational approximation}.
Twisted K-theory is a comparatively complicated structure, and the fine detail of which of its various variants
really applies to D-branes is still the subject of discussion (\cite{KS2, DFM09, S4, GS19}). Of course the glaring problem here
is, once more, that the non-perturbative theory that ought to answer this question is missing.
It is worth highlighting that the issue of gauge enhancement is visible, and has remained unresolved,
already in the \emph{rational} approximation (i.e. ignoring all torsion-group effects), where a cocycle in twisted K-theory
reduces to a cocycle in \emph{twisted de Rham cohomology}. Associated to each brane species is a differential form on
spacetime---its \emph{flux form}---which corresponds to the brane in analogy to the correspondence between
the \emph{Faraday tensor} and charged particles (0-brane) in electromagnetism. The double dimensional
reduction of these flux forms is as follows (see \cite[Sec 4.2]{MathaiSati04}), parallel to the pattern of the double
dimensional reduction of the fundamental branes.

\begin{equation}
  \label{Gysin}
  \hspace{-5mm}
  \mathclap{
  \raisebox{37pt}{
  \xymatrix@C=-.1em@R=1.2em{
    \fbox{\hspace{-3mm}
      \begin{tabular}{c}
        \footnotesize Flux forms in
        \\
        \footnotesize $D = 11$, $\mathcal{N} = 1$ supergravity
      \end{tabular}
    \hspace{-3mm}}
    \ar[dd]|{ \vphantom{\big(}\mbox{\tiny
      double dimensional reduction
    }}
    &
    && &
    G_4
    \ar@{|->}[ddl]
    \ar@{|->}[ddr]
    && &&
    G_7
    \ar@{|->}[ddl]
    \ar@{|->}[ddr]
    &&
    &&&&
    \fbox{\hspace{-3mm}
      \begin{tabular}{c}
     \footnotesize   Cocycle on
        $D = 11$ spacetime
        \\
\footnotesize        in rational
        cohomotopy
      \end{tabular}
    \hspace{-3mm}}
    \ar[dd]|-{ \vphantom{\big(}\mbox{ \tiny Gysin sequence }  }
    \\
    \\
    \fbox{\hspace{-3mm}
      \begin{tabular}{c}
        \footnotesize Flux forms in
        \\
       \footnotesize  $D =10$, $\mathcal{N} = (1,1)$ supergravity
      \end{tabular}
    \hspace{-3mm}}
    &
    F_2
    &&
    H_3
    &&
    F_4
    &&
    F_6
    &&
    H_7
    &&
    {\color{gray} F_8}
    &&
    {\color{gray} F_{10}}
    &
    \fbox{\hspace{-3mm}
      \begin{tabular}{c}
     \footnotesize   Cocycle on
        $D = 10$ spacetime
        \\
     \footnotesize   in rational 6-truncated
        \\
     \footnotesize   twisted K-theory
      \end{tabular}
    \hspace{-3mm}}
  }}}
\end{equation}

\noindent Here the right hand sides have been recognized in \cite[Sec 2.5]{S-top} \cite{cohomotopy}
for the top part and derived in \cite{FSS16a, FSS16b} for the bottom part; we discuss this in detail below
in Sec. \ref{With}. Notice that the $F_2$-contribution does arise from plain double dimensional reduction,
not directly from the $(G_4,G_7)$, but as a rational image of the first Chern-class of the circle fibration itself.
 Moreover, these differential forms satisfy relations (``twisted Bianchi identities'') that
identify them, via the de Rham theorem,
as cocycles in the rationalization of a generalized cohomology theory:
\begin{center}
\begin{tabular}{|c||c|c|}
  \hline
  \begin{tabular}{c}
    Flux
    \\
    forms
  \end{tabular}
  &
  \begin{tabular}{c}
    \bf Twisted
    \\
    \bf Bianchi identity
  \end{tabular}
  &
  \begin{tabular}{c}
    \bf Rational
    \\
    \bf cocycle in
  \end{tabular}
  \\
  \hline
  \hline
  {\bf M-theory} &
  $
    \begin{aligned}
      d G_4 & = 0
      \\
      d G_7 & = -\tfrac{1}{2}G_4 \wedge G_4
    \end{aligned}
  $
  &
  \begin{tabular}{c}
    cohomotopy
    \\
    in degree 4
  \end{tabular}
  \\
  \hline
  \begin{tabular}{c}
    \bf Type IIA
    \\
    \bf string theory
  \end{tabular}
  &
  $
    \begin{aligned}
      d H_3 & =0
      \\
      d F_{2p + 4} & = H_3 \wedge F_{2 p + 2}
    \end{aligned}
  $
  &
  \begin{tabular}{c}
    twisted K-theory
    \\
    in even degree
  \end{tabular}
  \\
  \hline
\end{tabular}
\end{center}
In this rational approximation, the core of the gauge enhancement problem is still fully visible:

\vspace{.2cm}

\label{FirstRationalProblem}
\noindent \hypertarget{FirstRational}{{\bf Open Problem, rational version 1:}}

\vspace{-2mm}
\begin{quote}
{\it What is the origin of the RR-flux forms $F_8$ and $F_{10}$ in M-theory, such that these unify
with the double dimensional reduction of the M-flux $(G_4, G_7)$ to an un-truncated cocycle in
rational twisted K-theory (i.e. in twisted de Rham cohomology)?}
\end{quote}

\noindent  \rm We present a solution to this version of the problem in Sec. \ref{With}.
Though working in the rationalized setting means that we are disregarding all torsion
\footnote{This torsion is in the sense of cohomology or homotopy classes. In the following paragraph we use torsion
in the sense of differential (super)geometry. We hope that the distinction will be clear from the context.}
 information for the time being,
it has the striking advantage that these rationalized relations follow rigorously from a \emph{first principles}-definition
of M-branes (recalled below in Sec. \ref{FundamentalpBranes}), and hence serve as a starting point for
a systematic analysis of the problem of gauge enhancement.

\medskip
To properly take local supersymmetry into account, one has to consider the refinement of the plain flux forms to
super-flux forms on super-spacetime. The torsion-freeness constraints of supergravity geometrically require
 \cite{Lott90, EE}  the bifermionic component of these super-flux forms to be covariantly constant on each
super tangent space (see \cite[Sec. 1]{Higher-T}), where they correspond to those cocycles $\mu_{{}_{p+2}}$ in the
supersymmetry super Lie algebra cohomology defining the Green--Schwarz-type sigma-models for the fundamental
$p$-branes \cite{AETW87, AzTo89}.
This identification locates the problem in the precise context of \emph{super homotopy theory} of super-spacetimes.
The beauty of this is that homotopy theory is governed by \emph{universal constructions}, which, roughly, means that
it exhibits the emergence of \lq\lq god-given'' structures from a minimum of input.
In fact, in super homotopy theory the super-cocycles $\mu_{{}_{M2}}$ and $\mu_{{}_{M5}}$ witnessing the fundamental M-branes emerge by
a universal construction (a kind of equivariant Whitehead tower) from nothing but the superpoint \cite{FSS13, HuertaSchreiber17},
and their double dimensional reduction is reflected by another universal construction \cite[Sec. 3]{FSS16a} \cite[Sec. 3]{FSS16b}---the \emph{$\mathrm{Ext}/\mathrm{Cyc}$-adjunction} (discussed in detail in Sec. \ref{TheAdjunction} below).

\begin{equation}
  \label{DDCoc}
  \hspace{-3mm}
  \mathclap{
  \raisebox{45pt}{
  \xymatrix@C=-.1pt@R=1.2em{
    \fbox{\hspace{-3mm}
      \begin{tabular}{c}
    \footnotesize    Fundamental branes
        \\
         \footnotesize in M-theory
      \end{tabular}
    \hspace{-3mm}}
    \ar[dd]|{ \vphantom{\big(}\mbox{\tiny
      double dimensional reduction
    }}
    &
    && &
    \mu_{{}_{\mathrm{M2}}}
    \ar@{|->}[ddl]
    \ar@{|->}[ddr]
    && &&
    \mu_{{}_{\mathrm{M5}}}
    \ar@{|->}[ddl]
    \ar@{|->}[ddr]
    &&
    &&&&
    \fbox{\hspace{-3mm}
      \begin{tabular}{c}
        \footnotesize  Super-cocycle
        \\
       \footnotesize   on super-spacetime
        \\
       \footnotesize   in rational
        cohomotopy
      \end{tabular}
    \hspace{-3mm}}
    \ar[dd]|{ \vphantom{\big(} \mbox{ \tiny cyclification } }
    \\
    \\
    \fbox{\hspace{-3mm}
      \begin{tabular}{c}
        \footnotesize Fundamental branes
        \\
       \footnotesize  in type IIA string theory
      \end{tabular}
   \hspace{-4mm} }
    &
    \mu_{{}_{\mathrm{D0}}}
    &&
    \mu_{{}_{\mathrm{F1}}}
    &&
    \mu_{{}_{\mathrm{D2}}}
    &&
    \mu_{{}_{\mathrm{D4}}}
    &&
    \mu_{{}_{\mathrm{NS5}}}
    &&
    {\color{gray} \mu_{{}_{\mathrm{D6}}}}
    &&
    {\color{gray} \mu_{{}_{\mathrm{D8}}}}
    &
    \fbox{\hspace{-3mm}
      \begin{tabular}{c}
         \footnotesize Super-cocycle
        \\
         \footnotesize on super-spacetime
        \\
          \footnotesize in rational 6-truncated
        \\
          \footnotesize twisted K-theory
      \end{tabular}
    \hspace{-3mm}}
  }}}
\end{equation}

\begin{center}
\begin{tabular}{|c||c|c|}
  \hline
  \begin{tabular}{c}
    Fundamental brane
    \\
    super-cocycle
  \end{tabular}
  &
  \begin{tabular}{c}
    \bf Cocycle
    \\
    \bf condition
  \end{tabular}
  &
  \begin{tabular}{c}
    \bf in rational
    \\
    \bf cohomology theory
  \end{tabular}
  \\
  \hline
  \hline
  {\bf M-theory} &
  $
    \begin{aligned}
      d \mu_{{}_{M2}} & = 0
      \\
      d \mu_{{}_{M5}} & = -\tfrac{1}{2} \mu_{{}_{M2}} \wedge \mu_{{}_{M2}}
    \end{aligned}
  $
  &
  \begin{tabular}{c}
    cohomotopy
    \\
    in degree 4
  \end{tabular}
  \\
  \hline
  \begin{tabular}{c}
    \bf Type IIA
    \\
    \bf string theory
  \end{tabular}
  &
  $
    \begin{aligned}
      d \mu_{{}_{F1}} & =0
      \\
      d \mu_{{}_{2p+2}} & = \mu_{{}_{F1}} \wedge \mu_{{}_{2p}}
    \end{aligned}
  $
  &
  \begin{tabular}{c}
    twisted K-theory
    \\
    in even degree
  \end{tabular}
  \\
  \hline
\end{tabular}
\end{center}

\vspace{1mm}
\noindent Here (rational) \emph{cohomotopy} in degree 4 is the generalized \emph{non-abelian} cohomology theory represented
 by the (rationalized) 4-sphere, meaning that the joint $\mathrm{M2}/\mathrm{M5}$-brane cocycle is a morphism
 in the rational super homotopy category of the form \cite{cohomotopy,FSS16a}
$$
  \xymatrix{
    \mathbb{R}^{10,1\vert \mathbf{32}}
    \ar[rr]^-{ \mu_{{}_{M2/M5}} }
    &&
    S^4
  }
  \phantom{AAA}
  \in
  \mathrm{Ho}\left( \mathrm{SuperSpaces}_{\mathbb{R}}  \right).
$$
That cohomotopy governs the M-brane charges this way, at least rationally,  was first proposed and highlighted
in \cite[Sec. 2.5]{S-top}.  \emph{A priori}
there are many homotopy types that look like the 4-sphere in the rational approximation; however in \cite{ADE}
further precise evidence was provided to demonstrate the sense in which the 4-sphere is the correct coefficient for
M-brane charge. Namely, the 4-sphere coefficient is naturally identified with the 4-sphere around a black
M5-brane singularity in $D = 11$ supergravity and this identification induces a real structure on the 4-sphere together
with actions of the ADE subgroups of $SU(2)$ that are compatible with the corresponding BPS actions on super-spacetime.
It is therefore natural to ask for enhancements of the
$\mathrm{M2}/\mathrm{M5}$-brane cocycle to \emph{equivariant} cohomotopy
$$
  \xymatrix{
    \mathbb{R}^{10,1\vert \mathbf{32}}
    \ar@(ul,ur)[]^{ G_{\mathrm{ADE}} \times G_{\mathrm{HW}} }
    \ar[rr]^-{ \widehat{\mu}_{{}_{M2/M5}} }
    &&
    S^4
    \ar@(ul,ur)[]^{ G_{\mathrm{ADE}} \times G_{\mathrm{HW}} }
  }
  \phantom{AA}
  \in
  \mathrm{Ho}\big( \left( G_{\mathrm{ADE}} \times G_{\mathrm{HW}} \right)  \mathrm{\mbox{-}SuperSpaces}_{\mathbb{R}}  \big).
$$
Here $G_{\mathrm{ADE}} \subset SU(2)$ is a finite
subgroup as per the ADE-classification that acts by orientation-preserving
super-spacetime automorphisms, while $G_{\mathrm{HW}} = \mathbb{Z}_2$
is an orientation-reversing reflection as in Ho{\v r}ava--Witten theory.

\medskip
It is shown in \cite{ADE} that such an equivariant enhancement exists and makes the \emph{black branes} at
ADE singularities appear. This results in a unified framework for black and fundamental M-branes.
In particular, the corresponding A-series actions on the 4-sphere factor through the $U(1)$-action
$$
  \xymatrix{
    S^4
    \ar@(ul,ur)[]^{ S^1}
  }
  \;:=\;
  S( \mathbb{R} \oplus \!\!\!\!\xymatrix{\mathbb{C}^2 \ar@(ul,ur)[]^{\rm U(1) } }\!\!\! )
  \,\subset\,
  S( \mathbb{R} \oplus \!\!\!\!\xymatrix{\mathbb{C}^2 \ar@(ul,ur)[]^{\rm SU(2) } }\!\!\!\! )
$$
obtained as the suspension of the circle action on the complex Hopf fibration $H_\mathbb{C}\colon  S^3 \to S^2$. The projection to the corresponding homotopy quotient is identified with the M-theory $S^1$-fibration in the near horizon geometry of an M5-brane:

\hspace{-15mm}
\begin{equation}
  \label{SpacetimeAndATypeOrbispace}
  \mathclap{
  \raisebox{60pt}{\xymatrix@C=1em@R=1.3em{
    &
    \ar@{}[rr]|{ \fbox{ M2/M5-brane cocycle } }
    &&
    \\
    \fbox{  \hspace{-3mm}
      \begin{tabular}{c}
        11d super-spacetime
      \end{tabular}
   \hspace{-3mm} }
    &
    \mathbb{R}^{10,1\vert \mathbf{32}}
    \ar[dd]
    \ar[rr]^-{ \mu_{{}_{M2/M5}} }
    &&
    S^4
    \ar[dd]
    &
    \fbox{\hspace{-1mm} 4-sphere coefficient \hspace{-1mm}}
    \\
    \\
    \fbox{\hspace{-1mm}
      10d super-spacetime
    \hspace{-1mm}}
    & \mathbb{R}^{9,1\vert \mathbf{16} + \overline{\mathbf{16}}}
    \simeq \mathbb{R}^{10,1\vert \mathbf{32}} \dslash S^1
    &&
    S^4 \dslash S^1
    &
    \fbox{\hspace{-3mm}
      \begin{tabular}{c}
        A-type orbispace
        \\
        of the 4-sphere
      \end{tabular}
   \hspace{-3mm} }
  }}}
\end{equation}

\medskip\noindent In conclusion, the problem of gauge enhancement in M-theory in the rational approximation, but otherwise proceeding from \emph{first principles}, reads as follows:

\vspace{.2cm}

\label{OpenRationalPage}
\noindent \hypertarget{OpenRational}{{\bf Open Problem, rational version 2:}}
\vspace{-2mm}
\begin{quote}
{\it Which universal construction in rational super homotopy theory enhances the cyclification of the $\mathrm{M2}/\mathrm{M5}$-cocycle from
6-truncated to un-truncated rational twisted K-theory? }
\end{quote}

\vspace{.2cm}

\noindent {\bf Gauge enhancement explained.}
\rm
It is this version of the problem to which we present a solution.
First we explain and analyze two  relevant universal constructions
in homotopy theory:
\begin{enumerate}[{\bf (i)}]
\item \emph{Fiberwise stabilization} (in Sec. \ref{RationalParameterizedStableHomotopyTheory}) and
\item \emph{the $\mathrm{Ext}/\mathrm{Cyc}$-adjunction} (in Sec. \ref{TheAdjunction}).
\end{enumerate}
We then consider (in Sec. \ref{RationalHomotopyTypeOfATypeOrbispace})
the rational homotopy type of the \emph{A-type orbispace of the 4-sphere}, as in \eqref{SpacetimeAndATypeOrbispace} above,
and we apply the two aforementioned universal constructions to it (in Sec. \ref{RationalUnitOnAType}). Our main result, Theorem \ref{TwistedKTheoryInsideFiberwiseStabilizationOfATypeOrbispaceOf4Sphere}, shows that rational untruncated twisted K-theory
appears as a direct summand in the fiberwise stabilization of the $\mathrm{Ext}/\mathrm{Cyc}$-unit on the A-type orbispace of the 4-sphere.

\medskip
Since homotopy theory is immensely rich and computationally demanding (see. e.g. \cite{Ravenel03, HHR09}) one often simplifies
calculations by working in successive approximations, such as in the filtration by \emph{chromatic layers}. The first of these approximations is \emph{rational homotopy theory} (e.g. \cite{Hess06}) obtained by disregarding all torsion elements in homotopy and cohomology groups.
The model for \emph{rational parametrized stable homotopy theory} that we use in our computations had been conjectured in \cite[p. 20]{FSS16a}
and was subsequently worked out in \cite{Bra18}. This model allows us to reveal a deeper meaning
behind a curious dg-algebraic observation due to \cite{RS} (recalled as Prop. \ref{MinimalDGModels} below), culminating in our main
Theorem \ref{TwistedKTheoryInsideFiberwiseStabilizationOfATypeOrbispaceOf4Sphere}, below. However, since the gauge enhancement
mechanism that we present is obtained by \emph{universal constructions} (specifically the \emph{derived adjunctions} discussed in Sec. \ref{HomotopyTheory}), the lift of the mechanism beyond the rational approximation certainly exists, but is just much harder to analyze.

\medskip
Finally, in Sec. \ref{TheAppearanceFromMTheoryOfTheFundamentalD6AndD8} we apply Theorem \ref{TwistedKTheoryInsideFiberwiseStabilizationOfATypeOrbispaceOf4Sphere} to the double dimensional reduction of the fundamental
M-brane cocycles. We show that this solves \hyperlink{OpenRational}{\bf Open Problem, rational version 2} by making the D6- and
D8-brane cocycles appear and by exhibiting a single unified super-cocycle in rational un-truncated twisted K-theory:
$$
  \hspace{-0.1cm}
  \xymatrix@C=.5pt@R=18pt{
    \fbox{\hspace{-3mm}
      \begin{tabular}{c}
      \footnotesize  Fundamental branes
        \\
     \footnotesize   in M-theory
      \end{tabular}
    \hspace{-3mm}}
    \ar[dd]|{\mbox{\tiny
      \begin{tabular}{c}
        {\color{blue} enhanced}
        \\
        double dimensional reduction
      \end{tabular}
    }}
    &
    && &
    \mu_{{}_{\mathrm{M2}}}
    \ar@{|->}[ddl]
    \ar@{|->}[ddr]
    && &&
    \mu_{{}_{\mathrm{M5}}}
    \ar@{|->}[ddl]
    \ar@{|->}[ddr]
    &&
    &&&&
    \fbox{\hspace{-3mm}
      \begin{tabular}{c}
    \footnotesize    Super-cocycle
        \\
\footnotesize        on super-spacetime
        \\
\footnotesize        in rational
        cohomotopy
      \end{tabular}
   \hspace{-4mm} }
    \ar[dd]|{ \mbox{\tiny  \begin{tabular}{c} {\color{blue} fiberwise stabilized } \\ \tiny
    cyclification adjunction  \end{tabular} } }
    \\
    \\
    \fbox{\hspace{-3mm}
      \begin{tabular}{c}
      \footnotesize  Fundamental branes
        \\
       \footnotesize  in type IIA string theory
      \end{tabular}
    \hspace{-3mm}}
    &
    \mu_{{}_{\mathrm{D0}}}
    &&
    \mu_{{}_{\mathrm{F1}}}
    &&
    \mu_{{}_{\mathrm{D2}}}
    &&
    \mu_{{}_{\mathrm{D4}}}
    &&
    \mu_{{}_{\mathrm{NS5}}}
    &&
    {\color{blue} \mu_{{}_{\mathrm{D6}}}}
    &&
    {\color{blue} \mu_{{}_{\mathrm{D8}}}}
    &
    \fbox{\hspace{-3mm}
      \begin{tabular}{c}
 \footnotesize       Super-cocycle
        \\
   \footnotesize      on super-spacetime
        \\
   \footnotesize      in rational {\color{blue} un-}truncated
        \\
   \footnotesize      twisted K-theory
      \end{tabular}
    \hspace{-3mm}}
  }
$$
Notice how  all the folkloric ingredients recalled above do appear in this rigorous result,
albeit in a somewhat subtle way. First of all, the fact that fundamental branes and black branes
are closely related, while still crucially different (particularly in the matter of gauge enhancement),
is reflected by how super-cocycles interact with spacetime ADE singularities in the data specifying a real ADE-equivariant
cohomotopy class \cite{ADE}. Second, the claim that gauge enhancement in M-theory is connected to the
appearance of ADE singularities in spacetime is reflected here in the fact that the untruncated rational
twisted K-theory spectrum (which, as we have discussed, is the true rational coefficient for the gauge enhanced brane charges),
only appears from fiberwise stabilization of the \emph{equivariant}
4-sphere coefficient. This equivariant coefficient also induces the appearance of singular fixed point strata in
spacetime via equivariant enhancement:
$$
  \xymatrix@C=5em{
    &
    \fbox{ \hspace{-4mm}
      \begin{tabular}{c}
   \footnotesize     A-type $S^1$-action
        \\
\footnotesize        on coefficient 4-sphere
        \\
\footnotesize        of fundamental M-brane cocycle
      \end{tabular}
    \hspace{-4mm} }
    \ar@{|->}[ddr]|{ \mbox{
      \tiny
      \begin{tabular}{c}
        fiberwise stabilized
        \\
        $\mathrm{Ext}/\mathrm{Cyc}$-adjunction
        \\
        (Thm. \ref{TwistedKTheoryInsideFiberwiseStabilizationOfATypeOrbispaceOf4Sphere},
        Sec. \ref{TheAppearanceFromMTheoryOfTheFundamentalD6AndD8})
      \end{tabular}
    } }
    \ar@{|->}[ddl]|{ \mbox{
      \tiny
      \begin{tabular}{c}
        equivariant
        \\
        enhancement
        \\
        (\cite[Thm. 6.1, Sec. 2.2]{ADE})
      \end{tabular}
    } }
    \\
    \\
    \fbox{\hspace{-3mm}
      \begin{tabular}{c}
     \footnotesize   Black branes
        \\
      \footnotesize  at A-type singularities
      \end{tabular}
    \hspace{-3mm}}
    \ar@{<~>}[rr]^{ \mbox{\tiny  M-theory folklore } }
    &&
    \fbox{\hspace{-3mm}
      \begin{tabular}{c}
     \footnotesize   Gauge enhancement
        \\
\footnotesize        on M-branes
      \end{tabular}
    \hspace{-3mm}}
  }
$$

Of course it remains to lift our result beyond rational homotopy theory. However, we suggest
that the rational derivation of gauge enhancement in Theorem \ref{TwistedKTheoryInsideFiberwiseStabilizationOfATypeOrbispaceOf4Sphere}
and  Sec. \ref{TheAppearanceFromMTheoryOfTheFundamentalD6AndD8}
points to its own non-rational refinement. The reason is that the universal constructions that we have used also make sense
non-rationally -- they are just much harder to compute.
More concretely, we observe that the manner in which the rational version of twisted K-theory  appears below
is via a twisted, rational version of \emph{Snaith's theorem} (see Rem. \ref{InterpretationOfFiberwiseSuspensionOfATypeOrbispace}).
This theorem says that the K-theory spectrum $\mathrm{KU}$ is obtained from the suspension spectrum $\Sigma^\infty_{+} B S^1$
of the classifying space $B S^1$ by adjoining a multiplicative inverse of the Bott generator $\beta$:
$$
  \Sigma^\infty_{+} B S^1 [\beta^{-1}]
  \;\simeq_{\mathrm{swhe}}\;
  \mathrm{KU}
  \,.
$$
Rationally, Snaith's theorem is rather immediate, as is its rational twisted version (Ex. \ref{RationalSnaithTheorem} below)
that underlies our identification of rational twisted K-theory in Theorem
\ref{TwistedKTheoryInsideFiberwiseStabilizationOfATypeOrbispaceOf4Sphere}. Since rationalization is
the coarsest non-trivial approximation to full homotopy theory (and in this regard quite similar to taking the first derivative
of a non-linear function at a single point) it loses plenty of information.

\medskip
\emph{A priori}, what looks like twisted K-theory in the rational approximation could correspond non-rationally to
 different twisted cohomology theories (see \cite{S4, GS17, GS19} for discussions in this context). However, we do not just see  rational
twisted K-theory in isolation, but rather appearing after applying universal constructions  to the A-type orbispace of the 4-sphere.
For our main theorem on gauge enhancement, these
universal constructions are what really matter. 
Therefore, any non-rational lift of our gauge enhancement mechanism
should arise by the same universal construction, applied non-rationally, possibly in conjunction with other universal
constructions that are rationally invisible. This considerably constrains the possibilities. The conclusion we draw is that (fiberwise)
inversion of the Bott generator is a good candidate for lifting our gauge enhancement mechanism beyond the rational approximation.

\medskip
It is quite plausible that the gauge enhancement mechanism presented in Sec.
\ref{TheAppearanceFromMTheoryOfTheFundamentalD6AndD8}
generalizes beyond the rational approximation to a derivation of full twisted K-theory from degree 4 cohomotopy. We
may state this as the remaining part of the problem of gauge enhancement in M-theory:

\vspace{.2cm}

\noindent {\bf Open Problem, remaining part:}
\begin{quote}
{\it What is the non-rational lift of the gauge enhancement mechanism, by universal constructions in super homotopy theory,
from Sec. \ref{TheAppearanceFromMTheoryOfTheFundamentalD6AndD8}?}
\end{quote}

\vspace{.2cm}

\noindent We will return to this open problem elsewhere.

\newpage
\section{Two universal constructions in homotopy theory}
\label{HomotopyTheory}

Here we discuss two universal constructions in homotopy theory
(see e.g. \cite{Schreiber17a, Schreiber17b}),
which when applied to the A-type orbispace of the
4-sphere (Sec. \ref{TheATypeOrbispaceOfThe4Sphere}) reveal the mechanism of gauge enhancement on M-branes (Sec. \ref{TheMechanism}).
Firstly, in Sec. \ref{RationalParameterizedStableHomotopyTheory} we recall \emph{parametrized stable homotopy theory}
and recall an algebraic model for its rationalization from \cite{Bra18,Bra19b}, which enables us to effectively compute in this setting.
The main results of this section are
Theorem \ref{RationalParameterizedSpectradgModel}, which establishes differential-graded modules
as rational models for parametrized spectra, and Prop. \ref{FiberwiseSuspensionSpectrumdgModel} characterizing the fiberwise
stabilization adjunction in terms of these models.
In Sec. \ref{TheAdjunction} we demonstrate that forming cyclic loop spaces is one part of a
homotopy-theoretic adjunction, and we characterize the unit of the adjunction  (Theorem \ref{GCycExtAdjunction}). This universal construction is used in our formulation of double dimensional reduction.

\medskip
This section  may be read independently of the rest of the article and is of interest beside its application to M-brane phenomena.
Conversely, readers interested only in the application to M-theory and willing to accept our homotopy-theoretic machinery as a black box may be inclined to skip this section.

\subsection{Fiberwise stabilization}
\label{RationalParameterizedStableHomotopyTheory}

In \cite[p. 20]{FSS16a}, we found that the super $L_\infty$-algebraic F1/D$p$-brane cocycles organize
into a diagram as shown in (\hyperlink{fig:rattwistk}{b}) below. We further indicated that this ought to be thought of as the image in
supergeometric rational homotopy theory of a cocycle in twisted K-theory realized as a morphism of
parametrized spectra, as shown in (\hyperlink{fig:fulltwistk}{a}) below.
\begin{figure}[H]
\centering
\begin{subfigure}{0.35\textwidth}
\fbox{$
  \xymatrix{
    && \mathrm{KU}
     \ar[d]^{\mathrm{hofib}(p_\rho)}
    \\
    X \ar[dr]_\tau^{\ }="t" \ar@{-->}[rr]^c_{\ }="s" && \mathrm{KU} \dslash  BU(1) \ar[dl]^{p_\rho}
    \\
    & B^2 U(1)
    %
  }
$}
\caption{\hypertarget{fig:fulltwistk}A twisted K-theory cocycle according to \cite{AndoBlumbergGepner10, NSS12}.}
\end{subfigure}
\;\;\;\;\;\;\;\;\;\;\;\;
\begin{subfigure}{0.38\textwidth}
\fbox{$
  \xymatrix{
    && \mathfrak{l}(\mathrm{ku})
    \ar[d]^{\mathrm{hofib}(\phi)}
    \\
    \mathbb{R}^{9,1\vert \mathbf{16}+ \overline{\mathbf{16}}}
    \ar[rr]^{\mu_{{}_{F1/D}}^{\mathrm{IIA}}}
    \ar[dr]_{\mu_{{}_{F1}}}
    &&
    \mathfrak{l}( \mathrm{ku} \dslash  BU(1) )
    \ar[dl]^{\phi}
    \\
    & b^2 \mathbb{R}
  }
$}
\caption{\hypertarget{fig:rattwistk}The descended IIA F1/D$p$-brane cocycle according to \cite[Theorem 4.16]{FSS16a}.}
\end{subfigure}
\end{figure}

Roughly, a \emph{spectrum} is a kind of \emph{linearized} or \emph{abelianized} version of a topological space.
By the classical Brown Representability Theorem, maps into spectra represent cocycles in generalized cohomology
theories, such as K-theory. A \emph{parametrized spectrum} is a family of spectra that is parametrized in a
homotopy-coherent manner by a topological space (see \cite{MSi}). Roughly,  these are equivalent to bundles of spectra over the
parameter space, where maps into the total space of such a bundle represent cocycles in a \emph{twisted}
generalized cohomology theory.
$$
\xymatrix{
 {\mbox{Spectra}}
    \; \ar@{^{(}->}[rrr]^-{\mbox{ \tiny \begin{tabular}{c}Parametrized \\ over the point\end{tabular}}}
    &&&
    {\mbox{\begin{tabular}{c}Parametrized \\ spectra\end{tabular}}}
    \ar@{->>}[rrr]^-{\mbox{ \tiny \begin{tabular}{c}Underlying \\ parameter space \end{tabular}}}
    &&&
    {\mbox{Spaces}}}
$$

 Under Koszul duality, the conjecture of \cite{FSS16a} means, roughly, that there ought to be highlighted entries as
 in the following table. These entries unify Quillen--Sullivan's DG-models for rational homotopy theory
of topological spaces (the central result is recalled as Prop. \ref{SullivanEquivalence} below) with chain complex
models for stable rational homotopy theory (recalled as Prop. \ref{SchwedeShipleyEquivalence} below).

\vspace{3mm}
\hspace{-11mm}
 \begin{tabular}{|c||c|c|c|}
 \hline
 {\bf Homotopy theory}
    &
{\bf Stable}
   &
 {\bf Parametrized  stable  }
    &
    {\bf Plain}
\\ \hline \hline
{\bf Plain} &
  Spectra & Parametrized  spectra & Spaces
\\ \hline
{\bf Rational} & Cochain  complexes & {\it \color{blue} DG-modules} & DG-algebras
\\ \hline
{\bf Super rational} &
 Super cochain complexes     &
 {\it \color{blue} Super DG-modules} &
   Super DG-algebras  (``FDA''s)
   \\ \hline
\end{tabular}

\vspace{3mm}
This conjecture has been proven recently in \cite{Bra18} (see also the forthcoming articles \cite{Bra19a, Bra19b}).
In this paper we review those parts of the resulting
\emph{rational parametrized stable homotopy theory} that we need for the proof of Theorem \ref{TwistedKTheoryInsideFiberwiseStabilizationOfATypeOrbispaceOf4Sphere}.
The main results in this section are
Theorem \ref{RationalParameterizedSpectradgModel}, together with
Prop. \ref{FiberwiseSuspensionSpectrumdgModel}, which provide differential graded models
for  fiberwise suspension spectra.

%
%


To fix notation and conventions, we briefly recall some background on homotopy theory:

\begin{defn}[Classical homotopy theory (see e.g. \cite{Schreiber17a})]
  \label{ClassicalHomotopyCategories}
  We write
  \begin{enumerate}[{\bf (i)}]
    \item   $\mathrm{Ho}(\mathrm{Spaces})$ for the homotopy category of topological spaces, called the
    \emph{classical homotopy category}, which is the localization of the category of topological spaces
   at the \emph{weak homotopy equivalences} (those maps inducing isomorphisms on all homotopy groups,
   which we will denote by $\simeq_{\mathrm{whe}}$).


  \item $\mathrm{Ho}(\mathrm{Spaces})_{\mathbb{Q}, \mathrm{ft}}$
    for the full subcategory on homotopy types of \emph{finite rational type}, namely those spaces $X$ for
    which the homotopy groups $\pi_{k\geq 1}(X)$ are uniquely divisible (i.e., torsion-free and divisible),
    and $H^1(X,\mathbb{Q})$ and $\pi_{k\geq 2}(X) \otimes \mathbb{Q}$
    are finite-dimensional $\mathbb{Q}$-vector spaces;

  \item $\mathrm{Ho}(\mathrm{Spaces})_{\mathbb{Q},\mathrm{nil},\mathrm{ft}}\subset \mathrm{Ho}(\mathrm{Spaces})_{\mathbb{Q}, \mathrm{ft}}$ for the full subcategory of homotopy types that are moreover \emph{nilpotent}, meaning those $X$ that are connected
   and whose fundamental group $\pi_1(X)$ is a nilpotent group with each $\pi_n(X)$ a nilpotent $\pi_1(X)$-module.
   \end{enumerate}
\end{defn}

A nilpotent action is one whose iteration a certain number of times is the identity. Being simply-connected is a special case.
Similar constructions hold in the parametrized case (see \cite{Crab, MSi}):

\begin{defn}[Parametrized homotopy theory (e.g. {\cite[Sec. 1.1]{Bra18}})]
  \label{ParamaterizedHomotopyTheory}
  For any topological space $X$, we write
  \begin{enumerate}[{\bf (i)}]

    \item $\mathrm{Ho}\big( \mathrm{Spaces}_{/X} \big)$ for the homotopy category
     of spaces over $X$, that is, of spaces equipped with a map to $X$. We denote
      such objects by $[Y\xrightarrow{\pi} X]$;
      this is the  \emph{$X$-parametrized classical homotopy category};

    \item $\mathrm{Ho}\big(\mathrm{Spaces}_{\dslash X}\big)$
    for the homotopy category of spaces over $X$ that are equipped with a section. We denote such objects by
    $[X\xrightarrow{\sigma}Y\xrightarrow{\pi} X]$, where it is understood that the composite
    $\pi\circ \sigma = \mathrm{id}_X$; morphisms are maps of spaces over $X$ respecting the sections up to homotopy.

  \end{enumerate}
%
%
\end{defn}

\begin{remark}
  In the special case that $X \simeq \ast$ is the point, we simply have that
  \begin{enumerate}[{\bf (i)}]
    \item $\mathrm{Ho}\big( \mathrm{Spaces}_{/\ast}\big) \simeq \mathrm{Ho}\big( \mathrm{Spaces}\big)$
     is the classical homotopy category (Def. \ref{ClassicalHomotopyCategories}); and

    \item $\mathrm{Ho}\big( \mathrm{Spaces}_{\dslash \ast}\big)$
      is the homotopy category of \emph{pointed} spaces.
  \end{enumerate}
In general, it is sensible to think of $\mathrm{Ho}\big(\mathrm{Spaces}_{\dslash X}\big)$ as the homotopy category
  of \emph{$X$-parametrized pointed spaces}.
\end{remark}

Recall that there are two fundamental constructions associated to any pointed space $Y$: its based loop space $\Omega_\ast Y$
and its (reduced) suspension $\Sigma_\ast Y$.
There are analogous constructions in the parametrized setting over a fixed base space $X$ (see \cite{Crab, MSi}).
These constructions compute loop spaces and reduced suspensions fiberwise over $X$:
\begin{prop}[Looping and suspension]
  \label{ReducedSuspension}
 If $Y$ is
parametrized over a space $X$ (Def. \ref{ParamaterizedHomotopyTheory}), we can form its fiberwise loop space $\Omega_X Y$.
This construction is functorial and admits a left adjoint $\Sigma_X$, called the \emph{fiberwise
reduced suspension}:
$$
  \xymatrix{
    \mathrm{Ho}
    \left(
      \mathrm{Spaces}_{\dslash X}
    \right)
    \;
    \ar@{<-}@<+6pt>[rr]^-{\Omega_X}
    \ar@{->}@<-6pt>[rr]_-{\Sigma_X}^{\top}
    && \;
    \mathrm{Ho}
    \left(
     \mathrm{Spaces}_{\dslash X}
    \right).
  }
$$
\end{prop}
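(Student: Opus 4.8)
The plan is to realize $(\Sigma_X \dashv \Omega_X)$ as the derived form of a strict point-set adjunction coming from the closed symmetric monoidal structure on sectioned spaces over $X$. Recall (following \cite{MSi}) that the category of ex-spaces over $X$ carries a fiberwise smash product $\wedge_X$ with internal hom $F_X(-,-)$ and unit the fiberwise $0$-sphere $S^0_X := X \sqcup X$, sectioned onto one of the two summands. Set $S^1_X := X \times S^1$, sectioned along $X \times \{\ast\}$ (equivalently $S^1_X \simeq \Sigma_X S^0_X$), and define $\Sigma_X Y := S^1_X \wedge_X Y$ and $\Omega_X Y := F_X(S^1_X, Y)$. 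Both are manifestly functorial in $Y$, and the smash--hom adjunction of the closed monoidal structure gives, for all sectioned $Y, Z$ over $X$, a natural bijection between maps $\Sigma_X Y \to Z$ and maps $Y \to \Omega_X Z$ in $\mathrm{Spaces}_{\dslash X}$.

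Next I would promote this to a Quillen adjunction. One equips $\mathrm{Spaces}_{\dslash X}$ with a monoidal model structure for which $\wedge_X$ is a left Quillen bifunctor --- the $q_f$-model structure of May--Sigurdsson, which is engineered precisely so that the pushout--product axiom holds (the naive $q$-model structure fails it). Since $S^1_X$ is obtained from the cofibrant object $S^0_X$ by attaching fiberwise cells, it is cofibrant; hence the pushout--product axiom forces $- \wedge_X S^1_X$ to preserve cofibrations and acyclic cofibrations. Therefore $\Sigma_X \dashv \Omega_X$ is a Quillen adjunction and descends to the claimed adjunction $\mathbb{L}\Sigma_X \dashv \mathbb{R}\Omega_X$ on homotopy categories.

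It then remains to justify the names: that $\mathbb{R}\Omega_X$ and $\mathbb{L}\Sigma_X$ really compute loops and suspensions \emph{fiberwise}. This follows because finite limits and internal homs in $\mathrm{Spaces}_{/X}$ are computed fiberwise --- pulling back along $\{x\} \hookrightarrow X$ is a right adjoint that also preserves the relevant colimits --- so that $(\Omega_X Y)_x \simeq \Omega_\ast(Y_x)$ and $(\Sigma_X Y)_x \simeq \Sigma_\ast(Y_x)$ for every point $x \in X$, recovering the ordinary based loop space and reduced suspension on each fiber. Conceptually the cleanest packaging is $\infty$-categorical: $\mathrm{Spaces}_{/X} \simeq \mathrm{Fun}(X, \mathrm{Spaces})$ is presentable, so its $\infty$-category of pointed objects $\mathrm{Spaces}_{\dslash X}$ is presentable, and in any $\infty$-category with finite limits and colimits the pointed objects carry a canonical adjunction $\Sigma = \mathrm{cofib}(Y \to \ast) \dashv \mathrm{fib}(\ast \to Y) = \Omega$; fiberwise-ness is then simply the statement that evaluation at $x$ preserves both limits and colimits.

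\textbf{Main obstacle.} The only genuine difficulty is point-set bookkeeping in parametrized homotopy theory: the standard $q$-model structure on ex-spaces is not monoidal, so one must pass to the $q_f$- (or $h$-) model structure to make $\wedge_X$ left Quillen, and one must restrict attention to well-sectioned ex-spaces (section a fiberwise cofibration) so that $\Sigma_X$ and $\Omega_X$ are homotopy-invariant. The $\infty$-categorical argument bypasses all of this but at the price of hiding the explicit fiberwise formulas, so I would present both, using the model-categorical version to license the hands-on descriptions used later in the paper.
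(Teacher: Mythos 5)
Your proposal is correct, and it is essentially the argument the paper relies on: Prop.~2.5 is stated there without proof as a standard fact, with references to Crabb--James and May--Sigurdsson, and your smash--hom adjunction for the fiberwise smash with $S^1_X$, made homotopical via the $q_f$-model structure and checked fiberwise by restricting to points, is precisely the argument of those references. No gaps worth flagging beyond the point-set caveats you already acknowledge.
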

A fiberwise loop space $\Omega_X Y$ carries the structure of a fiberwise homotopical group (or, more precisely,
a fiberwise \emph{grouplike $A_\infty$-space})
given by concatenating loops. In the case of a fiberwise \emph{double} loop space $\Omega^2_X Y = \Omega_X \Omega_X Y$,
the Eckmann--Hilton argument implies an additional fiberwise first-order homotopy commutativity structure given by twisting
based loops around each other. Increasing the fiberwise loop order, as $n$ goes to infinity the fiberwise homotopcial group structure
on $\Omega^n_Y X$ becomes increasingly homotopy-commutative.
The $n\to \infty$ limit therefore provides a useful heuristic for obtaining \emph{$X$-parametrized homotopical abelian groups}.

\medskip
One way to formalize the idea that increasing loop order \emph{stabilizes} to abelian homotopy theory is to exhibit
a homotopy category for which the adjunction of Prop. \ref{ReducedSuspension}
is an equivalence. In the unparametrized setting, the homotopy category obtained in this manner is the \emph{stable homotopy category},
the objects of which are called \emph{spectra}. By definition, a spectrum $P$ is a sequence of pointed topological spaces
$\{P_n\}_{n\in \mathbb{N}}$ equipped with structure maps $\Sigma P_n \to P_{n+1}$.
To each spectrum $P$ is naturally assigned a sequence of abelian groups $\{\pi_k(P)\}_{k\in \mathbb{Z}}$ called the
\emph{stable homotopy groups} of $P$.
There is a natural notion of a morphism of spectra with respect to which the assignment of stable homotopy groups is functorial.
A map of spectra is a \emph{stable weak equivalence} if it induces an isomorphism on all stable homotopy groups, and localizing
the category of spectra at the class of stable weak equivalences produces the \emph{stable homotopy category}.
Every spectrum is stably weakly equivalent to a spectrum $P$ for which the adjuncts of the structure maps are weak homotopy
equivalences $P_n \cong_{\mathrm{whe}} \Omega P_{n+1}$.
Spectra of this special type are called \emph{$\Omega$-spectra}, and for an $\Omega$-spectrum $P$ there are weak homotopy
equivalences $P_k \cong \Omega^n P_{n+k}$ for all $n,k\geq 0$, exhibiting each $P_k$ as an infinite loop space
(or \lq\lq homotopical abelian group''). Finally, the looping and suspension operations on spaces prolong to spectra and for any
spectrum $P$ we have natural isomorphisms
\[
\pi_{k+1}(\Sigma P)\cong \pi_k (P) \cong \pi_{k-1}(\Omega P)
\]
for all $k\in \mathbb{Z}$. Spectra are primarily of interest since they represent generalized cohomology theories on topological spaces.
It is a consequence of the previously-mentioned Brown representability theorem that \emph{all} generalized cohomology theories arise
 from spectra in this way. See \cite{Schreiber17b} for a review of classical stable homotopy theory.

\medskip
The above notions for s[aces have analogues for spectra.

\begin{defn} [Stable homotopy theory]
  \label{StableHomotopyTheory}
  We write
  \begin{enumerate}[{\bf (i)}]
  \item $\mathrm{Ho}( \mathrm{Spectra} )$ for the \emph{homotopy category of spectra}, also
   called the \emph{stable homotopy category};
  \item $\mathrm{Ho}(\mathrm{Spectra})_{\mathbb{Q}}$
  for the \emph{rational stable homotopy category}, hence the localization of the category of spectra
  at the maps inducing isomorphisms on rationalized stable homotopy groups $\pi_\ast \otimes \mathbb{Q}$;
  \item $\mathrm{Ho}(\mathrm{Spectra})_{\mathbb{Q},\mathrm{ft}}$ for the full subcategory
    on those spectra $P$ which are of \emph{finite rational type}, meaning that $\pi_k(P)\otimes \mathbb{Q}$ is a
    finite-dimensional $\mathbb{Q}$-vector space for all $k\in\mathbb{Z}$;
  \item $\mathrm{Ho}(\mathrm{Spectra})_{\mathbb{Q},\mathrm{bbl}}$ for the full subcategory of those
  spectra which are \emph{rationally bounded below}, hence whose rationalized stable homotopy groups all vanish below
    some given dimension.
  \end{enumerate}
\end{defn}

We are mainly concerned here with \emph{parametrized} spectra, which are families of spectra parametrized by a topological
space. Equivalently, these are bundles of spectra over that base space. Given a family of spectra $P$ parametrized by a topological
space $X$, for any point $x\in X$ we can extract a \emph{stable homotopy fiber} $\mathbb{R}x^\ast P$ in a way that depends
 functorially on $P$ (and on $x$ in the appropriate homotopy-coherent sense).
There is a natural notion of a map of $X$-parametrized spectra, and we declare a map $P\to Q$ to be a \emph{fiberwise stable
equivalence} if the induced map $\mathbb{R}x^\ast Q\to \mathbb{R}x^\ast P$ is a stable weak equivalence for all $x\in X$.
For a rigorous approach to the theory using simplicial homotopy theory see \cite{Bra18, Bra19a}.

\begin{defn}[Parametrized stable homotopy theory]
  \label{ParamerizedStableHomotopyTheory}
  For a fixed parameter space $X$, we write
  \begin{enumerate}[{\bf (i)}]
    \item $\mathrm{Ho}\left( \mathrm{Spectra}_{X} \right)$ for the \emph{homotopy category of
    spectra parametrized by $X$}, hence the localization of the category of $X$-parametrized spectra at the fiberwise stable equivalences;
    \item $\mathrm{Ho}\left( \mathrm{Spectra}_{X} \right)_{\mathbb{Q}}$ for the \emph{rational homotopy category of
    spectra parametrized by $X$}, hence the localization of category of $X$-parametrized spectra at the maps inducing
    isomorphisms on rationalized stable homotopy groups on all homotopy fiber spectra;

    \item $\mathrm{Ho}\left( \mathrm{Spectra}_{X} \right)_{\mathbb{Q}, \mathrm{ft},\mathrm{bbl}}\subset \mathrm{Ho}\left( \mathrm{Spectra}_{X} \right)_{\mathbb{Q}}$ for the full subcategory on those $X$-spectra $P$ whose homotopy fiber spectra $\mathbb{R}x^\ast(P)$ are
    \emph{of finite rational type} and  \emph{are rationally bounded below} for all $x\in X$.
  \end{enumerate}
\end{defn}

A key point about parametrized spectra (Def. \ref{ParamerizedStableHomotopyTheory}) is that they represent \emph{twisted}
generalized cohomology theories, generalizing the fact that plain spectra represent generalized cohomology theories (e.g. \cite{ABGHR14}).
We will be particularly interested in the (rational version of the) twisted cohomology theory called twisted K-theory
 (Lemma \ref{TwistedKModel} below).
The fundamental relation between unstable  and stable homotopy theory in the parametrized setting is captured by the following:

\begin{prop}[Fiberwise stabilization adjunction]
 \label{AdjunctionStabilization}
 For any space $X$ there are pairs of adjoint functors
 \begin{equation}
   \label{FiberwiseStabilizationAdjunction}
   \xymatrix{
     \mathrm{Ho}\big( \mathrm{Spaces}_{/X}\big)
     \ar@/^1.4pc/@{<-}@<+8pt>[rrrr]^{\Omega^\infty_{X}}
     \ar@/_1.4pc/@{->}@<-8pt>[rrrr]_{\Sigma^\infty_{+,X}}
     \ar@{<-}@<+6pt>[rr]
     \ar@{->}@<-6pt>[rr]_-{(-)_{+,X}}^-{\bot}
     &&
     \mathrm{Ho} \big(\mathrm{Spaces}_{\dslash X}  \big)
     \ar@{<-}@<+6pt>[rr]^-{\Omega^\infty_X}
     \ar@{->}@<-6pt>[rr]_-{\Sigma^\infty_X}^-{\bot}
     &&
     \mathrm{Ho}\big( \mathrm{Spectra}_{X}  \big)
   }
 \end{equation}
 between the classical parametrized homotopy categories (Def. \ref{ParamaterizedHomotopyTheory})
 and the parametrized stable homotopy category (Def. \ref{ParamerizedStableHomotopyTheory}).
 Here $(-)_{+,X}$ adjoins a copy of $X$, e.g. $[Y\to X]\mapsto[X\to X\coprod Y\to X]$, while $\Omega^\infty_{X}$ sends an $X$-parametrized spectrum to its \emph{fiberwise infinite loop space}, and the operations $\Sigma^\infty_X$ and/or $\Sigma^\infty_{+,X}$
 are called forming \emph{fiberwise suspension spectra}.

 Moreover, the adjunction \eqref{FiberwiseStabilizationAdjunction} stabilizes the looping/suspension adjunction from Prop. \ref{ReducedSuspension} in that
 there is a diagram, commuting up to natural isomorphism, as follows
 $$
   \xymatrix@R=1.6em@C=5em{
    \mathrm{Ho}\big(\mathrm{Spaces}_{\dslash X}\big)
    \ar@{<-}@<+6pt>[rr]^-{\Omega_X}
    \ar@{->}@<-6pt>[rr]_-{\Sigma_X}^-{\top}
    \ar@{<-}@<+6pt>[dd]^-{\Omega^\infty_X}
    \ar@{->}@<-6pt>[dd]_-{\Sigma^\infty_X}^-{ \dashv }
    &&
    \mathrm{Ho}\big(\mathrm{Spaces}_{\dslash X}\big)
    \ar@{<-}@<+6pt>[dd]^-{\Omega^\infty_X}
    \ar@{->}@<-6pt>[dd]_-{\Sigma^\infty_X}^-{ \dashv }
    \\
    \\
    \mathrm{Ho}\big(\mathrm{Spectra}_X\big)
    \ar@{<-}@<+6pt>[rr]^-{\Omega_X}
    \ar@{->}@<-6pt>[rr]_-{\Sigma_X}^-{\simeq}
    &&
    \mathrm{Ho}\big(\mathrm{Spectra}_X\big).
   }
 $$
\end{prop}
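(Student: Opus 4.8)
The plan is to assemble the three stated adjunctions from two standard pieces of parametrized homotopy theory and then deduce the compatibility square by a formal mate argument. Throughout I work with the model-categorical (equivalently $\infty$-categorical) presentations of the homotopy categories of Def.~\ref{ParamaterizedHomotopyTheory} and Def.~\ref{ParamerizedStableHomotopyTheory}, as set up in \cite{MSi, Bra18}: $\mathrm{Spaces}_{/X}$ carries the slice model structure, $\mathrm{Spaces}_{\dslash X}$ the induced pointed model structure, which by Prop.~\ref{ReducedSuspension} comes with the Quillen pair $\Sigma_X \dashv \Omega_X$, and $\mathrm{Spectra}_X$ is the stabilization $\mathrm{Sp}(\mathrm{Spaces}_{\dslash X})$, modeled for instance by fiberwise $\Omega_X$-spectra.

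First I would dispatch the left-hand adjunction $(-)_{+,X} \dashv U$, where $U \colon \mathrm{Spaces}_{\dslash X} \to \mathrm{Spaces}_{/X}$ forgets the section and $(-)_{+,X}$ adjoins a disjoint copy of $X$, $[Y \to X] \mapsto [X \to X \amalg Y \to X]$. This adjunction is elementary at the point-set level, and $U$ both preserves and reflects fibrations and weak equivalences (these being detected after forgetting down to $\mathrm{Spaces}_{/X}$), so $U$ is right Quillen; deriving gives the adjunction of homotopy categories between $\mathrm{Ho}(\mathrm{Spaces}_{/X})$ and $\mathrm{Ho}(\mathrm{Spaces}_{\dslash X})$.

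Next, the adjunction $\Sigma^\infty_X \dashv \Omega^\infty_X$ is produced by the general theory of stabilizing a pointed, proper, combinatorial model category along a left Quillen endofunctor (following Hovey's spectra in general model categories; equivalently, Lurie's stabilization of a presentable pointed $\infty$-category). Applied to the pair $(\mathrm{Spaces}_{\dslash X}, \Sigma_X)$ it yields: (i) the stable homotopy category $\mathrm{Ho}(\mathrm{Spectra}_X)$, on which the prolonged $\Sigma_X$ is now a self-equivalence with inverse the prolonged $\Omega_X$; (ii) a Quillen adjunction whose left adjoint $\Sigma^\infty_X$ sends $Y$ to the free spectrum $(Y, \Sigma_X Y, \Sigma_X^2 Y, \dots)$ and whose right adjoint $\Omega^\infty_X$ is evaluation in degree $0$, i.e.\ the fiberwise infinite loop space. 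Composing the two adjunctions, $\Sigma^\infty_{+,X} := \Sigma^\infty_X \circ (-)_{+,X}$ is left adjoint to $U \circ \Omega^\infty_X$; writing the latter, as is customary, simply as $\Omega^\infty_X$ with target $\mathrm{Spaces}_{/X}$, this is the outer composite adjunction displayed in \eqref{FiberwiseStabilizationAdjunction}.

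For the final clause --- that this stabilizes Prop.~\ref{ReducedSuspension} --- I would first observe that the left adjoints commute, $\Sigma_X \circ \Sigma^\infty_X \cong \Sigma^\infty_X \circ \Sigma_X$. In the $\Omega_X$-spectrum model this is immediate: $\Sigma^\infty_X(\Sigma_X Y) = (\Sigma_X Y, \Sigma_X^2 Y, \dots)$ is the one-fold shift of $\Sigma^\infty_X(Y)$, and shifting is a model for the prolonged $\Sigma_X$ on $\mathrm{Spectra}_X$ --- this is precisely the statement that $\Sigma^\infty_X$ intertwines the structure map of $\mathrm{Spaces}_{\dslash X}$ with the (now invertible) suspension of the stabilization, which one can also read off the universal property of $\Sigma^\infty_X$ as the initial functor to a stable category under $\mathrm{Spaces}_{\dslash X}$. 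Passing to right adjoints by the mate calculus --- every functor in the square of left adjoints $\Sigma_X, \Sigma^\infty_X$ has a right adjoint and the comparison $2$-cell is invertible --- produces the commuting square of right adjoints $\Omega_X \circ \Omega^\infty_X \cong \Omega^\infty_X \circ \Omega_X$, which is the asserted commutativity up to natural isomorphism. The one genuine difficulty, and the step I expect to be the main obstacle, is none of these formal moves but rather the verification that the parametrized point-set models behave well enough --- properness, the small-object argument, and the Quillen-equivalence properties of $\Sigma_X$ after stabilization --- for the stabilization machine to run uniformly in $X$; this is exactly the foundational content of \cite{MSi, Bra18}, which I would invoke rather than reprove.
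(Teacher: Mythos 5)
Your proposal is correct and follows exactly the standard route that the paper itself relies on: the paper states Prop.~\ref{AdjunctionStabilization} without proof, as a recollection of parametrized stable homotopy theory from \cite{MSi, Crab, Bra18}, and your assembly of the derived $(-)_{+,X}\dashv U$ adjunction with the Hovey/Lurie stabilization of $(\mathrm{Spaces}_{\dslash X},\Sigma_X)$, followed by the mate argument for the compatibility square, is precisely the argument those references supply. Your closing caveat, that the genuine content lies in the point-set/model-categorical foundations which you invoke rather than reprove, matches the paper's own treatment.
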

\begin{remark}[Units]
We denote the unit morphism of the adjunction \eqref{FiberwiseStabilizationAdjunction}
on $Y \in \mathrm{Ho}\big( \mathrm{Spaces}_{/X} \big)$ by
\begin{equation}
  \label{UnitOfStabilizationAdjunction}
  \xymatrix@R=1em{
    Y
    \ar[dr]
    \ar[rr]^-{\mathrm{st}_X(Y)}
    &&
    \Omega^\infty_X \Sigma^\infty_{+,X}(Y)\;.
    \ar[dl]
    \\
    & X
  }
\end{equation}
\end{remark}
Both the classical and stable homotopy categories are extremely rich mathematical settings. In order to get a better handle
on these categories, one may filter them in various ways so as
to study (stable) homotopy types in controlled approximations. A particularly useful approximation of this sort is provided by
rational homotopy theory, which discards all torsion information carried by the homotopy groups.
The main reason that rational homotopy theory is so tractable is that both the unstable and stable variants can be completely
described in terms of algebraic data.
We recall this as Prop. \ref{SullivanEquivalence} and Prop. \ref{SchwedeShipleyEquivalence} below, but first we must recall
some terminology:
\begin{defn}[DG-algebraic homotopy theory]
  \label{dgAlgebrasAnddgModules}
  We write
\begin{enumerate}[{\bf (i)}]
  \item $\mathrm{Ho}(\mathrm{DGCAlg})$ for the \emph{homotopy category of connective differential graded (unital) commutative
  algebras (DG-algebras)} over $\mathbb{Q}$. The connectivity condition means that the underlying cochain complex vanishes identically in negative degree, and working in the homotopy category means that we localize the category of DG-algebras with respect to the class of quasi-isomorphisms;
  \item $\mathrm{Ho}(\mathrm{DGCAlg})_{\mathrm{cn}}$ for the full subcategory on the \emph{(cohomologically) connected}
  DG-algebras; those $A$ for which the algebra unit $\mathbb{Q} \to A$
   induces an isomorphism $\mathbb{Q} \simeq H^0(A)$;
  \item $\mathrm{Ho}\left(\mathrm{DGCAlg}\right)_{\mathrm{ft}}$ for the full subcategory of DG-algebras $A$ of \emph{finite type},
  so that $A$ is cohomologically connected and quasi-isomorphic to a DG-algebra that is degreewise finitely generated.
\end{enumerate}
\end{defn}

\begin{remark}[Differentials]
{\bf (i)} We write \lq\lq DG'' throughout to indicate that we are working with \emph{co}homological grading conventions, so that in particular all differentials  increase degree by $+1$.

\item {\bf (ii)} We will write free graded commutative algebras (without differentials) as polynomial algebras
$$
  \mathbb{Q}[\alpha_{k_1}, \beta_{k_2}, \dotsc]
  \,,
$$
where the subscript on the generator will always indicate its degree. Differentials on such free algebras are fully determined by
their actions on generators by the graded Leibniz rule, and so we will denote DG-algebras obtained this way by
$$
  \mathbb{Q}[\alpha_{k_1}, \beta_{k_2}, \dotsc]\Bigg/
  \left(
    \begin{aligned}
      d \alpha_{k_1} & = \cdots
      \\
      d \beta_{k_2} & = \cdots
      \\
      & \;\;\vdots
    \end{aligned}
  \right).
$$
\end{remark}
The main result of Sullivan's approach to rational homotopy theory \cite{Su} (a detailed treatment is the subject of the monograph \cite{BG}) is a characterization of certain well-behaved rational homotopy types in terms of DG-algebras:

\newpage
\begin{prop}[DG-models for rational homotopy theory (e.g. \cite{BG,Hess06})]
  \label{SullivanEquivalence}

  \item {\bf (i)} There is an adjunction
  $$
    \xymatrix{
      \mathrm{Ho}(\mathrm{Spaces})
      \ar@{->}@<+6pt>[rr]^-{\mathcal{O}}
      \ar@{<-}@<-6pt>[rr]_-{\mathcal{S}}^-{\bot}
      &&
      \mathrm{Ho}( \mathrm{DGCAlg})^{\mathrm{op}}
    }
  $$
  between the classical homotopy category of topological spaces
  (Def. \ref{ClassicalHomotopyCategories}) and the opposite of the
  homotopy category of DG-algebras (Def. \ref{dgAlgebrasAnddgModules}), where
  $\mathcal{O}$ denotes the derived functor of forming the DG-algebra of rational polynomial differential
  forms.

 \item  {\bf (ii)} This adjunction restricts to an equivalence of categories
  \begin{equation}
    \label{SullivanEquivalenceAdjunction}
    \xymatrix{
      \mathrm{Ho}(\mathrm{Spaces})_{\mathbb{Q}, \mathrm{nil}, \mathrm{ft}}
      \ar@{->}@<+6pt>[rr]^-{\mathcal{O}}
      \ar@{<-}@<-6pt>[rr]_-{\mathcal{S}}^-{\simeq}
      &&
      \mathrm{Ho}( \mathrm{DGCAlg})^{\mathrm{op}}_{\mathrm{ft}}
    }
  \end{equation}
  between the rational homotopy category of nilpotent spaces of finite type (Def. \ref{ClassicalHomotopyCategories})
  and the homotopy category of DG-algebras of finite type (Def. \ref{dgAlgebrasAnddgModules}).

  \item {\bf (iii)} The rational cohomology of a space is computed by
  the cochain cohomology of any one of its DG-algebra models:
  $$
    H^\bullet(X,\mathbb{Q}) \;\simeq\; H^\bullet(\mathcal{O}(X))
    \,.
  $$

  \item {\bf (iv)}   Under  the equivalence of \eqref{SullivanEquivalenceAdjunction}, every space $X$ on the left
    has a model by a \emph{minimal} DG-algebra, whose underlying graded algebra
    is the free graded commutative algebra on the dual rational homotopy groups of $X$:
    $$
      \mathcal{O}(X) \;\simeq\; \mathbb{Q}\left[ (\pi_\bullet(X) \otimes \mathbb{Q} )^\ast\right] \big/ \big(d(\,\cdots) = (\,\cdots)\big)
      \,.
    $$
    Minimal models are unique up to isomorphism, with the isomorphism between any two minimal models unique up to homotopy.
\end{prop}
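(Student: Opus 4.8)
The plan is to realize $\mathcal{O}$ concretely through Sullivan's algebra of polynomial differential forms and then run the classical Quillen--Sullivan argument. First I would introduce the simplicial commutative DG-algebra $\Omega_{\mathrm{PL}}$ whose value on $[n]$ is the free graded-commutative DG-algebra $\mathbb{Q}[t_0,\dots,t_n,y_0,\dots,y_n]/(\sum_i t_i-1,\, \sum_i y_i)$ with $\deg t_i=0$, $\deg y_i=1$, $d t_i=y_i$, i.e. polynomial forms on the standard $n$-simplex. For a simplicial set $X$ one sets $\mathcal{O}(X):=\mathrm{Hom}_{\mathrm{sSet}}(X,\Omega_{\mathrm{PL}})$, and the right adjoint $\mathcal{S}$ sends a DG-algebra $A$ to the simplicial set $[n]\mapsto \mathrm{Hom}_{\mathrm{DGCAlg}}(A,\Omega_{\mathrm{PL}}[n])$. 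The point-set adjunction $\mathcal{O}\dashv\mathcal{S}$ is formal. To obtain the derived adjunction in (i), working through the equivalence $\mathrm{Ho}(\mathrm{Spaces})\simeq\mathrm{Ho}(\mathrm{sSet})$, I would check it is a Quillen adjunction for the Kan--Quillen structure on simplicial sets and the standard model structure on $\mathrm{DGCAlg}$ whose weak equivalences are the quasi-isomorphisms and whose cofibrations are retracts of relative Sullivan extensions; the key point is that $\mathcal{S}$ preserves fibrations and trivial fibrations, which reduces to standard extension and acyclicity properties of $\Omega_{\mathrm{PL}}$. Part (iii), the \emph{multiplicative} comparison $H^\bullet(\mathcal{O}(X))\cong H^\bullet(X;\mathbb{Q})$, is the PL de Rham theorem: it follows from the Poincar\'e lemma (each $\Omega_{\mathrm{PL}}[n]$ is acyclic) together with an acyclic-models zig-zag of quasi-isomorphisms linking $\mathcal{O}(X)$ to the simplicial, hence singular, cochain algebra $C^\bullet(X;\mathbb{Q})$, using that $\Omega_{\mathrm{PL}}$ is extendable.

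For (iv) I would build minimal Sullivan models by the usual inductive cell attachment in $\mathrm{DGCAlg}$: given a connected finite-type DG-algebra $A$, construct a relative Sullivan extension $(\Lambda V,d)\xrightarrow{\ \simeq\ }A$ by adjoining free generators degree by degree, at stage $n$ adding degree-$n$ generators to surject onto $H^n$ and degree-$(n{+}1)$ generators to kill the kernel on $H^{n+1}$, always arranging $d(V)\subseteq \Lambda^{\geq 2}V$ (minimality, i.e. decomposability of the differential); the finite-type hypothesis makes each $V^n$ finite-dimensional. Uniqueness rests on the lemma that a quasi-isomorphism of minimal Sullivan algebras is an isomorphism (a word-length filtration argument using decomposability), together with the fact that minimal Sullivan algebras are cofibrant, so any two models are linked by a quasi-isomorphism that is unique up to homotopy; the identification of the underlying graded algebra with $\mathbb{Q}[(\pi_\bullet(X)\otimes\mathbb{Q})^\ast]$ then invokes (ii).

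The equivalence (ii) I would prove by showing the unit $X\to\mathcal{S}\mathcal{O}(X)$ and the counit $A\to\mathcal{O}\mathcal{S}(A)$ become weak equivalences on the stated subcategories. First I would check the unit is a rational equivalence for Eilenberg--MacLane spaces: $\mathcal{O}(K(\mathbb{Q},n))\simeq(\Lambda(x_n),0)$, and $\mathcal{S}$ of a free algebra on one generator in degree $n$ is again rationally $K(\mathbb{Q},n)$, while $K(\mathbb{Z},n)$ reduces to this case rationally. Then I would induct up a principal Postnikov tower of a nilpotent finite-type space $X$: $\mathcal{O}$ sends the pullback square defining the $n$-th stage (a principal $K(\pi_n,n)$-fibration) to a pushout/relative Sullivan extension of DG-algebras, so by the inductive hypothesis and the five-lemma in rational cohomology the unit is a rational equivalence at every finite stage, and one passes to the limit using finite type. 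Essential surjectivity onto $\mathrm{Ho}(\mathrm{DGCAlg})^{\mathrm{op}}_{\mathrm{ft}}$ comes from realizing a finite-type minimal model $(\Lambda V,d)$ as $\mathcal{O}$ of the space $\mathcal{S}(\Lambda V,d)$, whose fundamental group, read off the $1$-minimal model, is automatically rationally nilpotent; full faithfulness then follows formally from the unit/counit statements together with (iii).

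The main obstacle is genuinely homotopical rather than formal, and it is twofold. The first piece is the PL de Rham theorem in (iii): producing the multiplicative comparison with $C^\bullet(X;\mathbb{Q})$ requires the extendability of $\Omega_{\mathrm{PL}}$ and a careful acyclic-models argument, since singular cochains are not commutative, so one cannot compare directly but only through a zig-zag of quasi-isomorphisms. The second, and the more delicate bookkeeping, is controlling the Postnikov induction in (ii) strictly within the nilpotent, finite-rational-type hypotheses: one must verify at each stage that the relevant rational cohomology groups are finite-dimensional and that the (possibly infinite) tower of relative Sullivan extensions converges, which is exactly where the hypotheses recorded in Def. \ref{ClassicalHomotopyCategories} and Def. \ref{dgAlgebrasAnddgModules} enter essentially and cannot be relaxed.
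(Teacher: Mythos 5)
Your outline is essentially the canonical Quillen--Sullivan argument (PL polynomial forms $\Omega_{\mathrm{PL}}$, the Quillen adjunction against the Sullivan model structure on $\mathrm{DGCAlg}$, the PL de Rham theorem via extendability and acyclic models, inductive construction and uniqueness of minimal models, and Postnikov induction over nilpotent finite-type spaces), which is exactly the proof in the sources \cite{BG,Hess06} that the paper cites; the paper itself states this proposition without proof. The only point to watch is the non-simply-connected case: for nilpotent spaces with $\pi_1\neq 0$ minimality means decomposability \emph{together with} the Sullivan well-ordering/filtration condition on degree-one generators (equivalently, working through the $1$-minimal model and the nilpotent refinement of the Postnikov tower), which your sketch gestures at but should invoke explicitly, as in \cite{BG}.
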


\begin{example}[Minimal model for the 3-sphere]
  \label{MinimalDgcAlgebraModelFor3Sphere}
  The minimal model for the 3-sphere is given by
  $
    \mathcal{O}(S^3)
    \;\simeq\;
    \mathbb{Q}[h_3]\,/ \left(
        d h_3  = 0
    \right).
  $
  Observe that $S^3$ is rationally indistinguishable from a $K(\mathbb{Z},3)$.
  This is true for all odd-dimensional spheres, since $\pi_\ast (S^{2k+1})\otimes \mathbb{Q}$ is a one-dimensional graded
  vector space concentrated in dimension $2k+1$.
\end{example}
\begin{example}[Minimal  model for the 4-sphere]
  \label{MinimalDgcAlgebraModelFor4Sphere}
In contrast, the minimal model for the 4-sphere is not free, and is given by
  $$
    \mathcal{O}(S^4)
    \;\simeq\;
    \mathbb{Q}[ \omega_4, \omega_7  ]\Big/ \left(
      {\begin{aligned}
        d \omega_4 & = 0 \
        \\[-2mm]
        d \omega_7 & = -\tfrac{1}{2} \omega_4 \wedge \omega_4
      \end{aligned}}
    \right).
  $$
\end{example}
\begin{example}[Minimal model for $B S^1$]
  \label{MinimalDGCAlgebraModelForClassifyingSpace}
  The minimal model for the classifying space $B S^1$ of the circle is given by
  $
    \mathcal{O}(B S^1)
    \;\simeq\;
    \mathbb{Q}[\omega_2]\,\big/ \left(
        d \omega_2  = 0
    \right).
  $
  Note that this is decidedly not a minimal DGC-algebra model for the $2$-sphere, since $\pi_2(S^2)\cong \mathbb{Z}\cong\pi_3(S^2)$.
\end{example}

We will model the rational homotopy theory of parametrized spectra in terms of DG-modules over DG-algebras:
\begin{defn}[DG-modular homotopy theory]
  \label{RatioaldgModules}
  Given a DG-algebra $A$ (Def. \ref{dgAlgebrasAnddgModules}), we write
\begin{enumerate}[{\bf (i)}]
   \item $A\mbox{-}\mathrm{Mod}$ for the category of unbounded DG-modules over $A$;
   \item $A\mbox{-}\mathrm{Mod}_{\mathrm{ft}}$ for the subcategory of those DG-$A$-modules of \emph{finite type}, hence those whose
     cochain cohomology is finite-dimensional in each degree;
   \item $A\mbox{-}\mathrm{Mod}_{\mathrm{bbl}}$ for the subcategory of \emph{bounded below} DG-$A$-modules, namely those whose
     cochain cohomology vanishes identically below some degree;
   \item $\mathrm{DGCAlg}^{A/}$ for the slice category of \emph{$A$-algebras}.
   The objects of this category are simply DG-algebras equipped with an algebra morphism from $A$, which we will frequently denote $B \leftarrow A\colon \pi^\ast$.

   \item $\mathrm{DGCAlg}_{\dslash A}$ for the category of
   \emph{augmented $A$-algebras}, whose objects are diagrams of DG-algebras
   \[
   \xymatrix{
     A\ar@{<-}[r]^{\;\;\sigma^\ast}
     &
      B \ar@{<-}[r]^{\;\;\pi^\ast}
      & A}
    \] such that $\sigma^\ast \circ \pi^\ast$ is the identity on $A$.
%
%
%
    The morphism $\sigma^\ast$ is called the \emph{augmentation}, and its kernel  $\mathrm{ker}(\sigma^\ast) \in A\mbox{-}\mathrm{Mod}$ is the \emph{augmentation ideal}.

    \item Passing to the homotopy category in any of the above cases {\bf(i)}-{\bf(v)} (e.g. $A\mathrm{\mbox{-}Mod}\mapsto \mathrm{Ho}\big(A\mathrm{\mbox{-}Mod}\big)$ means that we localize with respect to the class of quasi-isomorphisms.
  \end{enumerate}
\end{defn}

\begin{remark}
We emphasise that we always work with \emph{connective} DG-algebras, however modules over these algebras may be unbounded.
\end{remark}

The stable analogue of Sullivan's rational homotopy theory equivalence Prop. \ref{SullivanEquivalence} {\bf (i)} in the unparametrized context is the following:
\begin{prop}[DG-models for rational stable homotopy theory]
  \label{SchwedeShipleyEquivalence}
  There is an equivalence
  $$
    \xymatrix{
      \mathrm{Ho}(\mathrm{Spectra})_{\mathbb{Q},\mathrm{ft}}
      \ar@{->}@<+6pt>[rr]^-{}
      \ar@{<-}@<-4pt>[rr]_{}^-{\simeq}
      &&
      \mathrm{Ho}(\mathrm{Ch}(\mathbb{Q}))^{\mathrm{op}}_{\mathrm{ft}}
    }
  $$
  between the rational homotopy category of spectra of finite type (Def. \ref{ClassicalHomotopyCategories}) and the opposite homotopy category
  of rational cochain complexes of finite type
  (as in Def. \ref{RatioaldgModules} for $A=\mathbb{Q}$).
\end{prop}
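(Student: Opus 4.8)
The plan is to factor the claimed equivalence through the rectification theorem of Schwede--Shipley and Shipley, which identifies module spectra over the rational Eilenberg--MacLane spectrum $H\mathbb{Q}$ with unbounded rational chain complexes, and then to exploit that $\mathbb{Q}$ is a field to trade the homotopy category of $\mathrm{Ch}(\mathbb{Q})$ for $\mathbb{Z}$-graded $\mathbb{Q}$-vector spaces. Concretely I would proceed in three steps: (1) identify $\mathrm{Ho}(\mathrm{Spectra})_{\mathbb{Q}}$ with the homotopy category of $H\mathbb{Q}$-modules; (2) transport this, via rectification, to $\mathrm{Ho}(\mathrm{Ch}(\mathbb{Q}))$; (3) cut down to finite type and pass to the opposite category via $\mathbb{Q}$-linear dualization.

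\noindent\textbf{Steps 1--2.} Rationalization of spectra is a smashing localization and $L_{\mathbb{Q}}\mathbb{S}\simeq H\mathbb{Q}$, since the stable stems $\pi_{\ast}\mathbb{S}$ are finite in positive degrees; hence a spectrum $P$ is rational exactly when $P\to H\mathbb{Q}\wedge P$ is an equivalence, and the evident functor from $H\mathbb{Q}$-module spectra to $\mathrm{Ho}(\mathrm{Spectra})_{\mathbb{Q}}$ is an equivalence. (Equivalently: the rational stable homotopy category is compactly generated by $H\mathbb{Q}$, whose derived endomorphism ring spectrum has homotopy $\mathbb{Q}$ concentrated in degree $0$.) Since $H\mathbb{Q}$ is formal, Shipley's $H\mathbb{Z}$-algebra rectification, specialized to $\mathbb{Q}$, then supplies a zig-zag of Quillen equivalences between $H\mathbb{Q}$-module spectra and $\mathrm{Ch}(\mathbb{Q})$ with its quasi-isomorphisms. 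Passing to homotopy categories and composing gives
\[
  \mathrm{Ho}(\mathrm{Spectra})_{\mathbb{Q}} \;\simeq\; \mathrm{Ho}(\mathrm{Ch}(\mathbb{Q})),
\]
under which a rational spectrum $P$ corresponds to a complex with homology $\pi_{\ast}(P)$ (up to the standard reindexing); as $\mathbb{Q}$ is a field, this complex is moreover quasi-isomorphic to its homology.

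\noindent\textbf{Step 3.} Under this equivalence the finite-type condition of Def. \ref{StableHomotopyTheory} --- each $\pi_k(P)\otimes\mathbb{Q}$ finite-dimensional --- becomes exactly the condition that the corresponding complex have degreewise finite-dimensional cohomology, i.e. lie in $\mathrm{Ho}(\mathrm{Ch}(\mathbb{Q}))_{\mathrm{ft}}$. On that subcategory the $\mathbb{Q}$-linear dual $C\mapsto \mathrm{Hom}_{\mathbb{Q}}(C,\mathbb{Q})$ is exact, reverses the grading (converting the homological grading into the cohomological convention used throughout the paper) and --- because all cohomologies are finite-dimensional --- is its own quasi-inverse up to natural isomorphism, hence an anti-equivalence $\mathrm{Ho}(\mathrm{Ch}(\mathbb{Q}))_{\mathrm{ft}} \simeq \mathrm{Ho}(\mathrm{Ch}(\mathbb{Q}))^{\mathrm{op}}_{\mathrm{ft}}$. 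Composing Steps 1--3 yields
\[
  \mathrm{Ho}(\mathrm{Spectra})_{\mathbb{Q},\mathrm{ft}} \;\simeq\; \mathrm{Ho}(\mathrm{Ch}(\mathbb{Q}))^{\mathrm{op}}_{\mathrm{ft}},
\]
realized by the derived functor sending $P$ to its rational cochain complex $C^{\bullet}(P;\mathbb{Q})$ --- the stable counterpart of the contravariant polynomial-forms functor $\mathcal{O}$ of Prop. \ref{SullivanEquivalence}.

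\noindent\textbf{Main obstacle.} The only genuinely nontrivial input is the rectification used in Step 2, which I would simply cite. The part requiring care is the bookkeeping: tracking the homological-versus-cohomological grading across rectification and dualization, verifying that the finite-type restriction is preserved at each stage, and confirming that it is precisely this finiteness hypothesis that promotes the dualization from a mere adjunction to an honest equivalence onto the opposite category. This grading-and-duality accounting, rather than any deep homotopy theory, is where the argument is most delicate.
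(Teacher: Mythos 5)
Your proposal is correct and follows essentially the same route as the paper's own sketch: identify rational spectra with $H\mathbb{Q}$-module spectra, invoke Shipley's rectification to pass to $\mathrm{Ho}(\mathrm{Ch}(\mathbb{Q}))$, and then use $\mathbb{Q}$-linear dualization under the finite-type hypothesis to land in the opposite category. Your added bookkeeping on gradings and the observation that finiteness is what makes dualization an anti-equivalence is exactly the care the paper leaves implicit.
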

\begin{proof}[Sketch of proof.]
This is a well-known fact in stable homotopy theory (and used extensively in differential cohomology, see \cite{GS17}),
but we sketch a proof for completeness.
Passing to the rational stable homotopy category is implemented by smashing with the Eilenberg-MacLane spectrum
$H\mathbb{Q}$, so that $ \mathrm{Ho}(\mathrm{Spectra})_\mathbb{Q} \cong \mathrm{Ho}(H\mathbb{Q}\mathrm{\mbox{-}Mod})$.
But the latter homotopy category is equivalent to the homotopy category of rational \emph{chain} complexes
$\mathrm{Ho}(H\mathbb{Q}\mathrm{\mbox{-}Mod})\cong \mathrm{Ho}(\mathrm{ch}(\mathbb{Q}))$ (see \cite{Shipley07}).
Under the assumption of finite type, dualizing then gives
\[
\mathrm{Ho}(\mathrm{Spectra})_{\mathbb{Q},\mathrm{ft}} \cong \mathrm{Ho}(\mathrm{Ch}(\mathbb{Q}))^\mathrm{op}_\mathrm{ft}.
\]
An alternative proof which does not appeal to dualization is given in \cite[Sec. 2.7.4]{Bra18}; see also \cite{Bra19b}.
%
\end{proof}
\begin{remark}[Operations]
\label{SuspensionandShifting}
Under the above equivalence, suspension of spectra corresponds to shifting the corresponding cochain complex up by one.
Looping corresponds to the cochain complex down by one. 
\end{remark}

The unification of Prop. \ref{SullivanEquivalence} with Prop. \ref{SchwedeShipleyEquivalence} established in \cite{Bra18, Bra19b} is:
\begin{theorem}[DG-models for rational parametrized spectra]
   \label{RationalParameterizedSpectradgModel}
   For any connective DG-algebra $A$ (Def. \ref{dgAlgebrasAnddgModules}),   there is a pseudo-natural transformation
   $$
     \xymatrix{
       \mathrm{Ho}\big( \mathrm{Spectra}_{\mathcal{S}(A)}\big)
       \ar[rr]^-{\mathcal{M}_A}
       &&
       \mathrm{Ho}\left( A\mbox{-}\mathrm{Mod}\right)^{\mathrm{op}}
     }
   $$
   from the homotopy category of parametrized spectra (Def. \ref{ParamerizedStableHomotopyTheory})
   parametrized by the rational space $\mathcal{S}(A)$ (Prop. \ref{SullivanEquivalence})
   to the opposite homotopy category of DG-modules over $A$ (Def. \ref{RatioaldgModules})
   with the following properties:
   \begin{enumerate}
     \item
      There is a factorization over the rational homotopy category of parametrized spectra:
       $$
         \xymatrix{
           \mathrm{Ho}\big(
             \mathrm{Spectra}_{\mathcal{S}(A)}
           \big)
           \ar[dr]_{\mathcal{M}_A}
           \ar[r]
           &
           \mathrm{Ho}\big(
             \mathrm{Spectra}_{\mathcal{S}(A)}
           \big)_{\mathbb{Q}}
           \ar[d]
           \\
           &
           \mathrm{Ho}\left(
             A\mbox{-}\mathrm{Mod}
           \right)^{\mathrm{op}}.
         }
       $$
     \item
       If the space $\mathcal{S}(A)$ is simply-connected, then $\mathcal{M}_A$ restricts to an equivalence of rational homotopy categories
       $$
         \xymatrix{
           \mathrm{Ho}\big( \mathrm{Spectra}_{\mathcal{S}(A)} \big)_{\mathbb{Q},\mathrm{ft}, \mathrm{bbl}}
           \ar[rr]^-{\mathcal{M}_A}_-\simeq
           &&
           \mathrm{Ho}\big(
             A\mbox{-}\mathrm{Mod}
           \big)^{\mathrm{op}}_{\mathrm{ft}, \mathrm{bbl}}
         }\,
       $$
       between finite-type, bounded below objects.
     \item
       For $A = \mathbb{Q}$, this extends to the
       equivalence of rational stable homotopy theory from Prop. \ref{SchwedeShipleyEquivalence}:
       $$
         \xymatrix{
           \mathrm{Ho}\big( \mathrm{Spectra} \big)_{\mathbb{Q},\mathrm{ft}}
           \ar[rr]^-{\mathcal{M}_A}_-\simeq
           &&
           \mathrm{Ho}\big(
             \mathrm{Ch}\left(\mathbb{Q}\right)
           \big)^{\mathrm{op}}_{\mathrm{ft}}.
         }
       $$
   \end{enumerate}
\end{theorem}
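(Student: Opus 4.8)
The plan is to construct $\mathcal{M}_A$ as the derived functor of ``fiberwise rational cochains'' and then to reduce part~2 to the already-established unparametrized statements, Prop.~\ref{SchwedeShipleyEquivalence} and Prop.~\ref{SullivanEquivalence}. First I would fix a model category of $\mathcal{S}(A)$-parametrized spectra (for instance one built from retractive spaces, as in \cite{Bra18}), together with its rationalization, obtained as a left Bousfield localization at the fiberwise $H\mathbb{Q}$-equivalences; Def.~\ref{ParamerizedStableHomotopyTheory} then singles out the homotopy category and its finite-type, bounded-below subcategory. On this model, forming fiberwise polynomial-de-Rham cochains of the total space relative to the base --- a fiberwise version of the functor $\mathcal{O}$ of Prop.~\ref{SullivanEquivalence} --- together with the residual action of $\mathcal{O}(\mathcal{S}(A)) \simeq A$, assembles into a contravariant right Quillen functor whose derived functor I take to be $\mathcal{M}_A$; contravariance, hence the $(-)^{\mathrm{op}}$ in the target, is forced by the contravariance of $\mathcal{O}$. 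Pseudo-naturality in $A$ amounts to the statement that, for a morphism $A \to A'$ of DG-algebras with induced map $\mathcal{S}(A') \to \mathcal{S}(A)$, base change of parametrized spectra corresponds under $\mathcal{M}$ to extension of scalars $A' \otimes_A (-)$; this follows from a fiberwise projection formula.

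Part~1 is then essentially immediate: $\mathcal{M}_A$ is built from $\mathbb{Q}$-coefficient cochains only, so it inverts the fiberwise $H\mathbb{Q}$-equivalences by construction, which is exactly the asserted factorization through $\mathrm{Ho}(\mathrm{Spectra}_{\mathcal{S}(A)})_{\mathbb{Q}}$. Part~3 is the specialization $A = \mathbb{Q}$, $\mathcal{S}(A) = \ast$, where the construction visibly reduces to forming the rational cochain complex of a spectrum, i.e.\ to the dual of the rational-chains equivalence recalled in the proof of Prop.~\ref{SchwedeShipleyEquivalence}.

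The substance is part~2. Here I would show that, when $X := \mathcal{S}(A)$ is simply connected, both sides are generated --- as triangulated categories closed under the relevant (co)products and shifts --- by compact generators that $\mathcal{M}_A$ matches: on the spectrum side by the fiberwise sphere spectrum $\Sigma^{\infty}_{+,X} X$ and its shifts, and on the module side by the free rank-one module $A$ and its shifts, with $\mathcal{M}_A\bigl( (\Sigma^{\infty}_{+,X} X)[n] \bigr) \simeq A[-n]$. Simple connectivity of $X$ ensures that rational local systems on $X$ are constant, so every finite-type, bounded-below parametrized spectrum over $X$ admits a fiberwise Postnikov presentation that is, in each degree, finitely built from these generators; the bounded-below hypothesis secures convergence of this presentation, and the finite-type hypothesis is precisely what makes the contravariant cochains functor an equivalence, dualization being an equivalence only on perfect objects. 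It then remains to check full faithfulness on the generators, i.e.\ that the natural comparison map
$$
  \bigl[\, (\Sigma^{\infty}_{+,X} X)[n],\ (\Sigma^{\infty}_{+,X} X)[m] \,\bigr] \otimes \mathbb{Q}
  \;\xrightarrow{\ \cong\ }\;
  \mathrm{Hom}_{\mathrm{Ho}(A\mbox{-}\mathrm{Mod})}\bigl( A[-n],\, A[-m] \bigr)
$$
is an isomorphism for all $n, m \in \mathbb{Z}$: the left-hand side computes the (untwisted, as $X$ is simply connected) rational parametrized cohomology of $X$, the right-hand side computes $H^{\bullet}(A) \cong H^{\bullet}(X; \mathbb{Q})$, and the two agree by Prop.~\ref{SullivanEquivalence}(iii) together with the identification of mapping spectra in $\mathrm{Spectra}_X$ with function spectra $F(X_+, -)$. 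A standard exact-functor/cellular-induction argument --- $\mathcal{M}_A$ is exact, carries the generating set to a generating set, and is fully faithful there --- then upgrades this to the claimed equivalence on all finite-type, bounded-below objects, and the case $X = \ast$ recovers part~3.

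The step I expect to be the main obstacle is the rigidification input underpinning all of the above: a parametrized refinement of Schwede--Shipley, asserting that the homotopy theory of modules over the parametrized rational sphere over $X$ (equivalently, over the parametrized Eilenberg--MacLane spectrum of $\mathbb{Q}$) is modeled --- compatibly with the parametrized smash product and with base change in $X$ --- by DG-modules over the cochain CDGA $C^{\ast}(X; \mathbb{Q}) \simeq A$. One can attempt this either by lifting Shipley's $H\mathbb{Z}$-algebra rectification to the twisted setting while tracking the monoidal comparison, or $\infty$-categorically by identifying $(\mathrm{Spectra}_X)_{\mathbb{Q}}$ with $\mathbb{Q}$-linear local systems on $X$ and then invoking the simply-connected, finite-type Koszul-duality identification of these with $A$-modules. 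Either way, the delicate points are the exact placement of the finite-type and bounded-below hypotheses so that no completion discrepancy intervenes between $C^{\ast}(X)$-modules and local systems, and the clean handling of the contravariant dualization. These are exactly the issues settled in \cite{Bra18, Bra19b}, which we take as given.
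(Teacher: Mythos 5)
Your construction of $\mathcal{M}_A$ --- the stabilization of the relative Sullivan--de~Rham functor, i.e.\ fiberwise rational cochains carrying their residual $A$-action --- is in substance the same construction the paper uses: its proof sketch says precisely that $\mathcal{M}_A$ is obtained by stabilizing the adjunction of Prop.~\ref{SullivanEquivalence} (compare the diagram in the proof of Prop.~\ref{FiberwiseSuspensionSpectrumdgModel}), and, like you, the paper defers all four assertions to \cite{Bra18,Bra19b} (Cor.~2.7.26, Cor.~2.7.31, Thm.~2.7.42, Rem.~2.7.44 of \cite{Bra18}), so your closing admission that the rectification input is ``taken as given'' puts your argument at exactly the paper's own level of rigor. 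Where you genuinely diverge is in how item~2 would be established: you propose a Morita-style argument (match generators $\Sigma^\infty_{+,X}X \mapsto A$, check full faithfulness there via $H^\bullet(X;\mathbb{Q})\simeq H^\bullet(A)$, then induct cellularly), whereas the cited proof works with (minimal) cell DG-module models in the sense of Def.~\ref{MinimalDGModule} and Rem.~\ref{MinimalModelsExist}. One caution about your phrasing: $\Sigma^\infty_{+,X}X$ is \emph{not} a compact generator of $\mathrm{Ho}\big(\mathrm{Spectra}_X\big)_{\mathbb{Q}}$ for general simply-connected finite-type $X$ (under the local-systems description it corresponds to the trivial $C_\bullet(\Omega X;\mathbb{Q})$-module, which is compact only under strong finiteness hypotheses on $X$), and indeed the equivalence of item~2 genuinely fails on the full categories --- that failure is exactly why the statement is restricted to finite-type, bounded-below objects. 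So one must not invoke the off-the-shelf compact-generator theorem of Schwede--Shipley type, which would prove too much; what actually carries the day is the Postnikov/cell induction you describe, where bounded-belowness gives convergence of the tower and finite type is what lets the contravariant cochains functor exchange the relevant limits for colimits without $\lim^1$/completion discrepancies. With that proviso, your outline is a legitimate alternative organization of the same proof, trading the explicit minimal-model bookkeeping of \cite{Bra18,RS} for a generator-plus-induction argument, at the cost of having to police the compactness and variance issues by hand.
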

\begin{proof}[Sketch of proof.]
The functor $\mathcal{M}_A$ is constructed by stabilizing the Sullivan--de Rham adjunction of Prop.  \ref{SullivanEquivalence} {\bf (i)}.
The various properties of the stabilized functor are established in \cite{Bra18, Bra19b}.
Specifically, pseudo-naturality is \cite[Cor. 2.7.26]{Bra18}, the first item is \cite[Cor. 2.7.31]{Bra18},
  the second item is \cite[Theorem 2.7.42]{Bra18}, and the final item is \cite[Rem. 2.7.44]{Bra18}.
\end{proof}
\begin{remark}[Fiberwise operations]
\label{SuspensionandShiftingParam}
The obvious parametrized analogue of Rem. \ref{SuspensionandShifting} holds, namely \emph{fiberwise} suspension
of a parametrized spectrum corresponds to shifting the corresponding DG-module up by one, and forming fiberwise loop
spaces corresponds to shifting the DG-module down by one.
\end{remark}

In the proof of Theorem \ref{TwistedKTheoryInsideFiberwiseStabilizationOfATypeOrbispaceOf4Sphere} below, we will need to know explicitly how the functor $\mathcal{M}_A$ behaves on fiberwise suspension spectra.
\begin{prop}[DG-models for fiberwise suspension spectra {\cite{Bra18, Bra19b}}]
  \label{FiberwiseSuspensionSpectrumdgModel}
  Let $A$ be a DG-algebra (Def. \ref{dgAlgebrasAnddgModules})
  and let
  $$
  \big[
  \xymatrix{
    Y\ar[r]^-{\pi} & \mathcal{S}(A)
    }
  \big]
    \;\in\;
    \mathrm{Ho}\big( \mathrm{Spaces}_{/\mathcal{S}(A)} \big)_{\mathbb{Q}}
  $$
  be a space over (Def. \ref{ParamaterizedHomotopyTheory}) the rational space $\mathcal{S}(A)$ determined by $A$ via Prop. \ref{SullivanEquivalence}.
  Then, after passing to DG-models via $\mathcal{M}_A$ (Theorem \ref{RationalParameterizedSpectradgModel}),
  the fiberwise suspension spectrum (Prop. \ref{AdjunctionStabilization}) is modeled by $\mathcal{O}(Y)$ (according to Prop. \ref{SullivanEquivalence}), regarded as an $A$-module via $\pi^\ast$; that is,
  \begin{equation}
    \label{FormulaFiberwiseSuspensionSpectra}
    \mathcal{M}_A
    \big(
      \Sigma^\infty_{+,\mathcal{S}(A)}(Y)
    \big)
    \;\simeq\;
    \mathcal{O}(Y)
    \,.
  \end{equation}
\end{prop}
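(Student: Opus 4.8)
The plan is to read off the formula from the very construction of $\mathcal{M}_A$ in Theorem \ref{RationalParameterizedSpectradgModel} --- namely as the functor obtained by \emph{stabilizing} the parametrized Sullivan--de Rham adjunction of Prop. \ref{SullivanEquivalence} --- combined with the observation (Prop. \ref{AdjunctionStabilization}) that $\Sigma^\infty_{+,X}$ is itself the stabilization of the ``adjoin a disjoint section'' functor $(-)_{+,X}$. Concretely, I would first exhibit a square, commuting up to natural isomorphism of homotopy categories,
$$
  \xymatrix@C=4em@R=2.4em{
    \mathrm{Ho}\big( \mathrm{Spaces}_{/\mathcal{S}(A)} \big)
    \ar[r]^-{\Sigma^\infty_{+,\mathcal{S}(A)}}
    \ar[d]_-{\mathcal{O}}
    &
    \mathrm{Ho}\big( \mathrm{Spectra}_{\mathcal{S}(A)} \big)
    \ar[d]^-{\mathcal{M}_A}
    \\
    \mathrm{Ho}\big( \mathrm{DGCAlg}^{A/} \big)^{\mathrm{op}}
    \ar[r]_-{\Phi^{\mathrm{op}}}
    &
    \mathrm{Ho}\big( A\mbox{-}\mathrm{Mod} \big)^{\mathrm{op}},
  }
$$
where $\mathcal{O}$ sends $[Y \xrightarrow{\pi} \mathcal{S}(A)]$ to the $A$-algebra $[A \simeq \mathcal{O}(\mathcal{S}(A)) \xrightarrow{\pi^\ast} \mathcal{O}(Y)]$ and $\Phi\colon \mathrm{DGCAlg}^{A/} \to A\mbox{-}\mathrm{Mod}$ (Def. \ref{RatioaldgModules}) is the DG-algebraic counterpart of $\Sigma^\infty_{+,X}$. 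Granting this square, evaluating both ways around it on $[Y \to \mathcal{S}(A)]$ gives $\mathcal{M}_A(\Sigma^\infty_{+,\mathcal{S}(A)}(Y)) \simeq \Phi\big([A \xrightarrow{\pi^\ast} \mathcal{O}(Y)]\big)$, and the proof is then reduced to identifying $\Phi$ explicitly.

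For that I would factor $\Sigma^\infty_{+,X} = \Sigma^\infty_X \circ (-)_{+,X}$ (Prop. \ref{AdjunctionStabilization}) and track each factor through the unstable equivalences. Since $\mathcal{O}$ is a left adjoint into $\mathrm{Ho}(\mathrm{DGCAlg})^{\mathrm{op}}$, it carries the (homotopy) coproduct $X \sqcup Y$ to the (homotopy) product algebra $\mathcal{O}(X) \times \mathcal{O}(Y)$, so that $(-)_{+,X}\colon [Y \to X] \mapsto [X \to X \sqcup Y \to X]$ corresponds on DG-models to the functor $\mathrm{DGCAlg}^{A/} \to \mathrm{DGCAlg}_{\dslash A}$ sending $[A \xrightarrow{\pi^\ast} B]$ to the augmented $A$-algebra $[A \to A \times B \to A]$ with augmentation the first projection. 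The fiberwise suspension spectrum functor $\Sigma^\infty_X$, in turn, corresponds under $\mathcal{M}_A$ to the functor $\mathrm{DGCAlg}_{\dslash A} \to A\mbox{-}\mathrm{Mod}$ taking an augmented $A$-algebra to its augmentation ideal $\ker(\sigma^\ast)$ regarded as an $A$-module --- this being the parametrized rational-homotopy shadow of the elementary fact that $H\mathbb{Q}$-cohomology of a suspension spectrum $\Sigma^\infty_+(-)$ is computed by the augmentation ideal of the Sullivan model. Composing, $\Phi$ sends $[A \xrightarrow{\pi^\ast} B]$ to $\ker\big( A \times B \twoheadrightarrow A \big) \cong B$ with $A$-module structure via $\pi^\ast$. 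Taking $B = \mathcal{O}(Y)$ then yields exactly $\mathcal{M}_A(\Sigma^\infty_{+,\mathcal{S}(A)}(Y)) \simeq \mathcal{O}(Y)$ as an $A$-module, which is \eqref{FormulaFiberwiseSuspensionSpectra}.

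The step I expect to be the main obstacle is the commutativity of the square --- i.e. that the abstractly-defined $\mathcal{M}_A$ is genuinely compatible with the fiberwise stabilization adjunctions of Prop. \ref{AdjunctionStabilization} on both the topological and the DG-algebraic side, and, relatedly, that the DG-algebraic suspension-spectrum functor really is the (derived, but exact and hence harmless) augmentation-ideal functor rather than something more subtle. This is not a formal consequence of what has been stated: the Sullivan--de Rham adjunction is only a Quillen equivalence on suitably finite nilpotent subcategories, so one must pass to the spectrum-object construction and verify that fibrant/cofibrant replacements can be chosen compatibly and that the induced stable comparison computes what it should. This is precisely the analysis carried out in \cite{Bra18, Bra19b}, on which I would rely; a secondary point worth recording carefully is that $\mathcal{O}$ sends the homotopy coproduct $X \sqcup Y$ to the homotopy product $\mathcal{O}(X)\times\mathcal{O}(Y)$ at the level of homotopy categories (true since a product of fibrant DG-algebras is fibrant and presents the homotopy product), which legitimizes the identification of the DG-analogue of $(-)_{+,X}$. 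With these facts in hand, the identification of $\Phi$ and hence the proof of the proposition is the routine unwinding above.
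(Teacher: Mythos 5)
Your proposal is correct and follows essentially the same route as the paper's own proof: the paper likewise invokes the commutative diagram (from \cite{Bra18, Bra19b}, there stated as a square of left Quillen functors) identifying the DG-model of $(-)_{+,\mathcal{S}(A)}$ with the product/augmented-algebra construction $B \mapsto B\oplus A$ and the DG-model of $\Sigma^\infty_{\mathcal{S}(A)}$ with the augmentation-ideal functor, then reads off $\ker(0\oplus\mathrm{id}) = \mathcal{O}(Y)$. Your identification of the main load-bearing step (compatibility of $\mathcal{M}_A$ with the stabilization adjunctions, established in the cited work) matches where the paper places it as well.
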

\begin{proof}[Sketch of proof.]
  In the proof of Theorem \ref{RationalParameterizedSpectradgModel} (see \cite[Lemma 2.7.25]{Bra18}, also \cite{Bra19b}), we encounter the commutative diagram of left Quillen functors:
  $$
    \xymatrix@R=1.2em{
      \mathrm{Spaces}_{/\mathcal{S}(A)}
      \ar[dd]_{\mathcal{O}_A}
      \ar@/^2pc/[rrrrrr]^-{\Sigma^\infty_{+,\mathcal{S}(A)}}
      \ar[rrr]^-{(-)_{+,\mathcal{S}(A)}}
      &&&
      \mathrm{Spaces}_{\dslash \mathcal{S}(A)}
      \ar[dd]_-{ \mathcal{O}_A }
      \ar[rrr]^-{ \Sigma^\infty_{\mathcal{S}(A)} }
      &&&
      \mathrm{Spectra}_{\mathcal{S}(A)}
      \ar[dd]^-{\mathcal{M}_A}
      \\
      \\
      \big( \mathrm{DGCAlg}^{A/} \big)^{\mathrm{op}}
      \ar[rrr]_-{ (-)\oplus A }
      &&&
       \big(\mathrm{DGCAlg}_{\dslash A} \big)^{\mathrm{op}}
      \ar[rrr]_-{\mathrm{aug}_A}
      &&&
      A\mbox{-}\mathrm{Mod}^{\mathrm{op}}
    }
  $$
  Here the left and bottom horizontal functors send
  $$
    {
      \left[\hspace{-2mm}
      \raisebox{20pt}{
      \xymatrix@C=4pt{
        Y
        \ar[d]^\pi
        \\
        \mathcal{S}(A)
      }
      }
      \right]
    }
        \xymatrix{\ar@{|->}[r]^-{\mathcal{O}_A} &}
      {
        \left[\hspace{-2mm}
        {
        \raisebox{40pt}{
          \xymatrix@C=4pt{
            \mathcal{O}(Y)
            \\
            \mathcal{O}(\mathcal{S}(A))
            \ar[u]^{\pi^{\ast}}
            \\
            A
            \ar[u]^\eta
          }
        }}
        \right]
      }
     \;
     \xymatrix{\ar@{|->}[r]^-{(-)\oplus A} &}
     \;
      {
        \left[\hspace{-2mm}
        {
        \raisebox{40pt}{
          \xymatrix@C=4pt{
            A
            \\
            \mathcal{O}(Y) \oplus A
            \ar[u]^{0 \oplus \mathrm{id}}
            \\
            A
            \ar[u]^{ (\pi^\ast \circ \eta) \oplus \mathrm{id}}
          }
        }}
        \right]
      }
       \xymatrix{\ar@{|->}[r]^-{\mathrm{aug}_A} &}
      \mathrm{ker}(0 \oplus \mathrm{id})
      =
      \mathcal{O}(Y)\;,
  $$
from which the assertion follows.
\end{proof}

Theorem \ref{RationalParameterizedSpectradgModel} allows us to make use of the established theory
of \emph{minimal DG-modules} in order to obtain models for parametrized spectra in the rational approximation:
\begin{defn}[Minimal DG-modules (see {\cite{Roig94}\cite[Sec. 1]{RS}})]
  \label{MinimalDGModule}
  Let $A$ be a DG-algebra.
  Write
  $$
    A[n] \simeq A \otimes \langle c_n \rangle \in A\mbox{-}\mathrm{Mod}
  $$
  for the DG-module over $A$ that is freely generated by a single
  generator $c_n$ in degree $n\in \mathbb{Z}$. The underlying graded vector space is simply $A$ shifted up (or down, if $n$ is negative) in degree by $|n|$ and $c_n$ is identified with shifted algebra unit.
  The differential is same as $A$, shifted in degree, and the module structure is given by left multiplication with elements in $A$.

\item {\bf (i)}  Given $N \in A\mbox{-}\mathrm{Mod}$ and an element $\alpha \in N$ of degree $n+1$ such that $d \alpha = 0$, we construct a new DG-module
  $$
    N \oplus_{\alpha} \left( A\otimes \langle c_n\rangle \right)
    \;\in\;
    A\mbox{-}\mathrm{Mod}
  $$
  whose underlying graded vector space is the direct sum $N \oplus (A\otimes \langle c_n\rangle)$, equipped with the
  evident $A$-module structure.
  The differential is induced from the differentials of $N$ and $A$, with the additional condition that
   $ d c_n := \alpha.
  $
  Hence, the differential is specified by
  $$
    d(n + a \otimes c_n)
    \;=\;
    d_N n + (d_A a) \otimes c_n + (-1)^{\mathrm{deg}(a)} a \cdot \alpha
  $$
  for all $n \in N$ and $a \in A$.
  An $A$-module of the form $N \oplus_{\alpha} (A\otimes \langle c_n\rangle)$ is called an \emph{$n$-cell attachment} of $N$.

\item {\bf (ii)}  A \emph{relative cell complex over $A$ } is an inclusion
  $
    N \hookrightarrow \widehat N
  $
  of $A$-modules such that $\widehat{N}$ arises from $N$ via a countable sequence of  cell attachments as in {\bf (i)}.
 If $N = 0$ is the zero module, then such a $\widehat N$ is called a \emph{cell complex over $A$}.
 A cell complex  over $A$ is therefore equivalent to the data of
 \begin{itemize}
   \item a graded vector space $V$ equipped with a countable, ordered linear basis $\{e^{(i)}\}_{i\in \mathbb{N}}$; and

   \item a differential on $A\otimes V$ making this graded vector space a DG-$A$-module and such that
   \[
   de^{(i)} \in A\otimes \langle e^{(j)}\rangle_{j\leq i}\;.
   \]
 \end{itemize}

\item {\bf (iii)} Furthermore, a cell complex over $A$ is \emph{minimal} the ordered basis $\{e^{(i)}\}_{i\in \mathbb{N}}$
satisfies the additional condition that $i\leq j$ if and only if $\mathrm{deg}(e^{(i)})\leq \mathrm{deg}(e^{(j)})$.
\end{defn}

\begin{remark}[Existence of minimal models  {\cite{Roig94}}]
   \label{MinimalModelsExist}
 For a (necessarily connective) DG-algebra $A$, any $A$-module $M$ admits a \emph{minimal model}.
 This means that there is a minimal $A$-module $N$ together with a quasi-isomorphism of $A$-modules $N\to M$.
Any two minimal models of $M$ are unique up isomorphism, and this isomorphism is unique up to homotopy.
\end{remark}

\begin{example}[Minimal cochain complexes have vanishing differential]
  \label{MinimalModelsForCochainComplexes}
  If $A = \mathbb{Q}$, so that $A\mbox{-}\mathrm{Mod} \simeq \mathrm{Ch}(\mathbb{Q})$ is the category of
  rational cochain complexes,
  then the minimal DG-modules (Def. \ref{MinimalDGModule}) are precisely the cochain complexes of finite type with \emph{vanishing} differential.
  This is because a non-vanishing differential would have to take a generator $e^{(i)}$ in some degree $n$
  to a $\mathbb{Q}$-linear multiple of a generator $e^{(< i)}$ of degree $n+1$, but this is ruled out by the degree condition on the generators.
\end{example}

According to Theorem \ref{RationalParameterizedSpectradgModel}, minimal DG-modules over DG-algebra $A$ determine rational $\mathcal{S}(A)$-spectra.
Just as for minimal DG-algebras, the fiberwise rational stable homotopy groups of a parametrized spectrum can be read off directly from a minimal module:

\begin{lemma}
  \label{RationalizedSpectraModeledByRationalizedHomotopyGroups}
  For $E\in \mathrm{Ho}\left(\mathrm{Spectra}\right)_{\mathbb{Q},\mathrm{ft}}$ any spectrum of finite rational type, the minimal DG-module model for its rationalization (Rem. \ref{MinimalModelsExist}) is the cochain complex
  \[
  (\pi_\bullet(E)\otimes \mathbb{Q})^\ast
  \]
  equipped with vanishing differential.
\end{lemma}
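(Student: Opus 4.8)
The plan is to chain together the rational stable equivalence of Theorem~\ref{RationalParameterizedSpectradgModel}(3) (equivalently Prop.~\ref{SchwedeShipleyEquivalence}) with the characterization of minimal DG-$\mathbb{Q}$-modules in Example~\ref{MinimalModelsForCochainComplexes}. First I would recall, as in the proof of Prop.~\ref{SchwedeShipleyEquivalence}, that rationalizing $E$ — that is, passing to its image under the localization at rational stable equivalences — is implemented by smashing with the Eilenberg--MacLane spectrum $H\mathbb{Q}$, so that the rationalization of $E$ is identified with an $H\mathbb{Q}$-module spectrum whose stable homotopy groups are $\pi_\bullet(E)\otimes\mathbb{Q}$. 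Under the equivalence $\mathrm{Ho}(H\mathbb{Q}\mbox{-}\mathrm{Mod})\simeq \mathrm{Ho}(\mathrm{ch}(\mathbb{Q}))$ this corresponds to a rational chain complex whose homology in degree $k$ is $\pi_k(E)\otimes\mathbb{Q}$.

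Second, I would use the fact that $\mathbb{Q}$ is a field: every rational chain complex is quasi-isomorphic to its homology equipped with the zero differential (chain complexes over a field are formal). Hence the chain complex modelling the rationalization of $E$ is quasi-isomorphic to $\pi_\bullet(E)\otimes\mathbb{Q}$ with vanishing differential. Since $E$ is of finite rational type this complex is degreewise finite-dimensional, so the degreewise dual is well-behaved and sends it to the cochain complex $(\pi_\bullet(E)\otimes\mathbb{Q})^\ast$, again with vanishing differential; this is precisely the object assigned to the rationalization of $E$ by $\mathcal{M}_{\mathbb{Q}}$ under the equivalence of Theorem~\ref{RationalParameterizedSpectradgModel}(3), compatibly with the grading conventions recorded in Rem.~\ref{SuspensionandShifting}.

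Finally, I would invoke Example~\ref{MinimalModelsForCochainComplexes}: over $A=\mathbb{Q}$ the minimal DG-modules are exactly the finite-type cochain complexes with vanishing differential. Thus $(\pi_\bullet(E)\otimes\mathbb{Q})^\ast$ with zero differential is already a minimal DG-$\mathbb{Q}$-module quasi-isomorphic to the DG-model of the rationalization of $E$, and by uniqueness of minimal models (Rem.~\ref{MinimalModelsExist}) it is \emph{the} minimal DG-module model, as claimed.

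The step I expect to require the most care is the bookkeeping of grading conventions: tracking how stable homotopy degree, the homological degree of chain complexes, the passage to the opposite category, and the degreewise dual all interact, so that $\pi_k(E)\otimes\mathbb{Q}$ genuinely lands in the asserted cochain degree of $(\pi_\bullet(E)\otimes\mathbb{Q})^\ast$. Everything else is a direct application of results established above; in particular finite-type-ness is preserved at every stage, which is exactly what makes the degreewise dualization legitimate.
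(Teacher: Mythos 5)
Your proposal is correct and is exactly the argument the paper intends: the lemma is stated without proof, and it follows just as you say from the $H\mathbb{Q}$-module/Shipley identification underlying Prop.~\ref{SchwedeShipleyEquivalence}, formality of (co)chain complexes over the field $\mathbb{Q}$, the finite-type dualization, and the characterization of minimal DG-$\mathbb{Q}$-modules in Ex.~\ref{MinimalModelsForCochainComplexes} together with uniqueness of minimal models (Rem.~\ref{MinimalModelsExist}). Your flagged concern about grading conventions is the right one to watch, and it is handled by Rem.~\ref{SuspensionandShifting} as you note.
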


\begin{lemma}[Minimal models for fiber spectra (see {\cite[Rem. 2.7.35]{Bra18}; also \cite{Bra19b}})]
  \label{MinimalDgModulesForFiberSpectra}
  Let $X \in \mathrm{Ho}\left(\mathrm{Spaces}\right)$ be simply-connected of finite rational type,
  and let $E \in \mathrm{Ho}\left(\mathrm{Spectra}_X\right)_{\mathbb{Q},\mathrm{ft}, \mathrm{bbl}}$ be a bounded-below $X$-spectrum of finite rational type (Def. \ref{ParamerizedStableHomotopyTheory}).
  If, moreover,
  $$
    A \simeq \mathcal{O}(X) \;\;\; \in \mathrm{DGCAlg}
  $$
  is any DG-algebra model for $X$ under Prop. \ref{SullivanEquivalence}, and if
  $$
    A \otimes V \;\simeq\; \mathcal{M}(E) \;\;\; \in A\mbox{-}\mathrm{Mod}
  $$
  is a \emph{minimal} DG-module model (Def. \ref{MinimalDGModule}) for $E$ under Theorem \ref{RationalParameterizedSpectradgModel},
  then for every $x\in X$ the cochain complex with vanishing differential
  $$
    V \;\simeq\; \mathcal{M}_{\mathbb{Q}}(E_x) \;\;\; \in \mathrm{Ch}(\mathbb{Q})
  $$
  is a minimal model for the fiber spectrum $E_x$.
  \end{lemma}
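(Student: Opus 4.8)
\noindent\emph{Proof idea.} The plan is to identify the fibre spectrum $E_x$ with the base change of the minimal module $A\otimes V \simeq \mathcal{M}(E)$ along the point $x$, and then to read off its differential from minimality. First I would note that a point $x\in X$ is the same datum as a map $x\colon\ast\to X$, which under the Sullivan--de Rham functor $\mathcal{O}$ of Prop.~\ref{SullivanEquivalence} corresponds to an \emph{augmentation} $\epsilon_x := \mathcal{O}(x)\colon A\to\mathcal{O}(\ast) = \mathbb{Q}$. Since the minimal DG-module $A\otimes V$ is in particular a cell complex over $A$ (Def.~\ref{MinimalDGModule}), hence cofibrant in $A\mbox{-}\mathrm{Mod}$, and since $X$ is simply connected, I would invoke the pseudo-naturality of $\mathcal{M}$ together with its factorization through the rational homotopy category (both part of Theorem~\ref{RationalParameterizedSpectradgModel}, proved in \cite{Bra18, Bra19b}) to obtain, in $\mathrm{Ho}\big(\mathrm{Ch}(\mathbb{Q})\big)^{\mathrm{op}}$, a natural identification
$$
  \mathcal{M}_{\mathbb{Q}}(E_x)
  \;\simeq\;
  \mathbb{Q}\otimes^{\mathbb{L}}_A \mathcal{M}_A(E)
  \;=\;
  \mathbb{Q}\otimes^{\mathbb{L}}_A \big( A\otimes V \big)
  \,,
$$
with $\mathbb{Q}$ regarded as an $A$-module via $\epsilon_x$. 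This is the stable, fibrewise incarnation of the comparison $\mathcal{O}(Y_x)\simeq\mathbb{Q}\otimes^{\mathbb{L}}_A\mathcal{O}(Y)$ for the fibre $Y_x$ of a fibration over the simply connected $X$ — exactly the identification used on suspension spectra in the proof of Prop.~\ref{FiberwiseSuspensionSpectrumdgModel}.

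Next, because $A\otimes V$ is cofibrant and $\mathbb{Q}\otimes_A(-)$ is left Quillen, the derived tensor product on the right is computed without cofibrant replacement: $\mathbb{Q}\otimes^{\mathbb{L}}_A(A\otimes V)\simeq\mathbb{Q}\otimes_A(A\otimes V)$, whose underlying graded vector space is simply $V$, equipped with the differential induced from that of $A\otimes V$ along $\epsilon_x$. The remaining point is that this induced differential vanishes. Writing $\{e^{(i)}\}_{i\in\mathbb{N}}$ for the ordered homogeneous basis of $V$ exhibiting minimality, the cell-complex condition gives $d\big(1\otimes e^{(i)}\big) = \sum_{j\le i} a_{ij}\otimes e^{(j)}$ with $a_{ij}\in A$ and $\deg a_{ij} = \deg e^{(i)} + 1 - \deg e^{(j)}$; minimality forces $\deg e^{(j)}\le\deg e^{(i)}$ whenever $j\le i$, so $\deg a_{ij}\ge 1$. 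Since $\epsilon_x\colon A\to\mathbb{Q}$ is a DG-algebra map into a complex concentrated in degree $0$, it annihilates $A$ in positive degrees, hence $\epsilon_x(a_{ij}) = 0$ and the induced differential on $V$ is zero.

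Putting these together, $\mathcal{M}_{\mathbb{Q}}(E_x)\simeq(V,0)$. By Example~\ref{MinimalModelsForCochainComplexes} a finite-type cochain complex with vanishing differential is exactly a minimal DG-module over $\mathbb{Q}$, so $(V,0)$ is \emph{the} minimal model of $E_x$; this is moreover consistent with Lemma~\ref{RationalizedSpectraModeledByRationalizedHomotopyGroups}, which independently identifies this minimal model with $\big(\pi_\bullet(E_x)\otimes\mathbb{Q}\big)^\ast$ (note that $E_x$ is of finite rational type and bounded below by Def.~\ref{ParamerizedStableHomotopyTheory}).

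I expect the main obstacle to be the base-change identification in the first step: one must show that, under $\mathcal{M}$, the stable pullback $\mathbb{R}x^\ast$ along $x\colon\ast\to X$ corresponds to derived extension of scalars $\mathbb{Q}\otimes^{\mathbb{L}}_A(-)$ along $\epsilon_x$. This is precisely where simple-connectedness of $X$ is indispensable, since it is the hypothesis under which the Eilenberg--Moore-type base-change comparison for the rational fibration over $X$ is an equivalence; the verification is part of the construction of $\mathcal{M}$ in \cite{Bra18, Bra19b} and is packaged into the pseudo-naturality clause of Theorem~\ref{RationalParameterizedSpectradgModel}. Once that square is in hand, the remaining steps — cofibrancy of the cell complex and the degree count killing the differential — are routine.
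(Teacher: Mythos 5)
The paper does not actually prove this lemma in-text: it is quoted from \cite[Rem.\ 2.7.35]{Bra18} (also \cite{Bra19b}), so there is no internal proof to compare against line by line. Judged on its own merits, your argument is correct and is essentially the expected one: identify the fiber with base change along the augmentation $\epsilon_x\colon A\to\mathbb{Q}$, use cofibrancy of the cell complex $A\otimes V$ to compute the derived tensor underived, and then kill the induced differential on $V$ by the degree count coming from the minimality ordering of Def.\ \ref{MinimalDGModule}; the conclusion that $(V,0)$ is minimal is exactly Ex.\ \ref{MinimalModelsForCochainComplexes}, and the consistency check against Lemma \ref{RationalizedSpectraModeledByRationalizedHomotopyGroups} is apt. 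The only place where you are leaning on something not fully in your hands is the first step, and you correctly flag it: the identification $\mathcal{M}_{\mathbb{Q}}(\mathbb{R}x^\ast E)\simeq\mathbb{Q}\otimes^{\mathbb{L}}_A\mathcal{M}_A(E)$ is not a formal consequence of pseudo-naturality alone --- naturality only produces the comparison map, and the fact that it is an equivalence is an Eilenberg--Moore-type statement which is exactly where the simply-connected, finite-type and bounded-below hypotheses are consumed. That input is precisely what the paper itself outsources to \cite{Bra18, Bra19b} (it is the content of the cited remark, proved there alongside Theorem \ref{RationalParameterizedSpectradgModel} and the mechanism behind Prop.\ \ref{FiberwiseSuspensionSpectrumdgModel}), so your proposal delegates no more than the paper does; with that reference granted, the remaining steps you give are complete and correct.
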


\begin{example}[Minimal model for suspension spectrum of $B S^1$]
  \label{MinimalModelForSuspensionSpectrumOfCircleClassifyingSpace}
  The minimal model for the suspension spectrum (Prop. \ref{AdjunctionStabilization}) of the classifying space $B S^1$ (viewed as a parametrized spectrum over the point)
  is the graded vector space spanned by one generator in every non-negative even degree:
  \begin{equation}
  \label{BS1MinimalModule}
    \mathcal{M}_{\mathbb{Q}}\left(\Sigma^\infty_+ B S^1 \right)
    \;\simeq\;
    \mathbb{Q}[\beta_2] = \langle 1, \beta_2, \beta_2^2 ,\dotsc\rangle
    \;\;\;
    \in \mathrm{Ch}(\mathbb{Q})
    \,.
  \end{equation}
Indeed, a minimal DG-algebra model for $B S^1$ is the symmetric graded algebra
  on a single generator in degree 2:
  $$
    \mathbb{Q}[ \beta_2 ]
    \;\simeq\;
    \mathcal{O}(B S^1)
    \;\;\;\in \mathrm{DGCAlg}
    \,,
  $$
  which necessarily has trivial differential;
  this implies the claim with Prop. \ref{FiberwiseSuspensionSpectrumdgModel}.
  Note that in \eqref{BS1MinimalModule} the generator in degree $2n$ is $\beta^n_2$
 ( however, this notation should be taken with a grain of salt since we have forgotten the algebra structure at this point).
\end{example}

\begin{example}[Rational Snaith theorem]
  \label{RationalSnaithTheorem}
  Let $\mathrm{KU} \in \mathrm{Ho}\left(\mathrm{Spectra}\right)$ be the spectrum representing complex K-theory
  and write $\mathrm{ku} \in \mathrm{Spectra}$ for its connective cover (obtained by killing negative-dimensional homotopy groups).
  Minimal DG-module models for  $\mathrm{KU}$ and $\mathrm{ku}$ are, by Lemma \ref{RationalizedSpectraModeledByRationalizedHomotopyGroups},
  given by
  $$
    \mathcal{M}_{\mathbb{Q}}\left( \mathrm{ku} \right)
    \;\simeq\;
    \mathbb{Q}[\beta_2]
    \; \in
    \mathrm{Ch}(\mathbb{Q})
 \qquad
\text{ and}
\qquad
     \mathcal{M}_{\mathbb{Q}}\left( \mathrm{KU} \right)
    \;\simeq\;
    \mathbb{Q}[\beta_2, \beta_2^{-1}]
    \; \in
    \mathrm{Ch}(\mathbb{Q})\;.
  $$
  In particular, rationally there is no difference between $ku$ and $\Sigma^\infty_+ BS^1$:
  $$
    \Sigma^\infty_+ B S^1 \;\simeq_{\mathbb{Q}}\; \mathrm{ku}.
  $$
  On the other hand, if we remember the algebra structure on $\mathbb{Q}[\beta_2]$, the full non-connective
  K-theory is obtained by multiplicatively inverting the element $\beta_2$
  $$
    \Sigma^\infty_+ B S^1[\beta_2^{-1}] \;\simeq_{\mathbb{Q}}\; \mathrm{KU}
    \,.
  $$
\end{example}

\begin{remark}[Full Snaith theorem]
  As a matter of fact, the last statement in the previous example is still true \emph{non-}rationally:
  this is the content of Snaith's theorem \cite{Snaith81}. After rationalization, Snaith's theorem essentially reduces to a
  triviality, however keeping in mind that this is a \lq\lq rational shadow'' may help identify the non-rational situation
  approximated by our main Theorem \ref{TwistedKTheoryInsideFiberwiseStabilizationOfATypeOrbispaceOf4Sphere}
  below. We conjecture that Theorem \ref{TwistedKTheoryInsideFiberwiseStabilizationOfATypeOrbispaceOf4Sphere}
  remains true non-rationally by a generalization of Snaith's theorem to twisted K-theory obtained by \emph{fiberwise}
  inversion of the Bott generator.   We will return to this point elsewhere.
\end{remark}

We have seen in Prop. \ref{FiberwiseSuspensionSpectrumdgModel} how stabilization -- the process of passing from spaces
to spectra -- works in terms of rational models by taking augmentation ideals.
Conversely, the \emph{de}stabilization process that extracts an infinite loop space from a spectrum also has a straightforward
incarnation in terms of algebraic models.
For connective parametrized spectra, extracting fiberwise infinite loop spaces is represented by taking the free algebra of the
corresponding DG-module:
\begin{prop}[Rational models for fiberwise infinite loop spaces (see {\cite[Sec. 2.7]{Bra18}; \cite{Bra19b}})]
   \label{RationalInfiniteLoopSpace}
Let $A$ be a DG-algebra of finite type such that $\mathcal{S}(A)$ is simply-connected.
If $M$ is a connective $A$-module, then under the inverse equivalence of
$$
         \xymatrix{
           \mathrm{Ho}\big( \mathrm{Spectra}_{\mathcal{S}(A)} \big)_{\mathbb{Q},\mathrm{ft}, \mathrm{bbl}}
           \ar[rr]^-{\mathcal{M}_A}_-\simeq
           &&
           \mathrm{Ho}\big(
             A\mbox{-}\mathrm{Mod}
           \big)^{\mathrm{op}}_{\mathrm{ft}, \mathrm{bbl}}
         },
$$
the fiberwise infinite loop space (Prop. \ref{AdjunctionStabilization}) is modelled by the augmented
$A$-algebra $\mathrm{Sym}_A (N)$ where $N$ is a minimal model of $M$ (Rem. \ref{MinimalModelsExist}).
\end{prop}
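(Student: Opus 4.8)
The plan is to obtain the statement as the right-adjoint mate of the description of $\Sigma^\infty_{\mathcal{S}(A)}$ already established in the proof of Prop.~\ref{FiberwiseSuspensionSpectrumdgModel}. Recall from that proof the commuting square of left Quillen functors whose right-hand half is
$$
  \xymatrix@R=1.1em@C=4.5em{
    \mathrm{Spaces}_{\dslash \mathcal{S}(A)}
    \ar[d]_-{\mathcal{O}_A}
    \ar[r]^-{\Sigma^\infty_{\mathcal{S}(A)}}
    &
    \mathrm{Spectra}_{\mathcal{S}(A)}
    \ar[d]^-{\mathcal{M}_A}
    \\
    \big( \mathrm{DGCAlg}_{\dslash A} \big)^{\mathrm{op}}
    \ar[r]_-{\mathrm{aug}_A}
    &
    A\mbox{-}\mathrm{Mod}^{\mathrm{op}}\,,
  }
$$
so that on homotopy categories $\mathcal{M}_A \circ \Sigma^\infty_{\mathcal{S}(A)} \simeq \mathrm{aug}_A \circ \mathcal{O}_A$. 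The first step is the elementary algebraic observation that, in the \emph{un}-opposed categories, the augmentation-ideal functor $\mathrm{aug}_A\colon \mathrm{DGCAlg}_{\dslash A}\to A\mbox{-}\mathrm{Mod}$ is right adjoint to the relative symmetric-algebra functor, $\mathrm{Sym}_A \dashv \mathrm{aug}_A$, the adjunction bijection being the obvious one: an augmented $A$-algebra map out of $\mathrm{Sym}_A(M)$ is determined by the image of the generating module $M$, which is forced to land in the augmentation ideal of the target. Moreover $\mathrm{Sym}_A$ is left Quillen, because $\mathrm{aug}_A$ preserves fibrations and acyclic fibrations: for an augmented $A$-algebra $[A\xleftarrow{\sigma^\ast} B \xleftarrow{\pi^\ast} A]$ the sequence $0\to \mathrm{ker}(\sigma^\ast)\to B \to A \to 0$ is split exact, so surjectivity and quasi-isomorphism of a map of augmented $A$-algebras pass to the induced map on augmentation ideals. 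Consequently, in the opposite categories the functor $\mathrm{aug}_A$ (which is what models $\Sigma^\infty_{\mathcal{S}(A)}$) has right adjoint $\mathrm{Sym}_A\colon A\mbox{-}\mathrm{Mod}^{\mathrm{op}}\to (\mathrm{DGCAlg}_{\dslash A})^{\mathrm{op}}$, whose total derived functor is computed by applying $\mathrm{Sym}_A$ to a cofibrant (in $A\mbox{-}\mathrm{Mod}$) model of the module.

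Next I would pass to derived right adjoints across the square. On the topological side the right adjoint of $\Sigma^\infty_{\mathcal{S}(A)}$ is $\Omega^\infty_{\mathcal{S}(A)}$ (Prop.~\ref{AdjunctionStabilization}), and the vertical functors $\mathcal{O}_A$ (on pointed parametrized spaces) and $\mathcal{M}_A$ (on parametrized spectra) restrict — by relative Sullivan theory and Thm.~\ref{RationalParameterizedSpectradgModel}(2) — to equivalences of the homotopy categories of the simply-connected, finite-type, bounded-below objects at hand. Since the displayed square is a commuting square of left Quillen functors whose vertical legs are Quillen equivalences on these subcategories, the conjugate square of derived right adjoints commutes as well, giving a natural isomorphism
$$
  \mathcal{O}_A \circ \Omega^\infty_{\mathcal{S}(A)}
  \;\simeq\;
  \mathbb{R}\mathrm{Sym}_A \circ \mathcal{M}_A \,.
$$
Thus $\Omega^\infty_{\mathcal{S}(A)}(E)$ is modelled by the \emph{derived} relative symmetric algebra on the DG-module $\mathcal{M}_A(E)$.

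Finally I would evaluate this derived functor. Computing $\mathbb{R}\mathrm{Sym}_A$ amounts to applying $\mathrm{Sym}_A$ to a cofibrant model of $M := \mathcal{M}_A(E)$; a \emph{minimal} DG-module $N$ over $A$ (Def.~\ref{MinimalDGModule}, Rem.~\ref{MinimalModelsExist}) is precisely a cell complex over $A$ built from the zero module, hence cofibrant, and every $A$-module admits such a minimal model. Since $M$ is connective its minimal model $N$ is connective too, so $\mathrm{Sym}_A(N)$ is a \emph{connective} augmented $A$-algebra and therefore genuinely models a pointed parametrized space — matching the fact that the infinite loop space of a connective spectrum is an honest space. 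Stringing the three steps together yields $\mathcal{O}_A\big(\Omega^\infty_{\mathcal{S}(A)}(E)\big)\simeq \mathrm{Sym}_A(N)$, which is exactly the assertion; in effect, this Proposition is Prop.~\ref{FiberwiseSuspensionSpectrumdgModel} read backwards.

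The one step that requires genuine care is the passage to the conjugate square: one must invoke that $\mathcal{O}_A$ and $\mathcal{M}_A$ are equivalences on the relevant finite-type, bounded-below, (simply-)connected subcategories — which is precisely where the hypotheses "$A$ of finite type'', "$\mathcal{S}(A)$ simply-connected'' and "$M$ connective'' are consumed — together with the standard fact that the mate of an invertible $2$-cell across a square of adjunctions is again invertible once the two non-transposed legs are equivalences. Granting this, the rest is the routine adjunction-and-cofibrancy bookkeeping sketched above.
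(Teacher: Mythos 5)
The paper does not actually prove this Proposition; it is quoted from \cite[Sec. 2.7]{Bra18} and \cite{Bra19b}, so there is no in-paper argument to compare against. That said, your reconstruction --- exhibit $\mathrm{Sym}_A \dashv \mathrm{aug}_A$, observe that $\mathrm{aug}_A$ is right Quillen via the split exact sequence $0 \to \mathrm{ker}(\sigma^\ast) \to B \to A \to 0$, and then read off $\Omega^\infty_{\mathcal{S}(A)}$ as the derived right-adjoint mate of the square used in the proof of Prop.~\ref{FiberwiseSuspensionSpectrumdgModel}, evaluating $\mathbb{L}\mathrm{Sym}_A$ on a minimal (hence cofibrant) model $N$ --- is sound and is essentially how the cited source obtains the statement (by stabilizing the Sullivan--de Rham adjunction and then destabilizing along the free/forgetful pair). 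Two caveats deserve emphasis. First, the ``relative Sullivan theory'' you invoke for the left vertical leg is genuinely load-bearing and is nowhere in this paper: Prop.~\ref{SullivanEquivalence} is the \emph{absolute} equivalence for plain spaces, and the assertion that $\mathcal{O}_A$ restricts to an equivalence between suitable subcategories of $\mathrm{Ho}\big(\mathrm{Spaces}_{\dslash \mathcal{S}(A)}\big)$ and $\mathrm{Ho}\big(\mathrm{DGCAlg}_{\dslash A}\big)^{\mathrm{op}}$ is precisely the unstable parametrized input supplied by \cite{Bra18}; it cannot be waved through as bookkeeping, although it is fair to treat it as a citable ingredient here. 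Second, connectivity of $M$ is consumed earlier than you indicate: it is not merely needed so that $\mathrm{Sym}_A(N)$ is a connective algebra, but so that the horizontal adjunctions $\Sigma^\infty_{\mathcal{S}(A)} \dashv \Omega^\infty_{\mathcal{S}(A)}$ and $\mathrm{aug}_A \dashv \mathrm{Sym}_A$ \emph{restrict} to the subcategories on which the vertical legs are equivalences, which is what legitimizes the mate/Yoneda argument. Run naively on all of $\mathrm{Ho}\big(\mathrm{Spectra}_{\mathcal{S}(A)}\big)_{\mathbb{Q},\mathrm{ft},\mathrm{bbl}}$ the formula $\mathcal{O}_A \circ \Omega^\infty_{\mathcal{S}(A)} \simeq \mathbb{L}\mathrm{Sym}_A \circ \mathcal{M}_A$ is false: already for $\mathrm{KU}$ over the point (Ex.~\ref{RationalSnaithTheorem}) the fiberwise infinite loop space only detects the connective cover, whereas $\mathrm{Sym}$ of the minimal model $\mathbb{Q}[\beta_2,\beta_2^{-1}]$ does not, so the restriction to connective modules is exactly what makes the conjugate square commute. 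With those two points made explicit, your argument is a correct account of the proof deferred to \cite{Bra18, Bra19b}.
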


\begin{lemma}[Minimal model for twisted connective K-theory]
  \label{TwistedKModel}
  Denote the parametrized spectrum representing
  general twisted connective K-theory (e.g. \cite{AndoBlumbergGepner10}) by
  $$
    \xymatrix{
      \mathrm{ku} \dslash \mathrm{GL}_1(\mathrm{ku})
      \ar[r]&
      B \mathrm{GL}_1( \mathrm{ku} )
    }
  $$
  and its restriction to the twist by ordinary degree-3 cohomology
  by \footnote{Here and elsewhere, ``(pb)" denotes a (homotopy-)pullback square.}
  $$
    \xymatrix{
      \mathrm{ku} \dslash BS^1
        \ar[r]
        \ar@{}[dr]|{\mbox{\emph{\footnotesize{(pb)}}}}
        \ar[d]
       &
      \mathrm{ku} \dslash \mathrm{GL}_1(\mathrm{ku})
      \ar[d]
      \\
      K(\mathbb{Z},3)
        \ar[r]
        &
      B \mathrm{GL}_1(\mathrm{ku}).
    }
  $$
  The minimal DG-algebra model for the base space is
  the graded symmetric algebra freely generated by a single generator in degree $h_3$ with vanishing differential:
  $$
    \mathcal{O}\big(K(\mathbb{Z},3)\big)
    \simeq
    \mathbb{Q}[h_3]
    \,.
  $$
  The corresponding minimal DG-module model (Def. \ref{MinimalDGModule}) for
  the rationalization of $\mathrm{ku}\dslash BS^1$
  is
  \begin{equation}
    \label{MinimaldgModelForTwistedK}
    \mathcal{M}_{\mathbb{Q}[h_3]}\left(
      \mathrm{ku} \dslash BS^1
    \right)
    \;\simeq\;
    \mathbb{Q}[h_3] \otimes
    \left\langle
      \omega_{2k} \,\vert\, k \in \mathbb{N}
    \right\rangle
    \Big/
    \left(\!\!
          \begin{array}{c}
       \hspace{-6mm} d \omega_0  = 0
        \\
        d \omega_{2k + 2}  = h_3 \wedge \omega_{2k}
      \end{array}
      \!\!
    \right).
  \end{equation}
  The module structure over $\mathbb{Q}[h_3]$ is given by the evident action on this tensor factor.

 Moreover, for each $n\geq 0$  the fiberwise infinite loop space of the fiberwise suspension
 $\Sigma_{B^3\mathbb{Q}}^{2n} (\mathrm{ku}\dslash BS^1)$ has minimal DG-model
  \begin{equation}
    \label{OmegaInfinityOfTwistedk}
    \mathcal{O}\big(
      \Omega^{\infty-2n}_{B^3 \mathbb{Q}} \left( \mathrm{ku} \dslash BS^1 \right)
    \big)
    \;\simeq\;
    \mathbb{Q}[h_3, \omega_{2k+2n} \,\vert\, k \in \mathbb{N}]
    \Big/
    \left(\!\!
          \begin{array}{c}
       \hspace{-6mm} d \omega_{2n}  = 0
        \\
        d \omega_{2k + 2n+2}  = h_3 \wedge \omega_{2k + 2n}
      \end{array}
      \!\!
    \right).
  \end{equation}
\end{lemma}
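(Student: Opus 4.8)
The plan is to establish the three assertions in turn, reducing everything to the DG-module machinery of Theorem~\ref{RationalParameterizedSpectradgModel} together with Prop.~\ref{FiberwiseSuspensionSpectrumdgModel} and Prop.~\ref{RationalInfiniteLoopSpace}. The identity $\mathcal{O}\big(K(\mathbb{Z},3)\big)\simeq\mathbb{Q}[h_3]$ is immediate from Prop.~\ref{SullivanEquivalence}{\bf (iv)}: the rational homotopy of $K(\mathbb{Z},3)$ is one-dimensional, concentrated in the odd degree $3$, so its minimal model is the free graded-commutative algebra on a single generator $h_3$ in degree $3$; since this generator is odd, graded commutativity forces $h_3^2=0$, so that $\mathbb{Q}[h_3]\cong\mathbb{Q}\oplus\mathbb{Q}\,h_3$ is two-dimensional, and the differential vanishes for degree reasons. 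The same computation applies verbatim to the rationalization $B^3\mathbb{Q}=K(\mathbb{Q},3)$, as is needed for part (3).

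To obtain the model \eqref{MinimaldgModelForTwistedK}, note that the homotopy fiber of $\mathrm{ku}\dslash BS^1\to K(\mathbb{Z},3)$ is the connective $K$-theory spectrum $\mathrm{ku}$ (homotopy fibers are preserved under the homotopy pullback defining $\mathrm{ku}\dslash BS^1$, and the fiber of $\mathrm{ku}\dslash\mathrm{GL}_1(\mathrm{ku})\to B\mathrm{GL}_1(\mathrm{ku})$ is $\mathrm{ku}$), which rationally is modeled by the cochain complex $\mathbb{Q}[\beta_2]$ with vanishing differential (Ex.~\ref{RationalSnaithTheorem} together with Lemma~\ref{RationalizedSpectraModeledByRationalizedHomotopyGroups}). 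Since $K(\mathbb{Z},3)$ is simply-connected of finite rational type and this fiber spectrum is of finite rational type and bounded below, Theorem~\ref{RationalParameterizedSpectradgModel}{\bf (2)} applies, and Rem.~\ref{MinimalModelsExist} supplies a minimal DG-module model whose underlying graded module has the form $\mathbb{Q}[h_3]\otimes V$. By Lemma~\ref{MinimalDgModulesForFiberSpectra} the graded vector space $V$, with vanishing differential, is the minimal model of the fiber, so $V\simeq\langle\omega_{2k}\mid k\in\mathbb{N}\rangle$ with $\omega_{2k}$ in degree $2k$. It remains to pin down the differential: minimality forces $d\omega_{2k}\in\mathbb{Q}[h_3]^{\geq 1}\otimes V=h_3\wedge V$ (there being no elements of $\mathbb{Q}[h_3]$ in degree $1$ or $2$), and since $V$ is concentrated in even degrees a count of degrees leaves exactly $d\omega_0=0$ and $d\omega_{2k+2}=c_k\,h_3\wedge\omega_{2k}$ for scalars $c_k\in\mathbb{Q}$ (the relation $d^2=0$ holds automatically, because $h_3^2=0$).

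The one genuinely non-formal point is that every coefficient $c_k$ is non-zero, and I expect this to be the main obstacle: it is here that the specific nature of twisted $K$-theory enters, as opposed to an arbitrary parametrized spectrum with these fibers and base. I would deduce it from the known rational form of the twisted Chern character, according to which rational twisted (connective) $K$-theory is computed by the truncated twisted de Rham complex $\big(\Omega^{\bullet}[\beta_2],\,d+H\wedge(-)\big)$, whose Atiyah--Hirzebruch differential is cup product with the degree-$3$ twist (see \cite{MathaiSati04}); for the tautological twist on $K(\mathbb{Z},3)$ this twist class is $h_3$ itself, so no $c_k$ can vanish. (Alternatively: pulling back along a degree-one map $S^3\to K(\mathbb{Z},3)$, a vanishing $c_{k_0}$ would split off a copy of ordinary $\mathrm{ku}$-cohomology from the rational twisted $K$-theory of $S^3$ carrying a topologically non-trivial degree-$3$ twist, which is impossible.) Granting $c_k\neq 0$ for all $k$, rescaling each $\omega_{2k}$ by the factor $(c_0 c_1\cdots c_{k-1})^{-1}$ normalizes all coefficients to $1$ and produces precisely \eqref{MinimaldgModelForTwistedK}; this model is minimal, since its generators are ordered by non-decreasing degree and $h_3$ has positive degree, and minimal models are unique up to isomorphism by Rem.~\ref{MinimalModelsExist}, justifying the definite article. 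The $\mathbb{Q}[h_3]$-module structure is the evident one on the first tensor factor.

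Finally, for \eqref{OmegaInfinityOfTwistedk}: by Rem.~\ref{SuspensionandShiftingParam} the $2n$-fold fiberwise suspension $\Sigma^{2n}_{B^3\mathbb{Q}}(\mathrm{ku}\dslash BS^1)$ is modeled by the DG-module \eqref{MinimaldgModelForTwistedK} shifted up in degree by $2n$, that is, by $\mathbb{Q}[h_3]\otimes\langle\omega_{2k+2n}\mid k\in\mathbb{N}\rangle$ with $d\omega_{2n}=0$ and $d\omega_{2k+2n+2}=h_3\wedge\omega_{2k+2n}$; this module is again minimal and, for $n\geq 0$, connective, all generators lying in degrees $\geq 2n\geq 0$. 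Since $A=\mathbb{Q}[h_3]$ is of finite type with $\mathcal{S}(A)=K(\mathbb{Q},3)$ simply-connected, Prop.~\ref{RationalInfiniteLoopSpace} identifies the fiberwise infinite loop space $\Omega^{\infty-2n}_{B^3\mathbb{Q}}(\mathrm{ku}\dslash BS^1)=\Omega^{\infty}_{B^3\mathbb{Q}}\big(\Sigma^{2n}_{B^3\mathbb{Q}}(\mathrm{ku}\dslash BS^1)\big)$ with the free graded-commutative $A$-algebra on this minimal module, i.e. with $\mathrm{Sym}_{\mathbb{Q}[h_3]}\big(\mathbb{Q}[h_3]\otimes\langle\omega_{2k+2n}\rangle\big)=\mathbb{Q}[h_3,\,\omega_{2k+2n}\mid k\in\mathbb{N}]$ equipped with the differential extended from the generators as a graded derivation --- which is exactly \eqref{OmegaInfinityOfTwistedk}.
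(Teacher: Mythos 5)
Your proposal is correct and follows essentially the same route as the paper's proof: the $\mathbb{Q}[h_3]$-module structure from Lemma~\ref{MinimalDgModulesForFiberSpectra} together with Ex.~\ref{RationalSnaithTheorem}, the shape of the differential pinned down degreewise, and the shifted models \eqref{OmegaInfinityOfTwistedk} obtained from Rem.~\ref{SuspensionandShiftingParam} and Prop.~\ref{RationalInfiniteLoopSpace}. The only difference is cosmetic: where you invoke the rational twisted Chern character of \cite{MathaiSati04} (or the $S^3$ pullback argument) to rule out vanishing coefficients $c_k$ before rescaling, the paper cites \cite[Prop.~3.9]{AtiyahSegal05} for the non-triviality of the degree-3 twist in every degree, which is the same external input.
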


\begin{proof}
For the structure as a $\mathbb{Q}[h_3]$-module, we appeal to Lemma \ref{MinimalDgModulesForFiberSpectra} and Ex.
 \ref{RationalSnaithTheorem}.
  It only remains to determine the differential.
  By \cite[Prop. 3.9]{AtiyahSegal05} the degree-3 twist on complex K-theory is non-trivial in every degree.
  But by the degrees of the generators in \eqref{MinimaldgModelForTwistedK}, the given differential is degreewise
  and up to isomorphism the only possible non-trivial differential.
  The minimal models \eqref{OmegaInfinityOfTwistedk} are obtained by using Rem. \ref{SuspensionandShiftingParam}  and Prop. \ref{RationalInfiniteLoopSpace}.
\end{proof}

A form  of this statement also appears as \cite[Ex. 12.5]{BunkeNikolaus14}, \cite[Sec. 3.1]{GS17}.




\subsection{The $\mathrm{Ext}$/$\mathrm{Cyc}$-adjunction}
\label{TheAdjunction}

Our mathematical formalization of double dimensional reduction and gauge enhancement in Sec.
\ref{TheMechanism} involves at its core particular universal construction -- the
\emph{ $\mathrm{Ext}$/$\mathrm{Cyc}$-adjunction}.
While we ultimately work with the rational homotopy theory version of this construction, we would
like to amplify that this  adjunction can be formulated much more generally.
In particular, for $\mathbf{H}$ any \lq\lq good'' homotopy theory
(for instance an $\infty$-topos \cite{Lurie06}) and for $G$ a strong homotopy group in $\mathbf{H}$
(namely, a grouplike $A_\infty$-monoid) with delooping $\mathbf{B}G$, there is
a duality (an $\infty$-adjunction)
  $$
    \xymatrix{
      \mathbf{H}
      \ar@{<-}@<+6pt>[rr]^-{ \mathrm{Ext}_G }
      \ar@<-6pt>[rr]_-{ \mathrm{Cyc}_G }^-{\bot}
      &&
      \mathbf{H}_{/\mathbf{B}G}
    }
  $$

  \vspace{-3mm}
\noindent between

\vspace{-2mm}
\begin{enumerate}[{\bf (i)}]
\item  the operation $\mathrm{Ext}_G$ of forming $G$-extensions; and

\item   the operation of \emph{$G$-cyclification}; the result of first forming the space of maps out of $G$ and then taking the homotopy
quotient by the $G$-action rigidly reparametrizing these maps.
\end{enumerate}
In terms of abstract homotopy theory, this adjunction turns out to be right base change along the essentially unique point inclusion
map $\ast \to \mathbf{B}G$. For the reader familiar with abstract homotopy theory this fully defines the adjunction, and the only point
to check is that this right adjoint is indeed obtained by forming cyclifications as claimed.

\medskip
For any object $X\in \mathbf{H}$, specifying a $G$-principal bundle on $X$ is equivalent to the data of a map
\[
\tau\colon X\longrightarrow \mathbf{B}G.
\]
The $G$-principal bundle associated to $\tau$ is obtained by computing the homotopy fiber at the essentially
unique point in $\mathbf{B}G$:
\[
\xymatrix@C=5em{
  \mathrm{Ext}_G (\tau)\ar[r]\ar[d]
  \ar@{}[dr]|{\mbox{\footnotesize{(pb)}}}
 & \ast\ar[d]
  \\
  X\ar[r]^-{\tau} & \mathbf{B}G,
  }
\]
(see \cite{NSS12} for an exposition of the general theory).
Importantly for our purposes, for each map $\tau$ the $\mathrm{Ext}/\mathrm{Cyc}$-adjunction provides us
with a natural morphism -- the unit of the adjunction -- which fits into a (homotopy) commutative diagram:
\[
\xymatrix@R=1em{
  X\ar[rr]\ar[dr]_-{\tau}
  && \mathrm{Cyc}_G\mathrm{Ext}_G (\tau). \ar[dl]
  \\
  &\mathbf{B}G&
  }
\]
The map $X\to \mathrm{Cyc}_G\mathrm{Ext}_G (\tau)$ is the operation that takes any point in $X$ to the map
$G\to \mathrm{Ext}_G(\tau)$ which winds identically around the extension fiber over that point.
This is only well-defined up to a choice of base point in the fiber, but this is precisely the ambiguity that this
quotiented out in the definition of $\mathrm{Cyc}_G$.
\begin{remark}[Notation]
We will often abuse notation and write $\mathrm{Ext}_G(X)$ instead of $\mathrm{Ext}_G(\tau)$ when it is
understood that we are considering a particular map $\tau\colon X\to \mathbf{B}G$.
\end{remark}

We now describe the $\mathrm{Ext}/\mathrm{Cyc}$-adjunction in some detail in the setting of classical homotopy theory (Def. \ref{ClassicalHomotopyCategories}).
More precisely, for any strict topological group $G$ we exhibit an ordinary adjunction (Theorem \ref{GCycExtAdjunction} below)
between categories of topological spaces which for $G=S^1$ presents  $\mathrm{Ext}/\mathrm{Cyc}$-adjunction in homotopy (see Rem. \ref{ExtCycAdjunctiononHomotopyCats}).
Let us first recall some  preliminaries to establish our conventions:
\begin{defn}[Group action on topological space]
  \label{GSpace}
  For $X$ a topological space and $G$ a topological group, a \emph{(right) action} of $G$ on $X$
  is a continuous function
  \begin{align*}
  X\times G &\longrightarrow X\\
  (x,g)&\longmapsto x\cdot g
  \end{align*}
  such that
  $
  x\cdot e = x
  $
  and
$(x\cdot g_1)\cdot g_2) = x\cdot (g_1\cdot g_2)
 $  for all $x\in X$ and $g_1, g_2\in G$.
  One also refers to this situation by saying that $X$ is a \emph{$G$-space}.
\end{defn}

\begin{defn}[Quotients by group actions]
\label{Quotients}
For 
a $G$-space $X$
, we write
$$
  X/G
  \;:=\;
  X/( x \sim x \cdot g )
$$
for the (ordinary) quotient space, which comes with the quotient projection
\begin{equation}
  \label{QuotientProjection}
  \xymatrix@R=1.5em{
    X
    \ar[r]^-{\pi_{G}}
    &
    X/G
    \,.
  }
\end{equation}
\item For $G$-spaces $X$ and $Y$, we write
\begin{equation}
  \label{QuotientByDiagonalAction}
  X \times_G Y := (X \times Y)/G := \big(X \times Y\big)/\big( (x,y) \sim (x \cdot g , y \cdot g) \big)
\end{equation}
for the quotient by the diagonal action.
\end{defn}

\begin{remark}[Comparison map for free actions]
  \label{ComparisonMapForFreeActions}
Recall that the $G$-action on $X$ is called \emph{free} if for every pair of points $(x_1,x_2) \in X \times X$ there is at most one $g \in G$ such that $x_2 = x_1 \cdot g$.
For a free action there is a well-defined \emph{comparison map} which we suggestively write as
\begin{align}
\label{ComparisonMap}
X\times_{X/G} X&\longrightarrow G\\
\notag
[x_1, x_2] &\longmapsto x_1^{-1} \cdot x_2
\end{align}
such that
$
  y_1 \cdot (x_1^{-1} \cdot x_2) = y_2
$
whenever $[y_1, y_2 ] = [x_1, x_2]$.
The comparison map determines a homeomorphism $X\times_{X/G}X\to X\times G$ via $[x_1, x_2]\mapsto (x_1, x_1^{-1}\cdot x_2)$.
\end{remark}

\begin{defn}[$G$-Extension functor]
\label{GExt}
For any topological group $G$, there exists a topological space $EG$
such that $EG$ is a free $G$-space
which is weakly contractible: the map $EG\to \ast$ is a weak homotopy equivalence.
The quotient space
\begin{equation}
  \label{ClassifyingSpace}
  B G := (E G)/G
\end{equation}
is the \emph{classifying space} of $G$, and the quotient projection
\begin{equation}
  \label{UniversalGPrincipal}
  \xymatrix@R=1.2em{
    E G
    \ar[r]^-{\pi_G}
    &
    B G
  }
\end{equation}
this is called the \emph{universal $G$-principal bundle}.
The $G$-bundle $EG\to BG$ is determined by this specification uniquely up to homotopy equivalence. 

Given any topological space $X$ equipped with a map $X \overset{\phi}{\to} B G$, we can pull back the universal bundle \eqref{UniversalGPrincipal}
to obtain a space $X\times_{BG} E G$ with free $G$-action whose quotient space is $X$:
\begin{equation}
  \label{GBundleByPullback}
  \raisebox{20pt}{
  \xymatrix@C=5em{
    X \underset{B G}{\times} E G \ar@{}[dr]|{\mbox{\footnotesize{(pb)}}}\ar[d]_-{\pi_G} \ar[r] & E G \ar[d]^-{\pi_G}
    \\
    X \ar[r]^-{\phi} & B G
    \,.
  }
  }
\end{equation}
This is the $G$-principal bundle \emph{classified by $\phi$}.
For our purposes, it is useful to think of $X\times_{BG} EG$
as the \emph{extension} that is classified by the \emph{cocycle} $\phi$.
Therefore, we write $\mathrm{Ext}_G$ for the functor that computes these fiber products:
\begin{align}
  \label{GExtF}
  \mathrm{Ext}_G\colon \mathrm{Spaces}_{/BG}
  &
  \longrightarrow
  \mathrm{Spaces}\\
  \notag
  (Y\to BG)
  &\longmapsto
  Y\underset{BG}{\times} EG.
\end{align}
\end{defn}

\begin{defn}[Homotopy quotient]
\label{HomotopyQuotient}
Given a $G$-space $X$ (Def. \ref{GSpace}), write $X \dslash G$ for the
\emph{homotopy} quotient space.
This is specified up to weak homotopy equivalence by the \emph{Borel construction}
\begin{equation}
  \label{BorelConstruction}
  X\dslash G := X \times_G E G
  \,,
\end{equation}
which we take as our definition.
\end{defn}

\begin{example}[Homotopy quotient of trivial $G$-actions]
  \label{HomotopyQuotientOfTrivialGAction}
  The homotopy quotient  
   of the (unique) $G$-action
  on the point $\ast$
  is the classifing space \eqref{ClassifyingSpace}:
  $$
    \ast \dslash G
    :=
    (\ast \times E G)/G
    =
    (E G)/G
    =
    B G
    \,.
  $$
  More generally, for a \emph{trivial} $G$-space $X$  (so that $x\cdot g = x$ for all $g$),
  the homotopy quotient is simply
  $$
    X \dslash G
    \;\simeq\;
    X \times B G
    \,.
  $$
\end{example}

\begin{remark}[Maps related to the homotopy quotient]
  \label{MapsRelatedToHomotopyQuotient}
The homotopy quotient (Def. \ref{HomotopyQuotient}) is naturally equipped with the following maps of interest:

\item {\bf (i)}
The ordinary quotient projection \eqref{QuotientProjection} factors canonically up to homotopy via the homotopy quotient as
$$
  \pi_G
  \;:\;
  \xymatrix{
    X \ar[r]
    &
    X \dslash G
    \ar[r]
    &
    X / G
  }\!.
$$
Indeed, choosing any point $p \in E G$ (which is unique up to homotopy since $EG$ is contractible), the factorization is obtained
by the sequence of maps on the left-hand side of the diagram
\begin{equation}
  \label{HomotopyQuotientReceiving}
  \raisebox{40pt}{
  \xymatrix{
    X \ar[rr]^-{ x \mapsto (x,p) } \ar@/_1.6pc/[dd]_{\pi_G} \ar[d]^-{ x \mapsto [x,p] } && X \times E G \ar[d]
    \\
    X \dslash G  \ar@{=}[rr] \ar[d]  && X \times_G E G \ar[d]
    \\
    X/G \ar@{=}[rr] && X \times_G \ast
  }
  }
\end{equation}
On the right-hand side we first take the quotient projection by the diagonal $G$-action on $X\times EG$ and then project out the
 $EG$ factor via the ($G$-equivariant) map $EG\to \ast$.
If the $G$-action on $X$ is free, the comparison map $X\dslash G\to X/G$ is a weak homotopy equivalence
(see e.g. \cite{Kor}). 

\item {\bf (ii)} The homotopy quotient $X\dslash G$ is equipped with a canonical map to the classifying space \eqref{ClassifyingSpace}
\begin{equation}
  \label{CanonicalCocycleOnHomotopyQuotient}
   X\dslash G \longrightarrow B G,
\end{equation}
obtained via the map $X\to \ast$ as
\begin{equation}
  \label{CanonicalCocycle}
  X\dslash G
  =
  X \times_{G} E G \xymatrix{\ar[r]&} \ast \times_G\,  E G = (E G)/G = B G
  \,.
\end{equation}
\end{remark}

\begin{prop}[Extension of homotopy quotient]
  \label{ExtensionOfHomotopyQuotientEquivalentToOriginalSpace}
Any $G$-space $X$ is weakly homotopy equivalent to the $G$-extension (Def. \ref{GExt}) of its homotopy quotient $X\dslash G$ (Def. \ref{HomotopyQuotient}) along the canonical map \eqref{CanonicalCocycleOnHomotopyQuotient}:
  \begin{equation}
    \label{ExtOfHomotopyQuotient}
    \xymatrix{
      \mathrm{Ext}_G( X \dslash G )
      \ar[rr]^-{}^-{ \simeq_{\mathrm{whe}} }
      &&
      X.
    }
  \end{equation}
\end{prop}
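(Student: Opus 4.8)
The plan is to unwind both sides of \eqref{ExtOfHomotopyQuotient} into explicit quotients and exhibit a homeomorphism $\mathrm{Ext}_G(X \dslash G) \cong X \times E G$ over $X$; composing it with the projection $X \times E G \to X$ then yields the asserted weak homotopy equivalence. By Def.~\ref{HomotopyQuotient}, $X \dslash G = (X \times E G)/G$ with the diagonal $G$-action, and the canonical cocycle of \eqref{CanonicalCocycleOnHomotopyQuotient} is the map $[x,e]_G \mapsto [e]_G$ induced by $X \to \ast$. Unwinding Def.~\ref{GExt}, $\mathrm{Ext}_G(X \dslash G)$ is then the fiber product $(X \dslash G) \underset{B G}{\times} E G$.

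The main step is to recognise that the square
\[
  \xymatrix@C=4em{
    X \times E G \ar[r]^-{\mathrm{pr}_{E G}} \ar[d]_-{\pi_G} & E G \ar[d]^-{\pi_G}
    \\
    X \dslash G \ar[r] & B G
  }
\]
commutes (both composites send $(x,e)$ to $[e]_G$) and is a pullback square. I would verify this by hand: the induced continuous bijection $X \times E G \to (X\dslash G) \underset{B G}{\times} E G$, $(x,e) \mapsto ([x,e]_G, e)$, is injective and surjective by freeness of the $G$-action on $E G$ (Def.~\ref{GExt}), and its inverse is $([x,e]_G, e') \mapsto (x \cdot g, e')$ where $g \in G$ is the unique element with $e' = e \cdot g$. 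Continuity of this inverse is exactly the continuity of the comparison map of Rem.~\ref{ComparisonMapForFreeActions} for the free $G$-space $E G$, together with the homeomorphism $E G \underset{B G}{\times} E G \cong E G \times G$ recorded there. Hence $\mathrm{Ext}_G(X\dslash G) \cong X \times E G$ as spaces over $X$.

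It then remains to note that $\mathrm{pr}_X \colon X \times E G \to X$ is a weak homotopy equivalence: since $E G$ is weakly contractible (Def.~\ref{GExt}), one has $\pi_n(X \times E G) \cong \pi_n(X) \times \pi_n(E G) \cong \pi_n(X)$ for all basepoints and all $n$, with $\mathrm{pr}_X$ inducing these isomorphisms. The composite $\mathrm{Ext}_G(X \dslash G) \xrightarrow{\cong} X \times E G \xrightarrow{\mathrm{pr}_X} X$ is then the natural map of \eqref{ExtOfHomotopyQuotient}, naturality in $X$ following from that of all the maps involved.

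I do not anticipate a genuine obstacle; the only slightly delicate point is the continuity of the inverse bijection above, i.e. that the group element $g$ depends continuously on $([x,e]_G, e')$, which is precisely what Rem.~\ref{ComparisonMapForFreeActions} provides. Should one wish to bypass the explicit inverse, an alternative is to observe that $\mathrm{pr}_{E G}$ is a $G$-equivariant map from the principal $G$-bundle $X \times E G \to X \dslash G$ to the principal $G$-bundle $E G \to B G$ covering $X \dslash G \to B G$; the induced map $X \times E G \to (X \dslash G) \times_{B G} E G$ is then a morphism of principal $G$-bundles over $X \dslash G$ and hence an isomorphism.
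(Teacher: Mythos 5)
Your proposal is correct and follows essentially the same route as the paper: unwind $\mathrm{Ext}_G(X\dslash G)$ as the fiber product $(X\times_G EG)\times_{BG} EG$, identify it with $X\times EG$ via the comparison map of Rem.~\ref{ComparisonMapForFreeActions} applied fiberwise to the free $G$-space $EG$ (your explicit inverse $([x,e]_G,e')\mapsto (x\cdot(e^{-1}\cdot e'),e')$ is exactly the paper's homeomorphism), and conclude by projecting to $X$ using weak contractibility of $EG$. The only cosmetic difference is that you construct the map $X\times EG\to (X\dslash G)\times_{BG}EG$ and invert it, whereas the paper writes down the inverse direction directly.
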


\begin{proof}
Unwinding the definitions, the extension in question is obtained as the pullback
\begin{equation}
  \label{PullbackDescriptionOfExtOfHomotopyQuotient}
  \mathrm{Ext}_G( X \dslash G ) = (X \times_G E G) \underset{B G}{\times} E G,
\end{equation}
which is homeomorphic \emph{as a $G$-space} to  $X \times E G$, via the map
\begin{align}
 \label{ExtOfHomotopyQuotientForm}
(X \times_G E G) \underset{B G}{\times} E G
&\longrightarrow X\times EG
\\
\notag
\big([x,e_1], e_2\big) &\longmapsto
\big(x\cdot (e_1^{-1}\cdot e_2), e_2 \big)
%
\end{align}
where we have used the comparison map \eqref{ComparisonMap} fiberwise over $BG$.
Thus we have a map
\[
\mathrm{Ext}_G(X\dslash G)\cong X\times EG \longrightarrow X
\]
which is a weak homotopy equivalence by weak contractibility of $EG$.
\end{proof}

\begin{remark}[$\mathrm{Ext}_G(X\dslash G)$ as a free resolution]
The induced action  on $X\dslash G\times_{BG} E G$
in \eqref{PullbackDescriptionOfExtOfHomotopyQuotient}
is always free, even if the action on $X$ is not.
Additionally, the isomorphism \eqref{ExtOfHomotopyQuotientForm} is manifestly $G$-equivariant for the diagonal $G$-action on $X \times E G$.
Hence Prop. \ref{ExtensionOfHomotopyQuotientEquivalentToOriginalSpace} is saying that $\mathrm{Ext}_G(X\dslash G)$ is a \emph{resolution} of $X$ by a free $G$-space.
For example, if $X = \ast$ then we have $\ast\dslash G \times_{BG} E G \cong E G$.
\end{remark}

We now turn to the description of the $G$-cyclification functor, which extends the cyclification functor from \cite{FSS17, Higher-T}.

\begin{defn}[Mapping space out of $G$]
\label{MappingSpaceOutOfG}
For a topological space $Y$, write $\mathrm{Maps}(G,Y)$ for the space of continuous maps $G\to Y$.
\footnote{we will always assume that all topological spaces are compactly generated, so that $\mathrm{Maps}(G,Y)$
is the exponential object in the category of compactly generated spaces---this completely specifies the topology.}
This mapping space is regarded as equipped with the
$G$-action
\begin{align*}
\mathrm{Maps}(G,Y)\times G&\longrightarrow \mathrm{Maps}(G,Y)\\
(f,g)&\longmapsto \big[(f\cdot g)\colon h\mapsto f(hg^{-1})\big],
\end{align*}
which, equivalently, is the conjugation action on maps of $G$-spaces where $Y$ is regarded as having the trivial $G$-action.
\end{defn}

\begin{defn}[$G$-Cyclification]
\label{GCyc}
For $G$ a topological group and $Y$ a topological space, the \emph{$G$-cyclification} of $Y$ is the map
\[
\mathrm{Maps}(G,Y)\dslash G\longrightarrow \ast \dslash G =BG
\]
obtained by forming the homotopy quotient (Def. \ref{HomotopyQuotient})
of the mapping space $\mathrm{Maps}(G,Y)$ (Def. \ref{MappingSpaceOutOfG}).
\item {\bf (i)} This assignment extends to a functor
\begin{align*}
\mathrm{Cyc}_G\colon \mathrm{Spaces}&\longrightarrow \mathrm{Space}_{/BG}
\\
Y
&\longmapsto \mathrm{Maps}(G,Y)\times_G EG.
\end{align*}
\item {\bf (ii)} For the special case that $G = S^1$ is the circle group we omit the subscript \lq\lq $G$'' and write simply
\begin{equation}
  \label{Cyclic}
  \mathrm{Cyc}(X) = \mathcal{L}(X) :=  \mathrm{Maps}(S^1, X) \dslash S^1.
\end{equation}
This is the homotopy quotient of the \emph{free loop space} of $X$ by the rigid rotation action on loops.
The cohomology of $\mathrm{Cyc}(X)$ is the \emph{cyclic cohomology} of $X$, whence the terminology and notation.
\end{defn}

The key fact relating the functors $\mathrm{Ext}_G$ and $\mathrm{Cyc}_G$ in this setting is the following:
\begin{theorem}[The $\mathrm{Ext}$/$\mathrm{Cyc}$-adjunction]
  \label{GCycExtAdjunction}
Let $G$ be a topological group. Then
  \item {\bf (i)} The functors $\mathrm{Ext}_G$ (Def. \ref{GExt}) and $\mathrm{Cyc}_G$ (Def. \ref{GCyc}) are adjoints,
  with $\mathrm{Ext}_G$ the left and $\mathrm{Cyc}_G$ the right adjoint:

 \vspace{-3mm}
  $$
    \xymatrix{
      \mathrm{Spaces}
      \ar@{<-}@<+6pt>[rr]^-{ \mathrm{Ext}_G }
      \ar@<-6pt>[rr]_-{ \mathrm{Cyc}_G }^-{\bot}
      &&
      \mathrm{Spaces}_{/B G}\;.
    }
  $$

  \vspace{-3mm}
  \item {\bf (ii)}
  The unit of the adjunction
  $$
    \xymatrix{
      X
      \ar[r]^-{\eta_{{}_X}   }
      &
      \mathrm{Cyc}_G(\mathrm{Ext}_G(X))
    }
  $$
 is the map that sends $x\in X$ to the equivalence class of the map $G\to \mathrm{Ext}_G(X)|_x$ obtained
 by choosing any image of the neutral element $e\in G$ and extending to all of $G$ by the group action:
  $$
    \eta_{{}_X}  \;:\; x \longmapsto \big[ G \overset{\simeq}{\longrightarrow} \mathrm{Ext}_G(X)\vert_x \big].
  $$
  \item {\bf (iii)}
  The counit of the adjunction
  $$
    \xymatrix@R=1.5em{
      \mathrm{Ext}_G(\mathrm{Cyc}_G(Y))
      \ar[dr]_-{ \simeq_{\mathrm{whe}} }
      \ar[rr]^-{ \epsilon_{{}_Y} }
      && Y
      \\
      & \mathrm{Maps}(G,Y) \ar[ur]_-{\mathrm{ev}_e}
    }
  $$
  is the composite of a weak homotopy equivalence to $\mathrm{Maps}(G,Y)$ followed by
  evaluation at the neutral element.
\end{theorem}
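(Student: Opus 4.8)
The plan is to realize the $\mathrm{Ext}/\mathrm{Cyc}$-adjunction of (i) as a composite of two elementary adjunctions, and then to obtain the unit and counit formulas (ii)--(iii) by tracing through the standard description of units/counits of composite adjunctions. Throughout I would work in compactly generated spaces (so that $\mathrm{Maps}(G,-)$ is an honest internal hom) and fix the standard bar/Milnor model of $\pi_G\colon EG\to BG$, so that it is a numerable $G$-principal bundle; this last point is what makes the homotopy-theoretic picture into the \emph{strict} adjunction asserted here.

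The first adjunction I would set up is the \lq\lq free resolution'' adjunction
$$
  \xymatrix{
    \mathrm{Spaces}_{/BG}
    \ar@<+6pt>[rr]^-{ (-)\times_{BG}EG }
    \ar@{<-}@<-6pt>[rr]_-{ (-)\times_G EG }^-{\bot}
    &&
    G\mbox{-}\mathrm{Spaces}
  }
$$
where the left functor pulls back the universal bundle (carrying the free $G$-action from the $EG$-factor) and the right functor is the Borel construction of Def. \ref{HomotopyQuotient}, viewed over $BG$ via \eqref{CanonicalCocycleOnHomotopyQuotient}. I would verify the hom-set bijection by the chain
$$
  \mathrm{Maps}_{G\mbox{-}\mathrm{Spaces}}\!\big( Y\times_{BG}EG,\, W\big)
  \;\cong\;
  \Gamma\big( (Y\times_{BG}EG)\times_G W \to Y\big)
  \;\cong\;
  \Gamma\big( Y\times_{BG}(W\times_G EG)\to Y\big)
  \;\cong\;
  \mathrm{Maps}_{\mathrm{Spaces}_{/BG}}\!\big( Y,\, W\times_G EG\big),
$$
whose first step is the standard correspondence between $G$-equivariant maps out of a $G$-principal bundle and sections of the associated bundle (this is exactly where the principal-bundle model of $EG$ is used), whose middle step is the commutation of pullback with associated-bundle formation, and whose last step is the universal property of $Y\times_{BG}(-)$. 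Naturality in both variables is immediate since each step is functorial.

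The second adjunction is the forgetful/coinduction pair
$$
  \xymatrix{
    G\mbox{-}\mathrm{Spaces}
    \ar@<+6pt>[rr]^-{ U }
    \ar@{<-}@<-6pt>[rr]_-{ \mathrm{Maps}(G,-) }^-{\bot}
    &&
    \mathrm{Spaces}\;,
  }
$$
with $\mathrm{Maps}(G,Z)$ carrying the action of Def. \ref{MappingSpaceOutOfG}; the adjunction isomorphism is $\Phi \leftrightarrow \big(w\mapsto[\,h\mapsto \Phi(w\cdot h^{-1})\,]\big)$, with inverse evaluation at the neutral element $e\in G$, and the only check is that the convention $(\phi\cdot g)(h)=\phi(hg^{-1})$ makes this $G$-equivariant --- a one-line computation with the right $G$-actions. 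Composing the two adjunctions yields $\mathrm{Ext}_G = U\circ\big((-)\times_{BG}EG\big)\;\dashv\;\big((-)\times_G EG\big)\circ\mathrm{Maps}(G,-) = \mathrm{Cyc}_G$, since $Z\mapsto \mathrm{Maps}(G,Z)\times_G EG\to BG$ is precisely the cyclification functor of Def. \ref{GCyc}; this is (i).

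For (ii) and (iii) I would invoke the standard formulas $\eta^{\mathrm{comp}}_X = \big((-)\times_G EG\big)\big(\eta^{U\dashv\mathrm{Maps}(G,-)}_{\mathrm{Ext}_G(X)}\big)\circ \eta^{\,\times_{BG}EG\,\dashv\,\times_G EG}_X$ and $\epsilon^{\mathrm{comp}}_Y = \epsilon^{U\dashv\mathrm{Maps}(G,-)}_Y\circ U\big(\epsilon^{\,\times_{BG}EG\,\dashv\,\times_G EG}_{\mathrm{Maps}(G,Y)}\big)$. The counit of the forgetful/coinduction adjunction at $Y$ is $\mathrm{ev}_e\colon\mathrm{Maps}(G,Y)\to Y$, and the counit of the first adjunction at the $G$-space $\mathrm{Maps}(G,Y)$ is (under the canonical identification $\mathrm{Ext}_G(\mathrm{Maps}(G,Y)\dslash G)=(\mathrm{Maps}(G,Y)\times_G EG)\times_{BG}EG$) exactly the map of Prop. \ref{ExtensionOfHomotopyQuotientEquivalentToOriginalSpace}, which that proposition shows to be a weak homotopy equivalence; their composite is the stated counit. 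Dually, the unit of the first adjunction at $(X\xrightarrow{\tau}BG)$ is $x\mapsto[(x,e),e]$ (well-defined independently of $e\in \pi_G^{-1}(\tau(x))$), and the unit of $U\dashv\mathrm{Maps}(G,-)$ at the free $G$-space $\mathrm{Ext}_G(X)$ is the orbit map $\xi\mapsto[\,h\mapsto \xi\cdot h^{-1}\,]$; composing, $\eta_X(x)$ is the class in $\mathrm{Maps}\big(G,\mathrm{Ext}_G(X)|_x\big)\dslash G$ of the homeomorphism $G\xrightarrow{\ \simeq\ }\mathrm{Ext}_G(X)|_x$, $h\mapsto (x,e\cdot h^{-1})$ --- i.e. the map extending the choice $e\mapsto(x,e)$ of an image of the neutral element by the $G$-action, with the residual choice of $e$ being exactly what the homotopy quotient $\dslash G$ erases, as claimed. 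I expect the main obstacle to be purely bookkeeping: keeping the several right $G$-actions (on $EG$, on the pullbacks, and on $\mathrm{Maps}(G,-)$) mutually consistent so that the two hom-set isomorphisms are genuinely natural and so that the unit and counit come out as stated rather than as their inverses; a secondary point to pin down is that the chosen model of $EG$ makes $Y\times_{BG}EG\to Y$ a principal bundle, without which the equivariant-maps/sections correspondence holds only up to homotopy and (i) degrades to an adjunction on homotopy categories.
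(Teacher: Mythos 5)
Your proposal is correct, and it takes a genuinely different route from the paper. The paper establishes the adjunction in one step, by writing down the explicit point-level bijection $\phi\mapsto\Phi(\phi)$, $\Psi\mapsto\psi(\Psi)$ between $\mathrm{Hom}(X\times_{BG}EG,Y)$ and $\mathrm{Hom}_{/BG}(X,\mathrm{Maps}(G,Y)\times_G EG)$ using the comparison map of Remark \ref{ComparisonMapForFreeActions}, verifying by hand that the two assignments are mutually inverse and natural, and then reading off the unit and counit as the adjuncts of the respective identity maps. You instead factor the adjunction through $G\mbox{-}\mathrm{Spaces}$ as the composite of the principal-bundle/Borel adjunction $(-)\times_{BG}EG\dashv(-)\times_G EG$ (proved via the equivariant-maps-versus-sections correspondence and compatibility of associated bundles with pullback) and the restriction/coinduction adjunction $U\dashv\mathrm{Maps}(G,-)$, and then obtain (ii) and (iii) from the standard unit/counit formulas for composite adjunctions; tracing these does reproduce exactly the paper's formulas $x\mapsto[(x,p_x\cdot(-)^{-1}),p_x]$ and $([f,p],p')\mapsto f\big(p'^{-1}\cdot p\big)$, and your identification of the counit $\epsilon^1_{\mathrm{Maps}(G,Y)}$ with the map of Prop.~\ref{ExtensionOfHomotopyQuotientEquivalentToOriginalSpace} is precisely how the weak equivalence in (iii) arises. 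What your decomposition buys is conceptual bookkeeping: the unit/counit descriptions come for free from category theory, and it isolates exactly where the bundle structure of $EG\to BG$ enters (the sections correspondence for the first adjunction), which you correctly flag as requiring a locally trivial (e.g.\ Milnor) model — a hypothesis the paper's direct argument also uses implicitly, via continuity of the comparison map, but does not dwell on. What the paper's approach buys is self-containedness: it never introduces the auxiliary category of $G$-spaces or the section correspondence, and the explicit formulas make the well-definedness on equivalence classes and the naturality checks completely elementary. Both arguments yield the same strict adjunction of ordinary categories, with the passage to homotopy categories deferred to Remark \ref{ExtCycAdjunctiononHomotopyCats} in either case.
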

\begin{proof}
To show {\bf (i)},
given $(c\colon X \to B G) \in \mathrm{Spaces}_{BG}$ and $Y \in \mathrm{Spaces}$,
  we must produce a natural bijection of sets
  \begin{equation}
  \label{eqn:CycExtCondition}
  \mathrm{Hom}
      \Big(
        X \underset{B G}{\times} E G
        \,,\,
        Y
      \Big)
      \cong
      \mathrm{Hom}_{/BG} \big(X, \mathrm{Maps}(G,Y)\times_G EG\big).
  \end{equation}
On the one hand, given a map
\[
\xymatrix@R=1em{
  X \ar[rr]^-{\phi} \ar[dr]_-{c}&& \mathrm{Maps}(G,Y)\times_G EG\ar[dl]
  \\
  &BG&
  }
\]
that sends $x\mapsto [f_x, p_x]$, we define the map
$\Phi(\phi)\colon X\underset{BG}{\times} EG  \longrightarrow Y$
via
$
(x,p) \longmapsto f_x (p^{-1}\cdot p_x)$.
It is easy to check that this assignment does not depend on the choice of representative of the class $[f_x, p_x]$, and the
argument of $f_x$ is determined by the comparison map \eqref{ComparisonMap}, since $p, p_x$ lie in the same fiber of $EG$ over $BG$.

  Conversely, given a map
  \begin{align*}
  \Psi\colon X\underset{BG}{\times} EG & \longrightarrow Y\\
  (x,p)&\longmapsto y_{x,p},
  \end{align*}
we define the map $\psi(\Psi)\colon X \to \mathrm{Maps}(G,Y)\times_G EG$ via the assignment
$
x\mapsto [y_{x,p_x\cdot (-)^{-1}}, p_x]
$,
where $p_x \in EG$ is such that $(x,p_x)\in X\times_{BG} EG$ and $y_{x,p_x\cdot (-)^{-1}}$ denotes the map $G\to Y$ given by the assignment
$g \mapsto y_{x,p_x\cdot g^{-1}}$.
It is straightforward to see that $\psi(\Psi)$ is well-defined and indeed determines a map of spaces over $BG$.

We now check that that the assignments $\phi\mapsto \Phi(\phi)$ and $\Psi\mapsto \psi(\Psi)$ are inverses of each other.
If $\phi\colon x\mapsto [f_x, p_x]$ is as above, then we have
\[
\psi(\Phi(\phi))\colon x\longmapsto \Big[f_x\big((-)\cdot (p'_x)^{-1} \cdot p_x\big), p'_x\Big] = [f_x, p_x],
\]
where $p'_x\in EG$ is any point such that $(x,p_x)\in X\times_{BG} \, EG$.
Similarly, $\Phi(\psi(\Psi))= \Psi$ for all maps of spaces $\Psi \colon X\to \mathrm{Cyc}_G(Y)$ over $BG$.
Indeed, writing $\Psi\colon (x,p)\mapsto y_{x,p}$ as above then
\[
\Phi(\psi(\Psi))\colon (x,p) \longmapsto y_{x, p_x\cdot (p_x^{-1}\cdot p)} = y_{x,p}.
\]
This gives us a bijection on hom-sets of the desired form \eqref{eqn:CycExtCondition}, which is manifestly natural in $(c\colon X\to BG)$ and $Y$.
This completes the proof of {\bf (i)}.

As to {\bf (ii)}, we recall that the component of the unit at $(X\to BG)$ is the adjunct of the identity map on $\mathrm{Ext}_G(X)$.
By the above, this map is described by the assignment
  $$
    x \longmapsto [ (x,p_x \cdot (-)^{-1}), p_x ]\;,
  $$
  where $p_x\in EG$ is any point such that $(x,p_x)\in X\times_{BG} EG$.
But this is precisely of the claimed form: for each $x$ we choose a point $(x,p_x)$ in the fiber of $\mathrm{Ext}_G(X)$ over $X$ which is the image of the neutral element of $G$.
Then an arbitrary element $g\in G$ is sent to $(x, p_x\cdot g^{-1})$, which  determines a homeomorphism $G\cong \mathrm{Ext}_G(X)|_x$.
Passing to the homotopy quotient removes the choice ambiguity.

For {\bf (iii)}, the component of the counit at $Y$ is the adjunct of the identity on $\mathrm{Cyc}_G(Y)$.
By the above, this is simply the map
\begin{align*}
\big(\mathrm{Maps}(G,Y)\times_G EG\big)\underset{BG}{\times} EG
&\longrightarrow Y\\
\big([f,p], p'\big) &\longmapsto f\big((p')^{-1}\cdot p\big).
\end{align*}
In the proof of Prop. \ref{ExtensionOfHomotopyQuotientEquivalentToOriginalSpace}, we saw that
\begin{align*}
\kappa\colon \big(\mathrm{Maps}(G,Y)\times_G EG\big)\underset{BG}{\times} EG
&\longrightarrow \mathrm{Maps}(Y, G)\times EG\\
\big([f,p], p'\big) & \longmapsto \big(f\cdot (p^{-1}\cdot p'), p' \big)
\end{align*}
is a homeomorphism of $G$-spaces, so that the counit factors as
\[
\xymatrix{
  \big(\mathrm{Maps}(G,Y)\times_G EG\big)\underset{BG}{\times} EG
  \ar[rr]^-{\simeq_{\mathrm{whe}}}
  &&
  \mathrm{Maps}(G,Y)
  \ar[rr]^-{\mathrm{ev}_e}
  &&
  Y,
  }
\]
where we have used that $EG\to \ast$ is a weak homotopy equivalence.
This completes the proof.
\end{proof}

\begin{remark}[Extension to the homotopy categories]
\label{ExtCycAdjunctiononHomotopyCats}
The result we have just proven establishes $\mathrm{Ext}/\mathrm{Cyc}$-adjunction as an ordinary adjunction between
categories of topological spaces.  However, as we are interested in the corresponding adjunction between the \emph{homotopy}
categories, some additional points are in order.
\begin{itemize}
  \item Since the universal $G$-principal bundle $EG\to BG$ is always a (Serre) fibration, taking fiber products with this
  map preserves weak homotopy equivalences. In particular, the functor $\mathrm{Ext}_G$ is \emph{homotopical} and so
  descends to functor on derived categories $\mathrm{Ho}(\mathrm{Spaces}_{/BG})\to \mathrm{Ho}(\mathrm{Spaces})$.

  \item The homotopical properties of the cyclification functor are more involved.
  Indeed, $\mathrm{Cyc}_G$ may fail to be a homotopical functor in general since $Y\mapsto \mathrm{Maps}(G,Y)$ need
  not preserve weak homotopy equivalences, though this problem evaporates if $G$ is a CW complex.
  In this article, we are primarily interested in $G= S^1$, in which case $\mathrm{Cyc}=\mathrm{Cyc}_{S^1}$ \emph{is} a
  homotopical functor and so does determine a functor between the corresponding homotopy categories.
\end{itemize}
In summary, we have that Theorem \ref{GCycExtAdjunction} presents the adjunction between homotopy categories
\[
\xymatrix{
      \mathrm{Ho}\big(\mathrm{Spaces}\big)
      \ar@{<-}@<+6pt>[rr]^-{ \mathrm{Ext} }
      \ar@<-6pt>[rr]_-{ \mathrm{Cyc} }^-{\bot}
      &&
      \mathrm{Ho}\big(\mathrm{Spaces}_{/BS^1}}\!\big)
\]
and, therefore (upon further localization), between \emph{rational} homotopy categories, which is our primary focus in this article.
For more general $G$, the adjunction of Theorem \ref{GCycExtAdjunction} may fail to descend to homotopy categories.
There are various ways of remedying this issue, but this takes us beyond the scope of the present article.
\end{remark}

Below we will be mainly concerned with the $\mathrm{Ext}/\mathrm{Cyc}$-adjunction in the \emph{rational} approximation.
To see how this works, we first establish good models for the rationalization of the cyclification functor:

\begin{remark}[Minimal models for cyclic loop spaces {\cite{VigueBurghelea}}]
  \label{MinimalDGCModelForCyclicLoopSpace}
  Let $X$ be a simply-connected topological space of finite rational type (Def. \ref{ClassicalHomotopyCategories}) and
  let $(\mathbb{Q}[ \{\omega_i\} ], d_X)$ be a corresponding minimal DG-algebra (Prop. \ref{SullivanEquivalence} {\bf{(iv)}}). Then

 \item {\bf (i)}
  a minimal DG-algebra model for the free loop space
  $\mathcal{L}(X) := \mathrm{Maps}(S^1,X)$ is obtained by adjoining a second copy $\{s \omega_i\}$ of the generators, where the degrees of these additional \lq\lq looped generators'' satisfy $\mathrm{deg}(s\omega_i) = \mathrm{deg}(\omega_i)-1 = i-1$, and with differential given by
  $$
    d_{{}_{\mathcal{L}(X)}} \omega_i := d_{{}_X} \omega_i\,,
    \qquad \quad
    d_{{}_{\mathcal{L}(X)}} s \omega_i := - s ( d_{{}_{X}} \omega_i )
    \,,
  $$
  where on the right $s$ is uniquely extended from a linear map on generators to a graded derivation of degree $-1$.

 \item {\bf (ii)}
  a minimal DG-algebra model for the cyclic loop space $\mathrm{Cyc}(X)$ (Def. \ref{GCyc}) is obtained from this by adjoining one more generator
  $\widetilde{\omega}_2$ in degree 2, and taking the differential to be given by
  $$
    d_{{}_{\mathrm{Cyc}(X)}} \widetilde{\omega}_2 = 0
    \,,
    \qquad \quad
    d_{{}_{\mathrm{Cyc}(X)}} w = d_{{}_{\mathcal{L}(X)}} w + \widetilde{\omega}_2 \wedge s w
    \,,
  $$
  where on the right $w \in \{\omega_i, s \omega_i\}$.
\end{remark}


\begin{example}[Minimal model for $\mathrm{Cyc}(S^4)$ {\cite[Example 3.3]{FSS16a}}]
\label{MinimalDGCAlgebraModelForCyclicSpaceOfFourSphere}
By Remark \ref{MinimalDGCModelForCyclicLoopSpace}, a minimal DG-algebra model for the cyclification  of the 4-sphere is given by
$$
  \mathcal{O}(\mathrm{Cyc}(S^4))
  \;\simeq\;
  \mathbb{Q}[h_3, h_7, \omega_2, \omega_4, \omega_6]
  \Bigg/
  \left(
    \begin{aligned}
      d h_3 & = 0
      \\[-1mm]
      d h_7 & = -\tfrac{1}{2} \omega_4 \wedge \omega_4 + \omega_6 \wedge \omega_2
      \\[-1mm]
      d \omega_2 & = 0
      \\[-1mm]
      d \omega_4 & = h_3 \wedge \omega_2
      \\[-1mm]
      d \omega_6 & = h_3 \wedge \omega_4
    \end{aligned}
  \right).
$$
This exhibits the structure of a DG-algebra over $\mathbb{Q}[h_3]/(d h_3= 0)$, hence exhibiting
a rational model for a map
\begin{equation}
  \label{OverS3CycS4}
\mathrm{Cyc}(S^4)\longrightarrow S^3.
\end{equation}
\end{example}
\begin{example}[$\mathrm{Cyc}(S^4)$ covers 6-truncated twisted K-theory, rationally]
  \label{CyclificationOf4SphereReceives6TruncationOfTwistedK}
  Ex. \ref{MinimalDGCAlgebraModelForCyclicSpaceOfFourSphere} reveals a close relationship between $\mathrm{Cyc}(S^4)$ and the 6-truncation of $\Omega^{\infty-2}_{B^3 \mathbb{Q}} \left( \mathrm{ku} \dslash  B S^1 \right)$ in the rational approximation.
  In terms of the minimal models of Lemma \ref{TwistedKModel}, the 6-truncation $\Omega^{\infty-2}_{B^3 \mathbb{Q}} \left( \mathrm{ku} \dslash  B S^1 \right)\langle 6 \rangle$ is obtained simply by setting all $\omega_{\bullet>6}$ to zero.
We then have the following morphisms in the rational homotopy category:
  $$
    \xymatrix@R=-2pt{
      \Omega^{\infty-2}_{B^3 \mathbb{Q}} \left( \mathrm{ku} \dslash  B S^1 \right)
      \ar@{->}[r]^-{\tau_6}
      &
      \Omega^{\infty-2}_{B^3 \mathbb{Q}} \left( \mathrm{ku} \dslash  B S^1 \right)\langle 6 \rangle
      &
      \mathrm{Cyc}(S^4)
      \ar[l]_-{p}
      \\
      h_3
       &
      h_3
        \ar@{|->}[l]
        \ar@{|->}[r]
        &
      h_3
      \\
      \omega_2
        &
      \omega_2
        \ar@{|->}[l]
        \ar@{|->}[r]
      & \omega_2
      \\
      \omega_4
        &
      \omega_4
        \ar@{|->}[l]
        \ar@{|->}[r]
      & \omega_4
      \\
      \omega_6
        &
      \omega_6
        \ar@{|->}[l]
        \ar@{|->}[r]
      & \omega_6
      \\
      \omega_8 && h_7
      \\
      \vdots
    }
  $$
In Theorem \ref{TwistedKTheoryInsideFiberwiseStabilizationOfATypeOrbispaceOf4Sphere} we encounter lifts $\widehat \phi$
of morphisms of rational homotopy types
  $\phi\colon X \overset{\phi}{\longrightarrow} \mathrm{Cyc}(S^4)$ through this zig-zag, i.e.,
  maps $\widehat{\phi}$ making the following diagram of maps of rational homotopy types commute:
  \begin{equation}
    \label{LiftsThroughZigZag}
    \raisebox{45pt}{
    \xymatrix@C=6em{
      &&
      \Omega^{\infty-2}_{B^3 \mathbb{Q}} \left( \mathrm{ku} \dslash  B S^1 \right)
      \ar[d]^{ \tau_6 }
      \\
      &&
      \Omega^{\infty-2}_{B^3 \mathbb{Q}} \left( \mathrm{ku} \dslash  B S^1 \right)\langle 6\rangle
      \\
      X_{\mathbb{Q}}
      \ar[rr]^-{\phi}
      \ar@{-->}@/^1pc/[uurr]^{\widehat \phi}
      &&
      \mathrm{Cyc}(S^4).
      \ar[u]_-{p}
    }
    }
  \end{equation}
\end{example}

\begin{prop}[Minimal DG-module for fiberwise stabilization of $\mathrm{Cyc}(S^4)$]
  \label{MinimaldgModuleForFiberwiseStabilisationOfCyclicSpaceOf4Sphere}

 \item {\bf (i)} A minimal DG-module (Def. \ref{MinimalDGModule}) modelling the fiberwise stabilization
  (Prop. \ref{AdjunctionStabilization}) of the
  cyclic loop space of the 4-sphere (Ex. \ref{MinimalDGCAlgebraModelForCyclicSpaceOfFourSphere})
  over $S^3$ (via \eqref{OverS3CycS4}) is
  $$
    \mathcal{M}_{\mathbb{Q}[S^3]}
    \left(
        \Sigma^\infty_{+,S^3} \mathrm{Cyc}(S^4)
    \right)
    \;\simeq\;
    \frac{
      \mathbb{Q}[ h_3, \omega_2, \omega_4, \omega_6 ]
    }
    {
      (\omega_6 \wedge \omega_2 -\tfrac{1}{2} \omega_4 \wedge \omega_4)
    }
  \Bigg/
  \left(
    \begin{aligned}
      d h_3 & = 0
      \\[-1mm]
      d \omega_2 & = 0
      \\[-1mm]
      d \omega_4 & = h_3 \wedge \omega_2
      \\[-1mm]
      d \omega_6 & = h_3 \wedge \omega_4
    \end{aligned}
  \right)
  \;\in\;
  \mathbb{Q}[h_3]\mathrm{\mbox{-}Mod}\;.
  $$
  Here the module structure is the evident one induced by multiplication in $\mathbb{Q}[h_3]$.

 \item {\bf (ii)} There is a quasi-isomorphism from this minimal model to the DG-module underlying the minimal DG-algebra from Ex. \ref{MinimalDGCAlgebraModelForCyclicSpaceOfFourSphere}
  $$
    \frac{
      \mathbb{Q}[ h_3, \omega_2, \omega_4, \omega_6 ]
    }
    {
      (\omega_6 \wedge \omega_2 -\tfrac{1}{2} \omega_4 \wedge \omega_4)
    }
    \Bigg/
    \left(
      {\begin{aligned}
        d h_3 & = 0
        \\[-1mm]
        d \omega_2 & = 0
        \\[-1mm]
        d \omega_4 & = h_3 \wedge \omega_2
        \\[-1mm]
        d \omega_6 & = h_3 \wedge \omega_4
      \end{aligned}}
    \right)
    \xrightarrow{\;\;\simeq_{\mathrm{qi}}\;\;}
    \mathbb{Q}[ h_3, h_7, \omega_2, \omega_4, \omega_6 ]
    \Bigg/
    \left(
      {\begin{aligned}
        d h_3 & = 0
        \\[-1mm]
        d h_7 & = -\tfrac{1}{2}\omega_4 \wedge \omega_4 + \omega_2 \wedge \omega_6
        \\[-1mm]
        d \omega_2 & = 0
        \\[-1mm]
        d \omega_4 & = h_3 \wedge \omega_2
        \\[-1mm]
        d \omega_6 & = h_3 \wedge \omega_4
      \end{aligned}}
    \right)
  $$
  given by any choice of linear splitting of the underlying quotient map of graded algebras, for
  example, by the map sending equivalence classes on the left to their unique representatives on the right that are
  at most linear in $\omega_4$.
\end{prop}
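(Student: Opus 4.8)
The statement asks, in effect, for two things: that the displayed DG-$\mathbb{Q}[h_3]$-module $N$ is minimal, and that it is quasi-isomorphic (over $\mathbb{Q}[h_3]$) to a model for the fiberwise suspension spectrum $\Sigma^\infty_{+,S^3}\mathrm{Cyc}(S^4)$ --- the latter being precisely part {\bf(ii)}, and the combination being exactly what ``a minimal DG-module modelling $\ldots$'' means by Rem.~\ref{MinimalModelsExist}. The plan is first to reduce the identification of the model to DG-algebra data: apply Prop.~\ref{FiberwiseSuspensionSpectrumdgModel} with $A = \mathbb{Q}[h_3] \simeq \mathcal{O}(S^3)$ (Ex.~\ref{MinimalDgcAlgebraModelFor3Sphere}) and with $Y = \mathrm{Cyc}(S^4)$ equipped with the map $\mathrm{Cyc}(S^4)\to S^3$ of Ex.~\ref{MinimalDGCAlgebraModelForCyclicSpaceOfFourSphere} (legitimate since $\mathrm{Cyc}(S^4)$ is simply connected of finite rational type), to obtain $\mathcal{M}_{\mathbb{Q}[h_3]}\big(\Sigma^\infty_{+,S^3}\mathrm{Cyc}(S^4)\big) \simeq \mathcal{O}(\mathrm{Cyc}(S^4))$ as a $\mathbb{Q}[h_3]$-module. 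It then remains to produce a minimal model of $\mathcal{O}(\mathrm{Cyc}(S^4))$ in $\mathbb{Q}[h_3]\mathrm{\mbox{-}Mod}$ and to invoke uniqueness of minimal models.

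For {\bf(i)} one first checks that $N$ is a well-defined DG-module: a Leibniz computation gives $d(\omega_6\wedge\omega_2 - \tfrac12\omega_4\wedge\omega_4) = h_3\wedge\omega_4\wedge\omega_2 - h_3\wedge\omega_2\wedge\omega_4 = 0$, so the stated differential descends to the quotient, and $d^2\omega_6 = -h_3\wedge h_3\wedge\omega_2 = 0$ using $h_3\wedge h_3 = 0$. As a graded $\mathbb{Q}[h_3]$-module $N$ is free, $N \cong \mathbb{Q}[h_3]\otimes_{\mathbb{Q}} V$ with $V = \mathbb{Q}[\omega_2,\omega_4,\omega_6]/(\omega_6\wedge\omega_2 - \tfrac12\omega_4\wedge\omega_4)$ of finite type and concentrated in non-negative even degrees; I would choose a homogeneous $\mathbb{Q}$-basis $\{e^{(i)}\}$ of $V$ in non-decreasing degree, so the degree condition of Def.~\ref{MinimalDGModule}{\bf(iii)} holds. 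Since $d$ annihilates $\omega_2$ and sends $\omega_4,\omega_6$ to $h_3$ times a monomial, the Leibniz rule together with $h_3\wedge h_3 = 0$ forces $d$ of every monomial in $\omega_2,\omega_4,\omega_6$ to lie in $h_3\cdot\mathbb{Q}[\omega_2,\omega_4,\omega_6]$; hence $d e^{(i)} \in h_3\cdot V^{\deg e^{(i)}-2}$, which lies inside $\mathbb{Q}[h_3]\otimes\langle e^{(j)}\rangle_{j<i}$. Thus $N$ is a minimal cell complex over $\mathbb{Q}[h_3]$.

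For {\bf(ii)} I would write $B := \mathbb{Q}[h_3,\omega_2,\omega_4,\omega_6]$ with its evident differential and $z := \omega_6\wedge\omega_2 - \tfrac12\omega_4\wedge\omega_4$, which is a $d_B$-cocycle (the same computation) and is not a zero-divisor in $B \cong \mathbb{Q}[\omega_2,\omega_4,\omega_6]\otimes\mathbb{Q}[h_3]$. Then $\mathcal{O}(\mathrm{Cyc}(S^4))$ of Ex.~\ref{MinimalDGCAlgebraModelForCyclicSpaceOfFourSphere} --- obtained from $B$ by adjoining $h_7$ in degree $7$ with $dh_7 = z$ --- is, as a DG-$\mathbb{Q}[h_3]$-module, the mapping cone of multiplication by $z$ on $B$, and the projection $q$ onto its cokernel $B/(z) = N$ (that is, $h_7\mapsto 0$) is a quasi-isomorphism, because the mapping cone of a degreewise-injective chain map is always quasi-isomorphic to its cokernel. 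Since $N$ is minimal, hence cofibrant, the trivial fibration $q$ admits a section $\iota\colon N\to\mathcal{O}(\mathrm{Cyc}(S^4))$, which is then a quasi-isomorphism of $\mathbb{Q}[h_3]$-modules; on underlying graded modules $\iota$ is a linear splitting of $q$, and one may take it to refine the splitting given by the unique representatives at most linear in $\omega_4$, after adjusting the $h_7$-component so that $\iota$ commutes strictly with differentials. Assembling {\bf(i)} and {\bf(ii)} and invoking Rem.~\ref{MinimalModelsExist} exhibits $N$ as \emph{the} minimal model of $\mathcal{M}_{\mathbb{Q}[h_3]}(\Sigma^\infty_{+,S^3}\mathrm{Cyc}(S^4))$. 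I expect the only delicate point to be precisely that last adjustment: the naive ``at most linear in $\omega_4$'' splitting is \emph{not} by itself a chain map --- for instance on $\omega_4\wedge\omega_6$ its differential acquires a term $h_3\wedge\omega_4\wedge\omega_4$ absent from the image of $d_N$, which must be absorbed into a compensating multiple of $h_3\wedge h_7$ --- so one must track Koszul signs carefully in pinning down the genuine section; the rest is routine bookkeeping on top of Prop.~\ref{FiberwiseSuspensionSpectrumdgModel} and the theory of minimal DG-modules.
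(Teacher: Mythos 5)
Your proof is correct, and after the shared entry point it takes a genuinely different route from the paper's. Both arguments begin with Prop.~\ref{FiberwiseSuspensionSpectrumdgModel}, identifying $\mathcal{M}_{\mathbb{Q}[h_3]}\big(\Sigma^\infty_{+,S^3}\mathrm{Cyc}(S^4)\big)$ with $\mathcal{O}(\mathrm{Cyc}(S^4))$ as a DG-$\mathbb{Q}[h_3]$-module. The paper then argues ``forwards'': by Lemma~\ref{MinimalDgModulesForFiberSpectra} the underlying graded module of the minimal model must be $\mathbb{Q}[h_3]$ tensored with the cohomology of the cofiber of $\mathbb{Q}[h_3]\to\mathcal{O}(\mathrm{Cyc}(S^4))$, which is $\mathbb{Q}[\omega_2,\omega_4,\omega_6]/(\omega_2\wedge\omega_6-\tfrac12\omega_4\wedge\omega_4)$, and the differential is then declared to be forced by the requirement that the module have the cohomology of $\mathcal{O}(\mathrm{Cyc}(S^4))$; part (ii) is stated to follow at once. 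You instead verify directly that the displayed module $N$ is a minimal cell complex (closedness of $z:=\omega_2\wedge\omega_6-\tfrac12\omega_4\wedge\omega_4$, freeness over $\mathbb{Q}[h_3]$, degree-ordering of generators), and you produce the comparison quasi-isomorphism by recognizing $\mathcal{O}(\mathrm{Cyc}(S^4))$ as the cone of multiplication by the non-zero-divisor $z$ on $\mathbb{Q}[h_3,\omega_2,\omega_4,\omega_6]$, so that the projection killing $h_7$ is a quasi-isomorphism onto the cokernel $N$; lifting a section against this trivial fibration using cofibrancy of the minimal $N$, and invoking uniqueness of minimal models, finishes both parts. This buys a self-contained justification of (ii), and your closing observation is a genuine improvement on the paper: the splitting by ``representatives at most linear in $\omega_4$'' named in the statement is indeed not a strict chain map, since $d(\omega_4\wedge\omega_6)=h_3\wedge\omega_2\wedge\omega_6+h_3\wedge\omega_4\wedge\omega_4$ in the free algebra while the designated representative of $d_N[\omega_4\wedge\omega_6]$ is $3\,h_3\wedge\omega_2\wedge\omega_6$, the discrepancy being $-2\,h_3\wedge d h_7$; so the genuine section must carry $h_7$-corrections (e.g.\ $[\omega_4\wedge\omega_6]\mapsto\omega_4\wedge\omega_6-2\,h_3\wedge h_7$), exactly as you say, and your abstract lifting argument supplies the existence that the explicit formula in the statement only approximates.
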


\begin{proof}
 For {\bf(i)}: by Prop. \ref{FiberwiseSuspensionSpectrumdgModel} and Lemma \ref{MinimalDgModulesForFiberSpectra}
  the underlying graded $\mathbb{Q}[h_3]$-module is the free $\mathbb{Q}[h_3]$-module
  on the cohomology of the homotopy cofiber of
  $$
    \mathbb{Q}[h_3]
    \longrightarrow
    \mathbb{Q}[ h_3, h_7, \omega_2, \omega_4, \omega_6 ]
    \Bigg/
    \left(
      {\begin{aligned}
        d h_3 & = 0
        \\[-1mm]
        d h_7 & = -\tfrac{1}{2}\omega_4 \wedge \omega_4 + \omega_2 \wedge \omega_6
        \\[-1mm]
        d \omega_2 & = 0
        \\[-1mm]
        d \omega_4 & = h_3 \wedge \omega_2
        \\[-1mm]
        d \omega_6 & = h_3 \wedge \omega_4
      \end{aligned}}
    \right),
  $$
  where the minmal DG-algebra on the right is from Ex. \ref{MinimalDGCAlgebraModelForCyclicSpaceOfFourSphere}.
  This cofiber cohomology is evidently the graded-commutative algebra
  $$
    \frac{
      \mathbb{Q}[\omega_2, \omega_4, \omega_6 ]
    }
    {
      (\omega_6 \wedge \omega_2 -\tfrac{1}{2} \omega_4 \wedge \omega_4)
    }\;,
  $$
  obtained as the quotient by the two-sided tensor ideal generated by $\omega_6 \wedge \omega_2 -\tfrac{1}{2} \omega_4\wedge\omega_4$.
  The graded vector space underlying the minimal DG-module is therefore
  $$
    \mathbb{Q}[h_3]
    \otimes
    \frac{
      \mathbb{Q}[\omega_2, \omega_4, \omega_6 ]
    }
    {
      (\omega_6 \wedge \omega_2 -\tfrac{1}{2} \omega_4 \wedge \omega_4)
    }
    \;\simeq\;
    \frac{
      \mathbb{Q}[ h_3, \omega_2, \omega_4, \omega_6 ]
    }
    {
      (\omega_6 \wedge \omega_2 -\tfrac{1}{2} \omega_4 \wedge \omega_4)
    }\;.
  $$
  The differential on this must be such that fiberwise stabilization does not change the cohomology (by Prop. \ref{FiberwiseSuspensionSpectrumdgModel}).
  This completely determines the differential, fixing it as claimed.
  The second point {\bf (ii)} now follows at once.
\end{proof}

%
%
%
%

\section{The A-type orbispace of the 4-sphere}
\label{TheATypeOrbispaceOfThe4Sphere}

In this section we consider a particular circle action on the 4-sphere, as well as the induced homotopy quotient, which we call
the \emph{A-type orbispace of the 4-sphere} (see Def. \ref{ATypeOrbispaceOf4Sphere} and Rem. \ref{OrbispaceTerminology} below).
We first provide an informal string-theoretic motivation for considering this space in Rem. \ref{TheATypeQuotientFromSpacetime},
and then substantiate this by a more formal mathematical analysis.
After establishing some results on the rational homotopy type of the A-type orbispace in Sec. \ref{RationalHomotopyTypeOfATypeOrbispace},
our main result Theorem \ref{TwistedKTheoryInsideFiberwiseStabilizationOfATypeOrbispaceOf4Sphere} shows that, rationally, there is
a copy of twisted K-theory in the fiberwise stabilization of the A-type orbispace, fibered over the 3-sphere.
In Sec. \ref{TheMechanism}, we demonstrate how this result witnesses the phenomenon of gauge enhancement of M-branes.
%
%

\begin{defn}[The A-type orbispace of the 4-sphere]
  \label{ATypeOrbispaceOf4Sphere}
Writing $S^4$ as the unit sphere in $\mathbb{R}^5 =\mathbb{R}\oplus \mathbb{C}^2$, the identification
  \begin{equation}
    \label{SU2ActionOn4Sphere}
    S^4 \;=\; S( \mathbb{R} \oplus \!\!\!\!\! \xymatrix{ \mathbb{C}^2 \ar@(ul,ur)[]^{ {\rm SU}(2)_L } } \!\!\!\!\! )
  \end{equation}
  shows that $S^4$ inherits an action of ${\rm SU}(2)$.
  Specifically, on the right-hand side above we are referring to the defining linear representation of ${\rm SU}(2)$ on $\mathbb{C}^2$, regarded as a \emph{left} action.
  This restricts along the canonical inclusion $S^1 \simeq {\rm U}(1)\hookrightarrow {\rm SU}(2)$ to define an $S^1$-action on $S^4$.
We refer to the corresponding homotopy quotient \eqref{BorelConstruction}
  \begin{equation}
    \label{ATypeOrbispace}
    S^4 \dslash S^1
    \;\simeq\;
    S^4 \times_{S^1} E S^1
  \end{equation}
 as the \emph{A-type orbispace of the 4-sphere}.
\end{defn}


\begin{remark}[$A$-series vs. $S^1$]
  \label{OrbispaceTerminology}
  The terminology in Def. \ref{ATypeOrbispaceOf4Sphere} is motivated as follows:
  the finite subgroups of $SU(2)$ have a famous ADE-classification, corresponding to the simply-laced Dynkin diagrams.
  The finite subgroups in the A-series are cyclic and, up to conjugation, are all subgroups of
  the canonical copy of $S^1$ inside ${\rm SU}(2)$:
  $$
  \xymatrix{
    \mathbb{Z}_{n+1} \; \ar@{^{(}->}[r]& S^1 \simeq {\rm U}(1) \; \ar@{^{(}->}[r] & {\rm SU}(2)
    }.
  $$
  The $S^1$-action considered in Def. \ref{ATypeOrbispaceOf4Sphere} is thus the limiting case (as $n \to \infty$)
  of the A-series actions.
  Now the homotopy quotient of the smooth 4-sphere such a finite group action
  is an \emph{orbifold}, hence an \emph{A-type orbifold} for an A-series group action
  (see \cite{ADE} for further discussion).
    More generally, homotopy quotients by
  (possibly non-finite) topological groups are \emph{orbispaces} \cite{HenriquesGepner07}, whence our terminology.
\end{remark}

The following result is immediate:
\begin{prop}[Quotient and fixed points of the A-type orbispace]
  \label{SystemOfFixedPointsAndQuotientsOfATypeActionOn4Sphere}
  For the A-type $S^1$-action on $S^4$ (Def. \ref{ATypeOrbispaceOf4Sphere}), it holds that:

 \item {\bf (i)}
  The ordinary quotient space is the 3-sphere:
  $
    S^4 / S^1 \;\simeq\; S^3
   $.
  Hence, via 
 \eqref{HomotopyQuotientReceiving} there is a canonical map from the
  A-type orbispace of the 4-sphere to the 3-sphere:
  \begin{equation}
    \label{MapFromATypeOrbispaceTo3Sphere}
    \raisebox{12pt}{\xymatrix@R=6pt{
      S^4   \dslash  S^1
      \ar[rr]
      &&
      S^3
      \\
      S^4 \times_{S^1} E S^1
      \ar[rr]_{ \mathrm{id}\times_{S^1} p }
      \ar@{=}[u]
      &&
      S^4 \times_{S^1} \ast
      \ar@{=}[u]
    }}
  \end{equation}

\vspace{-3mm}
\item  {\bf (ii)} The space of $S^1$-fixed points is the 0-sphere, included as two antipodal points
  $$
    S^0 \;=\; \left(S^4\right)^{S^1}
    \xymatrix{\ar@{^{(}->}[r]&}
     S^4
    \,.
  $$
In summary, we have the following system of spaces over $S^3$:
\begin{equation}
  \label{QuotientOfS4ByS1OverS3}
  \raisebox{20pt}{\xymatrix@R=1pt@C=4em{
    \overset{
    \mbox{
        \tiny
        \begin{tabular}{c}
          \emph{Fixed}\;\;
          \\
          \emph{points}\;\;
        \end{tabular}}}
    {\overbrace{S^0 = \big(S^4\big)^{S^1}}}
    \ar[dddrr]
    \;\ar@{^{(}->}[r]
    &
    \overset{
    \mbox{
        \tiny
        \begin{tabular}{c}
        \emph{4-sphere}\;\;\;
        \end{tabular}}}
    {\overbrace{S^4}}
    \ar[dddr]
    \ar[r]
    &
    \overset{
    \mbox{
        \tiny
        \begin{tabular}{c}
          \emph{Homotopy}\;\;
          \\
          \emph{quotient}\;\;
        \end{tabular}}}
    {\overbrace{S^4\dslash S^1}}
    \ar[ddd]
    \ar[r]
    &
    \overset{
    \mbox{
        \tiny
        \begin{tabular}{c}
          \emph{Naive}\;\;
          \\
          \emph{quotient}\;\;
        \end{tabular}}}
    {\overbrace{S^4/ S^1}}
    \ar@{=}[dddl]
    \\
    \\
    \\
    &&
    S^3
  }}
\end{equation}
\end{prop}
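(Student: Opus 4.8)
The plan is to exploit the fact that $S^4 = S(\mathbb{R}\oplus\mathbb{C}^2)$ is the unreduced suspension of $S^3 = S(\mathbb{C}^2)$ along the $\mathbb{R}$-axis, on which the $S^1$-action is trivial. First I would make the action explicit: the canonical inclusion $S^1\simeq U(1)\hookrightarrow SU(2)$ of Def.~\ref{ATypeOrbispaceOf4Sphere} is the diagonal maximal torus $z\mapsto \mathrm{diag}(z,\bar z)$, so the induced $S^1$-action on $S^4$ fixes the $\mathbb{R}$-summand and sends $(t,w_1,w_2)\mapsto (t,zw_1,\bar z w_2)$. The assignment $(s,u)\mapsto (s,\sqrt{1-s^2}\,u)$ then exhibits $S(\mathbb{R}\oplus\mathbb{C}^2)$ as the unreduced suspension $\Sigma S(\mathbb{C}^2)$, with suspension coordinate $s\in[-1,1]$ the $\mathbb{R}$-component and the two cone points the poles $\pm(1,0,0)$; since $S^1$ acts trivially on $s$, this identification is $S^1$-equivariant for the linear $S^1$-action on $S^3=S(\mathbb{C}^2)$ and the trivial action on $[-1,1]$.

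For part \textbf{(ii)} I would observe that forming $S^1$-fixed points commutes with forming the unit sphere of a representation: a linear $S^1$-action restricts to the sphere and fixed points of the linear action lie on the sphere, whence $(S^4)^{S^1} = S\big((\mathbb{R}\oplus\mathbb{C}^2)^{S^1}\big) = S(\mathbb{R}\oplus 0) = S^0$, which inside $S^4\subset\mathbb{R}^5$ is the pair of antipodal poles $\pm(1,0,0,0,0)$.

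For part \textbf{(i)}: since $S^1$ acts trivially on the suspension coordinate, the ordinary quotient commutes with unreduced suspension --- concretely, $(A\times[-1,1])/G\cong (A/G)\times[-1,1]$ when $G$ acts trivially on the interval factor, and collapsing the two ends turns this into $\Sigma(A/G)$ --- so $S^4/S^1\cong\Sigma(S^3/S^1)$. It then remains to identify $S^3/S^1$, which I would do by noting that the homeomorphism $(w_1,w_2)\mapsto(w_1,\bar w_2)$ of $S^3$ intertwines the maximal-torus action $\mathrm{diag}(z,\bar z)$ with the diagonal action $z\cdot(-)$, whose orbit projection is the complex Hopf fibration $S^3\to\mathbb{CP}^1\cong S^2$; hence $S^3/S^1\cong S^2$ and $S^4/S^1\cong\Sigma S^2\cong S^3$. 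The canonical map $S^4\dslash S^1 = S^4\times_{S^1}ES^1\to S^4\times_{S^1}\ast = S^4/S^1$ induced by $ES^1\to\ast$ (the bottom square of \eqref{HomotopyQuotientReceiving}) then produces \eqref{MapFromATypeOrbispaceTo3Sphere} upon post-composing with this homeomorphism, and the full diagram \eqref{QuotientOfS4ByS1OverS3} commutes by naturality of the quotient projection (Rem.~\ref{MapsRelatedToHomotopyQuotient}\,\textbf{(i)}), since each composite down to $S^3$ reduces to the underlying quotient map.

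I do not expect a genuine obstacle here --- this is why the proposition is stated as "immediate". The only point demanding a little care is checking that the comparison maps above are genuine \emph{homeomorphisms} and not merely weak equivalences; but because the suspension coordinate is literally untouched by the action and the two poles are honest fixed points, this is elementary point-set topology. A fully equivalent alternative that bypasses the suspension picture altogether would be to write down the Hopf-type map $S(\mathbb{R}\oplus\mathbb{C}^2)\to \mathbb{R}\oplus\mathbb{R}^3$, $(t,w_1,w_2)\mapsto (t,\,|w_1|^2-|w_2|^2,\,2w_1w_2)$, and check directly that its fibres are exactly the $S^1$-orbits and that its image is a $3$-sphere.
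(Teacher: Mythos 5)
Your proof is correct, and it is essentially the argument the paper has in mind: the paper declares the proposition ``immediate'', relying implicitly on the same picture you make explicit, namely that the A-type action is the suspension of a Hopf-type circle action on $S^3=S(\mathbb{C}^2)$, so that the quotient is $\Sigma(S^3/S^1)\cong\Sigma S^2\cong S^3$ and the fixed locus is the suspension $S^0=S(\mathbb{R}\oplus 0)$, with the map \eqref{MapFromATypeOrbispaceTo3Sphere} coming from the factorization \eqref{HomotopyQuotientReceiving}. Your care in conjugating the weight-$(1,-1)$ maximal-torus action to the diagonal Hopf action before quotienting is exactly the right point to check, and the explicit invariant map $(t,w_1,w_2)\mapsto(t,\,|w_1|^2-|w_2|^2,\,2w_1w_2)$ is a valid shortcut.
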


Below in Sec. \ref{TheMechanism} we regard the (rationalization of the) A-type orbispace of the 4-sphere 
as the \emph{coefficient} of a generalized cohomology theory.
However, as explained in \cite[Sec. 2.2]{ADE}, the 4-sphere coefficient here ultimately originates as a
factor in a black M5-brane spacetime $\sim \mathrm{AdS}_7 \times S^4$.
 With this in mind, the spaces appearing in \eqref{QuotientOfS4ByS1OverS3}
readily explain those spaces appearing in 
the string theory literature.

\begin{remark}[The A-type orbispace from black M5-brane geometry]
 \label{TheATypeQuotientFromSpacetime}

\item {\bf{(i)}} The near-horizon geometries of black M2-brane and black M5-brane solutions of
11-dimensional supergravity are given by $\mathrm{AdS_4} \times S^7$ and
$\mathrm{AdS}_7 \times S^4$, respectively \cite{Gueven92}.
Both of the spherical factors admit natural maps to the four-sphere, namely the quaternionic Hopf fibration
$H_\mathbb{H}\colon S^7 \to S^4$ and the identity map $S^4 \to S^4$, and these maps generate the torsion-free
homotopy of $S^4$. It is natural to posit that $S^4$ is the coefficient for a \emph{nonabelian} cohomology theory
(in this case \emph{cohomotopy}) that measures M-brane charge in the spirit of Dirac charge quantization \cite{Freed00},
at least rationally \cite{S-top, cohomotopy, FSS16a}.
%
\begin{equation}
  \label{SpacetimeMaps}
  \raisebox{50pt}{\xymatrix@R=1em@C=5em{
    \big[
     \underset{
       \tiny
       \begin{tabular}{c}
         black M2-brane
         \\
         spacetime
       \end{tabular}
      }{\underbrace{ \mathrm{AdS}_4 \times S^7}}
    \ar[r]^-{{\rm pr}_2}
    \ar@/^2pc/[rr]^{ \mbox{ \tiny \begin{tabular}{c} one unit of \\ M2-brane charge \end{tabular} } }
    &
    \underset{
      \!\!\!\!\!\!\!\!\!\!\!\!\!\!
      \!\!\!\!\!\!\!\!\!\!\!\!\!\!
      \mbox{
        \tiny
        \begin{tabular}{c}
          sphere around
          \\
          M2-brane
          \\
          singularity
        \end{tabular}
      }
      \!\!\!\!\!\!\!\!\!\!\!\!\!\!
      \!\!\!\!\!\!\!\!\!\!\!\!\!\!
    }{
    \underbrace{
      S^7
    }}
    \ar[r]^-{ H_{\mathbb{H}} }
    &
    S^4
    \big]
    &
    \in \big[Y, S^4\big]
    \\
    &&
    &
    \mbox{
      \footnotesize
      \begin{tabular}{c}
         cohomotopy classes
        \\
        of $Y$ in degree 4
      \end{tabular}
    }
    \\
    \big[
     \underset{
       \tiny
       \begin{tabular}{c}
         black M5-brane
         \\
         spacetime
       \end{tabular}
      }{\underbrace{ \mathrm{AdS}_7 \times S^4}}
    \ar[r]^-{{\rm pr}_2}
    \ar@/^2pc/[rr]^{ \mbox{ \tiny \begin{tabular}{c} one unit of \\ M5-brane charge \end{tabular} } }
    &
    \underset{
      \!\!\!\!\!\!\!\!\!\!\!\!\!\!
      \!\!\!\!\!\!\!\!\!\!\!\!\!\!
      \mbox{
        \tiny
        \begin{tabular}{c}
          sphere around
          \\
          M5-brane
          \\
          singularity
        \end{tabular}
      }
      \!\!\!\!\!\!\!\!\!\!\!\!\!\!
      \!\!\!\!\!\!\!\!\!\!\!\!\!\!
    }{
    \underbrace{
      S^4
    }}
    \ar[r]^{ \mathrm{id}}
    &
    S^4
    \big]
    &
    \in \big[Y, S^4\big]
  }}
\end{equation}
\item {\bf (ii)}
More generally, the black M5-brane may sit inside an $\mathrm{MK6}$, which itself is located at the singular locus of a global orbifold
$$
  \mathrm{AdS}_7 \times S^4 \dslash G_{\mathrm{ADE}}
$$
(see \cite[Ex. 2.7]{ADE} for a precise statement and for pointers to the literature),
where  $G_{\mathrm{ADE}} \subset SU(2)$ is a finite subgroup acting on the 4-sphere via the identification
$$
  \xymatrix{S^4 \ar@(ul,ur)[]^{ G_{\mathrm{ADE}} }}
  \;\simeq\;
  S( \mathbb{R} \oplus \!\!\!\xymatrix{ \mathbb{C}^2 \ar@(ul,ur)[]^{ G_{\mathrm{ADE}} } } \!\!\! )
  \,.
$$
In order for the 4-sphere charge coefficient to be able to measure the unit charges of such M5-branes sitting at ADE singularities
in a manner generalizing \eqref{SpacetimeMaps},
it must be equipped with that same group action.
The resulting \emph{equivariant} cohomotopy theory for M-branes is the subject of \cite{ADE}.

\item {\bf (iii)}  Our current focus is on the A-series subgroups, which up to conjugation are the cyclic subgroups
$
  \mathbb{Z}_{n+1} \hookrightarrow  S^1 = {\rm U}(1) \hookrightarrow  {\rm SU}(2)
$,
as in Rem. \ref{OrbispaceTerminology}.
By analogy with the case for M2-branes as in \cite[p. 3]{ABJM08}, we may interpret the $S^1$-action as being that of the
M-theory circle fibration over 10d type IIA supergravity.
With this interpretation in mind, passage to the finite A-type orbifold quotient
$$
  \mathrm{AdS}_7 \times S^4 \dslash \mathbb{Z}_{n+1}
$$
corresponds to shrinking the M-theory circle fiber, and hence the coupling constant of non-perturbative type IIA string theory,
by the factor $n+1$. The limit $n \to \infty$, in which the cyclic groups $\mathbb{Z}_{n+1}$ exhausts the group $S^1$, corresponds to the limit of perturbative type IIA string theory. Via the maps of \eqref{HomotopyQuotientReceiving}:
%
$$
  \scalebox{.9}{
  \xymatrix@R=7pt{
    \fbox{
      \begin{tabular}{c}
        M-theoretic
        \\
        near horizon spacetime
        \\
        of black M5-brane
      \end{tabular}
    }
    &
    \mathrm{AdS}_7 \times S^4 \hspace{-.75mm}
    \ar[d]
    \\
    \fbox{
      \begin{tabular}{c}
        M-theoretic
        \\
        near horizon spacetime
        \\
        of black M5-brane at A-type singularity
        \\
        for coupling $g/(n+1)$
      \end{tabular}
    }
    &
    \mathrm{AdS}_7 \times S^4 \dslash \mathbb{Z}_{n+1}
    \ar[d]
    \\
    \fbox{
      \begin{tabular}{c}
        Type IIA string-theoretic
        \\
        near horizon spacetime
        \\
        of black NS5-brane inside black D6-brane
      \end{tabular}
    }
    &
    \mathrm{AdS}_7 \times S^4 \dslash S^1
    \ar[d]
    \\
    \fbox{
      \begin{tabular}{c}
        Type IIA string-theoretic
        \\
        near horizon spacetime
        \\
        of black NS5-brane
      \end{tabular}
    }
    &
    \mathrm{AdS}_7 \times S^3.
  }
  }
$$
Applying the same logic as before, we might expect that the A-type orbispace $S^4 \dslash S^1$ serves as the charge quantization coefficient when M-branes
 are identified with their dual
incarnations as D-branes in type IIA string theory. That this is indeed the case, up to a subtlety related to fiberwise stabilization,
is essentially our result on gauge enhancement.

\item {\bf (iv)} While the A-type orbispace $S^4 \dslash S^1$ has not previously featured in the string theory literature,
the \emph{ordinary} quotient space $S^4 / S^1 \simeq S^3$ has been considered in this context.
We briefly survey the related literature:
\begin{itemize}
\item The dimensional reduction of 11-dimensional supergravity on the 4-sphere factor yields
a maximal ${\rm SO}(5)$-gauged supergravity
in seven dimensions \cite{PNT}. The consistency of this reduction is established in
\cite{NVvN} and a systematic classification of such reductions is given in \cite{FS2}.
On the other hand, the reduction of type IIA
supergravity on $S^3$ leads to an
${\rm SO}(4)$-gauged supergravity
in seven dimensions.
To compare these two gauged supergravity theories,
one needs a means of breaking the ${\rm SO}(5)$ gauge symmetry.
In
\cite{CLPST} the comparison between the two reductions is achieved using the singular scaling limit of $S^4$
opening up to $S^3 \times \R$, based on earlier
arguments \cite{HW,CLLP}. The consistency of
such reductions is studied and established in \cite{CLP}.

\item  Reductions with less symmetry are
also possible, for instance by gauging only a left-acting ${\rm SU}(2)$ subgroup of
${\rm Spin}(4) \cong {\rm SU}(2)_L \times {\rm SU}(2)_R$ \cite{CS}.
In \cite{NV}, this was achieved using a singular limit of the
$S^4$ reduction of 11d supergravity. In \cite[Sec. 2.2]{ADE} it is explained
how the distinction between these actions relates to the 4-sphere detecting
black M5-branes as well as black M2-branes at singularities.
\end{itemize}
\end{remark}

\begin{remark}[Other circle actions on the 4-sphere]
Circle actions on spheres form one of the most important problems in the theory of transformation
groups. For  a compact connected topological group $G$ acting non-trivially
on the 4-sphere, requiring orbits to be of dimension $\leq 1$ immediately forces $G$ to be  the circle group.
However, the action may not be equivalent to a differentiable one
\cite{MZ}.
Furthermore, there are infinitely many nonlinear circle actions on $S^4$
\cite{Pao}.
Since $S^4$ is compact, it follows (by applying \cite[Theorem 7.33]{FOT}) that in any case the fixed point set $F$
will have the same Euler characteristic as $S^4$, namely 2.
Since the sum of dimensions of the cohomology groups of $F$ is always at most as large as the corresponding sum for the space $S^4$
\cite[Theorem 7.37]{FOT}, this forces 
$\dim H^{\rm ev}(F; \Q)=2$ and $\dim H^{\rm odd}(F; \Q)=0$.
Away from the trivial case ($F\simeq_\mathbb{Q} S^4$), this implies that there are only two possibilities for the rational homotopy type of the fixed point space: either $F$ is rationally a 0-sphere (union of two points), or $F$ is a rational homology 2-sphere.
The latter case is described in \cite[Ex. 7.39]{FOT}; in this article we deal exclusively with the former case.
\end{remark}

\subsection{Rational homotopy type of the A-type orbispace}
  \label{RationalHomotopyTypeOfATypeOrbispace}

In this section we study the rational homotopy type of the A-type orbispace of the 4-sphere (Def. \ref{ATypeOrbispaceOf4Sphere})
and apply the $\mathrm{Ext}/\mathrm{Cyc}$-adjunction to it. 
The main result is Prop. \ref{MinimalDGCAlgebraModelForATypeOrbispace} below, but first we establish some preliminary results:

\begin{lemma}[Rational homotopy and cohomology of
$S^4 \dslash S^1$]
  \label{HomotopyGroupsOfS4OverS1}
  For \emph{every} $S^1$-action on the 4-sphere, the resulting homotopy quotient
  $S^4 \dslash S^1$
  has the following properties:
    \item {\bf (i)}
    Its rational homotopy groups are
    \begin{equation}
      \label{ATypoeOrbispaceRationalHomotopyGroups}
      \pi_\bullet^{\mathbb{Q}}(S^4 \dslash S^1) := \pi_\bullet(S^4 \dslash S^1)\otimes \mathbb{Q}\simeq
      \begin{cases}
      \mathbb{Q} & \mbox{\emph{in dimensions 2, 4 and 7}}\\
      \,0& \mbox{\emph{otherwise}}.
      \end{cases}
    \end{equation}
    \item {\bf (ii)}
     Its rational cohomology groups are
     \begin{equation}
       \label{ATypeOrbispaceRationalCohomology}
       H^\bullet(S^4 \dslash S^1,\mathbb{Q})
       \;\simeq\;
       \langle
           \widetilde \omega_0
       \rangle
       \oplus
       \langle
           \omega_2
       \rangle
       \oplus
       \underset{k \in \mathbb{N}}{\bigoplus}\,
       \big\langle \,
             \omega_2^{\wedge(k+2)}
             \,,\,
             \omega_4 \wedge \omega_2^{\wedge k}
       \, \big\rangle
       \,,
     \end{equation}
     so that $\dim H^{2k}(S^4\dslash S^1,\mathbb{Q})$ is $1$ for $k=0,1$, is $2$ for $k\geq 2$, and the odd cohomology vanishes.
\end{lemma}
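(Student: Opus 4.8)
The plan is to exploit the Borel fibration that defines the homotopy quotient: by \eqref{BorelConstruction} the space $S^4 \dslash S^1$ sits, for \emph{every} choice of $S^1$-action on $S^4$, as the total space of a fibration
$$
  S^4 \longrightarrow S^4 \dslash S^1 \longrightarrow B S^1 ,
$$
with fiber $S^4$ over $B S^1 \simeq K(\mathbb{Z},2)$. The whole point is that the rational homotopy and rational cohomology of the fiber and of the base are concentrated in such degrees that the extension problem becomes formal and independent of the action.

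For item {\bf (i)} I would tensor the long exact homotopy sequence of this fibration with $\mathbb{Q}$ (exact), and use that $\pi_\bullet(S^4)\otimes\mathbb{Q}$ is $\mathbb{Q}$ in degrees $4$ and $7$ and zero otherwise, while $\pi_\bullet(B S^1)\otimes\mathbb{Q}$ is $\mathbb{Q}$ in degree $2$ and zero otherwise. Every connecting homomorphism $\pi_{n+1}(B S^1)\otimes\mathbb{Q}\to\pi_n(S^4)\otimes\mathbb{Q}$ then has trivial source or trivial target — the only potentially nonzero source is in degree $n+1=2$, where the target $\pi_1(S^4)\otimes\mathbb{Q}$ vanishes — so the rational homotopy sequence splits into short exact sequences, giving $\pi_n(S^4\dslash S^1)\otimes\mathbb{Q}\cong\big(\pi_n(S^4)\otimes\mathbb{Q}\big)\oplus\big(\pi_n(B S^1)\otimes\mathbb{Q}\big)$. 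This is exactly \eqref{ATypoeOrbispaceRationalHomotopyGroups}; in particular, reading the sequence in degree $1$ shows $\pi_1(S^4\dslash S^1)=0$, so $S^4\dslash S^1$ is simply connected (a fact we will want downstream anyway).

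For item {\bf (ii)} I would run the cohomology Serre spectral sequence of the same fibration with $\mathbb{Q}$-coefficients; since $B S^1$ is simply connected there are no local coefficients, and
$$
  E_2^{p,q}\;=\;H^p(B S^1;\mathbb{Q})\otimes H^q(S^4;\mathbb{Q})\;=\;\mathbb{Q}[\omega_2]\otimes\big(\mathbb{Q}\langle 1\rangle\oplus\mathbb{Q}\langle x_4\rangle\big).
$$
The key is a parity argument: $E_2^{p,q}$ is supported in the two rows $q\in\{0,4\}$ and in even columns $p$, so the only differential that could be nonzero is $d_5\colon E_5^{p,4}\to E_5^{p+5,0}=H^{p+5}(B S^1;\mathbb{Q})$; but $p$ is even, hence $p+5$ is odd and the target vanishes. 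Thus the spectral sequence degenerates at $E_2$, and as a graded vector space $H^\bullet(S^4\dslash S^1;\mathbb{Q})$ has associated graded $\mathbb{Q}[\omega_2]\otimes\mathbb{Q}[x_4]/(x_4^2)$. I would then lift to honest classes: $\omega_2$ pulled back along $S^4\dslash S^1\to B S^1$, and $\omega_4\in H^4$ any class restricting to the fundamental class of $S^4$. Together with $\widetilde\omega_0=1$ these span, yielding \eqref{ATypeOrbispaceRationalCohomology} and the dimension count $1,1,2,2,\dots$ in even degrees, $0$ in odd degrees.

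The one place that needs care — and the only step that is not purely formal — is verifying that the lift $\omega_4$ together with powers of $\omega_2$ genuinely gives a basis in every degree, i.e. that $\omega_4\wedge\omega_2^{k}$ is nonzero and independent of $\omega_2^{k+2}$ in $H^{2k+4}$. This will follow from multiplicativity of the filtration: $\omega_4\in F^0\setminus F^2$ (its image in $E_\infty^{0,4}$ is $x_4\neq 0$), $\omega_2^{k}\in F^{2k}$, so $\omega_4\wedge\omega_2^{k}\in F^{2k}$ and its image in $E_\infty^{2k,4}=E_2^{2k,4}$ is the nonzero product $\omega_2^{k}\otimes x_4$, hence $\omega_4\wedge\omega_2^{k}\notin F^{2k+2}$, whereas $\omega_2^{k+2}\in F^{2k+4}$. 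Everything else in the argument is bookkeeping.
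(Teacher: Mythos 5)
Your proof is correct and takes essentially the same route as the paper: part \textbf{(i)} via the rationalized long exact homotopy sequence of the Borel construction (literally the same exact sequence as that of $S^1 \to S^4 \to S^4 \dslash S^1$ invoked in the paper), and part \textbf{(ii)} via the multiplicative Serre spectral sequence of $S^4 \to S^4 \dslash S^1 \to B S^1$, whose lacunary $E_2$-page forces degeneration by the parity argument you give. Your closing filtration argument, showing that $\omega_4 \wedge \omega_2^{\wedge k}$ and $\omega_2^{\wedge(k+2)}$ are independent and span each even degree, is a correct elaboration of details the paper's two-sentence proof leaves implicit.
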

\begin{proof}
  The first statement follows with the long exact sequence of rational homotopy groups induced by the homotopy fiber sequence
  \[
  S^1 \longrightarrow S^4 \longrightarrow S^4 \dslash S^1\,.
  \]
  The second statement follows with the corresponding multiplicative Serre spectral sequence in rational cohomology (though we make no claims regarding the \emph{algebra} structure on cohomology---the notation is merely suggestive of the manner in which these classes arise in the Serre spectral sequence).
\end{proof}

\begin{lemma}[DG-algebra model for general $S^4 \dslash  S^1$]
  \label{dgcAlgebraModelForATypeOrbispaceOf4Sphere}
  For \emph{every} $S^1$-action on the 4-sphere, the minimal DG-algebra (via Prop. \ref{SullivanEquivalence}) of the
  resulting homotopy quotient (Def. \ref{HomotopyQuotient})
   is of the form
  \begin{equation}
    \label{dgcModelForHomotopyQuotientOf4SphereByCircleAction}
    \mathcal{O}\left(
      S^4 \dslash S^1
    \right)
    \;\simeq\;
    \mathbb{Q}[ \omega_2, \omega_4, h_7 ]\bigg/
    \left(
      \begin{aligned}
        d \omega_2 & = 0
        \\[-1mm]
        d \omega_4 & = 0
        \\[-1mm]
        d h_7
          & =
        -\tfrac{1}{2} \omega_4 \wedge \omega_4
        \\
        & \phantom{=}
        +
        c_1 \, \omega_2^{\wedge 4}
        +
        c_2 \, \omega_2^{\wedge 2} \wedge \omega_4
      \end{aligned}
    \right)
    \;\in\;
    \mathrm{DGCAlg}
  \end{equation}
  for some coefficients $c_1, c_2 \in \mathbb{R}$.
\end{lemma}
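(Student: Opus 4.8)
The plan is to realize $S^4 \dslash S^1$ as the total space of the Borel fibration $S^4 \to S^4 \dslash S^1 \to B S^1$ and to build a relative Sullivan model for it over $\mathcal{O}(B S^1) \simeq \mathbb{Q}[\omega_2]$ (Ex.~\ref{MinimalDGCAlgebraModelForClassifyingSpace}). First I would note that $S^4 \dslash S^1$ is simply connected (from the long exact homotopy sequence of the Borel fibration, since both $S^4$ and $B S^1$ are) and of finite rational type, so that Prop.~\ref{SullivanEquivalence} applies. Because $B S^1$ is simply connected, a relative Sullivan model has the form $\mathbb{Q}[\omega_2] \to \big(\mathbb{Q}[\omega_2,\omega_4,h_7],\, D\big)$ whose relative generators $\omega_4, h_7$ (degrees $4$ and $7$) are precisely those of the minimal model of the fiber $S^4$ (Ex.~\ref{MinimalDgcAlgebraModelFor4Sphere}); thus the underlying graded algebra is the claimed $\mathbb{Q}[\omega_2, \omega_4, h_7]$, consistent with the rational homotopy groups computed in Lemma~\ref{HomotopyGroupsOfS4OverS1}{\bf (i)}, and it only remains to identify $D$.

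Next I would pin down $D$ by the relative-minimality constraints, degree bookkeeping, and reduction to the fiber. Since $\omega_2$ comes from the base, $D\omega_2 = 0$; since $\omega_4$ is the first relative generator, $D\omega_4$ lies in $\mathbb{Q}[\omega_2]$ in degree $5$, which vanishes, so $D\omega_4 = 0$; and $D h_7$ lies in $\mathbb{Q}[\omega_2,\omega_4]$ in degree $8$, hence is a linear combination of the only three such decomposable monomials, namely $\omega_4 \wedge \omega_4$, $\omega_2^{\wedge 2}\wedge\omega_4$ and $\omega_2^{\wedge 4}$ (note $\omega_2\wedge h_7$ has degree $9$). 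Reducing modulo the augmentation ideal $(\omega_2)$ of the base recovers the minimal model of the fiber $S^4$, whose differential is $\bar d h_7 = -\tfrac12 \omega_4 \wedge \omega_4$ (Ex.~\ref{MinimalDgcAlgebraModelFor4Sphere}), so the $\omega_4\wedge\omega_4$-coefficient is forced to be $-\tfrac12$ in these normalized generators. Writing $c_1, c_2$ for the remaining two coefficients gives $D h_7 = -\tfrac12 \omega_4\wedge\omega_4 + c_1\,\omega_2^{\wedge 4} + c_2\, \omega_2^{\wedge 2}\wedge\omega_4$, and $D^2 = 0$ holds automatically since $D\omega_2 = D\omega_4 = 0$.

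It then remains to observe that $\big(\mathbb{Q}[\omega_2,\omega_4,h_7], D\big)$ is already an \emph{absolutely} minimal Sullivan algebra, since the differential of every generator ($\omega_2$, $\omega_4$, $h_7$) is decomposable --- indeed at least quadratic. By uniqueness of minimal models (Prop.~\ref{SullivanEquivalence}{\bf (iv)}) this is therefore the minimal DG-algebra of $S^4 \dslash S^1$, which is exactly \eqref{dgcModelForHomotopyQuotientOf4SphereByCircleAction}. The coefficients $c_1, c_2$ encode how the fiber $S^4$ is twisted over $B S^1$ for the given action; the argument does not and need not determine them, whereas the leading coefficient $-\tfrac12$ is intrinsic, being inherited from the minimal model of the fiber.

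The step requiring the most care is the passage to the fiber: one must invoke the standard fact that, for a fibration with simply-connected base, the quotient of a relative Sullivan model by the augmentation ideal of the base algebra models the fiber, so that setting $\omega_2 = 0$ genuinely recovers $(\mathbb{Q}[\omega_4,h_7], d h_7 = -\tfrac12 \omega_4\wedge\omega_4)$ and thereby forces the $\omega_4\wedge\omega_4$-coefficient of $D h_7$ to be nonzero (had it been zero, the fiber model would have cohomology $\mathbb{Q}[\omega_4,h_7]$ with zero differential, nontrivial in degree $7$, contradicting $H^7(S^4;\mathbb{Q})=0$), after which a rescaling of $h_7$ normalizes it to $-\tfrac12$. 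Everything else is low-degree bookkeeping in the free graded-commutative algebra on three generators.
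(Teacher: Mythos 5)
Correct, and essentially the paper's own argument: both proofs fix the underlying free algebra $\mathbb{Q}[\omega_2,\omega_4,h_7]$, kill $d\omega_2$ and $d\omega_4$ by degree reasons, and use the fibration $S^4 \to S^4 \dslash S^1 \to B S^1$ — setting $\omega_2 = 0$ must recover the minimal model of the fiber $S^4$ — to force $d h_7 = -\tfrac{1}{2}\,\omega_4\wedge\omega_4$ plus terms divisible by $\omega_2$. The only cosmetic difference is that you read the generators off a relative Sullivan model over $\mathcal{O}(B S^1)\simeq\mathbb{Q}[\omega_2]$, while the paper reads them off the rational homotopy groups of Lemma \ref{HomotopyGroupsOfS4OverS1}; your extra remarks on non-vanishing of the $\omega_4\wedge\omega_4$-coefficient, normalization by rescaling $h_7$, and absolute minimality are welcome details that the paper leaves implicit.
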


\begin{proof}
  By Prop. \ref{SullivanEquivalence}, the minimal DG-algebra model of $S^4 \dslash S^1$ has the following properties:
  \begin{enumerate}
    \item as a graded algebra, it is generated by the rational homotopy groups;
    \item the differential on the minimal model is such that the cochain cohomology reproduces the rational cohomology of $S^4\dslash S^1$.
  \end{enumerate}
By Lemma \ref{HomotopyGroupsOfS4OverS1}, we therefore have that the underlying graded commutative algebra of the minimal model
  of \emph{any} $S^4\dslash S^1$ is $\mathbb{Q}[ \omega_2, \omega_4, h_7 ]$.
  By the second item in Lemma \ref{HomotopyGroupsOfS4OverS1} and for degree reasons, the differential is necessarily of the form
  \[
  d\omega_2 =0,
  \qquad\quad
  d\omega_4 =0,
  \qquad\quad
  dh_7 \neq 0.
  \]
There is a homotopy fiber sequence
\[
S^4 \longrightarrow S^4 \dslash S^1 \longrightarrow BS^1,
\]
which in the rational models is reflected by the requirement that setting $\omega_2$ to zero in $\mathcal{O}(S^4\dslash S^1)$ produces a (necessarily minimal) DG-algebra model for $S^4$.
Comparing with Ex. \ref{MinimalDgcAlgebraModelFor4Sphere}, this means  that
\[
dh_7 = -\tfrac{1}{2} \omega_4\wedge \omega_4 + \text{terms of degree 8 at most linear in $\omega_4$},
\]
which completes the proof.
\end{proof}

\begin{lemma}[Rational Ext/Cyc-adjunction unit at $S^4\dslash S^1$]
  \label{ExtCycAdjunctionForATypeOrbispace}
  For any $S^1$-action on $S^4$,
  the composite of the 
  Ext/Cyc-adjunction unit (Theorem \ref{GCycExtAdjunction})
  at $S^4 \dslash S^1$ with the cyclification of the equivalence of Prop. \ref{ExtensionOfHomotopyQuotientEquivalentToOriginalSpace} is presented by the map of minimal DG-algebra models
  \begin{equation}
    \label{AdjunctionUnitOnRationalATypeOrbispaceISH}
    \raisebox{45pt}{\xymatrix@R=-3pt@C=4em{
      S^4 \dslash S^1
        \ar[r]^-{\eta_{S^4 \dslash S^1}}
      &
      \mathrm{Cyc}\,\mathrm{Ext}(S^4 \dslash S^1)
      \ar[r]^-{\simeq_{\mathrm{whe}}}
      &
      \mathrm{Cyc}(S^4)
      \\
      \omega_2
      \vphantom{
      \omega_2^{\wedge 3}
      }
       && \omega_2 \ar@{|->}[ll]
      \\
      \omega_4
      \vphantom{
      \omega_2^{\wedge 3}
      }
      && \omega_4 \ar@{|->}[ll]
      \\
      {
       c_1 \, \omega_2^{\wedge 3}
        + c_2 \, \omega_2 \wedge \omega_4
      }
      && \omega_6 \ar@{|->}[ll]
      \\
      0
      \vphantom{
      \omega_2^{\wedge 3}
      }
      && h_3 \ar@{|->}[ll]
      \\
      h_7
      \vphantom{
      \omega_2^{\wedge 3}
      }
      && h_7 \ar@{|->}[ll]
    }}
  \end{equation}
  where $c_1, c_2 \in \mathbb{R}$ are the constants as in \eqref{dgcModelForHomotopyQuotientOf4SphereByCircleAction}.
\end{lemma}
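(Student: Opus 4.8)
The plan is to compute the rational $\mathrm{Ext}/\mathrm{Cyc}$-unit directly on minimal models, exploiting that all three spaces involved — $S^4\dslash S^1$, $\mathrm{Cyc}\,\mathrm{Ext}(S^4\dslash S^1)$, and $\mathrm{Cyc}(S^4)$ — have explicit minimal DG-algebra descriptions already in hand. The source of the unit is $\mathcal{O}(S^4\dslash S^1)=\mathbb{Q}[\omega_2,\omega_4,h_7]$ with differential as in \eqref{dgcModelForHomotopyQuotientOf4SphereByCircleAction}. By Prop. \ref{ExtensionOfHomotopyQuotientEquivalentToOriginalSpace}, $\mathrm{Ext}(S^4\dslash S^1)\simeq_{\mathrm{whe}} S^4$, so the target of the composite is $\mathrm{Cyc}(S^4)$, whose minimal model is computed in Ex. \ref{MinimalDGCAlgebraModelForCyclicSpaceOfFourSphere}: $\mathbb{Q}[h_3,h_7,\omega_2,\omega_4,\omega_6]$ with the displayed differential. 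Thus I must identify a DG-algebra morphism $\mathcal{O}(\mathrm{Cyc}(S^4))\to \mathcal{O}(S^4\dslash S^1)$ presenting the composite, and show it is forced to be the one claimed in \eqref{AdjunctionUnitOnRationalATypeOrbispaceISH}.

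First I would pin down what the unit must do on the generator $h_3$. The cyclification $\mathrm{Cyc}(S^4)$ sits over $S^3$ via \eqref{OverS3CycS4}, i.e. over $BS^1$ in the relevant sense — the $h_3$ generator is (rationally) pulled back from the base $B G$ with $G=S^1$. But the composite in \eqref{AdjunctionUnitOnRationalATypeOrbispaceISH} factors through the equivalence $\mathrm{Cyc}\,\mathrm{Ext}(S^4\dslash S^1)\simeq \mathrm{Cyc}(S^4)$, and $\mathrm{Ext}(S^4\dslash S^1)$ is defined by the canonical cocycle \eqref{CanonicalCocycleOnHomotopyQuotient} $S^4\dslash S^1\to BS^1$, which is classified by $\omega_2\in H^2$. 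The composition of the adjunction unit with this base map is the original cocycle $\tau\colon S^4\dslash S^1\to BS^1$ (the lower triangle in Theorem \ref{GCycExtAdjunction} (ii)), whose rational model sends the degree-2 generator of $\mathcal{O}(BS^1)$ to $\omega_2$; but $S^3\simeq_{\mathbb{Q}}\mathcal{S}(\mathbb{Q}[h_3])$ and the map $\mathrm{Cyc}(S^4)\to S^3$ does \emph{not} factor the base $BS^1$-map compatibly — rather, $h_3=s\omega_2$ is the suspension of $\omega_2$, and on $S^4\dslash S^1$ we have $d\omega_2=0$, so $s\omega_2$ is a coboundary, giving $h_3\mapsto 0$. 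I would make this rigorous by tracking through Rem. \ref{MinimalDGCModelForCyclicLoopSpace}: $h_3=s\omega_4'$ where... — more carefully, in $\mathrm{Cyc}(S^4)$ the generators $\omega_2,\omega_4,\omega_6$ are the "looped generators" $s\omega_4^{(4)}, \dots$ and $h_3,h_7$ are the original $S^4$-generators plus the cyclification generator; matching the differentials $d\omega_4=h_3\wedge\omega_2$, $d\omega_6=h_3\wedge\omega_4$ with the differentials on $\mathcal{O}(S^4\dslash S^1)$ where $d\omega_2=d\omega_4=0$ forces $h_3\mapsto 0$.

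Next I would determine the images of $\omega_2,\omega_4,\omega_6,h_7$. On $\omega_2$ and $\omega_4$, comparison of degrees together with the constraint that the map be a morphism of DG-algebras over the base (so $\omega_2,\omega_4$ map to the generators of the same name, up to the usual normalization) gives $\omega_2\mapsto\omega_2$, $\omega_4\mapsto\omega_4$; I would confirm this is consistent with $d\omega_4=h_3\wedge\omega_2\mapsto 0\wedge\omega_2=0=d\omega_4$ on the target. For $\omega_6$: applying $d$, we need the image $x$ of $\omega_6$ to satisfy $dx = h_3\wedge\omega_4\mapsto 0$, so $x$ is a degree-6 cocycle in $\mathbb{Q}[\omega_2,\omega_4,h_7]$; the space of such is spanned by $\omega_2^{\wedge 3}$ and $\omega_2\wedge\omega_4$, so $\omega_6\mapsto c_1\,\omega_2^{\wedge 3}+c_2\,\omega_2\wedge\omega_4$ for some constants; identifying these constants with the $c_1,c_2$ of \eqref{dgcModelForHomotopyQuotientOf4SphereByCircleAction} is exactly the content to be extracted from the relation $d h_7 = -\tfrac12\omega_4\wedge\omega_4+\omega_6\wedge\omega_2$ in $\mathcal{O}(\mathrm{Cyc}(S^4))$: applying the map, $d(\text{image of }h_7)=-\tfrac12\omega_4^{\wedge 2}+(c_1\omega_2^{\wedge 3}+c_2\omega_2\wedge\omega_4)\wedge\omega_2=-\tfrac12\omega_4^{\wedge 2}+c_1\omega_2^{\wedge 4}+c_2\omega_2^{\wedge 2}\wedge\omega_4$, which matches $d h_7$ in \eqref{dgcModelForHomotopyQuotientOf4SphereByCircleAction} precisely when $h_7\mapsto h_7$ and the constants agree. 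This closes the computation and shows the map is the claimed one.

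The main obstacle, I expect, is \emph{not} the algebra but rather justifying that the abstractly-defined adjunction unit of Theorem \ref{GCycExtAdjunction} — a map of topological spaces, defined via winding around extension fibers — is actually \emph{presented} by this DG-algebra morphism in the rational homotopy category, i.e. that the derived functors $\mathrm{Ext}, \mathrm{Cyc}$ on rational spaces are modeled by the constructions of Rem. \ref{MinimalDGCModelForCyclicLoopSpace} and Prop. \ref{ExtensionOfHomotopyQuotientEquivalentToOriginalSpace} \emph{compatibly with the unit}. The honest way to handle this is: (a) invoke Rem. \ref{ExtCycAdjunctiononHomotopyCats} that for $G=S^1$ both functors are homotopical and descend to $\mathrm{Ho}$, hence to rational $\mathrm{Ho}$; (b) note that a morphism of rational homotopy types between simply-connected (or nilpotent finite-type) spaces is uniquely determined up to homotopy by the induced map on minimal models (Prop. \ref{SullivanEquivalence} (iv)); (c) observe that the image map is constrained, by the differential-compatibility arguments above, to be the unique (up to the choice reflected in the constants $c_1,c_2$, which are themselves fixed by the source model) DG-morphism lying over the cocycle $\tau$ — and since the genuine unit \emph{does} lie over $\tau$ (Theorem \ref{GCycExtAdjunction} (ii), the lower triangle) and does restrict correctly to the extension fiber $S^4$ (via the $\simeq_{\mathrm{whe}}$ of Prop. \ref{ExtensionOfHomotopyQuotientEquivalentToOriginalSpace}), it must be the one exhibited. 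A secondary subtlety is checking that $S^4\dslash S^1$ is simply-connected (so Sullivan theory applies cleanly) — this follows from Lemma \ref{HomotopyGroupsOfS4OverS1} (i), since $\pi_1^{\mathbb{Q}}$ vanishes and in fact $\pi_1$ vanishes as $S^4\dslash S^1$ fibers over $BS^1$ with simply-connected fiber $S^4$, giving $\pi_1(S^4\dslash S^1)\cong\pi_1(BS^1)=0$ — wait, $\pi_1(BS^1)=0$ indeed, so the total space is simply-connected, and I would record this explicitly at the start of the proof.
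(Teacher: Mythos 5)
Your degree-and-differential bookkeeping does not pin down the map, and this is where the proposal has a genuine gap. Being a DG-algebra morphism over $B S^1$ only forces $\omega_2 \mapsto \omega_2$ (and $h_3 \mapsto 0$, which is automatic since $\mathbb{Q}[\omega_2,\omega_4,h_7]$ has no degree-3 elements); it does \emph{not} force $\omega_4 \mapsto \omega_4$ or $h_7 \mapsto h_7$. For instance, the assignment $\omega_2 \mapsto \omega_2$, $h_3 \mapsto 0$, $\omega_4 \mapsto \omega_4 + \omega_2^{\wedge 2}$, $h_7 \mapsto h_7$, $\omega_6 \mapsto (c_1 + \tfrac{1}{2})\,\omega_2^{\wedge 3} + (c_2+1)\,\omega_2\wedge\omega_4$ is also a DG-algebra map over $\mathbb{Q}[\omega_2]$ respecting all differentials (as are rescalings $\omega_4 \mapsto a\,\omega_4$, $h_7 \mapsto a^2 h_7$), and these alternatives are not even homotopic to the claimed map, since they differ on $H^4(S^4 \dslash S^1;\mathbb{Q}) = \langle \omega_2^{\wedge 2}\rangle \oplus \langle \omega_4 \rangle$. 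So your claim that the constants in the image of $\omega_6$ are ``fixed by the source model'' and that there is a \emph{unique} DG-morphism over $\tau$ is false; consequently the concluding ``it must be the one exhibited'' does not follow. The missing input is exactly what the paper's proof supplies: the triangle (zig-zag) identity $\epsilon_{\mathrm{Ext}(S^4\dslash S^1)} \circ \mathrm{Ext}(\eta_{S^4\dslash S^1}) = \mathrm{id}$, combined with the equivalence $\mathrm{Ext}(S^4 \dslash S^1) \simeq_{\mathrm{whe}} S^4$ of Prop.~\ref{ExtensionOfHomotopyQuotientEquivalentToOriginalSpace} and the explicit counit of Theorem~\ref{GCycExtAdjunction}~(iii) (evaluation at the neutral element), which shows that the unit must send the \emph{non-looped} generators of $\mathcal{O}(\mathrm{Cyc}(S^4))$ to themselves; only after $\omega_4 \mapsto \omega_4$ and $h_7 \mapsto h_7$ are secured this way does respect for $d h_7$ force $\omega_6 \mapsto c_1\,\omega_2^{\wedge 3} + c_2\,\omega_2\wedge\omega_4$ with the \emph{same} constants as in \eqref{dgcModelForHomotopyQuotientOf4SphereByCircleAction}. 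You gesture at this with ``restricts correctly to the extension fiber,'' but you never formulate the constraint (composition with the counit/evaluation is the identity on $S^4$) nor use it in the computation, so the argument as written does not close.

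A related, correctable slip: your identification of the generators of $\mathcal{O}(\mathrm{Cyc}(S^4))$ under Rem.~\ref{MinimalDGCModelForCyclicLoopSpace} is garbled. In that model $h_3$ and $\omega_6$ are the \emph{looped} generators $s\omega_4$ and $s h_7$, while $\omega_4$ and $h_7$ are the original $S^4$-generators and $\omega_2$ is the extra cyclification generator $\widetilde\omega_2$ --- not, as you write, that $\omega_2,\omega_4,\omega_6$ are the looped ones. Getting this straight is not cosmetic: it is precisely the bookkeeping needed to translate the zig-zag identity into the statement ``non-looped generators go to themselves,'' which is the step your proof is missing. (Your framework points (a), (b) and the simple-connectedness check for $S^4\dslash S^1$ are fine.)
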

\begin{proof}
We determine what the map $S^4\dslash S^1 \to \mathrm{Cyc}(S^4)$ does on generators of the rational homotopy groups.
With this we can posit a map on minimal models, which turns out to be uniquely specified for degree reasons.
To begin, we observe that the degree-3 generator $h_3$ must be sent to zero (since $S^4 \dslash S^1$ has no free
homotopy in dimension $3$). The degree-2 generator $\omega_2$ on the right of \eqref{AdjunctionUnitOnRationalATypeOrbispaceISH}
is sent to the generator of the same name on the left: all morphisms considered are over $BS^1$, and $\omega_2$ generates
the minimal model of this space (Ex. \ref{MinimalDGCAlgebraModelForClassifyingSpace}).
The zig-zag identity of the $\mathrm{Ext}/\mathrm{Cyc}$-adjunction gives us a commuting diagram
  $$
    \xymatrix{
      S^4
      \ar@{}[r]|-{\simeq_{\mathrm{whe}}}
      &
      \mathrm{Ext}(S^4\dslash S^1)
      \ar[drrr]_{\mathrm{id       }}
      \ar[rrr]^-{ \mathrm{Ext}(\eta_{S^4\dslash S^1}) }
      &&&
      \mathrm{Ext}\,\mathrm{Cyc}\,\mathrm{Ext}(S^4\dslash S^1)
      \ar[d]^{ \epsilon_{\mathrm{Ext}(S^4  \dslash S^1)} }
      \ar@{}[r]|-{ \simeq_{\mathrm{whe}} }
      &
      \mathrm{Maps}(S^1, S^4)
      \ar[d]^{ \mathrm{ev}_0 }
      \\
      &&
      &&
      \mathrm{Ext}(S^4\dslash S^1)
      \ar@{}[r]|-{\simeq_{\mathrm{whe}}}
      &
      S^4
    }
  $$
  with weak homotopy equivalences as shown due to Prop. \ref{ExtensionOfHomotopyQuotientEquivalentToOriginalSpace}.
  Examining the DG-algebra model of the free loop space (Remark \ref{MinimalDGCModelForCyclicLoopSpace}),
  this means that the unit $\eta$ sends non-shifted (non-looped) algebra generators to themselves. These generators are $\omega_4$ and $h_7$.

  So far we have defined the desired map of DG-algebras on the generators $\omega_2$, $\omega_4$, $h_3$, and $h_7$.
  This map respects the differentials on $\omega_2$, $\omega_4$ and $h_3$, whereas respect for the differential on $h_7$
  $$
    \xymatrix{
      {\begin{aligned}
        & -\tfrac{1}{2} \omega_4 \wedge \omega_4
        \\
        & + \omega_2 \wedge \big( c_1 \omega_2^{\wedge 3} + c_2  \omega_2\wedge\omega_{4}  \big)
      \end{aligned}}
      &&
      {\begin{aligned}
        & -\tfrac{1}{2} \omega_4 \wedge \omega_4
        \\
        &
        + \omega_2 \wedge \omega_6
      \end{aligned}}
      \ar@{|-->}[ll]
      \\
      h_7
      \ar@{|->}[u]^-d
      &&
      h_7
      \ar@{|->}[ll]
      \ar@{|->}[u]_-d
    }
  $$
forces $\omega_6 \mapsto c_1 \omega_2^{\wedge 3} + c_2 \omega_2 \wedge \omega_4$.
This completely determines the map of minimal models \eqref{AdjunctionUnitOnRationalATypeOrbispaceISH}.
\end{proof}

\begin{example}[DG-algebra model for homotopy quotient of the trivial action]
  \label{dgcAlgebraModelForHomotopyQuotientOfTrivialAction}
The homotopy quotient of the trivial $S^1$-action on $S^4$ is $S^4 \dslash S^1 \simeq S^4 \times BS^1$ (Ex.\ref{HomotopyQuotientOfTrivialGAction}).
  The minimal DG-algebra model of this product space is obtained by setting $c_1=c_2 =0$ in Lemma \ref{dgcAlgebraModelForATypeOrbispaceOf4Sphere}:   $$
    \mathcal{O}\left(
      S^4 \times B S^1
    \right)
    \;\simeq\;
    \mathbb{Q}[ \omega_2, \omega_4, h_7 ]\bigg/
    \left(
      \begin{aligned}
        d \omega_2 & = 0
        \\[-1mm]
        d \omega_4 & = 0
        \\[-1mm]
        d h_7
          & =
        -\tfrac{1}{2} \omega_4 \wedge \omega_4
      \end{aligned}
    \right)
    \;\in\;
    \mathrm{DGCAlg}
    \,.
  $$
\end{example}

\begin{lemma}[A-type action on 4-sphere is rationally trivial]
  \label{ATypeActionIsRationallyTrivial}
  The A-type circle action on the 4-sphere (Def. \ref{ATypeOrbispaceOf4Sphere}) is rationally trivial.
  That is, the action is represented in rational homotopy theory by the coprojection map of DG-algebras
  \begin{align*}
  \mathcal{O}(S^4) &\longrightarrow  \mathcal{O}(S^1)\otimes \mathcal{O}(S^4)\\
  \eta &\longmapsto 1\otimes \eta.
  \end{align*}
 In particular, the A-type orbispace $S^4 \dslash S^1$ is equivalent to the rationalization of the trivial action (Ex. \ref{dgcAlgebraModelForHomotopyQuotientOfTrivialAction}).
\end{lemma}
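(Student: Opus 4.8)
The plan is to pin down the constants $c_1, c_2$ of Lemma~\ref{dgcAlgebraModelForATypeOrbispaceOf4Sphere} by exploiting the rigidity of the rational $4$-sphere. Recall from Ex.~\ref{MinimalDgcAlgebraModelFor4Sphere} that the minimal model $\mathcal{O}(S^4) = \mathbb{Q}[\omega_4,\omega_7]\big/(d\omega_7 = -\tfrac12\omega_4\wedge\omega_4)$ is, degreewise, spanned by monomials in $\omega_4$ (degree $4$) and $\omega_7$ (degree $7$); in particular $\mathcal{O}(S^4)^3 = 0$ and $\mathcal{O}(S^4)^6 = 0$. This degree gap is what drives everything.

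First I would present the $S^1$-action on the rational space $S^4$ as a coaction $\rho\colon \mathcal{O}(S^4)\to \mathcal{O}(S^1)\otimes\mathcal{O}(S^4)$, where $\mathcal{O}(S^1)\simeq \Lambda(\theta_1) = \mathbb{Q}\cdot 1 \oplus \mathbb{Q}\cdot\theta_1$ with $\theta_1$ in degree~$1$. Since $\rho$ is a morphism of DG-algebras, $\rho(\omega_4)\in \mathcal{O}(S^4)^4 \oplus \theta_1\otimes \mathcal{O}(S^4)^3$ and $\rho(\omega_7)\in \mathcal{O}(S^4)^7\oplus\theta_1\otimes\mathcal{O}(S^4)^6$; the vanishing of $\mathcal{O}(S^4)$ in degrees $3$ and $6$ kills the $\theta_1$-components, so $\rho(\omega_4)\in\mathbb{Q}\omega_4$ and $\rho(\omega_7)\in\mathbb{Q}\omega_7$. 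The counit axiom $(\varepsilon\otimes\mathrm{id})\circ\rho = \mathrm{id}$ then forces $\rho(\omega_4) = 1\otimes\omega_4$, $\rho(\omega_7) = 1\otimes\omega_7$, i.e.\ $\rho$ is the coprojection. This establishes the first assertion.

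For the ``in particular'' clause I would identify the homotopy quotient with the Borel/Cartan construction attached to this coaction: for a coaction $\rho = \mathrm{id} + \theta_1\otimes\phi(-)$ with $\phi$ a degree $-1$ derivation, the relative Sullivan model $\mathbb{Q}[\omega_2] = \mathcal{O}(BS^1)\to\mathcal{O}(S^4\dslash S^1)$ of Lemma~\ref{dgcAlgebraModelForATypeOrbispaceOf4Sphere} is modelled by $\big(\mathbb{Q}[\omega_2]\otimes\mathcal{O}(S^4),\ d_{S^4} + \omega_2\cdot\phi\big)$. Since $\phi = 0$ here, this is the trivially extended model $\mathbb{Q}[\omega_2]\otimes\mathcal{O}(S^4) = \mathcal{O}(S^4\times BS^1)$ of Ex.~\ref{dgcAlgebraModelForHomotopyQuotientOfTrivialAction}; comparing with the general form in Lemma~\ref{dgcAlgebraModelForATypeOrbispaceOf4Sphere} yields $c_1 = c_2 = 0$.

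The step that needs care is the last one: a priori a rational $S^1$-action is homotopy-coherent data rather than a strict coaction, so one must justify that $\mathcal{O}(S^4\dslash S^1)$ is actually computed by the naive Cartan model above. Here the natural route is to use that the A-type action is the restriction along $S^1\hookrightarrow SU(2)$ of the honest smooth $SU(2)$-action on $S^4 = S(\mathbb{R}\oplus\mathbb{C}^2)$ of Def.~\ref{ATypeOrbispaceOf4Sphere}: the fundamental vector field of the resulting smooth circle action is a genuine degree $-1$ derivation $\iota_V$ of $\Omega^\bullet(S^4)$, the ordinary Cartan model $(\Omega^\bullet(S^4)[\omega_2],\, d + \omega_2\,\iota_V)$ computes $H^\bullet(S^4\dslash S^1;\mathbb{R})$, and transporting this along a quasi-isomorphism to the minimal model $\Lambda(\omega_4,\omega_7)$ one lands on a derivation $\phi$ that must vanish by the degree count above. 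I expect carrying out this transport cleanly — so that one really reduces to the (evident) vanishing of $\phi$ on $\Lambda(\omega_4,\omega_7)$, rather than to a direct analysis of the minimal-model differential — to be the main technical point of the proof.
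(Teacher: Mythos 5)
For the first assertion your argument is essentially the paper's: both pass to the minimal model $\mathbb{Q}[\omega_4,\omega_7]$, note that the model of $S^1\times S^4$ has nothing in degrees $3$ and $6$, so that the generators can only go to scalar multiples of themselves, and then fix the scalars. You fix them by counitality (read this as: the composite $S^4\simeq\{e\}\times S^4\to S^4$ is the identity up to homotopy, which suffices), while the paper uses that the $SU(2)$-action preserves the round volume form; both are fine, so this half matches the paper's proof.

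The gap is in your passage to the ``in particular'' clause, i.e.\ to $c_1=c_2=0$ in Lemma \ref{dgcAlgebraModelForATypeOrbispaceOf4Sphere}. Transporting the Cartan differential $d+\omega_2\,\iota_V$ along a quasi-isomorphism $\Lambda(\omega_4,\omega_7)\to\Omega^\bullet(S^4)$ does not produce a differential of the form $d_{S^4}+\omega_2\phi$ with $\phi$ a derivation: because the quasi-isomorphism does not intertwine $\iota_V$ strictly, homotopy transfer yields $d_{S^4}+\omega_2\phi_1+\omega_2^2\phi_2+\cdots$, and your degree count only forces the \emph{linear} term $\phi_1$ to vanish. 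The constants $c_1,c_2$ live precisely in the higher-order terms, which the degree count does not touch. The Hopf action on $S^3$ shows the inference pattern is invalid: there $\Lambda(h_3)$ has nothing in degree $2$, so the same argument gives $\phi_1=0$ (and indeed the action map is rationally homotopic to the projection), yet $S^3\dslash S^1\simeq S^2$, whose relative model over $BS^1$ is $(\mathbb{Q}[\omega_2]\otimes\Lambda(h_3),\,dh_3=\omega_2^2)$ --- the whole obstruction sits in the $\omega_2^2$-term, and $S^3\dslash S^1$ is certainly not rationally $S^3\times BS^1$.

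Moreover, for the A-type action the higher term is genuinely present, so no sharpening of the transport argument can close the gap: the Borel construction is the sphere bundle $S(\underline{\mathbb{R}}\oplus L\oplus L^{-1})\to BS^1$ with $e(L\oplus L^{-1})=-\omega_2^2\neq0$, and a Mayer--Vietoris computation around the two fixed points identifies $H^\bullet_{S^1}(S^4;\mathbb{Q})$ with the ring of pairs $(f,g)\in\mathbb{Q}[\omega_2]^{\oplus 2}$ with $f\equiv g\pmod{\omega_2^2}$; setting $x=(\omega_2^2,0)$ one finds $x^2=\omega_2^2\,x\neq0$, and in fact no nonzero degree-$4$ class squares to zero, whereas in $H^\bullet(S^4\times B S^1;\mathbb{Q})$ the degree-$4$ generator does. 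So rational triviality of the action \emph{map} (which is all your first step, and also all the paper's own proof, establishes) does not by itself yield the conclusion about the homotopy quotient in Ex.~\ref{dgcAlgebraModelForHomotopyQuotientOfTrivialAction}; the ``in particular'' clause is exactly where the real content and the difficulty lie, and your proposed final step would fail there.
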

\begin{proof}
It is sufficient to argue on minimal models:
the action $\mu\colon S^1\times S^4 \to S^4$ determines a dual map in the category of DG-algebras
$\mu^\ast\colon \mathcal{O}(S^4) \to \mathcal{O}(S^1)\otimes \mathcal{O}(S^4)$ via Prop. \ref{SullivanEquivalence}.
Since minimal models are cofibrant and fibrant in the model structure on DG-algebras (see \cite{Hess06} for a review),
the map $\mu^\ast$ is homotopic to a map between minimal DG-algebra models:
\[
\mathbb{Q}[\omega_4, \omega_7]
  \bigg/
  \left(
    \begin{aligned}
      d \omega_4 & = 0 \\[-1mm]
      dh_7 &= -\tfrac{1}{2} \omega_4 \wedge \omega_4
    \end{aligned}
  \right)
  \xrightarrow{\quad \nu \quad}
 \mathbb{Q}[\theta_1, \omega_4, \omega_7]
  \bigg/
  \left(
    \begin{aligned}
      d \theta_1 & = 0
      \\[-1mm]
      d \omega_4 & = 0 \\[-1mm]
      dh_7 &= -\tfrac{1}{2} \omega_4 \wedge \omega_4
    \end{aligned}
  \right)\!.
\]
The source of this map is the minimal DG-algebra model of the 4-sphere (Ex. \ref{MinimalDgcAlgebraModelFor4Sphere}),
and the extra degree-1 generator $\theta_1$ appearing in the target corresponds to $\pi_1 (S^1)= \mathbb{Z}$.
The map $\nu$ is completely determined by the images of the generators $\omega_4$ and $h_7$.

Now, ${\rm SU}(2)$ acts on $S^4$ via the inclusion ${\rm SU}(4)\hookrightarrow {\rm SO}(5)$ (compare with \eqref{SU2ActionOn4Sphere}).
In particular, the ${\rm SU}(2)$-action preserves the round volume form on $S^4$.
Restricting along $S^1 \hookrightarrow {\rm SU}(2)$ and observing that the generator $\omega_4$ represents the round
volume form in cohomology forces $\nu(\omega_4) = \omega_4$.
Up to non-zero scaling, the only way to define $\nu$ on the degree-7 generator $h_7$ that respects the differential
is $\nu(h_7) = h_7$.
%
%
\end{proof}

\begin{remark} Some remarks on the above results are in order:
\item {\bf (i)} In the above proof, we refer to the degree-7 generator in the minimal model of $S^4$ as $h_7$, in line with the notation used throughout this section and in Sec. \ref{TheMechanism}.
This generator was called $\omega_7$ in Ex. \ref{MinimalDgcAlgebraModelFor4Sphere}.

\item  {\bf (ii)} In interpreting expression \eqref{dgcModelForHomotopyQuotientOf4SphereByCircleAction}
it may be worthwhile to view passing to rational homotopy theory as 
a homotopical analogue of forming first derivatives: that the A-type action on the 4-sphere is rationally trivial 
is analogous to finding that the derivative of some non-trivial function on the real line vanishes at the origin.
That is to say, the A-type orbispace does \emph{not} itself split as a product, but it does in the rational approximation.
This turns out to be crucial for our gauge enhancement mechanism .
In the companion article \cite{ADE} we go further and work in \emph{equivariant} rational homotopy theory, which captures a great deal more information.
%
\end{remark}

In summary, we have the following:
\begin{prop}[Minimal DG-algebra model for the A-type orbispace]
  \label{MinimalDGCAlgebraModelForATypeOrbispace}
  The minimal DG-algebra model of the A-type orbispace of the
  4-sphere (Def. \ref{ATypeOrbispaceOf4Sphere}) is
  $$
    \mathcal{O}\left(
      S^4 \dslash S^1
    \right)
    \;\simeq\;
    \mathbb{Q}[ \omega_2, \omega_4, h_7 ]\bigg/
    \left(
      \begin{aligned}
        d \omega_2 & = 0
        \\[-1mm]
        d \omega_4 & = 0
        \\[-1mm]
        d h_7
          & =
        -\tfrac{1}{2} \omega_4 \wedge \omega_4
      \end{aligned}
    \right)
    \;\in\;
    \mathrm{DGCAlg}
    \,.
  $$
  Furthermore, the unit of the $\mathrm{Ext}\dashv\mathrm{Cyc}$-adjunction (Theorem \ref{GCycExtAdjunction})
  on the A-type orbispace, composed with the equivalence \eqref{CanonicalCocycle} from Prop. \ref{ExtensionOfHomotopyQuotientEquivalentToOriginalSpace},
  pulls back the generators of the DG-algebra model for $\mathrm{Cyc}(S^4)$ (Ex. \ref{MinimalDGCAlgebraModelForCyclicSpaceOfFourSphere}) as follows:
  \begin{equation}
    \label{AdjunctionUnitOnRationalATypeOrbispace}
    \raisebox{40pt}{
    \xymatrix@R=-2pt@C=4em{
      S^4 \dslash S^1
        \ar[r]^-{\eta_{S^4 \dslash  S^1}}
      &
      \mathrm{Cyc}\,\mathrm{Ext}(S^4 \dslash S^1)
      \ar[r]^-{ \mathrm{Cyc}(\kappa) }_-{\simeq_{\mathrm{whe}}}
      &
      \mathrm{Cyc}(S^4)
      \\
      \omega_2 && \omega_2 \vphantom{h_7}\ar@{|->}[ll]
      \\
      \omega_4 && \omega_4\vphantom{h_7} \ar@{|->}[ll]
      \\
      0
      && \omega_6 \vphantom{h_7}\ar@{|->}[ll]
      \\
      0 && h_3 \vphantom{h_7}\ar@{|->}[ll]
      \\
      h_7 && h_7 \ar@{|->}[ll]
    }
    }
  \end{equation}
\end{prop}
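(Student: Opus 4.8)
The plan is to derive the Proposition by specializing the two general results already in hand, Lemma \ref{dgcAlgebraModelForATypeOrbispaceOf4Sphere} and Lemma \ref{ExtCycAdjunctionForATypeOrbispace}, to the A-type action; the only genuine work is to pin down the two free constants $c_1,c_2$ appearing in \eqref{dgcModelForHomotopyQuotientOf4SphereByCircleAction}. First I would invoke Lemma \ref{dgcAlgebraModelForATypeOrbispaceOf4Sphere}: for \emph{any} circle action on $S^4$ the minimal DG-algebra of the homotopy quotient is $\mathbb{Q}[\omega_2,\omega_4,h_7]$ with $d\omega_2 = d\omega_4 = 0$ and $dh_7 = -\tfrac12\,\omega_4\wedge\omega_4 + c_1\,\omega_2^{\wedge 4} + c_2\,\omega_2^{\wedge 2}\wedge\omega_4$. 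So it remains to show that for the A-type action one may take $c_1 = c_2 = 0$.

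For that step I would appeal to Lemma \ref{ATypeActionIsRationallyTrivial}, which asserts that the A-type circle action on $S^4$ is rationally trivial, so that $S^4\dslash S^1$ is equivalent (rationally) to the homotopy quotient of the trivial action, namely $S^4\times BS^1$. By Example \ref{dgcAlgebraModelForHomotopyQuotientOfTrivialAction} the latter is modelled by the DG-algebra obtained by setting $c_1 = c_2 = 0$ in \eqref{dgcModelForHomotopyQuotientOf4SphereByCircleAction}, and by uniqueness of minimal models (Prop. \ref{SullivanEquivalence}\,(iv)) this may be taken as the minimal model of $S^4\dslash S^1$. This establishes the first displayed formula.

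With this preferred minimal model fixed --- so that the constants in \eqref{dgcModelForHomotopyQuotientOf4SphereByCircleAction} both vanish --- Lemma \ref{ExtCycAdjunctionForATypeOrbispace} applies directly and computes the composite of the $\mathrm{Ext}/\mathrm{Cyc}$-adjunction unit at $S^4\dslash S^1$ with $\mathrm{Cyc}(\kappa)$ on minimal DG-algebras. Substituting $c_1 = c_2 = 0$ into its table \eqref{AdjunctionUnitOnRationalATypeOrbispaceISH} gives $\omega_2\mapsto\omega_2$, $\omega_4\mapsto\omega_4$, $\omega_6\mapsto 0$, $h_3\mapsto 0$, $h_7\mapsto h_7$, which is precisely the claimed pullback \eqref{AdjunctionUnitOnRationalATypeOrbispace}.

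The hard part is not really in this argument: the whole content has been absorbed into Lemma \ref{ATypeActionIsRationallyTrivial}, whose proof rests on the fact that the $\mathrm{SU}(2)$-action preserves the round volume form on $S^4$, so its $S^1$-restriction fixes the cohomology class of $\omega_4$ and then, for degree reasons, fixes $h_7$ as well. Accordingly the main point to get right in the write-up is conceptual rather than computational: one should stress that the product splitting $S^4\dslash S^1\simeq_{\mathbb{Q}} S^4\times BS^1$ holds only \emph{rationally} and is not an honest homotopy splitting (the companion article \cite{ADE} recovers the lost information equivariantly), so that the reduction to the trivial-action model is legitimate precisely because we have already passed to rational homotopy theory.
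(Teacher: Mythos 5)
Your proposal is correct and follows essentially the same route as the paper's own proof: Lemma \ref{ATypeActionIsRationallyTrivial} identifies $S^4\dslash S^1$ rationally with the quotient of the trivial action (Ex.\ \ref{dgcAlgebraModelForHomotopyQuotientOfTrivialAction}), forcing $c_1=c_2=0$ in \eqref{dgcModelForHomotopyQuotientOf4SphereByCircleAction}, and the unit formula then follows by substituting these values into Lemma \ref{ExtCycAdjunctionForATypeOrbispace}. Your closing remark correctly locates the real content in the rational triviality lemma and in the caveat that the splitting is only rational, which matches the paper's discussion.
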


\begin{proof}
By Lemma \ref{ATypeActionIsRationallyTrivial}, the minimal DG-algebra model for A-type homotopy quotient $S^4 \dslash S^1$ 
coincides with that of the trivial action, given by Ex. \ref{dgcAlgebraModelForHomotopyQuotientOfTrivialAction},
hence is given by setting $c_1= c_2= 0$ in  \eqref{dgcModelForHomotopyQuotientOf4SphereByCircleAction}.
We conclude with Lemma \ref{ExtCycAdjunctionForATypeOrbispace}.
\end{proof}

This concludes our discussion of the rational homotopy type of the A-type orbispace of the 4-sphere.
Next, we discuss the fiberwise stabilization of $S^4 \dslash S^1 \to S^4/S^1= S^3$ in rational homotopy theory.

\subsection{Fiberwise stabilized $\mathrm{Ext}$/$\mathrm{Cyc}$-unit of the A-type orbispace }
  \label{RationalUnitOnAType}

In Prop. \ref{MinimalDGCAlgebraModelForATypeOrbispace} we described the rationalization of the unit
$\eta_{S^4 \dslash S^1}$ of the $\mathrm{Ext}$/$\mathrm{Cyc}$-adjunction (Theorem \ref{GCycExtAdjunction})
on the A-type orbispace of the 4-sphere (Def. \ref{ATypeOrbispaceOf4Sphere}).
In the rational approximation, we may regard this map as lying over
the classifying space $B^2 S^1 \simeq B^3 \mathbb{Z} \simeq_{\mathbb{Q}} B^3 \mathbb{Q} \simeq_{\mathbb{Q}} S^3$
and discuss its \emph{fiberwise stabilization} (Prop. \ref{AdjunctionStabilization}) in
 rational parametrized stable homotopy theory (Theorem \ref{RationalParameterizedSpectradgModel}).

\medskip
The main result of this section is Theorem \ref{TwistedKTheoryInsideFiberwiseStabilizationOfATypeOrbispaceOf4Sphere},
which states that the fiberwise stabilization of the A-type orbispace contains two summands of twisted connective K-theory,
and characterizes lifts through the fiberwise stabilization adjunction unit in terms of lifting from 6-truncated to untruncated
twisted K-theory (cf. Ex. \ref{CyclificationOf4SphereReceives6TruncationOfTwistedK}). We interpret these lifts
as being the rational image of gauge enhancement of M-branes in  Sec. \ref{TheMechanism} below.

\medskip
The next result appears in \cite{RS}. We spell out the proof in some detail, since we will need certain details in the proof of
our main Theorem \ref{TwistedKTheoryInsideFiberwiseStabilizationOfATypeOrbispaceOf4Sphere}.
In order to make certain features more apparent, we use a different naming convention for algebra generators
than is used in \cite{RS}:
$$
  \begin{tabular}{c||ccccccc}
    \cite{{RS}}:
    &
    $a$
    &
    1
   &
    $c_{2n}$
    &
    $c_{2n+1}$
    &
    $e$
    &
    $\gamma_{2n}$
    &
    $\gamma_{2n+1}$
   \\
    \hline
    here:
    &
    $h_3$
    &
    $\widetilde\omega_0\vphantom{\Big(}$
    &
    $\omega^L_{2n+2}$
    &
    $\omega^R_{2n+4}$
    &
    $\omega_2$
    &
    $\widetilde\omega_{2n}$
    &
    $\omega_{2n}$
  \end{tabular}
$$

\begin{prop}[Minimal DG-module for fiberwise stabilization of the A-type orbispace]
  \label{MinimalDGModels}
  We have the following table of minimal DG-module models (Def. \ref{MinimalDGModule}) describing fiberwise stabilizations of the spaces \eqref{QuotientOfS4ByS1OverS3} over the 3-sphere:
%
  \vspace{.3cm}
  \begin{center}
  \begin{tabular}{|c||c|c|}
  \hline
    {\bf Fibration}
      &
    {\begin{tabular}{c} Vector space underlying \\ minimal DG-model \end{tabular}}
      &
    {\begin{tabular}{c} Differential of \\ minimal DG-model \end{tabular}}
    \\
    \hline
    \hline
    $
      \raisebox{20pt}{
      \xymatrix{
        S^0 = \left( S^4\right)^{S^1}
        \ar[d]
        \\
        S^3
      }}
    $
    &
    $
    \mathbb{Q}[h_3]
      \otimes
    \left\langle
      \omega^{L}_{2 p}, \omega^R_{2 p }
      \,\vert\,
      p \in \mathbb{N}
    \right\rangle
    $
    &
    $
      d
      \;:\;
      \left\{
        \begin{aligned}
          \omega^L_0 & \mapsto 0 &&  \multirow{2}{*}{ \bigg\}\,\footnotesize  $(\mathrm{ku} \dslash BS^1)$ }
          \\
          \omega^L_{2p + 2} & \mapsto h_3 \otimes \omega^L_{2 p }
          \\
          \omega^R_0 & \mapsto 0 && \multirow{2}{*}{ \bigg\}\,\footnotesize $(\mathrm{ku} \dslash BS^1)$ }
          \\
          \omega^R_{2p + 2} & \mapsto h_3  \otimes \omega^R_{2p}
        \end{aligned}
      \right.
    $
      \\
    \hline
    $
      \raisebox{20pt}{
      \xymatrix{
        S^4
        \ar[d]
        \\
        S^3
      }}
    $
    &
    $
    \mathbb{Q}[h_3]
      \otimes
    \left\langle
      \widetilde\omega_{2 p}, \omega_{2 p+ 4}
      \,\vert\,
      p \in \mathbb{N}
    \right\rangle
    $
    &
    $
      d
      \;:\;
      \left\{
        \begin{aligned}
          \widetilde\omega_0 & \mapsto 0 && \multirow{2}{*}{ \bigg\}\,\footnotesize $(\mathrm{ku} \dslash BS^1)$ }
          \\
          \widetilde\omega_{2p + 2} & \mapsto h_3 \otimes \widetilde\omega_{2 p }
          \\
          \omega_4 & \mapsto 0 && \multirow{2}{*}{  \bigg\}\,\footnotesize $(\Sigma^4 \mathrm{ku} \dslash  BS^1)$ }
          \\
          \omega_{2p + 6} & \mapsto h_3  \otimes \omega_{2p+4}
        \end{aligned}
      \right.
    $
      \\
    \hline
    $
      \raisebox{20pt}{
      \xymatrix{
        S^4\dslash S^1
        \ar[d]
        \\
        S^3
      }}
    $
    &
    $
    \mathbb{Q}[h_3, \omega_2]
      \otimes
    \left\langle
      \widetilde\omega_{2 p}, \omega_{2 p + 4}
      \,\vert\,
      p \in \mathbb{N}
    \right\rangle
    $
    &
    $
      d
      \;:\;
      \left\{
        \begin{aligned}
          \widetilde\omega_0 & \mapsto 0 && \multirow{2}{*}{\bigg\}\,\footnotesize $(\mathrm{ku} \dslash BS^1)$ }
          \\
          \widetilde\omega_{2p + 2} & \mapsto h_3 \otimes \widetilde\omega_{2 p }
          \\
          \omega_2 & \mapsto 0 && \multirow{3}{*}{\Bigg\}\, \footnotesize  $(\Sigma^2 \mathrm{ku} \dslash  BS^1)$ }
          \\
          \omega_4 & \mapsto h_3 \wedge \omega_2 \otimes \widetilde \omega_0
          \\
          \omega_{2p + 6} & \mapsto h_3  \otimes \omega_{2p + 4}
        \end{aligned}
      \right.
    $
    \\ \hline
  \end{tabular}
  \end{center}
\end{prop}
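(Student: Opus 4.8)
The plan is to compute the three fiberwise suspension spectra directly, using Proposition \ref{FiberwiseSuspensionSpectrumdgModel}, which identifies $\mathcal{M}_{\mathbb{Q}[h_3]}\big(\Sigma^\infty_{+,S^3} Y\big)$ with $\mathcal{O}(Y)$ regarded as a $\mathbb{Q}[h_3]$-module via the structure map $Y \to S^3$; then passing to a minimal model in the sense of Definition \ref{MinimalDGModule} and Remark \ref{MinimalModelsExist}. By Lemma \ref{MinimalDgModulesForFiberSpectra}, a minimal $\mathbb{Q}[h_3]$-module $\mathbb{Q}[h_3]\otimes V$ recovers the fiber spectrum over a point as the cochain complex $V$ with vanishing differential; since $\mathbb{Q}[h_3] \simeq \mathcal{O}(K(\mathbb{Z},3))$, the tags \lq\lq$(\mathrm{ku}\dslash BS^1)$'' etc. in the table are then read off by comparison with the minimal DG-module for twisted connective K-theory in Lemma \ref{TwistedKModel}, Equation \eqref{MinimaldgModelForTwistedK}, up to the appropriate suspension shift (Remark \ref{SuspensionandShiftingParam}).

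First I would treat the \emph{fixed-point fibration} $S^0 = (S^4)^{S^1} \hookrightarrow S^3$. The map $S^0 \to S^3$ factors through a point (it is nullhomotopic), so $\mathcal{O}(S^0) \simeq \mathbb{Q}\oplus\mathbb{Q}$ as a $\mathbb{Q}[h_3]$-module via the augmentation $\mathbb{Q}[h_3]\to\mathbb{Q}$. Its minimal model as a $\mathbb{Q}[h_3]$-module is the free resolution of $\mathbb{Q}$ over $\mathbb{Q}[h_3]$, namely $\mathbb{Q}[h_3]\otimes\langle \omega^L_{2p} \mid p\in\mathbb{N}\rangle$ with $d\omega^L_{2p+2} = h_3\otimes\omega^L_{2p}$ (this is the standard Koszul-type resolution since $\mathbb{Q}[h_3]$ on a degree-$3$ generator is, up to reindexing, a polynomial ring on one even generator of total degree matching $\beta_2$), one copy for each of the two points — yielding exactly the first row, both copies being $(\mathrm{ku}\dslash BS^1)$. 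Next, for $S^4 \to S^3$: here $\mathcal{O}(S^4) \simeq \mathbb{Q}[\omega_4,h_7]/(dh_7 = -\tfrac12\omega_4\wedge\omega_4)$ from Example \ref{MinimalDgcAlgebraModelFor4Sphere}, and I must equip it with its $\mathbb{Q}[h_3]$-module structure coming from the fibration $S^4 \to S^4/S^1 = S^3$. The cohomology $H^\bullet(S^4;\mathbb{Q})$ is $\mathbb{Q}$ in degrees $0$ and $4$, which as a $\mathbb{Q}[h_3]$-module splits as $\mathbb{Q}\oplus\mathbb{Q}[4]$; resolving each summand freely over $\mathbb{Q}[h_3]$ gives the two strands $\widetilde\omega_{2p}$ (resolving $\mathbb{Q}$, tagged $(\mathrm{ku}\dslash BS^1)$) and $\omega_{2p+4}$ (resolving $\mathbb{Q}[4]$, tagged $(\Sigma^4\mathrm{ku}\dslash BS^1)$), with the differential as in row two. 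The subtlety here is that $\mathcal{O}(S^4)$ is not formal as a $\mathbb{Q}[h_3]$-module a priori, but since the generators occur in the right degrees the only possible nonzero degreewise differential is the claimed one, exactly as in the proof of Lemma \ref{TwistedKModel}.

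The main case, and the main obstacle, is the \emph{third row} for $S^4\dslash S^1 \to S^3$. Here I would start from the minimal DG-algebra model $\mathcal{O}(S^4\dslash S^1) \simeq \mathbb{Q}[\omega_2,\omega_4,h_7]/(dh_7 = -\tfrac12\omega_4\wedge\omega_4)$ from Proposition \ref{MinimalDGCAlgebraModelForATypeOrbispace} (using crucially Lemma \ref{ATypeActionIsRationallyTrivial} that the A-type action is rationally trivial), regard it as a $\mathbb{Q}[h_3]$-module via the map to $S^3$, and pass to its minimal $\mathbb{Q}[h_3]$-module model. Equivalently, by Proposition \ref{FiberwiseSuspensionSpectrumdgModel} and Lemma \ref{MinimalDgModulesForFiberSpectra}, the underlying graded $\mathbb{Q}[h_3]$-module is the free module on the cohomology of the homotopy cofiber of $\mathbb{Q}[h_3] \to \mathcal{O}(S^4\dslash S^1)$; this cofiber cohomology is $\mathbb{Q}[\omega_2,\omega_4,h_7]/(\ldots)$ modulo $h_3$, which I expect to compute to $\mathbb{Q}[\omega_2]\otimes\big(\langle\widetilde\omega_0\rangle \oplus \langle\omega_4,\omega_6,\dots\rangle\text{-type strands}\big)$ — matching the stated underlying vector space $\mathbb{Q}[h_3,\omega_2]\otimes\langle\widetilde\omega_{2p},\omega_{2p+4}\rangle$. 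The delicate point is pinning down the differential $d\omega_4 = h_3\wedge\omega_2\otimes\widetilde\omega_0$: this \emph{twist} linking the $\omega_2$-strand (which by itself would be $\Sigma^2(\mathrm{ku}\dslash BS^1)$) to the $\widetilde\omega_0$ generator of the base $(\mathrm{ku}\dslash BS^1)$-strand is precisely the feature that makes the A-type orbispace's fiberwise stabilization \emph{not} a product, and it is forced by the requirement (Proposition \ref{FiberwiseSuspensionSpectrumdgModel}) that fiberwise stabilization preserves cohomology: any choice of differential respecting the $\mathbb{Q}[h_3]$-module structure and reproducing $H^\bullet(S^4\dslash S^1;\mathbb{Q})$ from Lemma \ref{HomotopyGroupsOfS4OverS1}(ii) must contain this term, and a degree count shows it is the unique such option up to isomorphism. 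I would verify this by a direct cohomology computation in low degrees, then invoke minimality and the degree condition on cells to conclude it holds in all degrees, closing out the argument; since the statement is quoted from \cite{RS}, I would also indicate that the identification of the summands (the \lq\lq$(\mathrm{ku}\dslash BS^1)$'' tags) is exactly the translation of their dg-algebraic observation into the present parametrized-spectra language.
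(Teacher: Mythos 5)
Your overall strategy coincides with the paper's: identify $\mathcal{M}_{\mathbb{Q}[h_3]}\big(\Sigma^\infty_{+,S^3}Y\big)$ with $\mathcal{O}(Y)$ as a $\mathbb{Q}[h_3]$-module via Prop.~\ref{FiberwiseSuspensionSpectrumdgModel}, pass to minimal DG-modules, and read off the tags by comparison with Lemma~\ref{TwistedKModel}; your first two rows are fine (modulo the slip that $\mathbb{Q}[h_3]$ is an exterior, not a polynomial, algebra -- the Koszul-type resolution you describe is still correct). The gap is in the third row, at exactly the two points you flag as delicate. First, the underlying vector space: by Lemma~\ref{ATypeActionIsRationallyTrivial} the minimal model of $S^4\dslash S^1$ has no nonzero closed degree-$3$ elements, so the structure map to $S^3$ sends $h_3\mapsto 0$ (this is visible in Lemma~\ref{ComparisonMapBetweenModelsForFiberwiseStabilizationOfATypeOrbispace}); hence ``reducing $\mathbb{Q}[\omega_2,\omega_4,h_7]/(dh_7=-\tfrac12\omega_4\wedge\omega_4)$ modulo $h_3$'' is vacuous and returns only $H^\bullet(S^4\dslash S^1;\mathbb{Q})$, which is $2$-dimensional in each even degree $\geq 4$, whereas the stated generating space is already $4$-dimensional in degree $4$ (namely $\omega_2^{\wedge 2}\otimes\widetilde\omega_0,\ \omega_2\otimes\widetilde\omega_2,\ \widetilde\omega_4,\ \omega_4$) and grows linearly. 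The correct prescription is the derived base change $\mathcal{O}(S^4\dslash S^1)\otimes^{\mathbb{L}}_{\mathbb{Q}[h_3]}\mathbb{Q}$, equivalently (Lemma~\ref{MinimalDgModulesForFiberSpectra}) the rational cohomology of the homotopy fiber, which is $S^4\times BS^1\times BS^1$ rationally because the map to $S^3$ is rationally null; the extra $\Omega S^3\simeq_{\mathbb{Q}}BS^1$ factor is what produces the correct generator count, and the paper obtains it instead by the iterated cell-attachment induction that kills the spurious classes $h_3\otimes(-)$ forced by freeness over $\mathbb{Q}[h_3]$.

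Second, the differential. Your claim that any differential reproducing $H^\bullet(S^4\dslash S^1;\mathbb{Q})$ ``must contain'' the term $h_3\wedge\omega_2\otimes\widetilde\omega_0$ in $d\omega_4$, with uniqueness by ``a degree count'', is not correct as stated: the change of generators $\omega_{2p+4}\mapsto \omega_{2p+4}-\omega_2\otimes\widetilde\omega_{2p+2}$ (and its $\omega_2$-multiples) removes exactly this term and splits the module into shifted copies of the $(\mathrm{ku}\dslash BS^1)$-towers -- indeed this removability is precisely why $\Sigma^2\,\mathrm{ku}\dslash BS^1$ appears as a \emph{direct summand} in Theorem~\ref{TwistedKTheoryInsideFiberwiseStabilizationOfATypeOrbispaceOf4Sphere}, so the twist is an artifact of the chosen presentation, not ``the feature that makes the fiberwise stabilization not a product''. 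The paper does not deduce the differential from the cohomology by a degree count; it constructs the model by attaching, over each of $\widetilde\omega_0$ and $\omega_2\otimes\widetilde\omega_0$, the tower of generators needed to kill $h_3\otimes(-)$, and then verifies that the result has the correct cohomology -- the nontrivial check being that the cocycle $\widehat\omega_4=1\otimes\omega_4-\omega_2\otimes\widetilde\omega_2$ together with its $\omega_2$-powers supplies the second class in each even degree while no odd classes survive -- relying on the (asserted) fact that freeness over $\mathbb{Q}[h_3]$ plus the cohomology constraint pins the minimal model down up to isomorphism. Without either this explicit construction-and-verification or a genuine argument for that uniqueness, your treatment of the main case does not go through.
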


\vspace{1mm}
\noindent Beware of the special placement of the element $\omega_2$ in the last line, as a generator of
(the graded vector space underlying) a whole graded-commutative algebra.
On the far right of the table we are highlighting that there are two sequences of differentials in each case, differing
only in the degrees in which they start, and that each are of the same form as those for the minimal DG-model of the labelled shifted
twisted K-theory spectrum (Lemma \ref{TwistedKModel}). Observe, moreover, that the second sequence in the last line really starts
with the element $\omega_2 \otimes \widetilde \omega_0$.

\begin{proof}
To determine the minimal DG-models, in each case we use that:
 \begin{itemize}
   \item by the particular form of the minimal DG-algebra model of $S^3$ (Ex. \ref{MinimalDgcAlgebraModelFor3Sphere}),
   the minimal DG-modules in question
   are necessarily \emph{free} modules over $\mathbb{Q}[h_3]$; and
   \item
   according to \eqref{FormulaFiberwiseSuspensionSpectra},
 the cochain cohomology of the minimal DG-module must coincide with the rational cohomology of the total space of the corresponding fibration.
 \end{itemize}
 In all present cases of interest, these two constraints have a unique solution.

 We begin by adjoining additional closed generators $\omega_k$ to capture the cohomology of the total space.
 But by the free module structure, this also makes the element $h_3 \otimes \omega_k$ appear, which must be killed off in cohomology.
 To do this, we introduce new generators with prescribed differential, which produce new spurious elements that need to be killed off, and so on.
Explicitly, we have:
 \item {\bf (i)} We start with the case $S^0 \to S^3$.  Write the 0-sphere as the
 disjoint union of a \lq\lq left'' and a \lq\lq right'' point
  $$
    S^0 = \ast^L \coprod \ast^R
    \,.
  $$
  In cohomology, the map $S^0 \xrightarrow{\;\;\pi\;\;} S^3$ is
  $$
    \xymatrix@C=3pt{
      H^\bullet(S^0, \mathbb{Q})
      = \left\langle [\omega^L_0], [\omega^R_0] \right\rangle
      \ar@{<-}[d]^{\pi^\ast}
      &&&
      [\omega_0^L] + [\omega_1^R]
      &&
      0
      \\
      H^\bullet(S^3, \mathbb{Q})
       = \left\langle [1], [h_3] \right\rangle
      &&&
      [1]
      \ar@{|->}[u]
      &&
      [h_3].
      \ar@{|->}[u]
    }
  $$
  Thus, if the DG-model for the fiberwise stabilization of $S^0$ over $S^3$ is to be of the form
  $$
    \xymatrix@C=3pt@R=1.6em{
      \left\langle 1, h_3\right\rangle
      \otimes
      \left\langle
        \omega_0^L, \omega_0^R, \cdots
      \right\rangle
      &&&
      1 \otimes ( \omega^L_0 + \omega^R_0 )
      &&
      h_3 \otimes ( \omega^L_0 + \omega^R_0 )
      \\
      \underset{
        =\mathbb{Q}[h_3]
      }{
      \underbrace{
        \left\langle 1, h_3\right\rangle
      }}
      \ar[u]_{\pi^\ast}
      &&&
      1
      \ar@{|->}[u]
      &&
      h_3
      \ar@{|->}[u]
    }
  $$
  with
  $$
    d 1 = 0,
 \qquad
     d h_3 = 0 ,
     \qquad  d \omega_0^{L/R} = 0
    \,,
  $$
  then arguing by induction proves the claim;
 firstly, there must be additional generators
  $\omega_2^{L/R}$ in order to remove the elements $h_3 \otimes \omega_0^{L/R}$ from cohomology:
  $$
    d \omega_2^{L/R} = h_3 \otimes \omega_0^{L/R}
    \,.
  $$
  But this means that the $h_3 \otimes \omega_2^{L/R}$ are closed, since
  $$
    \begin{aligned}
      d\big(
        h_3 \otimes \omega_2^{L/R}
      \big)
      & =
      \underset{
        = 0
      }{
      \underbrace{
        (d h_3)
      }} \otimes \omega^{L/R}_2
      -
      h_3 \otimes d \omega^{L/R}_2
      \\
      & =
      -
      \underset{
        = 0
      }{
      \underbrace{
        h_3 \wedge h_3
      }}
      \otimes \omega^{L/R}_0\;.
    \end{aligned}
  $$
  In order to remove these elements from cohomology, we need to introduce new elements $\omega_4^{L/R}$
  with differential
  $$
    d \omega^{L/R}_4 = h_3 \otimes \omega^{L/R}_2
    \,,
  $$
  and so on.

 \item {\bf (ii)} We now consider the case $S^4 \xrightarrow{\;\;\pi\;\;} S^3$, which in cohomology is the assignment
  $$
    \xymatrix@C=3pt{
      H^\bullet(S^4, \mathbb{Q})
      = \left\langle [\widetilde\omega_0], [\omega_4] \right\rangle
      \ar@{<-}[d]^{\pi^\ast}
      &&&
      [\widetilde\omega_0]
      &&
      0
      \\
      H^\bullet(S^3, \mathbb{Q})
      = \left\langle [1], [h_3] \right\rangle
      &&&
      [1]
      \ar@{|->}[u]
      &&
      [h_3].
      \ar@{|->}[u]
    }
  $$
  Hence, if the minimal DG-model is to be of the form
  $$
    \xymatrix@C=3pt@R=1.5em{
      \left\langle 1, h_3\right\rangle
      \otimes
      \left\langle
        \widetilde\omega_0, \omega_4, \cdots
      \right\rangle
      &&&
      1 \otimes \widetilde\omega_0
      &&
      h_3 \otimes \widetilde\omega_0
      \\
        \left\langle 1, h_3\right\rangle
      \ar[u]_{\pi^\ast}
      &&&
      1
      \ar@{|->}[u]
      &&
      h_3
      \ar@{|->}[u]
    }
  $$
  with
  $$
    d 1 = 0,
    \qquad  d \widetilde\omega_0 = 0,
    \qquad
    d \omega_4 = 0\;,
  $$
 then there need to be elements
  $\widetilde\omega_2$ and $\omega_6$ that remove $h_3 \otimes \widetilde\omega_0$ and $h_3 \otimes \omega_4$ from cohomology via
  $$
    d \widetilde\omega_2 = h_3 \otimes \widetilde\omega_0
    \,,
    \qquad\quad
    d \omega_6 = h_3 \otimes \omega_4
    \,.
  $$
  But this implies that
  $$
    d( h_3 \otimes \widetilde\omega_2 ) = 0
    \,,
    \qquad\quad
    d( h_3 \otimes \omega_6 ) = 0,
  $$
  so that to we must introduce additional generators $\widetilde\omega_4$ and $\omega_8$ to make these elements exact.
  But then $h_3\otimes \widetilde\omega_4$ and $h_3\otimes \omega_8$ are closed, so that we must add $\widetilde\omega_6$ and $\omega_{10}$ to remove them from cohomology.
  The result follows by induction.

\item {\bf (iii)} Now consider the main case of interest, the A-type orbispace $S^4 \dslash S^1$.
The cohomology of the total space was determined in Lemma \ref{HomotopyGroupsOfS4OverS1}, so that in cohomology the map $S^4\dslash S^1 \to S^3$ is of the form
  $$
    \xymatrix@C=3pt{
      H^\bullet(S^4 \dslash S^1 , \mathbb{Q})
      =
      \langle [\widetilde \omega_0], [\omega_2], [\omega_4], [\omega_2 \wedge \omega_2], \cdots \rangle
      \ar@{<-}[d]^{\pi^\ast}
      &&&
      [\widetilde\omega_0]
      &&
      0
      \\
      H^\bullet(S^3, \mathbb{Q})
      = \left\langle [1], [h_3] \right\rangle
      &&&
      [1]
      \ar@{|->}[u]
      &&
      [h_3]
      \ar@{|->}[u]
    }
  $$
  First consider $\widetilde \omega_0$ and of $\omega_2(\equiv \omega_2\otimes \widetilde\omega_0)$: analogously to {\bf (ii)} above, these induce sequences of additional generators $\widetilde \omega_{2p}$ and $\omega_{2p+4}$
  with differentials as claimed.
  But adding in these generators already implies the existence of a further cohomology class in degree four, exhibited by the cocycle
  \begin{equation}
    \label{SecretDegree4Class}
    \widehat \omega_4 := 1 \otimes \omega_4 - \omega_2 \otimes \widetilde \omega_2
  \end{equation}
By degree reasons, the only potential primitives of $\widehat{\omega}_4$ are non-zero multiples of $h_3$, but this is closed.
Thus we have already found the correct cohomology in dimensions $\leq 5$.
 Extending in powers of $\omega_2$, we find that the cocycles
 \[
 \omega_2^{\wedge(k+2)}\otimes \widetilde\omega_0\;\;\mbox{ and }\;\;\omega_2^{\wedge k}\otimes \widehat{\omega}_4
 \]
 recover the correct $2k+4$-dimensional cohomology of $S^4\dslash S^1$, and no new non-trivial cocycles arise in odd degrees.
 This completes the proof.
\end{proof}

At this point, let us pause to provide some  intuition for what is going on in Prop. \ref{MinimalDGModels}:

\begin{remark}[Interpretation of fiberwise stabilization of A-type orbispace]
  \label{InterpretationOfFiberwiseSuspensionOfATypeOrbispace}
  Due to the rational homotopy equivalence
  $
    S^3 \simeq_{\mathrm{whe},\mathbb{Q}} B^2 S^1
  $,
  the homotopy fiber of any point inclusion $\ast \to S^3$ is, rationally, the classifying space $B S^1$ \eqref{ClassifyingSpace}:
  $$
    \mathrm{hofib}\big(
      \xymatrix{
        \ast
        \ar[r]
        &
        S^3
      }     \big)
    \;\simeq_{\mathrm{whe},\mathbb{Q}}\;
    B S^1
    \,.
  $$
  Accordingly, the homotopy fiber of $S^0 \to S^3$ is, rationally, the disjoint union of
  two copies of $B S^1$:
  $$
    \mathrm{hofib}\Big( \!
      \xymatrix{
        \big(S^4\big)^{S^1} = \ast\displaystyle\coprod\ast
        \ar[r]
        &
        S^3
      }   \!  \Big)
    \;\simeq_{\mathrm{whe},\mathbb{Q}}\;
    B S^1 \coprod B S^1
    \,.
  $$
Forming fiberwise suspension spectra (Prop. \ref{AdjunctionStabilization}) really means that we stabilize these homotopy fibers, which means that the fiberwise suspension spectrum of $S^0 \to S^3$
  is a parametrized spectrum whose fiber over any point is
  $$
    \Sigma^\infty_+ B S^1 \oplus \Sigma^\infty_+ B S^1
    \;\simeq_{\mathrm{swhe}, \mathbb{Q}}\;
    \mathrm{ku} \oplus \mathrm{ku}
  $$
(cf. Ex. \ref{RationalSnaithTheorem}).
  Hence, the fiberwise suspension spectrum $\Sigma^\infty_{+, S^3} S^0$ is rationally equivalent to the direct sum
  of two copies of twisted connective K-theory,
  which by comparison in Lemma \ref{TwistedKModel} is precisely what item
  {\bf  (i)} in Prop. \ref{MinimalDGModels} asserts.

  The second item in Prop. \ref{MinimalDGModels} can also be heuristically understood in similar terms: as the cartoon picture
  \begin{center}
   \includegraphics[width=.4\textwidth]{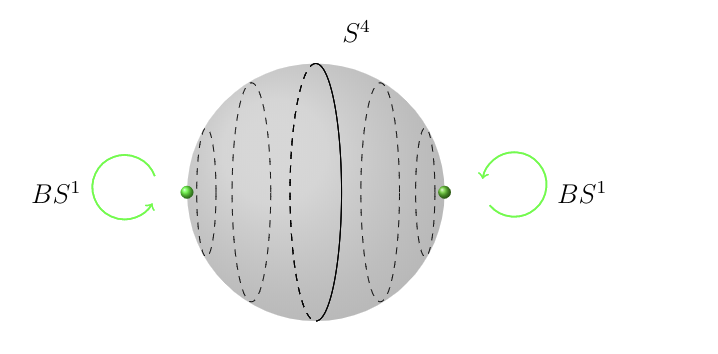}
  \end{center}
  for the
  suspended Hopf action from Def. \ref{ATypeOrbispaceOf4Sphere} indicates, the homotopy fiber now is
  some mixture of the two copies of $B S^1$ attached to the fixed points, and a
  copy of $S^1$ attached to all the other points. Prop. \ref{dgModelForS0Inclusion} below shows that, accordingly,
  there is one unshifted copy of $\Sigma^\infty_{+} B S^1$ in the fiber spectrum of $\Sigma^\infty_{+,S^3} S^4$
  that pulls back diagonally into the direct sum of the suspension spectra associated with the two fixed points.
\end{remark}

We record the following useful consequences of Prop. \ref{MinimalDGModels}:

\begin{lemma}[Comparison map between DG-module models of A-type orbispace]
  \label{ComparisonMapBetweenModelsForFiberwiseStabilizationOfATypeOrbispace}
  A quasi-isomorphism of DG-modules over $\mathbb{Q}[h_3]$ from the minimal
  DG-model of $\Sigma^\infty_{+,S^3}\left( S^4 \dslash S^1 \right)$ (Prop. \ref{MinimalDGModels})
  to the (non-minimal) DG-model underlying the DG-algebra model of Prop. \ref{MinimalDGCAlgebraModelForATypeOrbispace} 
  is determined on generators as follows:
  \begin{gather}
    \label{ComparisonMapBetweenModels}
    \mathclap{\raisebox{70pt}{\xymatrix@R=-2pt{
      \mathbb{Q}[\omega_2,\omega_4, h_7]\Bigg/
      \left(
        {\begin{aligned}
          d \omega_2 & = 0
          \\
          d \omega_4 & = 0
          \\
          d h_7 & = - \tfrac{1}{2} \omega_4 \wedge \omega_4
        \end{aligned}}
      \right)
      &
      \mathbb{Q}[h_3, \omega_2] \otimes \langle \widetilde \omega_{2p}, \omega_{2p + 4} \rangle\Bigg/
      \left(
      {\begin{aligned}
        d \widetilde \omega_0 & = 0
        \\
        d \widetilde \omega_{2p + 2} & = h_3 \otimes \widetilde \omega_{2 p}
        \\
        d \omega_2 & = 0
        \\
        d \omega_4 & = h_3 \wedge \omega_2 \otimes \widetilde \omega_0
        \\
        d \omega_{2 p+ 6} & = h_3 \otimes \omega_{2p + 4}
      \end{aligned}}
      \right)
      \ar[l]_-{ \simeq_{\mathrm{qi}} }
      \\
      \\
      0
      \ar@{<-|}[r]
      &
      \mathrlap{h_3\vphantom{\widetilde \omega_{2p+2}}}
      \\
      1
      \ar@{<-|}[r]
      &
      \mathrlap{\widetilde \omega_0\vphantom{\widetilde \omega_{2p+2}}}
      \\
      0
      \ar@{<-|}[r]
      &
      \mathrlap{\widetilde \omega_{2p+2}}
      \\
      \omega_2
      \ar@{<-|}[r]
      &
      \mathrlap{\omega_2\vphantom{\widetilde \omega_{2p+2}}}
      \\
      \omega_4
      \ar@{<-|}[r]
      &
      \mathrlap{\omega_4\vphantom{\widetilde \omega_{2p+2}}}
      \\
      0
      \ar@{<-|}[r]
      &
      \mathrlap{\omega_{2p+6}\vphantom{\widetilde \omega_{2p+2}}}
    }}}
  \end{gather}
  where both wedge products as well as tensor products (including powers of $\omega_2$) on the right are sent to wedge products on the left.
\end{lemma}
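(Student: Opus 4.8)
The plan is to write down the assignment in \eqref{ComparisonMapBetweenModels} explicitly and to verify, in order, that it is (a)~a well-defined morphism of graded $\mathbb{Q}[h_3]$-modules, (b)~a chain map, and (c)~a quasi-isomorphism. The first thing to observe is that $h_3$ acts as zero on the target DG-algebra $\mathcal{O}(S^4\dslash S^1)\simeq\mathbb{Q}[\omega_2,\omega_4,h_7]/(\cdots)$: its $\mathbb{Q}[h_3]$-module structure is induced by $\pi^\ast\colon\mathcal{O}(S^3)\to\mathcal{O}(S^4\dslash S^1)$ along the projection $S^4\dslash S^1\to S^4/S^1=S^3$ of \eqref{MapFromATypeOrbispaceTo3Sphere}, and for degree reasons $\pi^\ast(h_3)=0$ since $\mathbb{Q}[\omega_2,\omega_4,h_7]$ has no element in degree $3$. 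Since the source DG-module of Prop.~\ref{MinimalDGModels} is \emph{free} as a $\mathbb{Q}[h_3,\omega_2]$-module on the generators $\{\widetilde\omega_{2p},\omega_{2p+4}\}_{p\in\mathbb{N}}$, prescribing the images of these generators as in \eqref{ComparisonMapBetweenModels} and extending $\mathbb{Q}[h_3,\omega_2]$-linearly produces a unique graded module map $f$; because $h_3$ acts trivially on the target, $f$ is automatically a morphism of $\mathbb{Q}[h_3]$-modules.

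Next I would check $f\circ d=d\circ f$, which by $\mathbb{Q}[h_3,\omega_2]$-linearity and the vanishing of the differential on $\mathbb{Q}[h_3,\omega_2]$ need only be verified on the module generators. On $\widetilde\omega_0$ and on $\omega_2$ both sides vanish. On $\widetilde\omega_{2p+2}$ and $\omega_{2p+6}$ the source differential is $h_3\otimes(\text{generator})$, which $f$ kills, and $f$ of the generator itself is $0$. On $\omega_4$ one has $f(d\omega_4)=f(h_3\wedge\omega_2\otimes\widetilde\omega_0)=0$, matching $d\,\omega_4=0$ in the target. Hence $f$ is a chain map of $\mathbb{Q}[h_3]$-modules.

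Finally I would verify that $f$ is a quasi-isomorphism by computing the induced map on cohomology directly. Both DG-modules compute $H^\bullet(S^4\dslash S^1;\mathbb{Q})$: the source because it is, by construction, a minimal DG-module model for $\mathcal{M}_{\mathbb{Q}[h_3]}\!\big(\Sigma^\infty_{+,S^3}(S^4\dslash S^1)\big)$, which by Prop.~\ref{FiberwiseSuspensionSpectrumdgModel} is $\mathcal{O}(S^4\dslash S^1)$ viewed as a $\mathbb{Q}[h_3]$-module, and the target by Prop.~\ref{SullivanEquivalence} {\bf (iii)} together with Prop.~\ref{MinimalDGCAlgebraModelForATypeOrbispace}. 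Using the explicit cocycle representatives identified in the proof of Prop.~\ref{MinimalDGModels} {\bf (iii)} --- namely $\widetilde\omega_0$, $\omega_2$, and for each $k\in\mathbb{N}$ the degree-$(2k+4)$ classes $\omega_2^{\wedge(k+2)}\otimes\widetilde\omega_0$ and $\omega_2^{\wedge k}\otimes\widehat\omega_4$ with $\widehat\omega_4:=1\otimes\omega_4-\omega_2\otimes\widetilde\omega_2$ --- one reads off that $f$ sends these respectively to $1$, $\omega_2$, $\omega_2^{\wedge(k+2)}$ and $\omega_2^{\wedge k}\wedge\omega_4$. Since $\omega_4\wedge\omega_4=-2\,d h_7$ is exact in the target, these elements form a basis of $H^\bullet(\mathcal{O}(S^4\dslash S^1))$ in every degree (matching the dimensions of Lemma~\ref{HomotopyGroupsOfS4OverS1}), so $f$ induces an isomorphism on cohomology and is therefore a quasi-isomorphism.

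I expect no genuine obstacle here --- this is essentially a bookkeeping lemma --- but the two recognitions that do the work are: that $h_3$ acts trivially on the target, which collapses the chain-map verification to triviality; and that the ``secret'' degree-$4$ cocycle $\widehat\omega_4$ of the minimal source model maps exactly onto the generator $\omega_4$ of the target, consistently with the relation $\omega_4\wedge\omega_4\sim 0$ forced by $d h_7=-\tfrac12\omega_4\wedge\omega_4$. Everything else is a direct degree-by-degree comparison of the listed differentials.
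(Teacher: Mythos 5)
Your proof is correct and follows essentially the same route as the paper: the paper's own proof consists of exactly the observation you single out as the crux, namely that the only non-evident point is surjectivity/injectivity on cohomology in degree $4$, which holds because the class \eqref{SecretDegree4Class}, $\widehat\omega_4 = 1\otimes\omega_4 - \omega_2\otimes\widetilde\omega_2$, is sent to $\omega_4$. Your additional bookkeeping (triviality of the $h_3$-action on the target, the chain-map check on module generators, and the degree-by-degree comparison with the cohomology basis from Lemma \ref{HomotopyGroupsOfS4OverS1}) is precisely the routine part the paper leaves implicit.
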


\begin{proof}
  The only non-evident point to note is that this map is indeed an isomorphism on cohomology in degree 4.
  But this follows from \eqref{SecretDegree4Class}, since $\widehat{\omega}_4\mapsto \omega_4$.
%
\end{proof}

\begin{prop} [DG-model for $S^0$-$S^3$-$S^4$ system]
\label{dgModelForS0Inclusion}
The commutative diagram \eqref{QuotientOfS4ByS1OverS3}
$$
  \xymatrix@R=1em{
    S^0
    \ar[dr]
    \; \ar@{^{(}->}[rr]
    &&
    S^4
    \ar[dl]
    \\
    & S^3
  }
$$
is represented on rational fiberwise suspension spectra in terms of the DG-modules of Prop. \ref{MinimalDGModels}
by the commuting diagram
\begin{gather*}
\mathclap{
  \xymatrix@C=3.3em{
    \left(\hspace{-1mm}
      {\begin{array}{l}
        d \omega^{L/R}_0 = 0
        \\
        d \omega^{L/R}_{2p + 2} = h_3\otimes \omega^{L/R}_{2p}
      \end{array}}
      \hspace{-1mm}
    \right)
    &&&&
    \left(\hspace{-1mm}
      {\begin{array}{l}
        d \widetilde{\omega}_{0} = 0\,, d \widetilde{\omega}_{2p+2} = h_3 \otimes \widetilde{\omega}_{2p}
        \\
        d \omega_{4} = 0\,, d \omega_{2p+6} = h_3 \otimes \omega_{2p+4}
      \end{array}}
      \hspace{-1mm}
    \right)
    \ar[llll]_-{\tiny
      \left(\hspace{-2mm}
        {\begin{array}{l}
           \;\;\;\;\widetilde{\omega}_{2p} \mapsto (\omega^{L}_{2p} + \omega^R_{2p})
           \\
            \omega_{2p + 4} \mapsto 0
        \end{array}}
        \hspace{-2mm}
      \right)
    }
    \\
    &&
    \left(\hspace{-1mm}
      {\begin{array}{l}
        d 1 =0
        \\
        d h_3 = 0
      \end{array}}
      \hspace{-1mm}
    \right)
    \ar@{->}[urr]_{\hspace{1cm}\tiny
      \left( \hspace{-2mm}
        \begin{array}{l}
          \;\; 1 \mapsto 1 \otimes \widetilde{\omega}_0
          \\
          h_3 \mapsto h_3 \otimes \widetilde{\omega}_0
        \end{array}
       \hspace{-2mm}\right)
    }
    \ar@{->}[ull]^{\hspace{-2.4cm}\tiny
      \left( \hspace{-2mm}
        \begin{array}{l}
          \;\; 1 \mapsto 1 \otimes ( \omega^L_0 + \omega^R_0 )
          \\
          h_3 \mapsto h_3 \otimes ( \omega^L_0 + \omega^R_0 )
        \end{array}
       \hspace{-2mm}\right)
    }
  }\!\!.}
\end{gather*}
\end{prop}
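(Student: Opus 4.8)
The plan is to deduce the statement by transporting the commuting triangle \eqref{QuotientOfS4ByS1OverS3} through the composite functor $\mathcal{M}_{\mathbb{Q}[h_3]}\circ\Sigma^{\infty}_{+,S^3}$ and then reading off the induced maps on the minimal DG-module models of Prop.~\ref{MinimalDGModels}, which I abbreviate $M_{S^0},M_{S^4}$ for the $S^0$- and $S^4$-rows (the $S^3$-row being $\mathbb{Q}[h_3]$ itself). First, since $\mathcal{O}(S^3)\simeq\mathbb{Q}[h_3]$ (Ex.~\ref{MinimalDgcAlgebraModelFor3Sphere}) we have $S^3\simeq_{\mathbb{Q}}\mathcal{S}(\mathbb{Q}[h_3])$, so Theorem~\ref{RationalParameterizedSpectradgModel} supplies the contravariant functor $\mathcal{M}_{\mathbb{Q}[h_3]}$, and Prop.~\ref{FiberwiseSuspensionSpectrumdgModel} identifies, for each of $S^0,S^4,S^3$ viewed as a space over $S^3$, the DG-module $\mathcal{M}_{\mathbb{Q}[h_3]}(\Sigma^{\infty}_{+,S^3}(Y))\simeq\mathcal{O}(Y)$, with the $\mathbb{Q}[h_3]$-action pulled back along the structure map. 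Applying these functors to \eqref{QuotientOfS4ByS1OverS3} turns it into the triangle of $\mathbb{Q}[h_3]$-module maps $\mathcal{O}(S^3)\xrightarrow{\pi^{\ast}}\mathcal{O}(S^4)\xrightarrow{\iota^{\ast}}\mathcal{O}(S^0)$ and $\mathcal{O}(S^3)\xrightarrow{\pi^{\ast}}\mathcal{O}(S^0)$ --- that is, the image under $\mathcal{O}$ of \eqref{QuotientOfS4ByS1OverS3} --- in which each $\pi^{\ast}$ kills $h_3$ for degree reasons and $\iota^{\ast}$ is restriction along $S^0\hookrightarrow S^4$.

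Next I would pass to minimal models. Over $\mathbb{Q}[h_3]$ the module $\mathcal{O}(S^3)=\mathbb{Q}[h_3]$ is already minimal, while $M_{S^4},M_{S^0}$ come with quasi-isomorphisms $M_{S^4}\xrightarrow{\simeq}\mathcal{O}(S^4)$ and $M_{S^0}\xrightarrow{\simeq}\mathcal{O}(S^0)$ (their cohomologies being $H^{\bullet}(S^4)$ and $H^{\bullet}(S^0)$ by Prop.~\ref{FiberwiseSuspensionSpectrumdgModel}). Since minimal DG-modules are cofibrant (Rem.~\ref{MinimalModelsExist}), each of the three maps above lifts through these quasi-isomorphisms to a $\mathbb{Q}[h_3]$-linear chain map of minimal models, uniquely up to homotopy. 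I would then exhibit the displayed maps --- $1\mapsto 1\otimes\widetilde\omega_0$ and $h_3\mapsto h_3\otimes\widetilde\omega_0$ for $\mathbb{Q}[h_3]\to M_{S^4}$; $1\mapsto 1\otimes(\omega^L_0+\omega^R_0)$ and $h_3\mapsto h_3\otimes(\omega^L_0+\omega^R_0)$ for $\mathbb{Q}[h_3]\to M_{S^0}$; and $\widetilde\omega_{2p}\mapsto\omega^L_{2p}+\omega^R_{2p}$, $\omega_{2p+4}\mapsto 0$ for $M_{S^4}\to M_{S^0}$ --- and verify directly that each is a $\mathbb{Q}[h_3]$-linear chain map (brief checks against the differentials of Prop.~\ref{MinimalDGModels}: e.g.\ $d(\omega^L_{2p+2}+\omega^R_{2p+2})=h_3\otimes(\omega^L_{2p}+\omega^R_{2p})$ is the image of $d\widetilde\omega_{2p+2}$, and the vanishing sequence $\omega_{2p+4}\mapsto 0$ is visibly compatible with $d\omega_{2p+6}=h_3\otimes\omega_{2p+4}$) and that the triangle commutes strictly ($1\mapsto 1\otimes\widetilde\omega_0\mapsto 1\otimes(\omega^L_0+\omega^R_0)$, and likewise on $h_3$).

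It then remains to see that each displayed map is the \emph{correct} lift, i.e.\ lies in the same homotopy class as the $\mathcal{M}_{\mathbb{Q}[h_3]}$-image of the corresponding map of spaces. For this I would check that it realizes on cohomology precisely the pullback of that map: for the two $\pi^{\ast}$'s the cohomology maps are recorded in the proof of Prop.~\ref{MinimalDGModels} ($[1]\mapsto[\widetilde\omega_0]$ resp.\ $[1]\mapsto[\omega^L_0]+[\omega^R_0]$, with $[h_3]\mapsto 0$), and $\iota^{\ast}\colon H^{\bullet}(S^4)\to H^{\bullet}(S^0)$ is the elementary restriction sending the unit $[\widetilde\omega_0]$ to the unit $[\omega^L_0]+[\omega^R_0]$ and the top class $[\omega_4]$ to $0$. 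Because $\mathcal{O}(S^0)\simeq\mathbb{Q}\times\mathbb{Q}$ is formal with cohomology concentrated in degree $0$, a map out of a cofibrant DG-module into it is determined up to homotopy by its effect on cohomology; hence the displayed maps coincide, in the homotopy category, with the functorial ones, which is the assertion. (The form of $M_{S^4}\to M_{S^0}$ on the higher generators is in fact then forced inductively, a primitive of $h_3\otimes(\omega^L_{2p}+\omega^R_{2p})$ in $M_{S^0}$ differing from $\omega^L_{2p+2}+\omega^R_{2p+2}$ only by a cocycle, and likewise $\omega_{2p+6}$ may be sent to $0$.)

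The only genuinely delicate points I anticipate are: (a) the rigidity that lets one present the lift by the explicit formulas rather than by an unspecified homotopic map, which is exactly what the formality-of-$\mathcal{O}(S^0)$ argument above supplies; and (b) tracking the \emph{diagonal}, namely that $1$, $h_3$ and the $\widetilde\omega_{2p}$ are sent to the \emph{sum} $\omega^L_{\bullet}+\omega^R_{\bullet}$ rather than to a single summand. The latter is forced by the fact that $\mathcal{O}$ of a map is a unital ring homomorphism, so the unit of $\mathcal{O}(S^3)$ must map to the algebra unit $(1,1)$ of $\mathcal{O}(S^0)=\mathbb{Q}\times\mathbb{Q}$, which is exactly $\omega^L_0+\omega^R_0$ in the basis fixed in Prop.~\ref{MinimalDGModels}; this is the algebraic shadow of the geometric picture of Rem.~\ref{InterpretationOfFiberwiseSuspensionOfATypeOrbispace}, in which the unshifted copy of $\Sigma^{\infty}_+ BS^1$ inside the fiber of $\Sigma^{\infty}_{+,S^3}S^4$ maps diagonally into the two copies attached to the two $S^1$-fixed points.
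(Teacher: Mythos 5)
Your proposal is correct and follows essentially the same route as the paper: the paper's own proof likewise just records how the cohomology generators map (the unit to $1\otimes[\widetilde\omega_0]$, resp.\ diagonally to $1\otimes([\omega^L_0]+[\omega^R_0])$) and then determines the minimal DG-module maps as in the proof of Prop.~\ref{MinimalDGModels}. Your explicit rigidity step --- that maps into the degree-$0$-concentrated model of $S^0$, and maps out of the free module $\mathbb{Q}[h_3]$, are determined up to homotopy by their effect on $H^0$ --- is just a slightly more formal substitute for the paper's terse appeal to the cell-by-cell induction, with the same substance.
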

\begin{proof}
  We know that the various cohomology generators are mapped as follows:
  $$
    \xymatrix@R=1.6em{
      1 \otimes ([\omega^L_0] + [\omega^R_0])
      &&
      1 \otimes [\widetilde{\omega}_0]
      \ar@{|->}[ll]
      \\
      &
      [1]
      \ar@{|->}[ur]
      \ar@{|->}[ul]
    }
  $$
  This implies the statement along the lines of the proof of Prop. \ref{MinimalDGModels}.
\end{proof}

So far, we have shown that copies of rational twisted K-theory appear as summands in the rational fiberwise stabilization
of the A-type orbispace.
Our next aim is to analyze how the $\mathrm{Ext}/\mathrm{Cyc}$-adjunction unit compares these copies to the copy of
rational 6-truncated twisted K-theory to be found inside the rational cyclification of the 4-sphere (Ex. \ref{CyclificationOf4SphereReceives6TruncationOfTwistedK}).
For this, we will need to know how the adjunciton unit is represented in terms of minimal DG-modules:

\begin{prop}[Fiberwise stabilization of $\mathrm{Ext}/\mathrm{Cyc}$-unit at A-type orbispace
via minimal DG-models]
  \label{FiberwiseStabilizationOfUnitOnMinimalModels}
  The fiberwise stabilization of the component of the $\mathrm{Ext}$/$\mathrm{Cyc}$-unit at the
  A-type orbispace of the 4-sphere
  $$
    \xymatrix{
      \Sigma^\infty_{+,S^3} \left( S^4\dslash S^1 \right)
      \ar[rr]^-{ \Sigma^\infty_{+,S^3}(\eta) }
      &&
      \Sigma^\infty_{+,S^3} \mathrm{Cyc}(S^4)
    }
  $$
  is represented on minimal DG-modules (Prop. \ref{MinimaldgModuleForFiberwiseStabilisationOfCyclicSpaceOf4Sphere}
  and Prop. \ref{MinimalDGModels}) by a homomorphism of DG-modules over $\mathbb{Q}[h_3]$,
  shown as a dotted map in \eqref{TheMap}, which has the following properties:

    \item \hypertarget{FirstThreeGeneratorsAreMappedCorrectly}{{\bf (i)}}
      The generators $\omega_2$, $\omega_4$ and $\omega_6$ from Prop. \ref{MinimaldgModuleForFiberwiseStabilisationOfCyclicSpaceOf4Sphere}
      are sent to the generators of the same name in Prop. \ref{MinimalDGModels}:
      \begin{equation}
      \label{AdjunctionUnitFiberwiseStabilizedOnATypeOrbispace}
        \raisebox{15pt}{
        \xymatrix@R=.5pt{
          \omega_2 \otimes \tilde \omega_0
          \ar@{<-|}[rr]^-{ \Sigma^\infty_{+,S^3}(\eta)^\ast }
          &&
          \omega_2
          \\
          1 \otimes \omega_4
          \ar@{<-|}[rr]^-{ }
          &&
          \omega_4
          \\
          1 \otimes \omega_6
          \ar@{<-|}[rr]^-{ }
          &&
          \omega_6
        }
        }
      \end{equation}
    \item \hypertarget{NotInImage}{{\bf (ii)}}
     The elements
     $$
       1 \otimes \omega_{2p + 8} \,,\;\; p\in \mathbb{N}
     $$
     are \emph{not} in its image.
   \item \hypertarget{RelativeCellInclusion}{{\bf (iii)}}
    It is a minimal relative DG-module inclusion (Def. \ref{MinimalDGModule}).
\end{prop}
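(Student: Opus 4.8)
\emph{Plan of proof.}
The strategy is to exhibit $\Sigma^\infty_{+,S^3}(\eta)^\ast$ as a lift against the comparison quasi-isomorphisms relating the minimal DG-modules of Prop.~\ref{MinimaldgModuleForFiberwiseStabilisationOfCyclicSpaceOf4Sphere} and Prop.~\ref{MinimalDGModels} to the non-minimal models, and then to compute that lift --- explicitly in low degrees for {\bf (i)} and inductively on the remaining cells for {\bf (ii)}--{\bf (iii)}. Write $\mathfrak{C}$ for the minimal DG-module of Prop.~\ref{MinimaldgModuleForFiberwiseStabilisationOfCyclicSpaceOf4Sphere} and $\mathfrak{A}$ for the one in the third row of Prop.~\ref{MinimalDGModels}. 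First one identifies the map at the level of non-minimal models: by Prop.~\ref{FiberwiseSuspensionSpectrumdgModel} applied to $\eta$ regarded as a morphism over $S^3$, together with naturality of the DG-models for fiberwise suspension spectra, $\Sigma^\infty_{+,S^3}(\eta)$ is represented on $\mathcal{O}(\mathrm{Cyc}(S^4))$ and $\mathcal{O}(S^4\dslash S^1)$ by the $\mathbb{Q}[h_3]$-module morphism $\eta^\ast$ of Prop.~\ref{MinimalDGCAlgebraModelForATypeOrbispace}. Precomposing with the comparison quasi-isomorphism $q\colon\mathfrak{C}\xrightarrow{\,\simeq_{\mathrm{qi}}\,}\mathcal{O}(\mathrm{Cyc}(S^4))$ of Prop.~\ref{MinimaldgModuleForFiberwiseStabilisationOfCyclicSpaceOf4Sphere}{\bf (ii)} gives a $\mathbb{Q}[h_3]$-module chain map $f:=\eta^\ast\circ q$; since $\mathfrak{C}$ is a cell complex (Def.~\ref{MinimalDGModule}), hence cofibrant, and the comparison quasi-isomorphism $p\colon\mathfrak{A}\xrightarrow{\,\simeq_{\mathrm{qi}}\,}\mathcal{O}(S^4\dslash S^1)$ of Lemma~\ref{ComparisonMapBetweenModelsForFiberwiseStabilizationOfATypeOrbispace} has fibrant target (every DG-module over $\mathbb{Q}[h_3]$ is fibrant), there is a lift $\phi\colon\mathfrak{C}\to\mathfrak{A}$ with $p\circ\phi\simeq f$, unique up to chain homotopy; this is the dotted map in \eqref{TheMap}, and it may be chosen compatibly with the $\omega_2$-multiplications present on both sides.

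For {\bf (i)} one constructs $\phi$ cell by cell in order of increasing degree. The key input is that $\mathcal{O}(S^4\dslash S^1)$ has no nonzero exact elements in degrees $\leq 7$ (its only non-closed generator $h_7$ has $dh_7=-\tfrac12\omega_4\wedge\omega_4$, of degree $8$), so that in low degrees $p\phi$ must actually \emph{equal} $f$, not merely be homotopic to it. In degree $0$ this forces $\phi(1)=\widetilde\omega_0$; in degree $2$, since $\omega_2\otimes\widetilde\omega_0$ is the unique closed element of $\mathfrak{A}$ and $p(\omega_2\otimes\widetilde\omega_0)=\omega_2=\eta^\ast\omega_2$, it forces $\phi(\omega_2)=\omega_2\otimes\widetilde\omega_0$; in degree $4$, since $d(1\otimes\omega_4)=h_3\wedge\omega_2\otimes\widetilde\omega_0=h_3\cdot\phi(\omega_2)=\phi(d\omega_4)$, $p(1\otimes\omega_4)=\omega_4=\eta^\ast\omega_4$, and the remaining closed degree-$4$ elements of $\mathfrak{A}$ all have nonzero $p$-image, it forces $\phi(\omega_4)=1\otimes\omega_4$; and in degree $6$, since $d(1\otimes\omega_6)=h_3\otimes\omega_4=h_3\cdot\phi(\omega_4)=\phi(d\omega_6)$ and $p(1\otimes\omega_6)=0=\eta^\ast\omega_6$, it forces $\phi(\omega_6)=1\otimes\omega_6$. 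This proves {\bf (i)} and the formulas \eqref{AdjunctionUnitFiberwiseStabilizedOnATypeOrbispace}.

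For {\bf (ii)}--{\bf (iii)} one continues the induction over the remaining cells of $\mathfrak{C}$, the monomials $\omega_2^a\omega_4^b\omega_6^c$ of degree $\geq 8$, while tracking the image of $\phi$. The central device is reduction modulo $h_3$: every differential of $\mathfrak{C}$ and of $\mathfrak{A}$ carries a factor of $h_3$, so the reductions $\mathfrak{C}/h_3$ and $\mathfrak{A}/h_3$ have vanishing differentials, and $\phi$ induces a graded-linear map sending the monomial basis of $\mathfrak{C}/h_3$ to nonzero scalar multiples of \emph{distinct} basis elements of $\mathfrak{A}/h_3$, with the image of every monomial of degree $\geq 8$ carrying at least one factor of $\omega_2$ --- because mod $h_3$ a primitive can only climb from $\omega_{2k}$ to $\omega_{2k+2}$ by picking up such a factor, and the residual freedom of a closed element of $\mathfrak{A}$ at each step is eliminated by comparison with $f$ through $p$ (using again the sparseness of exact elements of $\mathcal{O}(S^4\dslash S^1)$). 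Consequently $\mathrm{im}(\phi)\subseteq\mathbb{Q}[h_3]\langle\widetilde\omega_0,\,1\otimes\omega_4,\,1\otimes\omega_6\rangle+\omega_2\cdot\mathfrak{A}$, a submodule containing no $1\otimes\omega_{2p+8}$; this is {\bf (ii)}. Injectivity of $\phi$ then follows from injectivity of its reduction mod $h_3$ together with freeness of $\mathfrak{C},\mathfrak{A}$ over $\mathbb{Q}[h_3]=\mathbb{Q}\oplus\mathbb{Q}h_3$; and the $\mathbb{Q}[h_3]$-basis elements of $\mathfrak{A}$ not in $\mathrm{im}(\phi)$ --- the $\omega_2^a\otimes\widetilde\omega_{2p}$ with $p\geq1$ and the $1\otimes\omega_{2p+4}$ with $p\geq2$ --- can be adjoined to $\mathrm{im}(\phi)\cong\mathfrak{C}$ one at a time in order of increasing degree, each with differential equal to $h_3$ times a combination of strictly lower-degree cells, exhibiting $\phi$ as a minimal relative DG-module inclusion, which is {\bf (iii)}.

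I expect the main obstacle to be this last induction: since the lift $\phi$ is canonical only up to homotopy, one must make a coherent family of choices on infinitely many higher cells and verify \emph{simultaneously} that the image avoids every generator $1\otimes\omega_{2p+8}$ and that the cokernel carries a minimal cell structure. The reduction modulo $h_3$ is exactly what keeps this under control, turning the verification into a finite linear-algebra problem in each degree.
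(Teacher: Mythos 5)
Your reduction of the problem to a lifting question between the DG-module models is sound, and your treatment of item \textbf{(i)} is essentially the paper's argument in a mild disguise: the images of $\omega_2$, $\omega_4$, $\omega_6$ are pinned down by closedness/differential-compatibility together with commutativity with the comparison maps of Lemma \ref{ComparisonMapBetweenModelsForFiberwiseStabilizationOfATypeOrbispace} and Prop. \ref{MinimaldgModuleForFiberwiseStabilisationOfCyclicSpaceOf4Sphere}, and your observation that in degrees $\leq 7$ the target $\mathcal{O}(S^4\dslash S^1)$ has no room for homotopy corrections legitimately upgrades homotopy-commutativity to strict commutativity in that range.

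The gap is in \textbf{(ii)} and \textbf{(iii)}. These are statements about an actual representative of the homotopy class --- its literal image, and its being a minimal relative cell inclusion --- and neither property is homotopy-invariant, so an abstract lift ``unique up to chain homotopy'' does not by itself suffice: one must exhibit a chain map on all higher cells and verify its form. Your central assertion, that the lift sends every basis monomial of degree $\geq 8$ to a nonzero multiple of a single, distinct basis element carrying a factor of $\omega_2$, is precisely the nontrivial content, and the mechanisms you invoke do not establish it. The ``sparseness of exact elements of $\mathcal{O}(S^4\dslash S^1)$'' breaks down exactly from degree $8$ on, where it would be needed, since $d h_7 = -\tfrac12\,\omega_4\wedge\omega_4$ makes all elements of the form $\omega_4^{\wedge 2}\wedge(\cdots)$ exact; and ``chosen compatibly with the $\omega_2$-multiplications'' is an unproved consistency claim rather than a fact about the abstract lift. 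What the paper's proof actually does, and what is missing from your sketch, is the explicit construction: using the relation $\tfrac12\,\omega_4\wedge\omega_4=\omega_2\wedge\omega_6$ in the minimal model of $\Sigma^\infty_{+,S^3}\mathrm{Cyc}(S^4)$ to reduce the higher cells to $\omega_6^{\wedge n}$ and $\omega_6^{\wedge(n-1)}\wedge\omega_4$ for $n\geq 2$, positing images $a_n\,\omega_2^{\wedge(n-1)}\otimes\omega_{4n+2}$ and $b_n\,\omega_2^{\wedge(n-1)}\otimes\omega_{4n}$ as in \eqref{ExtComp}, and checking that differential-compatibility produces the interlocking recursion $b_n=(2n-1)\,a_{n-1}$, $a_n=n\,b_n$ with $a_2=6$, $b_2=3$; only this shows that a consistent $\mathbb{Q}[h_3,\omega_2]$-linear choice exists, that every higher image is $\omega_2$-divisible (whence (ii)), and that generators map injectively to generators (whence (iii)). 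Incidentally, your worry about making coherently many choices can be dissolved: since $h_3^2=0$ and all odd-degree elements and all differentials of these modules are $h_3$-divisible, homotopic $\mathbb{Q}[h_3]$-linear chain maps between them in fact coincide, so the representative is essentially unique --- but uniqueness is no substitute for the computation, since one still has to compute the map in order to read off (ii) and (iii).
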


\begin{proof}
 A general representative for $\Sigma^\infty_{+,S^3}(\eta)$ on minimal DG-models
 is a dotted morphism that makes the following diagram commute \emph{up to homotopy},
 but we claim that a representative with the stated properties exists that makes this diagram even  \emph{strictly} commutative:
  \begin{gather}
    \label{TheMap}
    \mathclap{
    \hspace{0cm}
    \scalebox{.85}{
    \xymatrix@R=2em@C=-2em{
      \Sigma^\infty_{+,S^3}\left( S^4 \dslash S^1 \right)
      \ar[r]^-{ \Sigma^\infty_{+,S^3}\left(\eta_{S^4 \dslash S^1}\right) }
      &
      \Sigma^\infty_{+,S^3} \mathrm{Cyc}\,\mathrm{Ext}\left( S^4  \dslash  S^1 \right)
      \ar[r]^-{\Sigma^\infty_{S^3} \mathrm{Cyc}(\kappa)}_-{\simeq_{\mathrm{whe}}}
      &
      \Sigma^\infty_{+,S^3} \mathrm{Cyc}(S^4)
      \\
      \mathbb{Q}[\omega_2,\omega_4, h_7]\Bigg/\!
      \left(
        {\begin{aligned}
          d \omega_2 & = 0
          \\
          d \omega_4 & = 0
          \\
          d h_7 & = - \tfrac{1}{2} \omega_4 \wedge \omega_4
        \end{aligned}}
      \right)
      &&
      \mathbb{Q}[ h_3, h_7, \omega_2, \omega_4, \omega_6 ]
      \Bigg/\!
      \left(
        {\begin{aligned}
          d h_3 & = 0
          \\
          d h_7 & = -\tfrac{1}{2} \omega_4 \wedge \omega_4 + \omega_2 \wedge \omega_6
          \\
          d \omega_2 & = 0
          \\
          d \omega_4 & = h_3 \wedge \omega_2
          \\
          d \omega_6 & = h_3 \wedge \omega_4
        \end{aligned}}
      \right)
      \ar[ll]|-{\footnotesize
        \begin{aligned}
          h_3 & \mapsto 0
          \\
          h_7 & \mapsto h_7
          \\
          \omega_2 & \mapsto \omega_2
          \\
          \omega_4 & \mapsto \omega_4
          \\
          \omega_6 & \mapsto 0
        \end{aligned}
      }
      \\
      \\
      \\
      \\
      \mathbb{Q}[h_3, \omega_2] \otimes \langle \widetilde \omega_{2p}, \omega_{2p + 4} \rangle\Bigg/\!
      \left(
      {\begin{aligned}
        d h_3 & = 0
        \\
        d \widetilde \omega_0 & = 0
        \\
        d \widetilde \omega_{2p + 2} & = h_3 \otimes \widetilde \omega_{2 p}
        \\
        d \omega_2 & = 0
        \\
        d \omega_4 & = h_3 \wedge \omega_2 \otimes \widetilde \omega_0
        \\
        d \omega_{2 p+ 6} & = h_3 \otimes \omega_{2p + 4}
      \end{aligned}}
      \right)
      \ar@{-> }[uuuu]^-{\simeq_{\mathrm{qi}}}_-{\footnotesize
        \begin{aligned}
          h_3 & \mapsto 0
          \\
          \widetilde \omega_0 & \mapsto 1
          \\
          \widetilde \omega_{2p + 2} & \mapsto 0
          \\
          \omega_2 & \mapsto \omega_2
          \\
          \omega_4 & \mapsto \omega_4
          \\
          \omega_{2p+6} & \mapsto 0
        \end{aligned}
      }
       &&
      \frac{
        \mbox{$\mathbb{Q}[ h_3, \omega_2, \omega_4, \omega_6 ]$}
      }{
        \mbox{$\left( \omega_2 \wedge \omega_6 -\tfrac{1}{2}\omega_4 \wedge \omega_4 \right)$}
      }\Bigg/\!
      \left(
        {\begin{aligned}
          d h_3 & = 0
          \\
          d \omega_2 & = 0
          \\
          d \omega_4 & = h_3 \wedge \omega_2
          \\
          d \omega_6 & = h_3 \wedge \omega_4
        \end{aligned}}
      \right).
      \ar@{-> }[uuuu]^-{\simeq_{\mathrm{qi}}}_-{\footnotesize
        \begin{aligned}
          h_3 & \mapsto h_3
          \\
          \omega_2 & \mapsto \omega_2
          \\
          \omega_4 & \mapsto \omega_4
          \\
          \omega_6 & \mapsto \omega_6
        \end{aligned}
      }
      \ar@{..>}[ll]|-{\footnotesize
        \color{gray}
        \begin{aligned}
          h_3 & \mapsto h_3
          \\
          \omega_2 & \mapsto \omega_2 \otimes \widetilde \omega_0
          \\
          \omega_4 & \mapsto 1 \otimes \omega_4
          \\
          \omega_6 & \mapsto 1 \otimes \omega_6
          \\
          &\quad \vdots
        \end{aligned}
      }
    }
    }}
  \end{gather}
  Here the top horizontal morphism is the map from Prop. \ref{MinimalDGCAlgebraModelForATypeOrbispace}, regarded as a map between underlying DG-modules, while in the bottom row we have the corresponding minimal models.
  The left-hand vertical morphism is \eqref{ComparisonMapBetweenModels}
  from Lemma \ref{ComparisonMapBetweenModelsForFiberwiseStabilizationOfATypeOrbispace}, while the right-hand vertical morphism is provided by Prop. \ref{MinimaldgModuleForFiberwiseStabilisationOfCyclicSpaceOf4Sphere}.
%
%

  Firstly, we show that if such a dotted morphism exists, then it must satisfy
  (\hyperlink{FirstThreeGeneratorsAreMappedCorrectly}{ i}):
  To start with, the dotted morphism necessarily sends $h_3 \mapsto h_3$, since it is a homomorphism
   of free $\mathbb{Q}[h_3]$-modules.
  Next, in order for the underlying linear maps to commute, the dotted morphism needs to send
  $$
    \omega_2 \mapsto \omega_2 \otimes \widetilde \omega_0 + \cdots
    \,,
  $$
  where the ellipsis indicates a term of degree 2 that vanishes under the left vertical map. The only possibility for such a term is a scalar multiple
  of $1 \otimes \widetilde \omega_2$,
  but then the respect for the differential
  $$
    \xymatrix@R=1.5em{
      c \, h_3 \otimes \widetilde \omega_0
      &&
      0
      \ar@{|->}[ll]
      \\
       \omega_2 \otimes \widetilde \omega_0
        +  c \, 1 \otimes \widetilde \omega_2
      \ar[u]^-{d}
      &&
      \omega_2
      \ar@{|->}[u]_-d
      \ar@{|->}[ll]
    }
  $$
  enforces $c = 0$.

  A similar argument applies to $\omega_4$: we must have
  $$
    \omega_4 \mapsto 1 \otimes \omega_4 + \cdots
  $$
  where the ellipsis indicates a term of degree 4 that vanishes under the left vertical map, which must be of the form
  $c_1 \, \omega_2 \wedge \widetilde \omega_2 + c_2 \, 1 \otimes \widetilde \omega_4$ for some $c_1, c_2 \in \mathbb{R}$. But then respect for the differentials
  $$
    \xymatrix{
         h_3 \wedge \omega_2 \otimes \widetilde \omega_0
        +  c_1 \, \omega_2 \wedge h_3 \otimes \widetilde \omega_0
        + c_2 \, h_3 \otimes \widetilde \omega_2
      &&
      h_3 \wedge \omega_2
      \ar@{|->}[ll]
      \\
 1 \otimes \omega_4
        +  c_1 \, \omega_2 \otimes \widetilde \omega_2
        +  c_2 \, 1 \otimes \widetilde \omega_4
      \ar@{|->}[u]^d
      &&
      \omega_4
      \ar@{|->}[u]_d
      \ar@{|->}[ll]
    }
  $$
  implies $c_1 = 0$ and $c_2 = 0$.

  Finally, we must have that
  $$
    \omega_6 \mapsto 0 + \cdots,
  $$
  where the ellipsis is a term of degree 6 that vanishes under the left vertical map. Hence
  $$
    \omega_6
      \mapsto
    c_1 \, 1 \otimes \omega_6
    +
    c_2 \, 1 \otimes \widetilde \omega_6
    +
    c_3 \, \omega_2 \otimes \widetilde \omega_4
    +
    c_4 \, \omega_2^{\wedge 2}\otimes \widetilde \omega_2
  $$
  for some $c_1, c_2, c_3, c_4 \in \mathbb{R}$.
  Now the respect for the differentials
  $$
    \xymatrix{
      {\begin{aligned}
        c_1 \, h_3 \otimes \omega_4
        +
        c_2 \, h_3 \otimes \widetilde \omega_4
        +
        c_3 \, \omega_2 \wedge h_3 \otimes \widetilde \omega_2
        +
        c_4 \, \omega_2^{ \wedge 2} \wedge h_3 \otimes \widetilde \omega_0
      \end{aligned}}
      &&
      h_3 \wedge \omega_4
      \ar@{|->}[ll]
      \\
      {\begin{aligned}
        c_1 \, 1 \otimes \omega_6
        +
        c_2 \, 1 \otimes \widetilde \omega_6
        +
        c_3 \, \omega_2 \otimes \widetilde \omega_4
        +
        c_4 \, \omega_2 \wedge \omega_2 \otimes \widetilde \omega_2
      \end{aligned}}
      \ar@{|->}[u]
      &&
      \omega_6
      \ar@{|->}[u]
      \ar@{|->}[ll]
    }
  $$
  implies $c_1 = 1$ and $c_2 = c_3 =c_4 = 0$.
  This establishes (\hyperlink{FirstThreeGeneratorsAreMappedCorrectly}{{i}}), provided that the dotted morphism actually exists.

  We now prove that a dotted morphism in \eqref{TheMap}
  does exist as claimed.
  Indeed, we have just seen that the images of the generators $h_3$, $\omega_2$, $\omega_4$, and $\omega_6$ are predetermined.
  Extending this map as a $\mathbb{Q}[h_3,\omega_2]$-module homomorphism,
  we have defined the dotted morphism on all terms that are at most linear in $\omega_4$ and $\omega_6$.
  We must now extend this map to higher order terms in these generators---but by the relation $\tfrac{1}{2}\omega_4 \wedge \omega_4 = \omega_2 \wedge \omega_6$, which holds in the minimal DG-module in the
  bottom right of \eqref{TheMap},
  we may consider those unique representatives which are at most linear in $\omega_4$.
  Hence we need only find consistent images for all elements of the form
  $$
    (\omega_6)^{\wedge n}, \,\;\;\;\; (\omega_6)^{\wedge(n-1)} \wedge \omega_4
    \phantom{AAAA}
    \mbox{ for $n \geq 2$ }
    \,.
  $$
  We claim that there are unique coefficients $a_n, b_n \in \mathbb{N} \subset \mathbb{Q}$ such that the assignment
  \begin{equation}
    \label{ExtComp}
    \raisebox{12pt}{
    \xymatrix@R=1pt{
      b_n \, \omega_2^{\wedge (n-1)} \otimes \omega_{4n}
      &&
      \mathrlap{(\omega_6)^{\wedge(n-1)} \wedge \omega_4}
      \ar@{|->}[ll]
      \\
      a_n \, \omega_2^{\wedge (n-1)} \otimes \omega_{4n+2}
      &&
      \mathrlap{(\omega_6)^{\wedge n} }
      \ar@{|->}[ll]
    }
    }
    \phantom{AAAAAAAAAAA}
    \mbox{for $n \geq 2$}.
  \end{equation}
  respects the differentials and makes the underlying linear maps of \eqref{TheMap} commute.
  We argue by induction: for the initial case $n=2$, respect for the differentials on $\omega_6 \wedge \omega_4$
means that the following square has to commute
\begin{equation}
  \label{FirstSquare}
  \raisebox{20pt}{\xymatrix{
    3 h_3 \wedge \omega_2 \otimes \omega_6
    &&
    3 h_3 \wedge \omega_2 \wedge \omega_6
    \ar@{|->}[ll]
    \\
    b_2 \, \omega_2 \otimes \omega_8
    \ar@{|->}[u]^d
    &&
    \omega_6 \wedge \omega_4
    \,,
    \ar@{|->}[ll]
    \ar@{|->}[u]_d
  }}
\end{equation}
where the top left entry shows the image of the top right element.
To see this, notice that:

\vspace{-2mm}
\begin{enumerate}[{\bf (a)}]
\item  in computing the differential on the right of \eqref{FirstSquare}, we are using the relation $\tfrac{1}{2} \omega_4 \wedge \omega_4 = \omega_2 \wedge \omega_6$
in the minimal DG-module model from Prop. \ref{MinimaldgModuleForFiberwiseStabilisationOfCyclicSpaceOf4Sphere},
to uniquely represent all differentials by representatives that are at most linear in $\omega_4$; and

\vspace{-1mm}
\item  the definition \eqref{ExtComp} applies only to terms at least quadratic in $\omega_6, \omega_4$.
For the image of the top right element in \eqref{FirstSquare} we instead use \eqref{AdjunctionUnitFiberwiseStabilizedOnATypeOrbispace} and respect for the $\mathbb{Q}[h_3,\omega_2]$-module structure.
\end{enumerate}

\vspace{-2mm}
\noindent But the differential of the bottom left element in \eqref{FirstSquare} is $b_2 \, \omega_2 \wedge h_3 \otimes \omega_6$,
and hence the square \eqref{FirstSquare} commutes
precisely if
$
  b_2 = 3
$.
In an analogous manner, we see that the respect for the differential on $\omega_6 \wedge \omega_6$ means that the
following square has to commute:
$$
  \xymatrix{
    2 b_2\,  h_3 \wedge \omega_2 \otimes \omega_8
    &&
    2 h_3 \wedge \omega_6 \wedge \omega_4
    \ar@{|->}[ll]
    \\
    a_2 \, \omega_2 \otimes \omega_{10}
    \ar@{|->}[u]^d
    &&
    \omega_6 \wedge \omega_6
    \,,
    \ar@{|->}[ll]
    \ar@{|->}[u]_d
  }
$$
where again the top left element shown is the image of the top right element, now obtained via \eqref{ExtComp}.
We see that the differential on the left has this same image precisely if
$$
  a_2 =  2 b_2  = 6
  \,.
$$
For the inductive argument, we assume that the claim
\eqref{ExtComp} holds for $(n-1)$, with $n \geq 3$. Then we need to assure the
commutativity of the squares
$$
  \xymatrix{
    n b_n \, h_3 \wedge \omega_2^{\wedge(n-1)} \otimes \omega_{4n}
    &&
    n\, h_3\wedge (\omega_6)^{\wedge(n-1)} \wedge \omega_4
    \ar@{|->}[ll]
    \\
    a_n \, \omega_2^{\wedge (n-1)} \otimes \omega_{4n+2}
    \ar@{|->}[u]^d
    &&
    (\omega_6)^{\wedge n}
    \ar@{|->}[u]_d
    \ar@{|->}[ll]
  }
$$
and
$$
  \xymatrix{
    \left( 2n-1\right) a_{n-1} \, h_3 \wedge \omega_2^{\wedge (n-2)}\otimes \omega_{4n-2}
    &&
    \left( 2n-1\right)\,  h_3 \wedge \omega_2 \wedge (\omega_6)^{\wedge(n-1)}
    \ar@{|->}[ll]
    \\
    b_n \, \omega_2^{\wedge(n-1)} \otimes \omega_{4n}
    \ar@{|->}[u]^d
    &&
    (\omega_6)^{\wedge(n-1)} \wedge \omega_4\,.
    \,
    \ar@{|->}[ll]
    \ar@{|->}[u]_d
  }
$$
By the inductive hypothesis, the second square above implies that $b_n =(2n-1) a_{n-1}$, whereas the first square implies
that $a_n = n\cdot b_n$. By induction, this gives a map of DG-$\mathbb{Q}[h_3, \omega_2]$-modules map as per the
dotted arrow in \eqref{TheMap} via the specification \eqref{ExtComp}.
Let us note for completeness that the coefficients in \eqref{ExtComp} are given recursively by
\[
\frac{a_n}{a_{n-1}} = 2n^2-n, \qquad\quad b_n = \frac{a_n}{n}\phantom{AAAAA}
    \mbox{for $n > 2$},
\]
with $a_2 =6$, $b_2 =3$.
This completes the construction of the dotted morphism in \eqref{TheMap}, and it is now easy to see that that diagram commutes.


By construction, it is clear that the elements
$1 \otimes \omega_{2p+8}$, $p\geq 0$ are not in the image of the dotted morphism.
This proves item (\hyperlink{NotInImage}{{ii}}).

Finally, the construction of the dotted morphism via (\hyperlink{FirstThreeGeneratorsAreMappedCorrectly}{{i}}) and \eqref{ExtComp}
is a map between minimal DG-modules (over $\mathbb{Q}[h_3]$) which is moreover manifestly an injective cell map in that is injectively maps generators to generators.
It follows that this map is a relative cell complex inclusion, in fact a  minimal DG-module inclusion.
This completes the proof.
\end{proof}

\begin{remark}
The proof of (\hyperlink{RelativeCellInclusion}{iii}) depends crucially upon the fact that the elements $\omega_2^{\wedge n}$
for varying $n$ are \emph{distinct} as generators for the minimal DG-modules appearing along the bottom of Diagram \eqref{TheMap}.
Despite being a potential source of great confusion, we persist in using this notation because of the utility of being able to
work \lq\lq multiplicatively in $\omega_2$'' in the argument above.
\end{remark}

We are now in a position to state and prove our main technical result:
\begin{theorem}
[K-theory \lq\lq detruncation'' by lifting against fiberwise stabilization of A-type orbispace]
  \label{TwistedKTheoryInsideFiberwiseStabilizationOfATypeOrbispaceOf4Sphere}
Regarding the A-type orbispace of the 4-sphere (Def. \ref{ATypeOrbispaceOf4Sphere}) as fibered over the 3-sphere 
  via the canonical map \eqref{QuotientOfS4ByS1OverS3} from Prop. \ref{SystemOfFixedPointsAndQuotientsOfATypeActionOn4Sphere},
   \begin{equation}
   \label{ATypeOrbispaceOverB3Q}
   \xymatrix{
     S^4 \dslash S^1 \ar[r]
     &
     S^3
     \ar[r]^-{\simeq_\mathbb{Q}}
     &
     B^3 \mathbb{Q}\,,
     }
      \end{equation}
    we have that:

 \item {\bf (i)} Rationally, the fiberwise stabilization (Prop. \ref{AdjunctionStabilization}) of \eqref{ATypeOrbispaceOverB3Q}
  contains a copy of the 2-fold suspension of twisted connective K-theory as a direct summand:
  $$
    \xymatrix{
      \Sigma^2\, \mathrm{ku} \dslash B S^1\;
      \ar@{^{(}->}[rr]^-{\iota}
      &&
      \Sigma^\infty_{+, S^3 }\left( S^4 \dslash S^1  \right)
    }.
  $$

 \item {\bf (ii)}
  A map $\phi$ of rational homotopy types over $B^3 \mathbb{Q}$
  \[
  \xymatrix@R=1em{
    X\ar[rr]^-{\phi} \ar[dr]_-{\mu_3} &&\mathrm{Cyc}(S^4) \ar[dl] \ar[r]^-{p}& \Omega^{\infty-2}_{B^3 \mathbb{Q}}
    \left( \mathrm{ku} \dslash B S^1\right)\langle 6\rangle
    \ar@/^0.5pc/[dll]
    \\
    & B^3 \mathbb{Q} &
    }
  \]
lifts from rational 6-truncated twisted K-theory 
 to rational untruncated twisted K-theory if and only if $\phi$ admits a lift through the fiberwise stabilization of the $\mathrm{Ext}/\mathrm{Cyc}$-unit $\eta_{S^4\dslash S^1}$:
%
  \[
  \mathclap{
    \xymatrix@R=1.6em@C=2.5em{
      &&
      \Omega^{\infty-2}_{B^3 \mathbb{Q}}\left( \mathrm{ku} \dslash B S^1\right)\;
      \ar@{->}[d]^-{\;\tau_6}
      \ar@{^{(}->}[rr]^-{\iota}
      &
      &
      \Omega^\infty_{S^3}
      \Sigma^\infty_{+,S^3}
      (S^4\dslash S^1)
      \ar[dd]^{ \Omega^\infty_{S^3} \Sigma^\infty_{+,S^3}\big( \eta_{{S^4 \dslash S^1}} \! \big) }
      \\
      &&
      \Omega^{\infty-2}_{B^3 \mathbb{Q}}\left( \mathrm{ku} \dslash B S^1\right)\langle 6\rangle
      \\
      X
      \ar[dr]_{\mu_3}
      \ar@{-->}@/^1pc/[rruu]|{\;\; \widehat{\phi} \;\;}
      \ar@{..>}@/^6.3pc/[rrrruu]|{ \;\;\widehat{ \mathrm{st}\circ \phi } \;\;}
      \ar[rr]^{\;\;\;\phi}
      &&
      \mathrm{Cyc}(S^4)
      \ar[u]_-{\; p}
      \ar[r]^-{ \mathrm{st} }
      \ar[dl]
      &
      \Omega^\infty_{S^3}
      \Sigma^\infty_{+,S^3}
      \mathrm{Cyc}(S^4)
      \ar[r]^-{\simeq}
      &
      \Omega^\infty_{S^3}
      \Sigma^\infty_{+,S^3}
      \mathrm{Cyc}\,\mathrm{Ext}(S^4\dslash S^1)
      \\
      & B^3 \mathbb{Q}
    }}
  \]
  That is, specifying a dotted lift $\widehat{\phi}$ is equivalent to specifying a dotted lift $\widehat{\mathrm{st}\circ \phi}$, where $\mathrm{st}$ is the unit of the fiberwise stabilization adjunction.
\end{theorem}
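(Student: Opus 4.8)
The plan is to transport both lifting problems into the algebraic model of Theorem~\ref{RationalParameterizedSpectradgModel} and Prop.~\ref{FiberwiseSuspensionSpectrumdgModel}, where the $\mathrm{Ext}/\mathrm{Cyc}$-unit, the fiberwise stabilization and the $6$-truncation all become explicit morphisms of DG-modules over $A=\mathbb{Q}[h_3]\simeq\mathcal{O}(S^3)$, and then to match the two resulting obstruction sequences.

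For part~{\bf (i)} I would read off the last row of the table in Prop.~\ref{MinimalDGModels}: the minimal DG-module of $\Sigma^\infty_{+,S^3}(S^4\dslash S^1)$ over $\mathbb{Q}[h_3]$ contains the DG-sub-module spanned over $\mathbb{Q}[h_3]$ by the second differential sequence, i.e.\ by $\omega_2\otimes\widetilde\omega_0,\,\omega_4,\,\omega_6,\dotsc$ with $d(\omega_2\otimes\widetilde\omega_0)=0$ and $d\,\omega_{2k+2}=h_3\wedge\omega_{2k}$, reading $\omega_2\otimes\widetilde\omega_0$ as the degree-$2$ bottom cell. By Lemma~\ref{TwistedKModel} this sub-module is exactly the minimal DG-module modelling $\Sigma^2(\mathrm{ku}\dslash BS^1)$. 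I would then write down the evident $\mathbb{Q}[h_3]$-linear chain retraction onto it --- sending $\omega_2^k\otimes\widetilde\omega_{2p}\mapsto\delta_{k,1}\,c_{2p+2}$ and $\omega_2^k\otimes\omega_{2p+4}\mapsto\delta_{k,0}\,c_{2p+4}$ --- and check it is a morphism of DG-modules splitting the inclusion. Since $\mathcal{M}_{\mathbb{Q}[h_3]}$ is a contravariant equivalence on the relevant subcategory (Theorem~\ref{RationalParameterizedSpectradgModel}{\bf (ii)}), this split surjection of DG-modules dualizes to the asserted split injection $\iota$ of parametrized spectra.

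For part~{\bf (ii)}, on the ``K-theory detruncation'' side, $p\circ\phi$ lifts along $\tau_6$ precisely when the DG-algebra morphism $(p\circ\phi)^\ast=\phi^\ast\circ p^\ast$ extends over the inclusion $\tau_6^\ast\colon\mathcal{O}\big(\Omega^{\infty-2}_{B^3\mathbb{Q}}(\mathrm{ku}\dslash BS^1)\langle 6\rangle\big)\hookrightarrow\mathcal{O}\big(\Omega^{\infty-2}_{B^3\mathbb{Q}}(\mathrm{ku}\dslash BS^1)\big)$ of Lemma~\ref{TwistedKModel}; as the target is obtained from the source by freely adjoining generators $\omega_8,\,\omega_{10},\,\dotsc$ with $d\,\omega_{2k+2}=h_3\wedge\omega_{2k}$, this is a sequential obstruction problem whose successive obstructions are the classes $[\mu_3\wedge\phi^\ast\omega_6],\ [\mu_3\wedge\theta_8],\ [\mu_3\wedge\theta_{10}],\dotsc$ in $H^\bullet(X,\mathbb{Q})$. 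On the ``stabilized $\mathrm{Ext}/\mathrm{Cyc}$-unit'' side, I would first apply the fiberwise stabilization adjunction $\Sigma^\infty_{+,S^3}\dashv\Omega^\infty_{S^3}$ (Prop.~\ref{AdjunctionStabilization}) to rewrite ``$\mathrm{st}\circ\phi$ lifts through $\Omega^\infty_{S^3}\Sigma^\infty_{+,S^3}(\eta_{S^4\dslash S^1})$'' as ``$\Sigma^\infty_{+,S^3}(\phi)$ factors through $\Sigma^\infty_{+,S^3}(\eta_{S^4\dslash S^1})$ in $\mathrm{Ho}(\mathrm{Spectra}_{S^3})_{\mathbb{Q}}$'', and then pass through $\mathcal{M}_{\mathbb{Q}[h_3]}$ together with Prop.~\ref{FiberwiseSuspensionSpectrumdgModel} (which identifies $\mathcal{M}_{\mathbb{Q}[h_3]}(\Sigma^\infty_{+,S^3}(-))$ with $\mathcal{O}(-)$) to rewrite it once more as: the $\mathbb{Q}[h_3]$-linear DG-module morphism $\phi^\ast$ into $\mathcal{O}(X)$ factors through the dotted morphism $\delta$ of Prop.~\ref{FiberwiseStabilizationOfUnitOnMinimalModels}. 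Because $\delta$ is a minimal relative DG-module inclusion (Prop.~\ref{FiberwiseStabilizationOfUnitOnMinimalModels}{\bf (iii)}), hence a cofibration into the (automatically fibrant) $\mathcal{O}(X)$, this factorization exists iff $\phi^\ast$ admits a strict $\mathbb{Q}[h_3]$-linear DG-extension along $\delta$ --- again a sequential obstruction problem, now indexed by the cokernel cells of $\delta$.

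The heart of the argument is then the identification of the two obstruction sequences, and this is what Prop.~\ref{FiberwiseStabilizationOfUnitOnMinimalModels} is tailored for: item~{\bf (i)} there pins down $\delta$ on the cells $\omega_2,\omega_4,\omega_6$; item~{\bf (ii)} exhibits $\{1\otimes\omega_{2p+8}\}_{p\ge 0}$ as the cokernel cells carrying the genuine obstructions; and the recursion~\eqref{ExtComp}, whose coefficients $a_n,b_n$ are nonzero (hence invertible) rationals, shows that extending the candidate extension $\rho$ across $1\otimes\omega_{2p+8}$ is possible exactly when $\mu_3\wedge\rho(\omega_{2p+6})$ is exact in $\mathcal{O}(X)$ --- precisely the $p$-th obstruction of the detruncation problem --- while the remaining cokernel cells are controlled by the $\mathbb{Q}[h_3,\omega_2]$-linearity of $\delta$ and the relation $\tfrac12\,\omega_4\wedge\omega_4=\omega_2\wedge\omega_6$ and are shown to contribute no further obstructions. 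Carrying out this bookkeeping is the main obstacle; in parallel one checks that the diagram in the statement commutes strictly, so that the summand $\iota$ of part~{\bf (i)} together with its retraction $r$ and the unit $\mathrm{st}$ convert a lift $\widehat\phi$ into a lift $\widehat{\mathrm{st}\circ\phi}$ and back, which upgrades the equivalence from ``a lift exists iff a lift exists'' to the natural bijection of lifts asserted in the statement.
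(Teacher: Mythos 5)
Your part {\bf (i)} is fine and agrees with the paper: the summand is read off from the last row of the table in Prop.~\ref{MinimalDGModels}, and your explicit $\mathbb{Q}[h_3]$-linear retraction is a correct chain-level splitting (the paper simply cites Prop.~\ref{MinimalDGModels} here). Likewise your reduction of part {\bf (ii)} to an algebraic factorization problem via the $\Sigma^\infty_{+,S^3}\dashv\Omega^\infty_{S^3}$-adjunction, Prop.~\ref{FiberwiseSuspensionSpectrumdgModel}, Prop.~\ref{RationalInfiniteLoopSpace}, and the strict-versus-homotopy argument based on item (iii) of Prop.~\ref{FiberwiseStabilizationOfUnitOnMinimalModels} all match the paper's proof.

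Where you diverge from the paper is exactly the step you yourself call ``the main obstacle'', and there your proposal has a genuine gap. You plan to compare two sequential obstruction problems and assert that all cokernel cells of the dotted morphism in \eqref{TheMap} other than $1\otimes\omega_{2p+8}$ ``contribute no further obstructions''. This is not justified, and it is the delicate point: the cokernel of that morphism contains, besides the cells $1\otimes\omega_{2p+8}$ of Prop.~\ref{FiberwiseStabilizationOfUnitOnMinimalModels}~(ii), the entire tower $\omega_2^{k}\otimes\widetilde\omega_{2p}$ with $p\geq 1$ as well as further $\omega_2$-weighted cells $\omega_2^k\otimes\omega_{2m}$; extending a strict DG-$\mathbb{Q}[h_3]$-module map over these imposes additional conditions (for instance, over $\widetilde\omega_2$ one must produce a primitive for $h_3\cdot L(1\otimes\widetilde\omega_0)$ in $\mathcal{O}(X)$), which have no counterpart in the $\tau_6$-detruncation problem, and an appeal to ``$\mathbb{Q}[h_3,\omega_2]$-linearity'' does not dispose of them since the sought lift is only a $\mathbb{Q}[h_3]$-module map. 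The paper does not argue by matching obstruction sequences at all: it passes between the two kinds of lifts directly through the direct-summand structure of part {\bf (i)} --- by item (i) of Prop.~\ref{FiberwiseStabilizationOfUnitOnMinimalModels} a lift of $\mathrm{st}\circ\phi$ restricts along $\iota$ to a lift through $\tau_6$, because $\omega_2,\omega_4,\omega_6$ are sent to the generators of the same name, and conversely a lift through $\tau_6$ is extended by zero on the complementary summand of $\Sigma^\infty_{+,S^3}(S^4\dslash S^1)$ --- and then item (iii) together with the projective model structure on arrow categories converts homotopy-commutative lifts into strict ones. So either adopt the summand/retraction argument, in which case no cell-by-cell extension is ever performed, or, if you keep the obstruction-theoretic route, you must actually carry out the comparison and in particular explain how the $\widetilde\omega_{2p}$-tower and the remaining $\omega_2$-weighted cokernel cells are handled; as written, the heart of the equivalence is asserted rather than proved.
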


\begin{proof}
The rational parametrized spectra appearing on the right-hand side of the diagram have minimal DG-module models as in Prop. \ref{MinimalDGCAlgebraModelForCyclicSpaceOfFourSphere} and Prop. \ref{MinimalDGModels}.
On these minimal models, the vertical map on the right-hand side is represented by the dotted morphism in \eqref{TheMap}.
The fiberwise stabilization map $\mathrm{st}$ is represented in algebra by the right-hand vertical map in \eqref{TheMap}.
The claim {\bf (i)} was already established in Prop. \ref{MinimalDGModels}.
%

To prove {\bf (ii)}, we argue as follows.
Recall that by Prop. \ref{RationalInfiniteLoopSpace}, rational models for fiberwise infinite loop spaces are obtained by taking free relative algebras, which allows us to argue on module generators (equivalently, we argue on $\Sigma^\infty_{+, S^3}\dashv \Omega^\infty_{S^3}$-adjuncts).
By item (\hyperlink{FirstThreeGeneratorsAreMappedCorrectly}{i}) of Prop. \ref{FiberwiseStabilizationOfUnitOnMinimalModels},  a strict lifting of $\mathrm{st}\circ \phi$ through the stabilized $\mathrm{Ext}/\mathrm{Cyc}$-unit implies a strict lifting of $\phi$ through the the zig-zag morphism $\tau_6$---this is since the images of $\omega_2$, $\omega_4$ and $\omega_6$ are already specified in the rational model of $X$.
Since untruncated twisted K-theory includes as a direct summand via $\imath$, the converse is also true: a strict lifting of $\phi$ through the zig-zag $\tau_6$ extends by zero to the complementary direct summand in $\Omega_{S^3}\Sigma^\infty_{+, S^3} (S^4\dslash S^1)$.

%
To complete the proof, we argue that strict lifts 
of $p\circ \phi$ and $\mathrm{st}_{S^3}\circ \phi$ are equivalent to up-to-homotopy lifts.
Ultimately, this follows from
  item (\hyperlink{RelativeCellInclusion}{{\bf iii}}) in Prop. \ref{FiberwiseStabilizationOfUnitOnMinimalModels}. by using a standard argument in homotopy theory.
  For completeness, let us recall how this works: homotopy commutative squares in a homotopical category
  \begin{equation}
  \label{HomotopyCommSquare}
    \raisebox{20pt}{
    \xymatrix@C=5em@R=1.5em{
      X_1
      \ar[r]^-{ \hat f }_>{\ }="s"
      \ar[d]_{p}
      &
      Y_1
      \ar[d]^{q}
      \\
      X_2
      \ar[r]_-{f}^<{\ }="t"
      &
      Y_2
      \ar@{=>} "s"; "t"
    }
    }
  \end{equation}
  are equivalent to morphisms in the homotopy category of the corresponding arrow category.
  When working with a cofibrantly generated model category $\mathcal{C}$, the homotopy theory of the arrow category can be presented by the projective model structure on functors $\mathrm{Fun}(\Delta[1], \mathcal{C})$ (see e.g. \cite[Sec. A.2.8]{Lurie06}).
  General model category theory then implies that \emph{strictly} commutative squares are equivalent to homotopy commutative squares \eqref{HomotopyCommSquare} as soon as
  \begin{itemize}
  \item the source morphism $p\colon X_1 \to X_2$ is projectively cofibrant; and
  \item the target morphism $q\colon Y_1 \to Y_2$ is projectively fibrant.
  \end{itemize}
In terms of maps of DG-$\mathbb{Q}[h_3]$-modules, the fibrancy condition is always satisfied, whereas a map of DG-modules is projectively cofibrant if and only if it is a cofibration between cofibrant objects.
In the case at hand, observe that minimal DG-modules are cofibrant, and the dotted morphism in \eqref{TheMap} is a cofibration by Prop. \ref{FiberwiseStabilizationOfUnitOnMinimalModels} item (\hyperlink{RelativeCellInclusion}{iii}), as is the 6-truncation map of Ex. \ref{CyclificationOf4SphereReceives6TruncationOfTwistedK}.
Working with our rational models, this means that up-to-homotopy lifts of $p\circ \phi$  through $\tau_6$ are equivalent to strict lifts, and likewise for lifts of $\mathrm{st}\circ \phi$ through the stabilization of the adjunction unit.
This completes the proof.
\end{proof}

%
%


\section{Gauge enhancement of M-branes}
\label{TheMechanism}
In this section, we explain how Theorem \ref{TwistedKTheoryInsideFiberwiseStabilizationOfATypeOrbispaceOf4Sphere} provides a mechanism implementing gauge enhancement of fundamental M-branes.
Firstly, in Sec. \ref{With}, we explain the construction without local supersymmetry taken into account.
We present a solution to 
 \hyperlink{FirstRational}{\bf Open Problem, rational version 1} 
within the framework of rational homotopy theory that explains the core of the gauge enhancement mechanism via double dimensional reduction of flux forms as in \cite{MathaiSati04}. 
This focuses the problem on the enhancement from 6-truncated to untruncated twisted de Rham cohomology but falls short of exhibiting the existence and uniqueness of cocycles and their lifts,
which requires local supersymmetry.
Then, in Sec. \ref{TheAppearanceFromMTheoryOfTheFundamentalD6AndD8}, we pass to rational \emph{super} homotopy theory to explain the
 gauge enhancement of super M-branes from first principles.
 To be self-contained, we first recall how
super-spacetimes and super $p$-branes propagating in them are incarnated as super-cocycles in super-homotopy theory. We then explain how Theorem \ref{TwistedKTheoryInsideFiberwiseStabilizationOfATypeOrbispaceOf4Sphere} implements gauge enhancement of fundamental M-branes under double dimensional reduction to type IIA D-branes in the rational approximation.

\subsection{The mechanism without supersymmetry}
\label{With}

We first discuss the aspects of the gauge enhancement mechanism that are already apparent in rational homotopy, disregarding for the moment the crucial interaction with local supersymmetry. 
Being that local supersymmetry is a necessary ingredient for the very existence and classification of fundamental $p$-branes and their charges, we will incorporate it into our discussion in Sec. \ref{TheAppearanceFromMTheoryOfTheFundamentalD6AndD8}  below.
For the moment, however, we will simply assume that a collection of flux-like forms is provided to us on some spacetime without commenting on their origin in string or M-theory.

\medskip
We take as input data a smooth manifold $Y$ (to be thought of as our 11-dimensional background spacetime),
equipped with a map to the 4-sphere

\begin{equation}
  \label{Input}
  \xymatrix{
    Y
    \ar[rr]^-{ {\color{blue}(G_4, G_7)} }
    &&
    S^4\,,
  }
\end{equation}
to be thought of as measuring M-brane charge in $Y$.
In the approximation of rational homotopy theory (Def. \ref{ClassicalHomotopyCategories}), the datum of such a map \eqref{Input} is encoded by 
 a differential 4-form $G_4$ and a differential 7-form $G_7$ on $Y$ satisfying the relations
\begin{equation}
  \label{SugraTopologicalSector}
  (G_4,G_7)
  \;\sim \;
  \left\{
  \begin{aligned}
    d G_4 & = 0
    \\
    d G_7 & = - \tfrac{1}{2} G_4 \wedge G_4
    \,.
  \end{aligned}
  \right.
\end{equation}
This follows from the equivalence Prop. \ref{SullivanEquivalence}, using the minimal
 DG-algebra model for the 4-sphere from Ex. \ref{MinimalDgcAlgebraModelFor4Sphere}.
The relations \eqref{SugraTopologicalSector} have the form of the ``topological sector'' of the equations
of motion in 11-dimensional supergravity for the \emph{C-field strength} $G_4$ and its dual flux $G_7$,
 which we have identified with a rational homotopy map to the 4-sphere (this was first highlighted in \cite[Sec. 2.5]{S-top}).

\medskip
Observe that the contribution of the \lq\lq dual flux'' $G_7$ can be disentangled from that of the
 \lq\lq C-field strength'' $G_4$ by taking the homotopy pullback along the quaternionic Hopf fibration
 $H_{\mathbb{H}}\colon S^7 \to S^4$. This is an ${\rm SU}(2)$-principal bundle over $S^4$, hence classified
 by a map $\phi_\mathbb{H}\colon S^4 \to B {\rm SU}(2)$ to the classifying space, as in \eqref{GBundleByPullback}.
The cohomotopy cocycle \eqref{Input} decomposes along the quaternionic Hopf fibration as follows:

\begin{equation}
  \label{Decompo}
  \raisebox{50pt}{
  \xymatrix@R=1.2em{
    \widehat Y
    \ar[dd]
    \ar[rr]^-{ \color{blue} G_7}
    \ar@{}[ddrr]|{\binom{\rm homotopy}{\rm pullback}}
    &&
    S^7
    \ar[dd]^-{ H_{\mathbb{H}} }
    \\
    \\
    Y
      \ar[rr]|-{ \;(G_4, G_7)\; }
      \ar[ddr]_-{ \color{blue} G_4 }
    && S^4
    \ar[d]^-{ \phi_{\mathbb{H}} }
    \\
    && B {\rm SU}(2)
    \ar[dl]^{ \mathrm{c_2} }
    \\
    & { B^4 \mathbb{Q}, }
  }}
\end{equation}
where $\mathrm{c}_2 \colon B{\rm SU}(2)\to B^4 \mathbb{Q}$ is the second Chern class of the universal ${\rm SU}(2)$-bundle.
We are interested in the \emph{double dimensional reduction} of these data, assuming that $Y$ is a circle fibration as as in \eqref{DD}.
This means that we assume $Y$ sits in a homotopy fiber sequence
\begin{equation}
  \label{YFib}
  \raisebox{20pt}{
  \xymatrix@R=1.3em{
    *+[r]{ \color{blue} Y \simeq_{\mathrm{whe}} \mathrm{Ext}(X) }
    \ar[d]_{ \pi }
    \\
    X := Y \dslash S^1
    \ar[dr]
    \\
    & B S^1
    \,,
  }}
\end{equation}
hence that $Y$ is an \emph{$S^1$-extension} of $X$ as in Sec. \ref{TheAdjunction}. 
Consequently, our cohomotopy cocycle as in \eqref{Decompo} is
 now exhibited equivalently as a map out of a space in the image of the $\mathrm{Ext}$-functor \eqref{GExtF}:
$$
  \xymatrix@R=1.4em{
    {\color{blue}\mathrm{Ext}( X ) }
    \ar[rr]^-{ (G_4, G_7) }
    \ar[dr]_-{G_4}
    &&
    S^4
    \ar[dl]^{ c_2(\phi_\mathbb{H})}
    \\
    & B^4 \mathbb{Q}\,.
  }
$$
By the adjunction of Theorem \ref{GCycExtAdjunction}, 
maps out of the image of 
$\mathrm{Ext}$ are equivalent to maps into the image of the right adjoint 
 $\mathrm{Cyc}$, where
this 
equivalence is exhibited by
first applying 
$\mathrm{Cyc}$ and then precomposing with the adjunction unit
map $\eta_X\colon X \to  \mathrm{Cyc}\,\mathrm{Ext}(X)$.
(e.g. \cite[Sec. 3]{Borceux94}).
In summary, this means that 
the cohomotopy cocycle
on $Y$ (with rational components $(G_4, G_7)$) is naturally identified with a $\mathrm{Cyc}(S^4)$-valued
cocycle on $X$:

\begin{equation}
  \label{RatDDRed}
  \raisebox{30pt}{
  \xymatrix@C=20pt@R=1.5em{
    {\color{blue}X}
    \ar@[white]@/^1.25pc/[rrrr]^-{ \mbox{ \tiny \begin{tabular}{c} double dimensional reduction \\ of $(G_4, G_7)$ \end{tabular}} }
    \ar@[blue]@/^1pc/[drrrr]|-{\,{\color{blue}\widetilde{(G_4, G_7)}}\, }
    \ar[d]_{\color{blue}\eta_X}
    &&&&
    \\
    {\color{blue}\mathrm{Cyc}}\,\mathrm{Ext}(X)
      \ar[rrrr]|{\; {\color{blue}\mathrm{Cyc}}(G_4,G_7)\; }
      \ar[drr]
    &&&& {\color{blue}\mathrm{Cyc}}(S^4)\;,
    \ar[dll]
    \\
    && {\color{blue}\mathrm{Cyc}}(B^4 \mathbb{Q})
    \ar[d]
    \\
    &&
    B^3 \mathbb{Q}
  }}
\end{equation}
where the map $\mathrm{Cyc}(B^4 \mathbb{Q})\to B^3 \mathbb{Q}$ is obtained using the infinite loop space structure of $B^4\mathbb{Q}$; in terms of minimal DG-algebra models it is simply the map
\[
    \xymatrix@R=.5pt@C=4em{
      \mathrm{Cyc}(B^4 \mathbb{Q})
      \ar[rr]
      &&
      B^3 \mathbb{Q}\;.
      \\
      \omega_2\phantom{\,.} &&
      \\
      \omega_3\phantom{\,.} && \omega_3 \ar@{|->}[ll]
      \\
      \omega_4
      &&
    }
\]
Using the minimal DG-algebra model for $\mathrm{Cyc}(S^4)$ (Ex. \ref{MinimalDGCAlgebraModelForCyclicSpaceOfFourSphere}),
we have that the $\mathrm{Ext}/\mathrm{Cyc}$-adjunct of the cohomotopy cocycle $(G_4, G_7)$
is rationally determined by a collection of differential forms on $X$ satisfying the following relations:

\begin{equation}
  \label{DDRed}
  \widetilde{(G_4,G_7)}
  \;\;=\;\;
  \left\{
  \begin{aligned}
    d H_3 & = 0\;, \phantom{AAA}\phantom{BB}\phantom{CC}  d H_7  = F_2 \wedge F_6 - \tfrac{1}{2} F_4 \wedge F_4\;,
    \\
    \\
    d F_2 & = 0\;,
    \\
    d F_4 & = H_3 \wedge F_2\;,
    \\
    d F_6 & = H_3 \wedge F_4
    \,.
  \end{aligned}
  \right.
\end{equation}
These differential forms can also be seen as arising from the data $(G_4, G_7)$ via the Gysin sequence
\cite[Sec 4.2]{MathaiSati04}.
Interpreting $G_4$ and $G_7$
as M-brane flux forms in 11-dimensional supergravity, corresponding to the M2-brane and the M5-brane respectively,
then the forms in \eqref{DDRed} are naturally interpreted as flux forms of brane species in 10-dimensional type IIA  supergravity, together with their
Bianchi identities.
We notice that not all of the expected brane flux forms appear: we have
the
 $\mathrm{NS1}$-brane flux $H_3$, the RR-flux forms $F_{(2p+2\leq 6)}$ for the  $\mathrm{D}(2p \leq 4)$-branes,
 as well  the $\mathrm{NS5}$-brane flux $H_7$,\footnote{
The identity $d H_7  = F_2 \wedge F_6 - \tfrac{1}{2} F_4 \wedge F_4$
  for the type IIA $\mathrm{NS5}$-brane flux does not hold after fiberwise stabilization
  in \eqref{DDRedEnhc}---indeed, the flux form $H_7$ is part of the obstruction to completing the zig-zag truncation map $\tau_6$ in Ex. \ref{CyclificationOf4SphereReceives6TruncationOfTwistedK} to an actual homomorphism.} 
but the flux forms for the D6 and D8 branes are missing.

\medskip
So far we have merely recapitulated the mechanism of double dimensional reduction formalized via the
$\mathrm{Ext}/\mathrm{Cyc}$-adjunction in rational homotopy theory \cite{FSS13,FSS16b}.
We are now interested in how 
 the collection of flux forms in \eqref{DDRed} may be naturally enhanced so as to contain 
 flux forms $F_8$ and $F_{10}$. 
As discussed in Rem. \ref{TheATypeQuotientFromSpacetime}, operations on spacetime $Y$ should be accompanied
by corresponding operations on the 4-sphere brane charge coefficient. 
This means that we should consider the A-type circle action on the 4-sphere:
\begin{equation}
  \label{SphereExt}
  \raisebox{35pt}{\xymatrix@R=1.4em{
    &
    *+[l]{\color{blue} \mathrm{Ext}(S^4 \dslash S^1) \simeq_{\mathrm{whe}} S^4}
    \ar[d]
    \\
    & S^4 \dslash S^1 \ar[dl]
    \\
    B S^1
  }}
\end{equation}
Diagram \eqref{RatDDRed} can now be cast in a more symmetric form:
\begin{equation}
  \label{SymmetricAtLast}
  \raisebox{35pt}{
  \xymatrix@C=25pt@R=1.5em{
    {\color{blue}X}
    \ar[d]_-{\color{blue}\eta_{X}}
    \ar@[blue]@/^1pc/[drrrr]|-{\,{\color{blue}\widetilde{(G_4, G_7)}}\, }
    && &&
    {\color{blue}  S^4 \dslash S^1}
    \ar[d]^{\color{blue}\eta_{{}_{S^4 \dslash S^1}}}
    \\
    {\color{blue}\mathrm{Cyc}}\,\mathrm{Ext}(X)
      \ar[rrrr]|{\; {\color{blue}\mathrm{Cyc}}(G_4,G_7)\; }
      \ar[drr]
    &&&& {\color{blue}\mathrm{Cyc}}( \mathrm{Ext}(S^4 \dslash S^1) )\;.
    \ar[dll]
    \\
    && {\color{blue}\mathrm{Cyc}}(B^4 \mathbb{Q})
    \ar[d]
    \\
    &&
    B^3 \mathbb{Q}
  }
  }
\end{equation}
Diagram \eqref{SymmetricAtLast} poses the natural question of whether we can find a lift through the $\mathrm{Ext}/\mathrm{Cyc}$-unit:
%
\begin{equation}
  \label{IsIt}
  \raisebox{35pt}{
  \xymatrix@C=25pt@R=1.5em{
    X
    \ar@[blue]@{-->}@/^1.5pc/[rrrr]^-{ \mbox{ \tiny \begin{tabular}{c} enhanced \\ double dimensional reduction \\ of $(G_4,G_7)$ {\color{blue} ??}  \end{tabular} }   }
    \ar[d]_{\eta_X}
    && &&
    {S^4 \dslash S^1}
    \ar[d]^{\eta_{{}_{S^4 \dslash S^1}}}
    \\
    { \mathrm{Cyc}}\,\mathrm{Ext}(X)
      \ar[rrrr]|-{\;\mathrm{Cyc}(G_4,G_7)\; }
      \ar[drr]_{H_3}
    &&&& { \mathrm{Cyc}}( \mathrm{Ext}(S^4 \dslash S^1) )\;.
    \ar[dll]
    \\
    && B^3 \mathbb{Q}\,.
  }
  }
\end{equation}
%
%
In general, there are strong obstructions to such a lift: Prop. \ref{MinimalDGCAlgebraModelForATypeOrbispace}
implies, via \eqref{AdjunctionUnitOnRationalATypeOrbispace}, that for such a lift to exist,
the cocycle data in \eqref{DDRed} needs to satisfy the conditions that $F_6 = 0$ and $H_3 \wedge F_2 = 0$
(so that $d F_4 = 0$).
Once we take local supersymmetry into account, we will see that these conditions are too restrictive and that a direct lift as in \eqref{IsIt} simply does not exist.

\begin{remark}[Goodwillie calculus]
\label{GoodwillieCalculus}
At this point it behooves us to highlight that similarly to string theory and quantum field theory,  homotopy theory also has a concept of \emph{perturbative approximation}.
As we have touched upon at varied points, homotopy theory is an immensely rich and computationally demanding area of mathematics (see \cite{Ravenel03, HHR09} for good examples of this).
However, in striking analogy to Taylor series expansions in differential calculus, mapping spaces between homotopy types can often be approximated by a sequence of increasingly accurate approximations called the \emph{Goodwillie--Taylor tower} (see \cite{Kuhn04} and \cite[Ch. 6]{Lurie09}).
The first-order approximation in this \emph{Goodwillie perturbation theory} is provided by stabilization
adjunction between spaces and spectra, as discussed in Sec. \ref{RationalParameterizedStableHomotopyTheory}.
\end{remark}
In view of Rem. \ref{GoodwillieCalculus}, as we encounter strong obstructions in  homotopy theory as in \eqref{IsIt},
we are led to ask whether the obstruction persists ``perturbatively'', hence stage-wise in the Goodwillie--Taylor approximation. To first order in Goodwillie perturbation theory, this means that we should ask whether enhanced double
 dimensional reduction exists after \emph{fiberwise stabilization} (Prop. \ref{AdjunctionStabilization}) over $S^3 \simeq_\mathbb{Q} B^3\mathbb{Q}$:

\begin{equation}
  \label{Enhc}
  \raisebox{40pt}{
  \xymatrix@R=1.6em@C=4em{
    X
    \ar@[blue]@{-->}@/^2pc/[rrrr]^-{
      \mbox{
        \tiny
        \begin{tabular}{c}
          {\color{blue} perturbatively} enhanced
          \\
          double dimensional reduction
          \\
          of $(G_4,G_7)$  {\color{blue} ?? }
        \end{tabular}
      }
    }
    \ar[d]_{\eta_{{X}}}
    &&
    &&
    {\color{blue} \Omega^\infty_{S^3}\Sigma^\infty_{+,S^3}} \left( S^4 \dslash S^1 \right)
    \ar@[blue][d]^{ {\color{blue} \Omega^\infty_{S^3}\Sigma^\infty_{+,S^3}} \big( \eta_{{S^4 \dslash S^1}} \hspace{-.7mm} \big) }
    \\
    { \mathrm{Cyc}}\,\mathrm{Ext}(X)
      \ar[rr]^-{\mathrm{Cyc}(G_4,G_7)\; }
      \ar[drr]_{H_3}
    && \mathrm{Cyc}( S^4 )
    \ar[d]
    \ar@[blue][rr]^-{ \color{blue} \mathrm{st}}
    &&
    {\color{blue} \Omega^\infty_{S^3}\Sigma^\infty_{+,S^3}} \mathrm{Cyc}(S^4)
    \ar[dll]
    \\
    && B^3 \mathbb{Q}
  }
  }
\end{equation}
%
But this is precisely the question that is addressed by Theorem \ref{TwistedKTheoryInsideFiberwiseStabilizationOfATypeOrbispaceOf4Sphere}: the obstruction to a dotted morphsim as in \eqref{Enhc} is precisely the lift to untruncated twisted K-theory as in the diagram:
\begin{equation}
  \label{FactorEnhc}
  \hspace{-10mm}
  \mathclap{
  \raisebox{40pt}{
  \xymatrix@R=1.8em@C=40pt{
    X
    \ar@{-->}@/^2.4pc/@[blue][rrrr]^-{
      \mbox{
        \tiny
        \begin{tabular}{c}
           perturbatively enhanced
          \\
          double dimensional reduction
          \\
          of $(G_4,G_7)$  {\color{blue} ! }
        \end{tabular}
      }
    }
    \ar@{-->}@[blue][rr]^{ \color{blue} \widehat{(G_4,G_7)} }
    \ar[drr]|-{\;{\widetilde{(G_4, G_7)}}\; }
    \ar[d]_{\eta_{{X}}}
    &&
    {\color{blue} \Omega^{\infty-2}_{B^3 \mathbb{Q}}\left( \mathrm{ku} \dslash B S^1 \right) \,}
    \ar@[blue]@{..}[d]|-{\color{blue} \tau_6}
    \ar@{^{(}->}[rr]^-{\iota}
    &&
    \Omega^\infty_{S^3}\Sigma^\infty_{+,S^3} {S^4 \dslash S^1}
    \ar[d]^{ \Omega^\infty_{S^3}\Sigma^\infty_{+,S^3} \big( \eta_{{}_{S^4 \dslash S^1}}  \hspace{-.7mm}\big) }
    \\
    { \mathrm{Cyc}}\,\mathrm{Ext}(X)
      \ar[rr]|-{\;\mathrm{Cyc}(G_4,G_7)\; }
      \ar[drr]_{H_3}
    && \mathrm{Cyc}( S^4 )
    \ar[d]
    \ar[rr]^-{\mathrm{st}}
    &&
    \Omega^\infty_{S^3}\Sigma^\infty_{+,S^3}\mathrm{Cyc}(S^4)
    \ar[dll]
    \\
    && B^3 \mathbb{Q}\,.
  }
  }}
\end{equation}
%
That is, exhibiting a \emph{perturbatively} enhanced double dimensional reduction cocycle is equivalent to specifying an extension of the cocycle data \eqref{DDRed} to a cocycle of the form
\begin{equation}
  \label{DDRedEnhc}
  \widehat{(G_4,G_7)}
  \;\;=\;\;
  \left\{
  \begin{aligned}
    d H_3 & = 0
    \\
    \\
    d F_2 & = 0
    \\
    d F_4 & = H_3 \wedge F_2
    \\
    d F_6 & = H_3 \wedge F_4
    \\
    {\color{blue} d F_8 } & {\color{blue} = H_3 \wedge F_6 }
    \\
    {\color{blue} d F_{10} } & {\color{blue} = H_3 \wedge F_8 }
    \\
    &\;{\color{blue} \vdots }
  \end{aligned}
  \right.
\end{equation}
According to the discussion in Sec. \ref{Before}, such a cocycle exhibits the full gauge ehancement mechanism in rational homotopy theory.
That is,

\begin{quotation}
\noindent A {\bf solution} to \hyperlink{FirstRational}{\bf Open Problem, rational version 1} (p.~
\pageref{FirstRationalProblem}) is obtained as follows:
\begin{itemize}
  \item Double dimensional reduction of the M-theory flux data $(G_4, G_7)$ of \eqref{SugraTopologicalSector} is given by the $\mathrm{Ext}/\mathrm{Cyc}$-adjunct along the M-theory spacetime extension \eqref{YFib}; and

  \item Its \emph{perturbative gauge enhancement}, making all RR flux forms appear, is exhibited by lifting against the fiberwise stabilization of the $\mathrm{Ext}/\mathrm{Cyc}$-unit on the A-type orbispace of the 4-sphere \eqref{FactorEnhc}\eqref{DDRedEnhc}.
\end{itemize}
\end{quotation}


\begin{remark}[Copies of twisted K-theory]
Let us briefly pause to highlight a curious aspect of our result on gauge enhancement. By Prop. \ref{MinimalDGModels},
 there are in fact \emph{two} direct summands of rationalized twisted K-theory inside the rational fiberwise stabilization
 of the A-type orbispace. Theorem \ref{TwistedKTheoryInsideFiberwiseStabilizationOfATypeOrbispaceOf4Sphere}
assigns a special role to the shifted copy $\Sigma^2 \mathrm{ku} \dslash B S^1$, since it is this copy that obstructs the
lifting problem. The 2-fold suspension here reflects the fact that RR-flux forms in type IIA string theory ordinarily start
with D0-flux $F_2$ in degree 2. But there is also a rational copy of \emph{un}shifted twisted K-theory; a cocycle landing
in this this copy corresponds to a twisted de Rham cocycle whose lowest component is a 0-form instead of a 2-form.
The subtle issue of considering this possibility in the context of the K-theory classification of RR-flux is discussed in
\cite{MoS, MathaiSati04, BV} and revisited systematically, and in a refined form, in \cite{GS19}.
%
\end{remark}

\medskip

\subsection{The mechanism with local supersymmetry}
\label{TheAppearanceFromMTheoryOfTheFundamentalD6AndD8}
\label{FundamentalpBranes}

In this section, we finally present our solution to the problem of gauge enhancement for super M-branes in its formulation as \hyperlink{OpenRational}{\bf Open Problem, rational version 2}
(p.~\pageref{OpenRationalPage}).
We proceed as per the solution to
\hyperlink{FirstRational}{\bf Open Problem, rational version 1} detailed in the previous section, 
with the important caveat that we 
now work in the proper context for super $p$-branes, namely locally supersymmetric supergeometry.
Before presenting our solution, we briefly review some of the necessary background material 
(for a detailed introduction with an emphasis on the aspects of relevance here, see \cite{Schreiber16}).

\medskip
While there is no known mechanism in string theory that would enforce -- or even prefer -- \emph{global} supersymmetry,
\footnote{Models of particle physics obtained from dimensional reductions of M-theory on singular manifolds of $G_2$-holonomy (see \cite{Kane17})
are among the globally supersymmetric extensions of the Standard Model of particle physics that are, so far, still consistent with
experimental constraints \cite{BGK18}.
If and when supersymmetric extensions of the Standard Model are ruled out, 
then this will also rule out dimensional reductions of
M-theory on singular fiber manifolds of $G_2$-holonomy
as realistic models for particle physics.
However, this particular type of dimensional reduction is in no way 
dictated by the theory, and are certainly not generic amongst all possibilities, 
but were motivated by
the expectation of global supersymmetry in the first place. What \emph{is} dictated by the theory is \emph{local} supersymmetry, which is already present as soon as fermions are.}
what is predicted by perturbative string theory, and what is at the heart of its non-perturbative description
as in \cite{ADE}, is \emph{local} supersymmetry, in particular \emph{supergravity}.
An elegant  way to understand supergravity is via \emph{Cartan geometry}, a powerful generalization of
Riemannian geometry (see \cite[Ch. 1]{CapSlovak09} for an introduction).
In Cartan geometry, one fixes a local model space $V$ and a subgroup $G \subset \mathrm{Aut}(V)$ of its
automorphism group, and then studies spaces that look \emph{locally}, namely on the infinitesimal
neighborhood of every point,
like $V$ equipped with a reduction of the structure group to $G$:
$$
  \mbox{
    \begin{tabular}{|c|c|}
      \hline
      \begin{tabular}{c}
        Local model for
        \\
        Cartan geometry
      \end{tabular}
      &
      \raisebox{-9pt}{
        \xymatrix{
          V \ar@(ul,ur)[]^{ G }
        }
      }
      \\
      \hline
    \end{tabular}
  }
$$
In the physics literature, this is known as the method of \emph{local gauging via moving frames}
(following \cite{Cartan1923}),
while mathematically this is the study of \emph{(torsion-free) $G$-structures} \cite{Guillemin65}.
If one chooses the local model to be Minkowski spacetime
\begin{equation}
  \label{MinkowskiSpacetime}
  \mbox{
    \begin{tabular}{|c|c|}
      \hline
      \begin{tabular}{c}
        Local model for off-shell
        \\
        gravity
      \end{tabular}
      &
      \raisebox{-9pt}{
        \xymatrix{ \mathbb{R}^{d,1} \ar@(ul,ur)[]^{{\rm SO}(d,1) } }
      }
      \\
      \hline
    \end{tabular}
  }
\end{equation}
then the resulting Cartan geometries are precisely the pseudo-Riemannian manifolds modelling configurations in general relativity.
However, not all of these configurations are physically admissible (they are not all \lq\lq on-shell'') and
we need to impose the Einstein field equations to pick out the physically admissible geometries. 
In this Cartan-geometric formulation, the passage to supergravity is immediate:
taking the local model space to be instead \emph{super} Minkowski spacetime
\begin{equation}
  \label{SuperMinkowskiSpacetime}
  \mbox{
    \begin{tabular}{|c|c|}
      \hline
      \begin{tabular}{c}
        Local model for off-shell
        \\
        supergravity
      \end{tabular}
      &
      \raisebox{-9pt}{
        \xymatrix{ \mathbb{R}^{d,1 \vert \mathbf{N}} \ar@(ul,ur)[]^{ {\rm Spin}(d,1)}  }
      }
      \\
      \hline
    \end{tabular}
  }
\end{equation}
then the resulting Cartan geometries are supergravity super-spacetimes. This is spelled out explicitly in \cite{Lott90, EE},
and is implicit in much of the physics literature on supergravity, following \cite{CDF}.

\medskip
In the spirit of Kleinian geometry \cite{Klein1872}, the group $G$ acting on the local model space $V$ 
reflects extra structure on $V$. 
For plain Minkowski spacetime \eqref{MinkowskiSpacetime} with its Lorentz group action, this extra structure is
the Minkowski metric of special relativity. For the super-Minkowski spacetime \eqref{SuperMinkowskiSpacetime}
with its Lorentzian Spin group action, this
structure is the translational \emph{supersymmetry} super Lie algebra structure on $\mathbb{R}^{d,1\vert \mathbf{N}}$.
Recall that the only non-trivial component  of the super Lie bracket is the odd-odd spinor-to-vector pairing:
$$
  \xymatrix@R=-2pt{
    \mathbf{N} \otimes \mathbf{N}
    \ar[rr]^{ \overline{(-)}\Gamma(-) }
    &&
    \mathbb{R}^{d,1}
    \\
    \psi_1, \psi_2
    \ar@{|->}[rr]
    &&
    \left(\overline{\psi}_1 \Gamma^a \psi_2\right)_{a=0}^d\,.
  }
$$
The Chevalley--Eilenberg algebra of this super-Minkowski super Lie algebra is a super DG-algebra
\begin{equation}
  \label{CEAlgebraSuperMinkowski}
  \mathrm{CE}\big(
    \mathbb{R}^{d,1\vert \mathbf{N}}
  \big)
  \;:=\;
  \Bigg(
  \mathbb{R}\big[\!\!
    \underset{ \mathrm{deg} = (1,\mathrm{even}) }{\underbrace{(e^a)_{a =0}^d}} \!\!, \;
    \underset{  (1, \mathrm{odd}) }{\underbrace{ ( \psi^\alpha )_{\alpha = 1}^{N} }} \;
  \big]\Big/
  {\small
  \left(
    \begin{aligned}
      d e^a & = \overline{\psi}\Gamma^a \psi
      \\[-1mm]
      d \psi^\alpha & = 0
    \end{aligned}
  \right)}
  \Bigg)
  \;\;
  \in
  \mathrm{DGCSuperAlg}
  \,,
\end{equation}
like those used in rational homotopy theory in Def. \ref{dgAlgebrasAnddgModules}, but equipped with additional $\mathbb{Z}_2$-grading.
In view of the equivalence of Prop. \ref{SullivanEquivalence} we may think of \emph{super} DG-algebras of finite type
as the formal duals of rational \emph{superspaces} (see \cite[Sec. 3.2]{ADE} for details in this context):
\begin{equation}
  \label{RationalSuperSpaces}
  \mathrm{Ho}\left( \mathrm{SuperSpaces}_{\mathbb{R}, \mathrm{cn}, \mathrm{ft}, \mathrm{nil}} \right)
  \;:=\;
  \mathrm{Ho}
  \left(
    \mathrm{DGCSuperAlg}_{\mathrm{cn}, \mathrm{ft}}^{\mathrm{op}}
  \right).
\end{equation}
All of ordinary (henceforth \lq\lq bosonic'') rational
homotopy theory embeds into rational super homotopy theory by
\begin{itemize}
  \item replacing the ground field by $\mathbb{R}$ via $A \mapsto A\otimes_\mathbb{Q}\mathbb{R};
  $\footnote{From the point of view of homotopy theory, there is little difference between working over $\mathbb{Q}$ or over $\mathbb{R}$.}
  and
  \item regarding ordinary DG-algebras are having an additional $\mathbb{Z}_2$-grading in which all elements have trivial degree.
\end{itemize}
For instance, a morphism of the form
\vspace{-3mm}
$$
  \xymatrix{
    \mathbb{R}^{d,1\vert \mathbf{N}}
    \ar@(ul,ur)[]^{{\rm Spin}(d,1)}
    \ar[rr]^-{\mu_p}
    &&
    {B}^{p+1} S^1
  }
$$
in rational super homotopy theory is, equivalently, a ${\rm Spin}(d,1)$-invariant super Lie algebra cocycle of $\mathbb{R}^{p,1|\mathbf{N}}$ in degree $p+2$. 
The requirement of ${\rm Spin}(d,1)$-invariance forces such a cocycle to be a product of elements of the form
$$
  \mu_{p}
  \;:=\;
  \left(
    \overline{\psi} \Gamma_{a_1 \cdots a_p} \psi
  \right)
  \wedge e^{a_1} \wedge \cdots \wedge e^{a_p}
  \;\;\in
  \mathrm{CE}(
    \mathbb{R}^{d,1\vert \mathbf{N}}
  )
  \,.
$$
Remarkably, when the local model space is taken to be the
 $D = 11$, $\mathcal{N} = 1$ super-Minkowski spacetime
$$
  \mbox{
    \begin{tabular}{|c|c|}
      \hline
      \begin{tabular}{c}
        Local model for \emph{on-shell}
        \\
        $D = 11$, $\mathcal{N} = 1$ supergravity
      \end{tabular}
      &
      \raisebox{-9pt}{
        \xymatrix{ \mathbb{R}^{10,1 \vert \mathbf{32}} \ar@(ul,ur)[]^{{\rm Spin}(10, 1)}  }
      }
      \\
      \hline
    \end{tabular}
  }
$$
the condition of (super-)torsion freeness is already equivalent to the supergravity equations of motion \cite{CandielloLechner94, Howe97, FOS}.
That is, (super-)torsion-free $D = 11$, $\mathcal{N} =1$ (super-)Cartan geometries are precisely on-shell configurations of supergravity.
In fact, more is true: the supergravity equations of motion imply that the bifermionic components of the supergravity $C$-field curvature 4-form $G_4$ and its Hodge dual 7-form $G_7$ are covariantly constant on each super tangent space $\mathbb{R}^{10,1\vert \mathbf{32}}$ and constrained there to be of the form
\begin{equation}
  \label{M2M5Cochains}
  (G_4,G_7)_{\mathrm{fermionic}}
  =
  \left(
    \begin{aligned}
     \mu_{M2} &:= \tfrac{i}{2} \left(\overline{\Psi} \Gamma_{a_1 a_2} \Psi\right) \wedge E^{a_1} \wedge E^{a_2}\,,
      \\
      \mu_{M5}&:= \tfrac{1}{5!}
    \left(\overline{\Psi} \Gamma_{a_1 \cdots a_5} \Psi \right) \wedge E^{a_1} \wedge \cdots \wedge E^{a_5}
    \end{aligned}
  \right).
\end{equation}
In this expression, $(E^a, \Psi^\alpha)$ is the \emph{super-vielbein} field on super-spacetime, hence the supergeometric
analog of the vielbein field determining a Cartan geometry.  Torsion-freeness implies that on the infinitesimal neighborhood
of each point of super-spacetime, the super-vielbein reproduces the super left-invariant 1-form generators
$(e^a, \psi^\alpha)$ of $\mathrm{CE}\big( \mathbb{R}^{10,1\vert \mathbf{32}} \big)$ as in \eqref{CEAlgebraSuperMinkowski}.
Due to the \emph{Fierz identities} of Spin-representation theory \cite{DF}, the super-forms of \eqref{M2M5Cochains} satisfy the relations
\begin{equation}
  \label{M2M5Coc}
  \mu_{{}_{M2/M5}}
  \;=\;
  \left\{
  \begin{aligned}
    d \mu_{{}_{M2}} & = 0
    \\
    d \mu_{{}_{M5}} & = -\tfrac{1}{2} \mu_{{}_{M2}} \wedge \mu_{{}_{M2}}
  \end{aligned}
  \right.
\end{equation}
on each super tangent space $\mathbb{R}^{10,1\vert \mathbf{32}}$.
Moreover, this is
precisely the super tangent space-wise condition that witnesses propagation of fundamental super M2-branes \cite[(15)]{BST2} and
the fundamental super M5-branes \cite[(5)]{LT}, \cite[(6)]{BLNPST97} in the background super-spacetime (see \cite[Sec. 2.1]{ADE} for further background on fundamental super $p$-branes).

\medskip
After comparing the $\mathrm{M2}/\mathrm{M5}$-brane Fierz identity \eqref{M2M5Coc}
with the minimal DG-algebra model for the 4-sphere (Ex. \ref{MinimalDgcAlgebraModelFor4Sphere},
 also \eqref{SugraTopologicalSector}), we may summarize the state of affairs
using the language of (rational) super homotopy theory 
 as follows:
\begin{quote}{
11-dimensional super-Minkowski spacetime carries an exceptional super-cocycle in rational cohomotopy of degree four \cite{FSS16a, cohomotopy},
and 11-dimensional supergravtiy super-spacetimes together with fundamental M2/M5-branes propagating in them
are the \emph{higher} Cartan geometries \cite{Schreiber15, Wellen} locally modelled on the higher geometric data:}
\begin{equation}
  \label{LocalM}
  \mbox{
  \begin{tabular}{|c|c|}
    \hline
    \begin{tabular}{c}
      Local model for
      \\
      fundamental $\mathrm{M2}/\mathrm{M5}$-branes
      \\
      in $D = 11$, $\mathcal{N} = 1$ supergravity
      \\
    \end{tabular}
    &
    \raisebox{-9pt}{
    \xymatrix{
      \mathbb{R}^{10,1\vert \mathbf{32}}
      \ar@(ul,ur)[]^{{\rm Spin}(10,1) }
      \ar[rr]^-{ \mu_{{}_{M2/M5}} }
      &&
      S^4
    }
    }
    \\
    \hline
  \end{tabular}
  }
\end{equation}
\end{quote}

\medskip
\noindent Analogous statements hold for all fundamental branes that appear in string theory \cite{FSS13}. In particular,
the super-spacetimes of $D = 10$, $\mathcal{N} = (1,1)$ supergravity (that is, type IIA supergravity) are
Cartan geometries locally modelled on the type IIA super-Minkowski spacetime
$$
  \mbox{
    \begin{tabular}{|c|c|}
      \hline
      \begin{tabular}{c}
        Local model for
        \\
        $D = 10$, $\mathcal{N} = (1,1)$ supergravity
      \end{tabular}
      &
      \raisebox{-9pt}{
        \xymatrix{
          \mathbb{R}^{9,1 \vert \mathbf{16} + \overline{\mathbf{16}}} \ar@(ul,ur)[]^{ {\rm Spin}(9,1)}
        }
      }
      \\
      \hline
    \end{tabular}
  }
$$
The presence of the fundamental $\mathrm{F1}$- and $\mathrm{D}p$-branes propagating in these super-spacetimes is captured, as for the M2/M5-brane cocycles above, by the non-trivial super Lie algebra cocycles
\cite[(3.9)]{CGNSW97}:
\begin{equation}
  \label{DCoch}
  \begin{aligned}
    \mu_{{}_{F1}^{\mathrm{IIA}}}
    & =
    i \left( \overline \Psi \Gamma_{a} \Gamma_{10} \Psi\right) \wedge E^a
    \\
    \mu_{{}_{D0}}
    & =
    \left(\overline{\Psi} \Gamma_{10} \Psi\right)
    \\
    \mu_{{}_{D2}}
    & =
    \tfrac{i}{2} \left( \overline{\Psi} \Gamma_{a_1 a_2} \Psi \right) \wedge E^{a_1} \wedge E^{a_2}
    \\
    \mu_{{}_{D4}}
    & =
    \tfrac{1}{4!} \left( \overline{\Psi} \Gamma_{a_1 \cdots a_4} \Gamma_{10} \Psi \right) \wedge E^{a_1} \wedge \cdots \wedge E^{a_4}
    \\
    \mu_{{}_{D6}}
    & =
    \tfrac{i}{6!} \left( \overline{\Psi} \Gamma_{a_1 \cdots a_6} \Psi \right) \wedge E^{a_1} \wedge \cdots \wedge E^{a_6}
    \\
    \mu_{{}_{D8}}
    & =
    \tfrac{1}{8!} \left( \overline{\Psi} \Gamma_{a_1 \cdots a_8} \Gamma_{10} \Psi \right) \wedge E^{a_1} \wedge \cdots \wedge E^{a_8}
    \\
    \mu_{{}_{D10}}
    & =
    \tfrac{i}{10!} \left( \overline{\Psi} \Gamma_{a_1 \cdots a_{10}} \Psi \right) \wedge E^{a_1} \wedge \cdots \wedge E^{a_{10}}
    \\
    \big(\mu_{{}_{NS5}} &= \tfrac{1}{5!} \left(\overline{\Psi} \Gamma_{a_1 \dotsb a_5}\Psi\right) \wedge E^{a_1}\wedge \dotsb \wedge E^{a_5}\big)\;,
  \end{aligned}
\end{equation}
where we have also included here the NS5-brane cocycle $\mu_{{}_{NS5}}$ for completeness.
Using the Fierz identities for ${\rm Spin}(9,1)$-representations, one finds that these expressions satisfy the following relations:
\begin{equation}
  \label{DCoc}
  ( \mu_{{}_{F1/Dp}} )
  \;=\;
  \left\{
    \begin{aligned}
      d \mu_{{}_{F1}} & = 0
      \phantom{AAAABBBCCC}
    \big(d\mu_{{}_{NS5}} = \mu_{{}_{D0}}\wedge \mu_{{}_{D4}} -\tfrac{1}{2}\mu_{{}_{D2}}\wedge \mu_{{}_{D2}}\big)
      \\
      d \mu_{{}_{D(2p+4)}} & = \mu_{{}_{F1}} \wedge \mu_{{}_{D(p+2)}}\,,
    \end{aligned}
  \right.
\end{equation}
(\cite[App.~A]{CGNSW97}, \cite[(6.8), (6.9)]{CAIB00}, see also \cite[Theorem 4.16]{FSS16a},  \cite[Prop. 4.8]{FSS16b}).
This is the supersymmetric version of the cocycle \eqref{DDRedEnhc} that we had encountered in the previous section.

\medskip
By analogy with \eqref{LocalM}, using Lemma \ref{TwistedKModel} we may concisely summarize this in the language of rational super homotopy theory
as follows:
\begin{quote}
{
10-dimensional type IIA super-Minkowski spacetime carries an exceptional cocycle in rational 2-shifted twisted connective K-theory
and super-spacetimes of 10-dimensional type IIA supergravity together with fundamental $\mathrm{F1}/\mathrm{D}p$-branes
propagating inside them are the \emph{higher} Cartan geometries locally modelled on the higher geometric data:}
\begin{equation}
  \label{LocalIIA}
  \mbox{
    \begin{tabular}{|c|c|}
      \hline
      \begin{tabular}{c}
        Local model for
        \\
        fundamental $\mathrm{F1}/\mathrm{D}p$-branes
        \\
        in $D =10$, $\mathcal{N} = (1,1)$ supergravity
      \end{tabular}
      &
      \raisebox{10pt}{
      \xymatrix@R=1.7em{
        \mathbb{R}^{9,1\vert \mathbf{16} + \overline{\mathbf{16}}}
        \ar@(ul,ur)[]^{{\rm Spin}(9,1) }
        \ar[rr]^-{ \mu_{{}_{F1/Dp}} }
        \ar[dr]_{\mu_{{}_{F1}}}
        &&
        \mathrm{ku} \dslash B S^1
        \ar[dl]
        \\
        & B^3 \mathbb{Q}
      }
      }
      \\
      \hline
    \end{tabular}
  }
\end{equation}
\end{quote}
%
%
Working with higher Cartan geometry in this way, one finds large parts of the string/M-theory literature appearing in cocycle incarnations 
 (see \cite[Sec. 2]{ADE}
for a more detailed account).
The upshot is that all information about brane species and their behavior in string/M-theory is encoded in \emph{cohomological
data}, depending on the local model space.
The actual brane dynamics are provded by the higher-super-Cartan-geometric globalization of these local data.

\medskip
The problem of gauge enhancement for M-branes, therefore, reduces to the question of how 
double dimensional reduction
turns the local model \eqref{LocalM} for $D = 11$, $\mathcal{N} = 1$ superspacetime with its $\mathrm{M2}/\mathrm{M5}$-brane cocycle 
into the local model \eqref{LocalIIA}
for $D = 10$, $\mathcal{N} = (1,1)$ super-spacetime with its unified D-brane coycle. 
%
%
This works via the mechanism of Sec. \ref{With}: our input datum \eqref{Input} is now specifically the fundamental $\mathrm{M2}/\mathrm{M5}$-brane super cocycle \eqref{LocalM}

\begin{equation}
  \label{FundamentalM2M5}
  \xymatrix{
    \mathbb{R}^{10,1\vert \mathbf{32}}
    \ar[rr]^{ \color{blue} \mu_{{}_{M2/M5}} }
    &&
    S^4\,.
  }
\end{equation}
The supersymmetric version of the spacetime circle extension \eqref{YFib} is the extension of type IIA super-Minkowski spacetime
to $D =11$ super-Minkowski spacetime classified by the D0-brane cocycle \cite[Prop. 4.5]{FSS13} (this extension implements \lq\lq D0-brane condensation''):
\begin{equation}
  \label{11dExt}
  \xymatrix@R=1.4em{
    *+[r]{
      \color{blue}
      \mathbb{R}^{10,1\vert \mathbf{32}}
      \;\simeq\;
      \mathrm{Ext}\left( \mathbb{R}^{9,1\vert \mathbf{16} + \overline{\mathbf{16}}} \right)
    }
    \ar[d]
    \\
    \mathbb{R}^{9,1 \vert \mathbf{16} + \overline{\mathbf{16}}}
    \ar[dr]_{ \mu_{{}_{D0}} }
    \\
    & B S^1\;.
  }
\end{equation}
Thus, \eqref{FundamentalM2M5} can be recast in the form 
$$
  \xymatrix@R=1.3em{
    {\color{blue} \mathrm{Ext}\left(\mathbb{R}^{9,1\vert \mathbf{16} + \overline{\mathbf{16}}} \right)}
    \ar[rr]^-{ \mu_{{}_{M2/M5}} }
    \ar[dr]_{\mu_{{}_{M2}}}
    &&
    S^4\;,
    \ar[dl]
    \\
    & B^4 \mathbb{Q}
  }
$$
so that its double dimensional reduction is given as in \eqref{RatDDRed} by the $\mathrm{Ext}/\mathrm{Cyc}$-adjunct as in the following diagram:
\begin{equation}
  \label{SuperRatDDRed}
  \raisebox{35pt}{
  \xymatrix@C=30pt@R=1.5em{
    {\color{blue} \mathbb{R}^{9,1\vert \mathbf{16} + \overline{\mathbf{16}}} }
    \ar@{}[rrrr]^-{ \mbox{ \tiny \begin{tabular}{c} double dimensional reduction \\ of $\mu_{{}_{M2/M5}}$\\{} \end{tabular}} }
    \ar@[blue]@/^1pc/[drrrr]|-{{\color{blue}\;\widetilde{\mu}_{M2/M5}\;} }
    \ar[d]_-{\color{blue}\eta}
    &&&&
    \\
    {\color{blue}\mathrm{Cyc}}\mathrm{Ext}\big( \mathbb{R}^{9,1\vert \mathbf{16} + \overline{\mathbf{16}}} \big)
      \ar[rrrr]|{\; {\color{blue}\mathrm{Cyc}}( \mu_{{}_{M2/M5}})\; }
      \ar[drr]
    &&&& {\color{blue}\mathrm{Cyc}}(S^4)\;.
    \ar[dll]
    \\
    && {\color{blue}\mathrm{Cyc}}(B^4 \mathbb{Q})
    \ar[d]
    \\
    &&
    B^3 \mathbb{Q}
  }}
\end{equation}
In \cite[Prop. 3.8]{FSS16a} and \cite[Theorem 3.8]{FSS16b}, this double dimensional reduction cocycle
$\widetilde{\mu}_{{}_{M2/M5}}$ was computed 
 as:
\begin{equation}
  \label{SuperDDRed}
  \widetilde{\mu}_{{}_{M2/M5}}
  \;\;=\;\;
  \left\{
  \begin{aligned}
    d \mu_{{}_{F1}} & = 0 \phantom{AAA}\phantom{BB}\phantom{CC}  d \mu_{{}_{NS5}}  = \mu_{{}_{D0}} \wedge \mu_{{}_{D4}}
     - \tfrac{1}{2} \mu_{{}_{D2}} \wedge \mu_{{}_{D2}}\;.
    \\
    \\
    d \mu_{{}_{D0}} & = 0
    \\
    d \mu_{{}_{D2}} & = \mu{{}_{F1}} \wedge \mu_{{}_{D0}}
    \\
    d \mu_{{}_{D4}} & = \mu_{{}_{F1}} \wedge \mu_{{}_{D2}}
    \,.
  \end{aligned}
  \right.
\end{equation}
We, therefore, obtain the truncation of \eqref{DCoch}, \eqref{DCoc} that contains the fundamental
F1-brane cocycle $\mu_{F1}$, as well as the fundamental D$p$-brane cocycles $\mu_{Dp}$ for
$p\in \{0,2,4\}$, together with Bianchi identities.
We also obtain the NS5-brane cocycle $\mu_{NS5}$.
To enhance this double dimensional reduction picture, we could ask for a lift as in \eqref{IsIt}:
\vspace{-3mm}
\begin{equation}
  \label{SuperIsIt}
  \raisebox{35pt}{
  \xymatrix@C=30pt@R=1.5em{
    X
    \ar@{-->}@/^2pc/[rrrr]^{ \mbox{ \tiny \begin{tabular}{c} enhanced \\ double dimensional reduction \\ of $\mu_{{}_{M2/M5}}$ {\color{blue} ??}  \end{tabular} }   }
    \ar[d]_{\eta_{{}_{ \mathbb{R}^{9,1\vert \mathbf{16} + \overline{\mathbf{16}}} }}}
    && &&
    {S^4 \dslash S^1}
    \ar[d]^{\eta_{{}_{S^4 \dslash S^1}}}
    \\
    { \mathrm{Cyc}}\mathrm{Ext}\big( \mathbb{R}^{9,1 \vert \mathbf{16} + \overline{\mathbf{16}}}  \big)
      \ar[rrrr]|-{\;\mathrm{Cyc}( \mu_{{}_{M2/M5}})\; }
      \ar[drr]_{ \mu_{{}_{F1}} }
    &&&& { \mathrm{Cyc}}( \mathrm{Ext}(S^4 \dslash S^1) )\;.
    \ar[dll]
    \\
    && B^3 \mathbb{Q}
  }
  }
\end{equation}
%
But a dashed lift in \eqref{SuperIsIt} does \emph{not} exist:
Prop. \ref{MinimalDGCAlgebraModelForATypeOrbispace}
requires
 that for such a lift to exist,
the double dimensionally reduced cocycle data in \eqref{SuperDDRed} needs to satisfy the extra conditions
\begin{enumerate}[{\bf (i)}]
  \item $\mu_{{}_{D4}} = 0$; and
  \item $\mu_{{}_{F1}} \wedge \mu_{{}_{D0}} = 0$,
\end{enumerate}
both of which fail, the first by \eqref{DCoch} the second by \eqref{DCoc}.
%
If we now approach the problem with homotopy theoretic perturbation theory (Rem. \ref{GoodwillieCalculus}),
we might instead ask for a \emph{first-order} lift as in \eqref{Enhc}, hence a lift in the following diagram:
\begin{equation}
  \label{SuperEnhc}
   \hspace{-6mm}
  \mathclap{
  \raisebox{48pt}{
  \xymatrix@C=40pt{
    \mathbb{R}^{9,1 \vert \mathbf{16} + \overline{\mathbf{16}} }
    \ar@[blue]@{-->}@/^2pc/[rrrr]^{
      \mbox{
        \tiny
        \begin{tabular}{c}
          \color{blue} perturbatively enhanced
          \\
          double dimensional reduction
          \\
          of $\mu_{{}_{M2/M5}}$  {\color{blue} ?? }
        \end{tabular}
      }
    }
    \ar[d]_{\eta_{{\mathbb{R}^{9,1\vert \mathbf{16} + \overline{\mathbf{16}} }}}}
    &&
    &&
    {\color{blue} \Omega^\infty_{S^3}\Sigma^\infty_{+,S^3}} \left( S^4 \dslash S^1 \right)
    \ar[d]^{ {\color{blue} \Omega^\infty_{S^3}\Sigma^\infty_{+,S^3}} \big( \eta_{{}_{S^4 \dslash S^1}} \hspace{-.7mm} \big) }
    \\
    { \mathrm{Cyc}}\,\mathrm{Ext}\big( \mathbb{R}^{9,1\vert \mathbf{16} + \overline{\mathbf{16}}} \big)
      \ar[rr]^-{ \mathrm{Cyc}( \mu_{{}_{M2/M5}} )  }
      \ar[drr]_{ \mu_{{}_{F1}} }
    && \mathrm{Cyc}( S^4 )
    \ar[d]
    \ar@[blue][rr]^-{\color{blue}  \mathrm{st}}
    &&
    {\color{blue} \Omega^\infty_{S^3}\Sigma^\infty_{+,S^3}} \mathrm{Cyc}(S^4)
    \ar[dll]
    \\
    && B^3 \mathbb{Q}
  }
  }}
\end{equation}
%
As was the case for \eqref{FactorEnhc}, Theorem \ref{TwistedKTheoryInsideFiberwiseStabilizationOfATypeOrbispaceOf4Sphere} says that the lift in \eqref{SuperEnhc}
exists precisely if the lift $\widehat{\mu}_{{}_{M2/M5}}$ in the following diagram exists:
\begin{equation}
  \label{SuperFactorEnhc}
    \hspace{-7mm}
  \mathclap{
  \raisebox{47pt}{
  \xymatrix@C=30pt{
    \mathbb{R}^{9,1 \vert \mathbf{16} + \overline{\mathbf{16}}  }
    \ar@[blue]@{-->}@/^2.4pc/[rrrr]^{
      \mbox{
        \tiny
        \begin{tabular}{c}
           perturbatively enhanced
          \\
          double dimensional reduction
          \\
          of $\mu_{{}_{M2/M5}}$  {\color{blue} ! }
        \end{tabular}
      }
    }
    \ar@[blue]@{-->}[rr]^{ \color{blue} \widehat{\mu}_{{}_{M2/M5}} }
    \ar[drr]|-{\;{\widetilde{\mu}_{{}_{M2/M5}}} \;}
    \ar[d]_{\eta_{{}_{ \mathbb{R}^{9,1\vert \mathbf{16} + \overline{\mathbf{16}} } }}}
    &&
    {\color{blue} \Omega^{\infty-2}_{B^3 \mathbb{Q}}\left( \mathrm{ku} \dslash B S^1 \right) }\,
    \ar@[blue]@{..}[d]|-{\color{blue} \tau_6}
    \ar@{^{(}->}[rr]^-{\iota}
    &&
    \Omega^\infty_{S^3}\Sigma^\infty_{+,S^3}\left( S^4 \dslash S^1 \right)
    \ar[d]^{ \Omega^\infty_{S^3}\Sigma^\infty_{+,S^3} \left( \eta_{{}_{S^4 \; \dslash \; S^1}} \hspace{-.7mm} \right) }
    \\
    { \mathrm{Cyc}}\,\mathrm{Ext} \big(  \mathbb{R}^{9,1\vert \mathbf{16} + \overline{\mathbf{16}}  }  \big)
      \ar[rr]|-{\;\mathrm{Cyc}( \mu_{{}_{M2/M5}})\; }
      \ar[drr]_{ \mu_{{}_{F1}} }
    && \mathrm{Cyc}( S^4 )
    \ar[d]
    \ar[rr]^-{\mathrm{st}}
    &&
    \Omega^\infty_{S^3}\Sigma^\infty_{+,S^3}\mathrm{Cyc}(S^4)
    \ar[dll]
    \\
    && B^3 \mathbb{Q}\,.
  }
  }}
\end{equation}
%
As in \eqref{DDRedEnhc}, such a lift  $\widehat{\mu}_{{}_{M2/M5}}$ is equivalent to specifying an extension of the truncated data in \eqref{SuperDDRed} to a cocycle:
\begin{equation}
  \label{SuperDDRedEnhc}
  \widehat{\mu}_{{}_{M2/M5}}
  \;\;=\;\;
  \left\{
  \begin{aligned}
    d \mu_{{}_{F1}} & = 0
    \\
    \\
    d \mu_{{}_{D0}} & = 0
    \\
    d \mu_{{}_{D2}} & = \mu_{{}_{F1}} \wedge \mu_{{}_{D0}}
    \\
    d \mu_{{}_{D4}} & = \mu_{{}_{F1}} \wedge \mu_{{}_{D2}}
    \\
    {\color{blue} d \mu_{{}_{D6}} } & {\color{blue} = \mu_{{}_{F1}} \wedge \mu_{{}_{D4}} }
    \\
    {\color{blue} d \mu_{{}_{D8}} } & {\color{blue} = \mu_{{}_{F1}} \wedge \mu_{{}_{D6}} }
    \\
    {\color{blue} d \mu_{{}_{D10}} } & {\color{blue} = \mu_{{}_{F1}} \wedge \mu_{{}_{D8}} }
    \,,
  \end{aligned}
  \right.
\end{equation}
(after discarding the NS5-brane cocycle $\mu_{{}_{NS5}}$).
By \eqref{DCoch} and \eqref{DCoc}, such an extension exists and is precisely the required enhancement of the
double dimensional reduction of the fundamental M2/M5-brane cocycle by the missing $\mathrm{D}(p \geq 6)$-brane cocycles to the full $\mathrm{F1}/\mathrm{D}p$-brane cocycle of type IIA string theory
 with coefficients in rational twisted K-theory.
This is exactly the required gauge enhancement: 
%
%
%
\begin{quotation}
\noindent A {\bf solution} to \hyperlink{OpenRational}{\bf Open Problem, rational version 2} (p.~
\pageref{OpenRationalPage}) is given as follows:
\begin{itemize}
  \item While double dimensional reduction of the fundamental M2/M5-brane cohomotopy 4-cocycle $\mu_{{}_{M2/M5}}$ \eqref{M2M5Coc} is obtained as the $\mathrm{Ext}/\mathrm{Cyc}$-adjunct \eqref{SuperDDRed} along the M-theoretic super-spacetime extension \eqref{11dExt};

  \item Its \emph{perturbative gauge enhancement}, making the full combined F1/D$p$-brane cocycle appear is obtained by lifting through the fiberwise stabilization of the $\mathrm{Ext}/\mathrm{Cyc}$-unit on the A-type orbispace of the 4-sphere \eqref{SuperEnhc}.
  \end{itemize}
\end{quotation}

\begin{remark}[Higher-dimensional branes]
An often neglected point is that, with local supersymmetry taken into account, the system of relations \eqref{SuperDDRedEnhc}
is indeed non-trivial up to the degree shown. The purely bosonic part of the D8-flux form $F_{10}$ is necessarily closed
(being a 10-form on a 10-dimensional manifold). However,  its fermionic component which is proportional to
$\left(\overline{\Psi} \Gamma_{a_1 \cdots a_8}\Gamma_{10} \Psi\right) \wedge E^{a_1} \wedge \cdots \wedge E^{a_8}$
is only of bosonic degree 8, so that its differential as a super-form need not vanish (and, indeed, does not vanish \cite[(6.9)]{CAIB00}).
For the same reason, we include in \eqref{DCoch} and \eqref{SuperDDRedEnhc} the non-trivial component $\mu_{{}_{D10}}$
which ought  to correspond to a D10-brane, were it not for the fact that there is no bosonic aspect of a 10-brane in a
10-dimensional spacetime. One way to see the \lq\lq D10-brane contribution'' $\mu_{{}_{D10}}$ arise is to
consider the type IIB D-brane super cocycles $\mu_{{}_{D1}}, \mu_{{}_{D3}}, \mu_{{}_{D5}}, \mu_{{}_{D7}}, \mu_{{}_{D9}}$ \cite[Sec. 2]{IIBAlgebra}
and then apply super-geometric T-duality \cite[Theorem 5.3]{FSS16b}.
The existence of the charge structure for would-be D10 branes was also noticed in \cite[p.~30]{CallisterSmith09} by different means.
%
\end{remark}

\medskip
\medskip
\noindent {\bf Acknowledgements.} We are grateful to August{\'i} Roig and Martintxo Saralegi-Aranguren for discussion of
\cite{RS}, as well as to David Corfield, Ted Erler, Domenico Fiorenza, and David Roberts for useful comments.
We also thank the anonymous referee for their careful reading and helpful suggestions.
VBM acknowledges partial support of SNF Grant No. 200020\_172498/1. This research was partly supported by
the NCCR SwissMAP, funded by the Swiss National Science Foundation, and by the COST Action MP1405 QSPACE,
supported by COST (European Cooperation in Science and Technology).
\medskip


\end{document}